\documentclass[11pt]{article}

\usepackage[letterpaper,margin=1in]{geometry}
\usepackage[utf8]{inputenc}
\usepackage{microtype}
\usepackage{graphicx}
\graphicspath{{./}{Fig/}}
\usepackage{amsmath,amssymb,amsthm}
\usepackage{natbib}
\usepackage{url,bm}
\usepackage{adjustbox,varwidth}
\usepackage{setspace,booktabs}
\usepackage{tabularx,multirow,array,makecell,siunitx}
\usepackage{pifont}
\usepackage[colorlinks=true,citecolor=blue,linkcolor=blue,urlcolor=blue]{hyperref}
\usepackage{setspace}
\usepackage{caption}
\usepackage{threeparttable}
\allowdisplaybreaks
\theoremstyle{plain}
\newtheorem{theorem}{Theorem}

\newtheorem{corollary}[theorem]{Corollary}
\theoremstyle{definition}

\newtheorem{assumption}{Assumption}

\theoremstyle{remark}

\theoremstyle{plain}
\newtheorem*{thm*}{Theorem}
\newtheorem*{fact*}{Fact}
\newtheorem*{prop*}{Proposition}

\newtheorem*{lem*}{Lemma}

\newtheorem*{definition*}{Definition}

\newtheorem*{defn*}{Definition}
\newtheorem*{exercise*}{Exercise}
\newtheorem*{remark*}{Remark}

\newtheorem*{rmk*}{Remark}

\newtheorem{apxlemma}{Lemma}
\newtheorem{apxass}{Assumption}
\newtheorem{apxthm}{Theorem}
\newtheorem{apxcor}{Corollary}
\newtheorem{apxprop}{Proposition}

\newcommand{\cmark}{\ding{51}}
\newcommand{\xmark}{\ding{55}}

\DeclareMathOperator*{\argmin}{arg\,min}
\DeclareMathOperator*{\argmax}{arg\,max}
\newcommand{\convd}{\overset{d}{\longrightarrow}}
\newcommand{\iid}{\overset{\mathrm{iid}}{\sim}}

\def \Pb {\mathbb{P}}
\def \Eb {\mathbb{E}}
\def \Rb {\mathbb{R}}

\def \q {\mathrm{Quantile}}

\def \Xb {\mathbf{X}}
\def \Zb {\mathbf{Z}}
\def \Gb {\mathbf{G}}
\def \Zbt {\mathbf{\tilde{Z}}}

\def \xb {\mathbf{x}}
\def \zb {\mathbf{z}}
\def \gb {\mathbf{g}}
\def \zbt {\mathbf{\tilde{z}}}

\def \k {[K_1] \times \cdots \times [K_l]}
\def \kg {[K_1] \times \cdots \times [K_s]}
\def \kz {[K_{s+1}] \times \cdots \times [K_l]}

\def \rcov {R^{XZ}}
\def \rfull {R^Y}
\def \wcov {W^{XZ}}
\def \wfull {W^{Y}}

\def \Dc {\mathcal{D}}
\def \done {\mathcal{D}_1}
\def \donet {\tilde{\mathcal{D}}_1}
\def \dtwo {\mathcal{D}_2}
\def \dtwot {\tilde{\mathcal{D}}_2}
\def \doneh {\mathcal{D}_{1h}}

\def \dtwoh {\mathcal{D}_{2h}}

\def \dtra {\mathcal{D}_{\mathrm{tra}}}

\def \dttra {\tilde{\mathcal{D}}_{\mathrm{tra}}}

\def \dcal {\mathcal{D}_{\mathrm{cal}}}

\def \dtrah {\mathcal{D}_{\mathrm{tra}, h}}

\def \dcalh {\mathcal{D}_{\mathrm{cal}, h}}

\def \dtest {\mathcal{D}_{\mathrm{test}}}
\def \dtesth {\mathcal{D}_{\mathrm{test}, h}}

\def \doneg {\mathcal{D}_{1\gb}}
\def \donegt {\tilde{\mathcal{D}}_{1\gb}}
\def \dtwog {\mathcal{D}_{2\gb}}
\def \dttwog {\tilde{\mathcal{D}}_{2\gb}}
\def \dtrag {\mathcal{D}_{\mathrm{tra}, \gb}}
\def \dttrag {\tilde{\mathcal{D}}_{\mathrm{tra}, \gb}}
\def \dcalg {\mathcal{D}_{\mathrm{cal}, \gb}}

\def \dtestg {\mathcal{D}_{\mathrm{test}, \gb}}

\def \ione {\mathcal{I}_1}
\def \ionet {\tilde{\mathcal{I}}_1}
\def \itwo {\mathcal{I}_2}
\def \itwot {\tilde{\mathcal{I}}_2}
\def \ioneh {\mathcal{I}_{1h}}

\def \itwoh {\mathcal{I}_{2h}}

\def \itwoz {\mathcal{I}_{2\zbt}}
\def \itra {\mathcal{I}_{\mathrm{tra}}}

\def \ittra {\tilde{\mathcal{I}}_{\mathrm{tra}}}
\def \ittraz {\tilde{\mathcal{I}}_{\mathrm{tra}, \zbt}}
\def \ical {\mathcal{I}_{\mathrm{cal}}}

\def \itrah {\mathcal{I}_{\mathrm{tra}, h}}

\def \icalh {\mathcal{I}_{\mathrm{cal}, h}}

\def \itest {\mathcal{I}_{\mathrm{test}}}
\def \itesth {\mathcal{I}_{\mathrm{test}, h}}

\def \ioneg {\mathcal{I}_{1\gb}}
\def \ionegt {\tilde{\mathcal{I}}_{1\gb}}
\def \itwog {\mathcal{I}_{2\gb}}
\def \ittwog {\tilde{\mathcal{I}}_{2\gb}}
\def \itrag {\mathcal{I}_{\mathrm{tra}, \gb}}
\def \ittrag {\tilde{\mathcal{I}}_{\mathrm{tra}, \gb}}
\def \icalg {\mathcal{I}_{\mathrm{cal}, \gb}}

\def \itestg {\mathcal{I}_{\mathrm{test}, \gb}}

\def \ntrah {n_{\mathrm{tra}, h}}
\def \ncalh {n_{\mathrm{cal}, h}}
\def \ntesth {n_{\mathrm{test}, h}}
\def \nz {\mathbf{n}_\zbt}
\def \nb {\mathbf{n}}

\def \rnz {\widehat{r}_\zbt}
\def \rng {\widehat{r}_\gb}
\def \munk {\widehat{\mu}_{n, \zbt}}

\def \drawone {\tilde{\mathcal{D}}_1}
\def \nrawone {\tilde{n}_1}

\def \ptrz {f_{1\zbt}}
\def \ptez {f_{2\zbt}}

\def \hatptez {\widehat{f}_{2\zbt}}

\def \ntrz {n_{\mathrm{tra}, \zbt}}
\def \nttrz {\tilde{n}_{\mathrm{tra}, \zbt}}

\def \ntez {n_{2\zbt}}
\def \fnz {\mathcal{F}_{\zbt}(\nz)}
\def \gnz {\mathcal{G}_{\zbt}(\nz)}
\def \snz {S_{\zbt}(\nz)}

\def \gnzmbar {\mathcal{G}_{\zbt}^{\bar M}(\nz)}
\def \hatgnz {\widehat{\mathcal {G}}_{\zbt}(\nz)}
\def \ntra {n_{\mathrm{tra}}}

\def \alo {\alpha_{\mathrm{lo}}}
\def \ahi {\alpha_{\mathrm{hi}}}
\def \gammab {\boldsymbol{\gamma}}
\def \ab {\boldsymbol{\alpha}}
\def \bb {\boldsymbol{\beta}}
\def \hatgb {\widehat{\boldsymbol{\gamma}}}
\def \hatbb {\widehat{\boldsymbol{\beta}}}
\def \mub {\boldsymbol{\mu}}
\def \Sigmab {\boldsymbol{\Sigma}}
\def \Tb {\mathbf{T}}
\def \Sl {\mathcal S_\ell}
\def \fc {\mathcal F_{\ell, h}}
\def \tfc {\tilde{\mathcal{F}}_{\ell, h}}
\def \edc {\mathcal{E}(\Dc)}
\def \hatsl {\mathcal S_{\hat \ell}}
\def \hast {h^\ast}

\def \irawone {\tilde{\mathcal{I}}_1}

\def \dttwo {\tilde{\mathcal{D}}_2}
\def \ittwo {\tilde{\mathcal{I}}_2}

\def \tgwcp {\widehat{t}_{\alpha}^{\mathrm{GWCP}}}
\def \tdawcp {\widehat{t}_{\alpha}^{\mathrm{DA-WCP}}}

\newcommand{\indep}{\perp\!\!\!\perp}

\def \qlo {q_{\alo}}
\def \qhi {q_{\ahi}}
\def \hatqlo {\widehat q_{\alo}}
\def \hatqhi {\widehat q_{\ahi}}
\def \tl {\widehat t_{\alpha, \ell}}
\def \thatl {\widehat t_{\alpha, \hat{\ell}}}

\newcommand{\appropto}{\mathrel{\vcenter{
  \offinterlineskip\halign{\hfil$##$\cr
    \propto\cr\noalign{\kern2pt}\sim\cr\noalign{\kern-2pt}}}}}

\providecommand{\indep}{\perp\!\!\!\perp}

\title{Predicting Current Outcomes From Historical Survey \\ Data
With Weighted Conformal Prediction}

\author{
Chihoon Lee$^{*,1}$,
Sungkyu Jung$^{\dagger,1}$,
and Hyokyoung G. Hong$^{\ddagger,2}$
\\[0.5em]
{\small $^{1}$Department of Statistics, Seoul National University, Seoul, South Korea}\\
{\small $^{2}$Biostatistics Branch, Division of Cancer Epidemiology and Genetics,}\\
{\small National Cancer Institute, Bethesda, MD, USA}
}

\date{\today}

\begin{document}

\maketitle

\begingroup
\renewcommand{\thefootnote}{\fnsymbol{footnote}}
\footnotetext[1]{Email: chihuni0922@snu.ac.kr}
\footnotetext[2]{Email: sungkyu@snu.ac.kr}
\footnotetext[3]{Email: grace.hong@nih.gov}
\endgroup

\begin{abstract}
In large-scale complex surveys such as the National Health and Nutrition Examination Survey (NHANES), some outcomes are measured only in selected years, leaving incomplete records across survey waves. We develop a weighted conformal prediction framework that enables valid population-level prediction of unobserved outcomes using information from earlier surveys. The method accommodates covariate shift, where both continuous and categorical covariate distributions evolve over time while survey design affects representativeness. It integrates subgroup-specific density ratio and subgroup-proportion estimation to approximate likelihood ratios between the historical and target covariate distributions, and we establish coverage guarantees for the resulting prediction sets. Simulation studies and an application predicting low-density lipoprotein cholesterol (LDL-C) for the current U.S. population show that the proposed approach achieves coverage close to the nominal level and improved efficiency over existing methods, particularly when covariate distributions are complex or unknown.
\end{abstract}

\noindent\textbf{\textit{Keywords:}} covariate shift, design-based inference, NHANES, predictive coverage, sampling design

\vspace{1.5em}

\section{Introduction}

Large-scale complex surveys are a primary source of population-level evidence in public health, economics, and social science. In many such surveys, however, key outcomes are not measured in every wave, even when they remain scientifically or policy relevant. Variables may be discontinued because of cost, logistical constraints, or changing priorities; they may also be unavailable in periods when in-person examinations are suspended. As a result, researchers are often faced with the following problem: how can one make valid population-level predictions for an outcome that was observed in the past but is unobserved in the present?

The prediction problem described above involves two major sources of difficulty. First, the covariate distribution may change across survey waves as the population evolves over time. Demographic changes such as population aging, migration, and shifts in socioeconomic composition can substantially alter the covariate distribution between the past and current populations \citep{racialshift}. This phenomenon, often referred to as covariate shift, requires models trained on past data to be transported to current populations. Second, survey data are typically collected under complex sampling designs involving stratification, clustering, and unequal selection probabilities, and subject to nonresponse \citep{nhanesestimation}. Standard regression approaches yield only point predictions and typically ignore both distributional shift and sampling design, providing no formal coverage guarantees for the target population.

Conformal prediction \citep{vovk} provides a general, distribution-free framework for constructing prediction sets with finite-sample coverage guarantees under an exchangeability assumption. Extensions such as weighted conformal prediction (WCP) address covariate shift by incorporating importance weights that account for differences between the training and target covariate distributions \citep{wcp,gwcp,yang2024doubly}. However, existing WCP methods assume i.i.d. training data and do not account for design weights or structured subgroup oversampling. Consequently, they cannot be directly applied to repeated survey data under complex sampling designs. Moreover, when importance weights must be estimated rather than known, coverage guarantees can deteriorate, and theoretical results are limited to simplified settings such as purely group-based shifts \citep{gwcp}. At the same time, the survey sampling literature has long addressed the challenges posed by unequal probability sampling and nonresponse in population inference \citep{pfeffermann1993role}. Although design-based inference methods account for sampling weights when estimating population quantities, they focus primarily on point estimation and do not provide prediction sets with finite-sample coverage guarantees. \citet{designcp} applied WCP to survey data collected under unequal-probability sampling for population-level prediction, but their approach is restricted to a single time point.

This paper develops a weighted conformal prediction framework tailored to complex survey data observed across time. To our knowledge, this is the first approach that provides prediction sets with finite-sample coverage guarantees for current population outcomes using historical survey data, while explicitly accounting for (i) temporal covariate shift in both continuous and categorical predictors, and (ii) survey design features including stratification, oversampling, and nonresponse. Our main contributions are as follows. First, we formulate a unified probabilistic framework that embeds complex survey sampling within a superpopulation model while preserving key design features. By imposing transparent assumptions on sampling and nonresponse mechanisms, we show that the resulting training and target distributions satisfy a covariate shift structure amenable to conformal inference. Second, we develop a design-adapted weighted conformal prediction (DA-WCP) procedure in which the weight function depends jointly on continuous predictors and categorical subgroup variables, extending beyond the group-only setting of GWCP \citep{gwcp}. The weight is constructed to simultaneously adjust for temporal distributional change and survey design imbalance. Third, we establish a nonasymptotic lower bound on the coverage probability of the proposed prediction sets. The bound decomposes the miscoverage error into components arising from (i) data-dependent threshold selection and (ii) weight estimation error, and yields explicit convergence rates depending on the dimension of the continuous predictors.

More broadly, our work bridges conformal prediction and design-based survey inference, providing a framework for uncertainty quantification in prediction problems involving complex survey data and evolving populations. Simulation studies demonstrate that the proposed method achieves coverage close to the nominal level across a wide range of scenarios involving temporal covariate shift and heteroscedasticity, while existing group-based approaches can substantially under- or over-cover. We further apply the method to repeated waves of the National Health and Nutrition Examination Survey (NHANES) to construct prediction intervals for both continuous and categorical low-density lipoprotein cholesterol levels in the current U.S. population using historical data, illustrating the practical relevance of the framework.

The remainder of the paper is organized as follows. Section~\ref{sec-background} reviews the WCP framework and complex survey designs. Section~\ref{sec-survey} develops the survey-based probabilistic framework and formal problem setup. Section~\ref{sec-method} presents the proposed DA-WCP framework. Section~\ref{sec-theoretical} establishes theoretical coverage guarantees. Section~\ref{sec-simul} reports simulation results, and Section~\ref{sec-ldl} presents the NHANES application. Additional details and proofs are provided in the appendix.

\section{Background} \label{sec-background}

\subsection{Weighted Conformal Prediction (WCP)} \label{sec-background-wcp}

Suppose that the dataset has the form $\mathcal D = \{(X_i, Y_i)\}_{i=1}^n \subset \mathcal X \times \mathcal Y$, 
and let $(X_{n+1}, Y_{n+1}) \in \mathcal X \times \mathcal Y$ be a test data point.
We adopt the \textit{covariate shift} setting, where the training and test data are distributed as
\begin{align} \label{eqn:background-wcp-dist}
(X_i, Y_i) \overset{\text{i.i.d.}}{\sim} P_X \times P_{Y \mid X}, \quad i = 1, \ldots, n, 
\qquad (X_{n+1}, Y_{n+1}) \sim Q_X \times P_{Y \mid X}.
\end{align}
To address covariate shift, \citet{wcp} proposed a variant of CP known as the weighted conformal prediction (WCP) framework. 
WCP incorporates importance weights between the test and training covariate distributions, defined as 
$w := dQ_X/dP_X$, which is well defined whenever $Q_X$ is absolutely continuous with respect to $P_X$. 
If the true weight function $w$ is known, WCP proceeds as follows. 
First, randomly split $\mathcal D$ into a training set $\dtra$ and a calibration set $\dcal$ with index sets $\itra$ and $\ical$, typically with $|\itra| = |\ical| = n/2$. 
Then, fit a \textit{score function} $\mathcal S : \mathcal X \times \mathcal Y \rightarrow \mathbb R$ on $\dtra$ using a learning algorithm $\mathcal A$, and compute the scores $s_i = \mathcal S(X_i, Y_i)$ for $i \in \ical$. 
The function $\mathcal S(x, y)$ measures how incompatible $y$ is with the conditional distribution of $Y$ given $X = x$, and is therefore often called a \textit{nonconformity score}. 
In practice, $\mathcal A$ may be a regression model for continuous $Y$ or a classification model for categorical $Y$; for example, when $Y$ is continuous one often uses $\mathcal S(x, y) := |y - \mathcal A(x)|$. 
Using the scores $\{s_i\}_{i \in \ical}$, the WCP threshold under miscoverage level $\alpha \in (0, 1)$ at $X_{n+1} = x$ is given by
\begin{equation} \label{eqn:background-wcp-threshold}
    \widehat t_\alpha(x) = \mathrm{Quantile}_{1-\alpha}\left(\sum_{i \in \ical'} \frac{w(X_i)}{\sum_{j \in \ical'} w(X_j)}\delta_{s_i}\right),
\end{equation}
where $\ical' := \ical \cup \{n+1\}$, $s_{n+1} := +\infty$, and $\delta_{s_i}$ denotes the Dirac measure at $s_i$.
The WCP prediction set at $X_{n+1} = x$ is then given by $\widehat C_{n, \alpha}(x) = \{y \in \mathcal Y \mid \mathcal S(x, y) \le \widehat{t}_\alpha(x)\}$, and it satisfies the finite-sample coverage guarantee
$\Pb(Y_{n+1} \in \widehat C_{n, \alpha}(X_{n+1})) \ge 1-\alpha$, where the probability is over $\Dc$ and the independent test data $(X_{n+1}, Y_{n+1}) \sim Q_X \times P_{Y \mid X}$. 
The WCP has been extended by \cite{gwcp,yang2024doubly} to allow for unknown weights.

\subsection{Sampling Designs in Complex Surveys} \label{sec-background-design}

We summarize the sampling design of complex surveys that target a self-weighting sample within each sampling domain, using NHANES \citep{nhanesestimation} as a representative example. Similar self-weighting design principles also arise in other major surveys, including the National Survey on Drug Use and Health (NSDUH) \citep{nsduh} and the Programme for the International Assessment of Adult Competencies (PIAAC) \citep{piaac}.

NHANES uses a four-stage sampling design that incorporates stratification and clustering, with oversampling of minority subgroups. Specifically, the design targets a self-weighting sample within each sampling domain defined by demographic characteristics: individuals in the same domain have approximately equal inclusion probabilities, with higher inclusion probabilities assigned to minority subgroups.
At the first stage, primary sampling units (PSUs), which are counties or groups of contiguous counties, are selected by probability proportional to size sampling without replacement (PPSWOR) within each geographic stratum. At the second stage, within each selected PSU, segments, which are collections of census blocks, are selected again by PPSWOR. Within each selected segment, dwelling units (DUs) are sampled with equal probability, using subsampling rates designed to produce a national, approximately equal-probability sample of DUs. Finally, within each selected DU, persons are sampled with domain-specific subsampling rates.

\section{Weighted Conformal Prediction for Complex Survey Data} \label{sec-survey}

\subsection{Survey Framework and Objectives} \label{sec-survey-1}

\subsubsection{Survey Framework and Simplified Designs} \label{sec-survey-1-1}

While our goal is to analyze complex surveys, it is theoretically intractable to fully incorporate all design features described in Section~\ref{sec-background-design}. We therefore adopt simplified designs that retain the main features of complex surveys, including stratification, clustering, and the within-domain self-weighting property, while abstracting away from lower-level stages such as segment and DU selection which would unnecessarily complicate the analysis. These simplified designs provide the sampling framework under which we apply WCP and establish theoretical guarantees. Motivated by NHANES, we focus on surveys that yield self-weighting samples within sampling domains defined by demographics. Some large-scale surveys, such as the U.S.\ Current Population Survey (CPS) \citep{cps}, instead adopt geographic domains; our analysis applies to such settings with only minor modifications, with details provided in appendix~\ref{app:geo}.

Table~\ref{tab:simplified-designs} summarizes the simplified sampling designs considered in this paper. We consider two individual selection schemes, Poisson sampling and PPSWOR. For each scheme, we consider three sampling structures of increasing complexity: sampling from the entire population, within strata, and within selected PSUs. In Poisson sampling, individuals are sampled independently with their own (pre-specified) inclusion probabilities, whereas in PPSWOR, individuals are drawn without replacement with probabilities proportional to their inclusion probabilities. The key distinction is that Poisson sampling yields a random sample size, whereas PPSWOR yields a fixed sample size (before nonresponse). For details on Poisson sampling and PPSWOR implementations, see \citet{poisson} and \citet{ppswor}, respectively.

\begin{table}[t!]
\centering
\caption{Simplified sampling designs and their equivalent representations under the assumptions in Section~\ref{sec-survey-ass}.}
\label{tab:simplified-designs}
\vspace{-2mm}
\setlength{\tabcolsep}{7pt}
\renewcommand{\arraystretch}{1.1}
\begin{tabularx}{\textwidth}{c l X}
\toprule
\makecell[c]{\textbf{Individual} \\ \textbf{selection scheme}}
& \multicolumn{1}{c}{\textbf{Sampling structure}}
& \multicolumn{1}{c}{\textbf{Equivalent representation}} \\
\midrule
\multirow{3}{*}{\centering \makecell[c]{Poisson \\ sampling}}
& From the entire population & PPSWOR within a single stratum \\
& Within strata & PPSWOR within a single stratum \\
& Within selected PSUs & \makecell[l]{PPSWOR within a single stratum \\ (conditional on PSU selection)} \\
\midrule
\multirow{3}{*}{\centering PPSWOR}
& From the entire population & PPSWOR within a single stratum \\
& Within strata & PPSWOR within strata \\
& Within selected PSUs & \makecell[l]{PPSWOR within strata \\ (conditional on PSU selection)} \\
\bottomrule
\end{tabularx}
\end{table}


Table~\ref{tab:simplified-designs} also shows, for the purpose of our analysis, the equivalent representation to which each simplified design reduces under the assumptions in Section~\ref{sec-survey-ass}. 
When subgroup-wise population sizes are sufficiently large, samples from Poisson sampling can be regarded as samples from PPSWOR, conditional on the realized sample sizes. 
Since PPSWOR within a single stratum can be viewed as a special case of PPSWOR within strata, all six designs reduce to PPSWOR within strata. Accordingly, we focus on the PPSWOR-within-strata design in what follows, and our proposed methodology and corresponding theoretical guarantees apply directly to the other five designs. Further details are provided in appendix~\ref{app:reduction}.

Although motivated by NHANES, these sampling designs are not specific to NHANES and are applicable to a range of complex surveys that share similar design features. In general, one should choose the representation corresponding to the true sampling design; however, when stratum or cluster memberships are unavailable in practice, as is often the case, one can choose the representation that reflects the true design as closely as possible.

\subsubsection{Notation and Objectives} \label{sec-survey-1-2}

We aim to make predictive inferences for an unobserved outcome at the population level at the present time ($t = 2$), using data from a past survey cycle ($t = 1$) in which the outcome is observed. For each $t \in \{1, 2\}$, let $N_t$ be the population size. Indexing individuals in the population from $1$ to $N_t$, the survey sample can be viewed as a subset of $[N_t]$. For each individual $i \in [N_t]$, let $Y_{ti}$ denote the outcome (either continuous or categorical), $\Xb_{ti} \in \Rb^d$ the $d$-dimensional vector of continuous covariates, and $\Zbt_{ti} \in [K_1] \times \cdots \times [K_l]$ the $l$-dimensional vector of categorical covariates, including demographics.
Since we focus on complex surveys designed to yield self-weighting samples within demographic sampling domains, we decompose $\Zbt_{ti}$ into $s$ demographic variables $\Gb_{ti} \in [K_1] \times \cdots \times [K_s]$ and $(l-s)$ non-demographic variables $\Zb_{ti} \in [K_{s+1}] \times \cdots \times [K_l]$, and write $\Zbt_{ti} = (\Gb_{ti}^\top, \Zb_{ti}^\top)^\top$.

In large-scale surveys, data collection typically consists of an interview followed by a separate in-person component. However, because some sampled units do not participate in all stages and some variables are subject to additional eligibility or protocol requirements, missingness patterns may differ across variables. Since we require sampled units with fully observed covariates, for each $t \in \{1, 2\}$, we introduce two response indicators for each sampled unit $i \in [N_t]$: $\rcov_{ti}=1$ if and only if all covariates $(\Xb_{ti}, \Zbt_{ti})$ are observed, and $\rfull_{ti}=1$ if and only if both the covariates and the outcome, $(\Xb_{ti}, Y_{ti}, \Zbt_{ti})$, are fully observed. Moreover, each sampled unit is assigned a sampling weight, indicating how many individuals in the population it represents. Accordingly, we define a sampling weight for each response indicator: $\wcov_{ti}$ for units with $\rcov_{ti}=1$ and $\wfull_{ti}$ for those with $\rfull_{ti}=1$.

If the outcomes $Y_{2i}$ are partially observed at $t = 2$, we can use all units with fully observed covariates at $t = 2$ and construct prediction sets in essentially the same way as when $Y_{2i}$ are completely unobserved. Moreover, when a sufficient number of units have both covariates and outcomes observed at $t = 2$, stronger
coverage guarantees 
for our method 
can be obtained; see appendix~\ref{app:partial} for details. 
In what follows we focus on the setting where $Y_{2i}$ are completely unobserved, so that $\rfull_{2i} \equiv 0$.

The survey data with complete covariates at $t = 1, 2$ are denoted by
\begin{equation} \label{eqn:setting-form-rawdata}
\donet = \{(\Xb_{1i}, Y_{1i}, \Zbt_{1i}, \rfull_{1i}, \wcov_{1i}, \wfull_{1i})\}_{i \in \ionet}, 
\quad 
\dtwo = \{(\Xb_{2i}, \Zbt_{2i}, \wcov_{2i})\}_{i \in \itwo},
\end{equation}
where $\ionet \subset [N_1]$ and $\itwo \subset [N_2]$ collect indices with fully observed covariates  (i.e., $\rcov_{ti} = 1$). 
Since $Y_{2i}$ are completely unobserved, the association between covariates and outcomes must be inferred from complete cases at $t=1$ (i.e., $\rfull_{1i}=1$).  
Therefore, the survey dataset of interest at $t = 1$, which serves as the training data, is given by
\begin{equation} \label{eqn:setting-form-data}
    \done = \{(\Xb_{1i}, Y_{1i}, \Zbt_{1i}, \wfull_{1i})\}_{i \in \ione}, \qquad \ione = \{i \in \tilde{\mathcal I}_1 : \rfull_{1i}=1\}.
\end{equation}

Our goal is to construct \emph{valid} prediction sets for $Y_2$ at the population level using survey datasets $\drawone$ and $\dtwo$. Here, the training data is $\done$, a past survey dataset in which both covariates and outcomes are observed, and the target distribution is the $t=2$ population. For a miscoverage level $\alpha \in (0, 1)$, let $\widehat C_\alpha(\xb, \zbt) \equiv \widehat C_\alpha(\xb, \zbt; \drawone, \dtwo)$ denote the prediction set for $Y_2$ at covariate $(\Xb_2, \Zbt_2) = (\xb, \zbt)$.
We consider two notions of validity: (i) \emph{marginal validity} and (ii) \emph{group-conditional validity}.
For marginal validity, we require $\Pb(Y_2 \in \widehat C_\alpha(\Xb_2, \Zbt_2)) \ge 1 - \alpha$, where the probability is over $\drawone$, $\dtwo$, and an independent test point $(\Xb_2, Y_2, \Zbt_2)$ drawn at random from the $t=2$ population.
For group-conditional validity, we require $\Pb(Y_2 \in \widehat C_\alpha(\Xb_2, \Zbt_2) \mid \Gb_2 = \gb) \ge 1 - \alpha$ for each demographic subgroup $\gb$; see appendix~\ref{app:group} for a formal definition.

As illustrated in Figure~\ref{fig:setting-prob}, the training and target covariate distributions may differ due to (i) population shift between $t=1$ and $t=2$ and (ii) the survey sampling design at $t=1$. 
A natural approach is therefore to apply the WCP framework, which accounts for covariate shift.
Here, the current survey dataset $\mathcal{D}_2$ is combined with $\drawone$, including not only the complete cases in $\done$ but also those with missing outcomes, to capture covariate shift and estimate the WCP weight function.

\begin{figure}[!t]
\centering
\vspace{2mm}
\includegraphics[width = \textwidth]{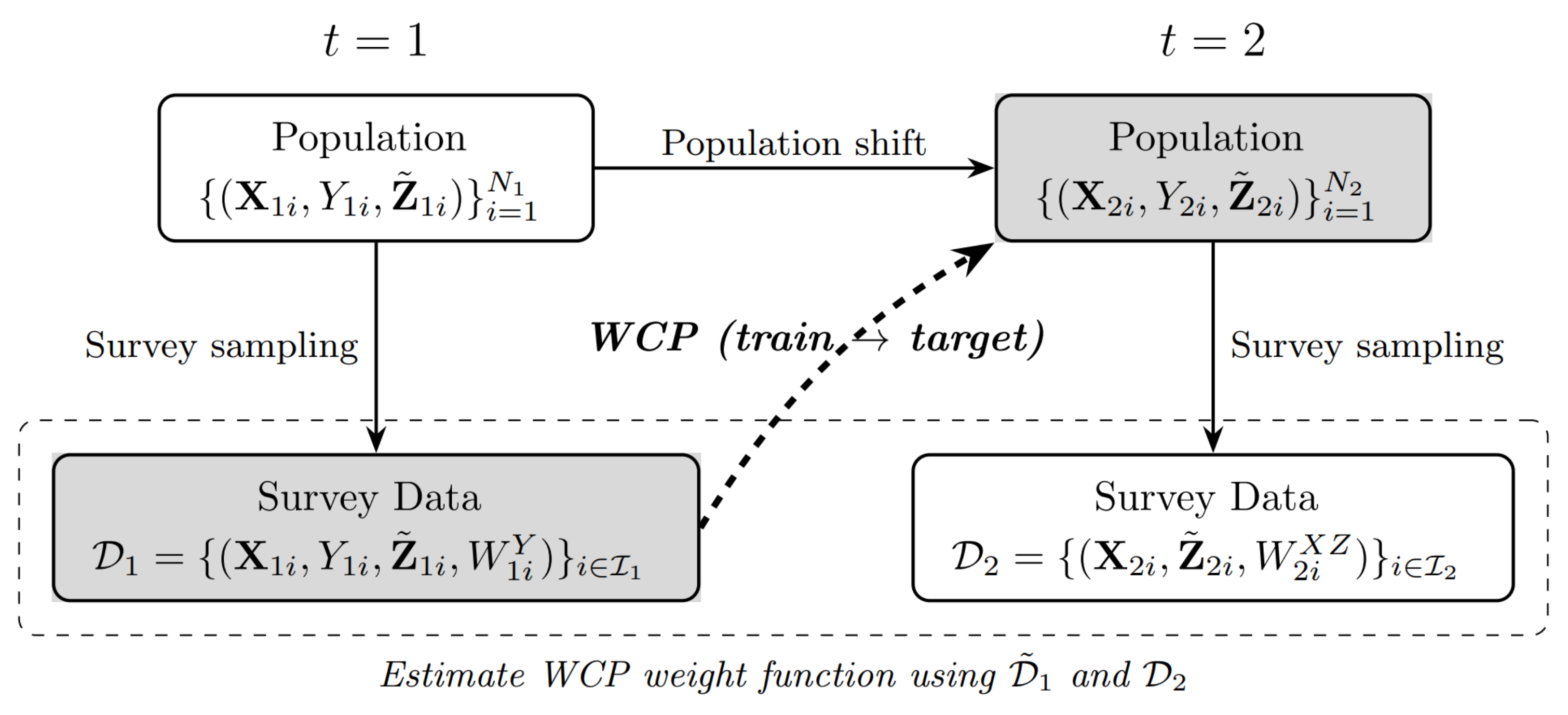}
\caption{Overview of the problem setting. At each time point ($t=1,2$), survey samples are drawn from the populations according to the survey sampling design. WCP is trained on $\done$ and targets the $t=2$ population.
}
\label{fig:setting-prob}
\end{figure}

However, WCP cannot be directly applied in our setting. WCP requires the training data to be an i.i.d.\ sample from an underlying distribution, as assumed in~\eqref{eqn:background-wcp-dist}. In contrast, the complex surveys we consider involve intricate sampling designs, including stratification, clustering, and oversampling, and are further subject to nonresponse. Consequently, the training data $\mathcal{D}_1$ are not i.i.d., and WCP cannot be applied without modification. This motivates the need for explicit assumptions on the population, nonresponse mechanism, and sampling design in order to apply WCP reliably to complex survey data.

\subsection{Assumptions for Applying WCP to Complex Survey Data} \label{sec-survey-ass}

As shown in Section~\ref{sec-background-wcp}, applying WCP requires the support of the target covariate distribution to be contained in that of the training covariate distribution, so that the importance weights are well defined. In practice, this condition can fail because the population evolves over time and the covariates $(\Xb_{ti}, \Zbt_{ti})$ may change even for the same individual. To address this, we adopt a \textit{superpopulation} framework that views the finite population as a sample from an underlying distribution \citep{deming, barry}, allowing us to define and estimate importance weights between the training and target covariate distributions.
However, we do not adopt the conventional superpopulation framework verbatim. 
Instead, because we focus on large-scale surveys designed to be self-weighting within each demographic subgroup defined by $\Gb_{ti}$, we treat $\Gb_{ti}$ as fixed by design. 
Conditioning on the variables that determine the sampling rates, such as these demographic labels, allows for the application of a superpopulation framework to $(\Xb_{ti}, Y_{ti}, \Zb_{ti})$  within each subgroup. 

\begin{assumption}[Superpopulation framework within demographic subgroups] \label{ass:super}
For each \(t \in \{1,2\}\) and \(\gb \in \kg\), and for each unit \(i \in [N_t]\) with \(\Gb_{ti}=\gb\), the variables \((\Xb_{ti}, Y_{ti}, \Zb_{ti})\) are assumed to be independently drawn from an underlying distribution \(P_{(\Xb_t, Y_t, \Zb_t)\mid \Gb_t=\gb}\), which admits the factorization
\(P_{(\Xb_t, Y_t, \Zb_t)\mid \Gb_t=\gb}
= P_{Y_t \mid \Xb_t, \Zb_t, \Gb_t=\gb}
\times P_{\Xb_t \mid \Zb_t, \Gb_t=\gb}
\times P_{\Zb_t \mid \Gb_t=\gb}\).
\end{assumption}
 
This formulation abstracts away from within-cluster dependence at the population-generating stage. Such an approximation is commonly adopted in superpopulation models for large-scale surveys \citep[e.g.,][]{pfeffermann1993role}, where the number of primary sampling units is typically large.
We also allow distributional differences across strata and clusters, interpreting them as arising from differences in demographic composition.

At $t = 2$, we must rely on $\dtwo$ to infer the population covariate distribution and thereby capture the covariate shift between the training and target distributions.
For this to be valid, we require that, for each $\gb \in \kg$, the distribution of $(\Xb_{2i}, \Zb_{2i}) \mid \Gb_{2i} = \gb$ coincides with that of $(\Xb_{2i}, \Zb_{2i}) \mid \Gb_{2i} = \gb, \rcov_{2i} = 1$. 
This leads us to the following assumption about the nonresponse mechanism.

\begin{assumption}[Nonresponse mechanism] \label{ass:nonresponse}
    The response indicators \((\rcov_{ti}, \rfull_{ti})\) are independent across units and depend only on \(\Gb_{ti}\).
    Specifically, for each $t \in \{1, 2\}$ and each unit \(i \in [N_t]\) with \(\Gb_{ti} = \gb\), $\Pb(\rcov_{ti} = 1) = a_{t\gb}^{xz}$, $\Pb(\rfull_{ti} = 1) = a_{t\gb}^y$, and \((\rcov_{ti}, \rfull_{ti}) \indep (\Xb_{ti}, Y_{ti}, \Zb_{ti})\), where $a_{t\gb}^{xz}, a_{t\gb}^y \in (0, 1]$ are unknown.
\end{assumption}

Assumption~\ref{ass:nonresponse} is a MAR assumption, where nonresponse depends only on \(\Gb_{ti}\). As noted above, this assumption is only needed for the $t=2$ case. While we state the assumption for both $t=1$ and $t=2$, all subsequent development remains valid when the nonresponse mechanism at $t = 1$ depends on $(\Xb_{1i}, \Zbt_{1i})$. We now formally specify the PPSWOR-within-strata sampling design in the next assumption.

\begin{assumption}[PPSWOR within strata] \label{ass:design}
For each $t \in \{1, 2\}$, the survey is designed to yield a self-weighting sample within each subgroup by targeting the pre-specified inclusion probability $\pi_{t\gb} \in (0, 1]$ for units in subgroup $\gb$. The finite population is partitioned into $H$ geographical strata, indexed by $h = 1, \dots, H$. Let $N_{th\gb}$ denote the population size of subgroup $\gb$ in stratum $h$, which is treated as known from the most recent census. The stratum sample size $m_{th}$ is chosen to satisfy the self-weighting assumption, namely $m_{th} = \sum_{\gb} N_{th\gb} \pi_{t\gb}$ for each stratum $h$.
Within each stratum $h$, we sample $m_{th}$ units using PPSWOR, where the single-draw selection probability for an individual in subgroup $\gb$ is  $\pi_{t\gb} / m_{th}$. The sampling is independent of $(\rcov_{ti}, \rfull_{ti})$ and $(\Xb_{ti}, Y_{ti}, \Zb_{ti})$. 
\end{assumption}

Since stratum population sizes are much larger than the number of sampled units, the subgroup composition can be approximated by a with-replacement model with subgroup probabilities $\tilde q_{th\gb} := N_{th\gb}\pi_{t\gb}/\sum_{\gb'} N_{th\gb'}\pi_{t\gb'}$, leading to the multinomial approximation in Assumption~\ref{ass:multinomial}.
Let $\mathrm{Mult}(\cdot, \cdot)$ denote the multinomial distribution.

\begin{assumption}[Multinomial approximation for subgroup-wise sample counts] \label{ass:multinomial}
For each $t \in \{1,2\}$ and $h \in [H]$, letting $n_{th\gb}^0$ denote the number of sampled units in subgroup $\gb$ within stratum $h$ at time $t$ (before nonresponse), we assume $(n_{th\gb}^0)_{\gb} \sim \mathrm{Mult}(m_{th}, (\tilde q_{th\gb})_{\gb})$.
\end{assumption}

However, due to nonresponse, the realized sample size within each stratum is random. The following theorem characterizes the distribution of subgroup-wise sample sizes within each stratum under Assumptions~\ref{ass:super}--\ref{ass:multinomial}.

\begin{theorem}[The distribution of subgroup-wise sample sizes within each stratum] \label{thm:multinomial}
    For each $t \in \{1, 2\}$ and $h \in [H]$, let $\mathcal D_{th}$ denote the subset of $\mathcal D_t$ (defined in \eqref{eqn:setting-form-rawdata} and \eqref{eqn:setting-form-data}) in stratum $h$, and let $n_{th} = |\mathcal D_{th}|$. For each $\gb \in \kg$, let $n_{th\gb}$ denote the sample size of subgroup $\gb$ in stratum $h$ at time $t$. Then, under Assumptions~\ref{ass:super}--\ref{ass:multinomial}, we have $(n_{th\gb})_{\gb} \mid n_{th} \sim \mathrm{Mult}(n_{th}, (q_{th\gb})_{\gb})$, where $q_{1h\gb} = N_{1h\gb}\pi_{1\gb}a_{1\gb}^y/\sum_{\gb'} N_{1h\gb'}\pi_{1\gb'}a_{1\gb'}^y$ and $q_{2h\gb} = N_{2h\gb}\pi_{2\gb}a_{2\gb}^{xz}/\sum_{\gb'} N_{2h\gb'}\pi_{2\gb'}a_{2\gb'}^{xz}$.
\end{theorem}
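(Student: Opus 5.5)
The argument is a multinomial thinning followed by conditioning on the total. Fix $t$ and $h$, and start from Assumption~\ref{ass:multinomial}: $(n_{th\gb}^0)_{\gb} \sim \mathrm{Mult}(m_{th}, (\tilde q_{th\gb})_{\gb})$.

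The first step is to describe the response thinning. By Assumption~\ref{ass:nonresponse}, each sampled unit in subgroup $\gb$ is retained in $\mathcal D_t$ independently. The retention probability is $a_{1\gb}^y$ for $t=1$, since $\done$ requires $\rfull_{1i}=1$. It is $a_{2\gb}^{xz}$ for $t=2$, since $\dtwo$ requires $\rcov_{2i}=1$. Retention is independent of the sampling by Assumption~\ref{ass:design}, and depends only on $\gb$. Write $b_{t\gb}$ for the relevant probability. Conditional on $(n_{th\gb}^0)_{\gb}$, the counts satisfy $n_{th\gb} \sim \mathrm{Bin}(n^0_{th\gb}, b_{t\gb})$, independently across $\gb$.

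Equivalently, each of the $m_{th}$ draws is classified into one of $2\cdot|\kg|$ cells, ``subgroup $\gb$, responded'' or ``subgroup $\gb$, not responded''. These cells have probabilities $\tilde q_{th\gb} b_{t\gb}$ and $\tilde q_{th\gb}(1-b_{t\gb})$, independently across draws. This is the natural reading of the with-replacement approximation in Assumption~\ref{ass:multinomial}, combined with unit-level independent nonresponse. Hence the joint vector of responded counts and nonresponded counts is $\mathrm{Mult}(m_{th}, \cdot)$ with these cell probabilities. The main care point is justifying this draw-level representation. Nonresponse is attached to units, but since the draws are i.i.d.\ in the approximation and response is independent of everything else with probability depending only on $\gb$, the representation is legitimate. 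Alternatively, one can verify it directly by summing the multinomial–binomial pmf product.

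Next, apply the standard aggregation and conditioning facts for multinomials. Collapsing all nonresponded cells into a single cell gives
\[
\bigl((n_{th\gb})_{\gb},\, m_{th}-n_{th}\bigr) \sim \mathrm{Mult}\bigl(m_{th}, ((\tilde q_{th\gb}b_{t\gb})_{\gb}, 1-p)\bigr),
\qquad p=\textstyle\sum_{\gb'}\tilde q_{th\gb'}b_{t\gb'}.
\]
Conditioning a multinomial on the total of a subset of its cells yields a multinomial on that subset with renormalized probabilities. I would verify this by computing the ratio of the joint pmf to the $\mathrm{Bin}(m_{th},p)$ pmf of $n_{th}$; the factors $(1-p)^{m_{th}-n_{th}}$ and the binomial coefficients cancel. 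The result is $(n_{th\gb})_{\gb}\mid n_{th} \sim \mathrm{Mult}(n_{th}, (\tilde q_{th\gb}b_{t\gb}/p)_{\gb})$.

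Finally, substitute $\tilde q_{th\gb}=N_{th\gb}\pi_{t\gb}/\sum_{\gb'}N_{th\gb'}\pi_{t\gb'}$. The common denominator cancels, and the cell probabilities become $N_{th\gb}\pi_{t\gb}b_{t\gb}/\sum_{\gb'}N_{th\gb'}\pi_{t\gb'}b_{t\gb'}$. These are exactly $q_{1h\gb}$ and $q_{2h\gb}$ as stated in Theorem~\ref{thm:multinomial}.

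Assumption~\ref{ass:super} plays no role in the counts themselves. It only ensures that response and sampling are independent of $(\Xb,Y,\Zb)$, so that no additional selection occurs.
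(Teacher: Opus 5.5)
Your proposal is correct and follows the paper's own proof. The paper likewise thins the multinomial pre-nonresponse counts of Assumption~\ref{ass:multinomial} by independent binomials with rates $a_{1\gb}^y$ or $a_{2\gb}^{xz}$ (Assumption~\ref{ass:nonresponse}), conditions on the total, and substitutes $\tilde q_{th\gb}$; it only asserts the intermediate multinomial facts (the responded/nonresponded cell representation, collapsing the cells, and conditioning on a subset total) that you spell out. One small correction to your closing remark: the independence of sampling and response from $(\Xb, Y, \Zb)$ comes from Assumptions~\ref{ass:nonresponse} and~\ref{ass:design}, not from Assumption~\ref{ass:super}, although this does not affect the count argument.
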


Finally, we assume that the conditional distribution of the outcome given covariates is invariant over time. 
This is a standard assumption in the covariate shift literature and is thus essential for applying WCP, as assumed in \eqref{eqn:background-wcp-dist}.

\begin{assumption}[Temporal invariance of the conditional distribution] \label{ass:invariant} 
We assume that $P_{Y_1 \mid \Xb_1 = \xb, \Zbt_1 = \zbt} = P_{Y_2 \mid \Xb_2 = \xb, \Zbt_2 = \zbt}$ for all $\xb \in \mathbb R^d$ and $\zbt \in \k$.
\end{assumption}

\subsection{Problem Setup} \label{sec-survey-prob}

Under Assumptions~\ref{ass:super}--\ref{ass:multinomial}, the training and target distributions can be characterized explicitly.  
Although the subgroup labels $\Gb_{ti}$ are treated as nonrandom at the population level, Theorem~\ref{thm:multinomial} implies that, conditional on the realized sample sizes, the sampled subgroup labels may be modeled as i.i.d.\ draws from a categorical distribution within each stratum. 
For each stratum $h \in [H]$, let $\ioneh \subset [N_1]$ and $\itwoh \subset [N_2]$ denote the index sets corresponding to the datasets $\doneh$ and $\dtwoh$, respectively.
Then, under Assumptions~\ref{ass:super}--\ref{ass:multinomial} and conditional on $n_{1h}$ and $n_{2h}$, the survey datasets $\doneh$ and $\dtwoh$ may be regarded as i.i.d.\ samples:
\begin{align} 
(\Xb_{1i}, Y_{1i}, \Zb_{1i}, \Gb_{1i}) &\iid P_{(\Xb_1, Y_1, \Zb_1) \mid \Gb_1} \times P_{\Gb_{1h} \mid \rfull_{1h} = 1}, \quad i \in \ioneh, \ \ h \in [H], \label{eqn:survey-3-doneh} \\
(\Xb_{2i}, \Zb_{2i}, \Gb_{2i}) &\iid P_{(\Xb_2, \Zb_2) \mid \Gb_2} \times P_{\Gb_{2h} \mid \rcov_{2h} = 1}, \quad i \in \itwoh, \ \ h \in [H], \label{eqn:survey-3-dtwoh}
\end{align}
where $P_{(\Xb_1, Y_1, \Zb_1) \mid \Gb_1}$ and $P_{(\Xb_2, \Zb_2) \mid \Gb_2}$ are defined in Assumption~\ref{ass:super}. Moreover, $P_{\Gb_{1h} \mid \rfull_{1h} = 1} := \mathrm{Cat}((q_{1h\gb})_{\gb})$ and $P_{\Gb_{2h} \mid \rcov_{2h} = 1} := \mathrm{Cat}((q_{2h\gb})_{\gb})$, where $\mathrm{Cat}(\cdot)$ denotes the categorical distribution and $q_{1h\gb}$ and $q_{2h\gb}$ are specified in Theorem~\ref{thm:multinomial}.
The target distribution, i.e., the population at $t = 2$, is then given by
\begin{equation} \label{eqn:survey-3-target}
(\Xb_2, Y_2, \Zb_2, \Gb_2) \sim P_{(\Xb_2, Y_2, \Zb_2) \mid \Gb_2} \times P_{\Gb_2}, \quad P_{\Gb_2} := \mathrm{Cat}((p_{2\gb})_{\gb}),
\end{equation}
where $P_{(\Xb_2, Y_2, \Zb_2) \mid \Gb_2}$ is defined in Assumption~\ref{ass:super} and $p_{2\gb} := N_{2\gb}/N_2$.

In the next section, we propose an algorithm that adapts the WCP framework in constructing asymptotically valid prediction sets for the target distribution, using $\donet$ and $\dtwo$. Our approach exploits the fact that the training data are i.i.d.\ within each stratum, conditional on the realized sample sizes, thanks to Assumptions~\ref{ass:super}--\ref{ass:invariant}. 
%
%
%
While these assumptions 
%
capture key design features,
they
merit explicit discussion. Assumption~\ref{ass:super} treats population values as independent draws, ruling out intra-cluster dependence; if within-PSU or within-segment correlation is substantial, 
finite-sample coverage may be affected. Assumption~\ref{ass:nonresponse} imposes a MAR condition and may be violated if nonresponse depends on unobserved covariates or the outcome itself. Although these assumptions are standard in model-assisted survey inference,
they should be assessed carefully in practice.

\section{Proposed Methods} \label{sec-method}
In this section, we present detailed algorithms for constructing prediction sets from $\donet$ and $\dtwo$ under the WCP framework, and refer to the resulting approach as design-adapted weighted conformal prediction (DA-WCP). We focus on marginal prediction sets that ensure marginal validity; the construction of group-conditional prediction sets ensuring group-conditional validity is deferred to appendix~\ref{app:group}.

For each stratum $h \in [H]$, we randomly split the stratum-specific dataset $\doneh$ into a training set $\dtrah$ and a calibration set $\dcalh$, with index sets $\itrah$ and $\icalh$, where $\ntrah = |\itrah|$ and $\ncalh = |\icalh|$ are typically chosen as $\ntrah = \ncalh = n_{1h}/2$. We then define $\dtra := \bigcup_{h=1}^H \dtrah$ and $\dcal := \bigcup_{h=1}^H \dcalh$, with corresponding index sets $\itra := \bigcup_{h=1}^H \itrah$ and $\ical := \bigcup_{h=1}^H \icalh$. 

\subsection{Continuous Outcomes}
We first describe the construction of prediction sets for continuous outcomes, $Y_{t} \in \Rb$.

\paragraph*{\textbf{(a) Score functions and nonconformity scores}}

The score function $\mathcal{S}(\xb, y, \zbt)$ measures how unlikely a candidate value $y$ is under the conditional distribution $P_{Y_t \mid \Xb_t = \xb, \Zbt_t = \zbt}$, and is constructed using $\dtra$.
We consider two commonly used score functions for continuous outcomes. First, given an estimated mean function $\widehat{\mu}$, we use the \textit{absolute residual score} $\mathcal{S}(\xb, y, \zbt) := |y - \widehat{\mu}(\xb, \zbt)|$, as in \citet{lei18}. Second, given estimated quantile functions $\widehat{q}_{\alo}$ and $\widehat{q}_{\ahi}$ at levels $\alo$ and $\ahi$ with $\alo < \ahi$, we use the \textit{conformalized quantile regression (CQR) score} of \citet{cqr}, $\mathcal{S}(\xb, y, \zbt) := \max\{\widehat{q}_{\alo}(\xb, \zbt) - y, \, y - \widehat{q}_{\ahi}(\xb, \zbt)\}$.
Although we focus on these two choices in this paper, other score functions can also be used, such as the standardized residual score of \citet{lei18}.
We then compute the nonconformity scores as $s_i = \mathcal S(\Xb_{1i}, Y_{1i}, \Zbt_{1i})$ for $i \in \ical$.


\paragraph*{\textbf{(b) Estimation of the WCP weight function within each stratum}} 
For each stratum $h \in [H]$, let $w_h$ denote the WCP weight function in stratum $h$ with~\eqref{eqn:survey-3-doneh} as the training distribution and~\eqref{eqn:survey-3-target} as the target distribution. Then, for any $\xb \in \mathbb{R}^d$ and $\zbt = (\gb^\top, \zb^\top)^\top \in \k$, $w_h(\xb, \zbt)$ is given by
\begin{equation} \label{eqn:proposed-construction-w}
w_h(\xb, \zbt) = \left(\frac{dP_{\Xb_2 \mid \Zbt_2 = \zbt}}{dP_{\Xb_1 \mid \Zbt_1 = \zbt}}\right)(\xb) \cdot \frac{P_{\Zb_2 \mid \Gb_2 = \gb}(\zb)}{P_{\Zb_1 \mid \Gb_1 = \gb}(\zb)} \cdot \frac{P_{\Gb_2}(\gb)}{P_{\Gb_{1h} \mid \rfull_{1h} = 1}(\gb)} 
= r_{\zbt}(\xb) \cdot \frac{v_{2\zbt}}{v_{1\zbt}} \cdot \frac{p_{2\gb}}{q_{1h\gb}},
\end{equation}
where $r_\zbt(\cdot) := \left(\tfrac{dP_{\Xb_2 \mid \Zbt_2 = \zbt}}{dP_{\Xb_1 \mid \Zbt_1 = \zbt}}\right)(\cdot)$ denotes the density ratio function for subgroup $\zbt$, $v_{1\zbt} := P_{\Zb_1 \mid \Gb_1 = \gb}(\zb)$ and $v_{2\zbt} := P_{\Zb_2 \mid \Gb_2 = \gb}(\zb)$ are the proportions of $\zb$ within the demographic subgroup $\gb$, and $q_{1h\gb}$ is the proportion of $\gb$ under $P_{\Gb_{1h} \mid \rfull_{1h} = 1}$ in~\eqref{eqn:survey-3-doneh}.

In~\eqref{eqn:proposed-construction-w}, while $p_{2\gb} = N_{2\gb}/N_2$ is known, the remaining quantities are unknown and must be estimated from $\donet$ and $\dtwo$. However, $\dcal$ should be used only for computing scores, since it must be independent of the estimated WCP weight functions. To ensure this, at $t = 1$ we use the remaining data $\dttra := \{(\Xb_{1i}, Y_{1i}, \Zbt_{1i}, \rfull_{1i})\}_{i \in \ittra}$ for estimation, where $\ittra := \ionet \setminus \ical$ and $\ionet$ is defined in~\eqref{eqn:setting-form-rawdata}.

Since estimating the two densities separately and then taking their ratio can be numerically unstable, we directly estimate the subgroup-specific density ratio function $r_\zbt$.
In particular, for each $\zbt \in \k$, we estimate $r_\zbt$ using the KLIEP method of \citet{Sugiyama2007, Sugiyama2008}, applied to $\{\Xb_{1i} \mid i \in \ittra, \Zbt_{1i} = \zbt\}$ and $\{\Xb_{2i} \mid i \in \itwo, \Zbt_{2i} = \zbt\}$. 
By Assumption~\ref{ass:super}, these are i.i.d.\ samples from $P_{\Xb_1 \mid \Zbt_1 = \zbt}$ and $P_{\Xb_2 \mid \Zbt_2 = \zbt}$, respectively. 
We denote the resulting estimator by $\widehat r_\zbt$. KLIEP, detailed in appendix~\ref{app:kliep-1}, provides one concrete route for obtaining explicit rates for the weight estimation error entering the coverage bound.
By the same assumption, for each demographic subgroup $\gb \in \kg$, the sets $\{\Zb_{1i} \mid i \in \ittra, \Gb_{1i} = \gb\}$ and $\{\Zb_{2i} \mid i \in \itwo, \Gb_{2i} = \gb\}$ are i.i.d.\ samples from $P_{\Zb_1 \mid \Gb_1 = \gb}$ and $P_{\Zb_2 \mid \Gb_2 = \gb}$, respectively. Accordingly, for each $\zbt= (\gb^\top, \zb^\top)^\top$, we estimate $\widehat v_{1\zbt} = \sum_{i \in \ittra} \mathbf 1\{\Zbt_{1i} = \zbt\}/\sum_{i \in \ittra} \mathbf 1\{\Gb_{1i} = \gb\}$ and $\widehat v_{2\zbt} = \sum_{i \in \itwo} \mathbf 1\{\Zbt_{2i} = \zbt\}/\sum_{i \in \itwo} \mathbf 1\{\Gb_{2i} = \gb\}$, where $\mathbf 1\{\cdot\}$ denotes the indicator function.


Moreover, for each $h \in [H]$, although $\{\Gb_{1i}\}_{i \in \itrah}$ is not an i.i.d.\ sample from the $t = 1$ population within stratum $h$, conditional on $n_{1h}$ it can 
be modeled as an i.i.d.\ sample from $P_{\Gb_{1h} \mid \rfull_{1h} = 1}$ in~\eqref{eqn:survey-3-doneh}, reflecting key design features such as oversampling and nonresponse. Using $\dtrah$, we therefore estimate $q_{1h\gb}$ by $\widehat q_{1h\gb} = \sum_{i \in \itrah} \mathbf 1\{\Gb_{1i} = \gb\} / \ntrah$ for each $\gb$. Finally, 
for any $\xb$ and $\zbt = (\gb^\top, \zb^\top)^\top$, 
we estimate the WCP weight function in stratum $h$ by
\begin{equation} \label{eqn:proposed-construction-what}
\widehat w_h(\xb, \zbt) = \widehat r_{\zbt}(\xb) \cdot \frac{\widehat v_{2\zbt}}{\widehat v_{1\zbt}} \cdot \frac{p_{2\gb}}{\widehat q_{1h\gb}}.
\end{equation}

\paragraph*{\textbf{(c) Construction of the prediction set}} 
Given the nonconformity scores $\{s_i\}_{i \in \mathcal{I}_{\mathrm{cal}}}$ and the estimated WCP weight functions $\{\widehat{w}_h\}_{h=1}^H$, for a miscoverage level $\alpha \in (0,1)$, we compute the WCP-type threshold as
\begin{equation} \label{eqn:proposed-construction-threshold}
\widehat{t}_\alpha = \text{Quantile}_{1-\alpha}\left(\sum_{h=1}^H \frac{1}{H} \sum_{i \in \icalh} \frac{\widehat{w}_h(\Xb_{1i}, \Zbt_{1i})}{\sum_{j \in \icalh} \widehat{w}_h(\Xb_{1j}, \Zbt_{1j})} \cdot\delta_{s_i} \right).
\end{equation}
In \eqref{eqn:proposed-construction-threshold}, we use $\icalh$ in place of $\icalh'$ (see \eqref{eqn:background-wcp-threshold}) for computational convenience. 
The marginal prediction set $\widehat{C}_{\alpha}(\xb,\zbt)$ for $Y_2$ at $(\Xb_2, \Zbt_2)=(\xb,\zbt)$, with miscoverage level $\alpha$, is constructed as
\begin{equation} \label{eqn:proposed-construction-set}
\widehat{C}_{\alpha}(\xb, \zbt) = \left\{y \in \mathbb{R} : \mathcal{S}(\xb, y, \zbt) \le \widehat{t}_\alpha \right\}, \quad \forall\xb \in \Rb^d,\ \zbt \in \k.
\end{equation}

\paragraph*{\textbf{(d) Selection of the prediction set}}
In the CP literature, a common objective is to minimize prediction interval length subject to the desired coverage guarantee. Because interval length depends on the choice of score function and the best choice is unknown in advance, we construct $L$ prediction sets utilizing different score functions $\mathcal S_1, \dots, \mathcal S_L$. For each $\ell \in [L]$, we apply steps (a)--(c) with $\mathcal S_\ell$ and denote the resulting prediction set in~\eqref{eqn:proposed-construction-set} by $\widehat C_{\alpha,\ell}$. For example, 
we vary the quantile levels $(\alo, \ahi)$ in the CQR score to obtain $L$ candidate prediction sets. We then select the candidate with the smallest average interval length; see appendix~\ref{app:implement} for details. Let $\widehat \ell \in [L]$ denote the selected index; the selected prediction set is $\widehat C_{\alpha,\hat \ell}$.

\subsection{Categorical Outcomes}
For a categorical outcome $Y_t \in [M]$ with $M$ categories, the same procedure applies, with the only difference being the choice of score function. Given a soft classifier $\widehat{f}(\xb,\zbt) = (\widehat{f}_1(\xb,\zbt), \ldots, \widehat{f}_M(\xb,\zbt))^\top$ trained on $\dtra$, we use $\mathcal{S}(\xb,y,\zbt) := 1 - \widehat{f}_y(\xb,\zbt)$ \citep{vovk}. The construction in~\eqref{eqn:proposed-construction-set} then applies with $y \in [M]$. The resulting prediction set $\widehat C_\alpha(\xb, \zbt)$ may contain zero, one, or multiple categories, depending on $(\xb, \zbt)$. In this setting, we may vary the hyperparameters of the soft classifier and select the candidate set with the smallest average set cardinality, analogous to the interval length in the continuous case; see appendix~\ref{app:implement}.

\section{Theoretical Guarantee} \label{sec-theoretical}

In this section, we derive a theoretical coverage guarantee for the proposed marginal prediction sets; an analogous result for group-conditional sets is given in appendix~\ref{app:theoretical-group}. Let $\widehat C_{\alpha,1}, \dots, \widehat C_{\alpha,L}$ denote $L$ candidate prediction sets of the form~\eqref{eqn:proposed-construction-set}, and let $\widehat \ell \in [L]$ be the selected index. The following theorem provides a lower bound on the coverage of the selected set $\widehat C_{\alpha, \hat \ell}$. The proof, which extends the ordinary split CP argument of \citet{aggregation} to our setting, is given in appendix~\ref{app:thm1}.

\begin{theorem}[Lower bound on the coverage probability of $\widehat C_{\alpha, \hat \ell}$] \label{thm:coverage}
    Define $\Dc := \dttra \cup \dtwo$, where \(\dttra\) is given in Section~\ref{sec-method}.
    Denote $\nrawone := |\irawone|$, $n_2 := |\itwo|$, and $n_{1h} := |\ioneh|$ for each $h \in [H]$.
    Under Assumptions~\ref{ass:super}--\ref{ass:invariant} and a set of regularity conditions on the subgroup-specific density ratio functions (Assumption~\ref{ass:kliep-1} in appendix~\ref{app:kliep-2}),
    the selected prediction set $\widehat C_{\alpha, \hat \ell}$ satisfies the following.
    There exists an event $\mathcal E (\Dc)$ that depends solely on $\Dc$ such that
    $\Pb(\mathcal E(\Dc) \mid \tilde n_1, n_2, \{n_{1h}\}_{h=1}^H ) \to 1$
    as $n_{11}, \dots, n_{1H}, n_2 \to \infty$, and on $\mathcal E(\Dc)$,
    \[
    \begin{adjustbox}{max width=\linewidth}
    $\displaystyle
    \begin{aligned}
        &\Pb\left(Y_2 \in \widehat{C}_{\alpha, \hat \ell}(\Xb_2, \Zbt_2) \Bigm| \Dc, \tilde n_1, n_2, \{n_{1h}\}_{h=1}^H \right)  \ge 1-\alpha - \left(2 + C_1\Big(1 +\sqrt{\log(2HL)}\Big)\right) \\ 
        &\times \Bigg( \underbrace{\max_{1 \le h \le H} \Eb\left[\big|\widehat w_h(\Xb_{1h}, \Zbt_{1h}) - w_h(\Xb_{1h}, \Zbt_{1h})\big| \Bigm| \Dc, \tilde n_1, n_2, \{n_{1h}\}_{h=1}^H  \right]}_{(\mathrm{I})} + 
        \underbrace{\max_{\zbt, h} \frac{v_{2\zbt} p_{2\gb}}{v_{1\zbt} q_{1h\gb}} \cdot \frac{C_2}{\sqrt{\min_{h} n_{1h}}}}_{(\mathrm{II})} \Bigg)
    \end{aligned}
    $
    \end{adjustbox}
    \]
    for some universal constants $C_1, C_2> 0$.
    Here, $(\Xb_{1h}, \Zbt_{1h}) \sim P_{(\Xb_1, \Zb_1) \mid \Gb_1} \times P_{\Gb_{1h} \mid \rfull_{1h}=1}$ for each $h \in [H]$ and
    the test point $(\Xb_2, Y_2, \Zbt_2) \sim P_{(\Xb_2, Y_2, \Zb_2) \mid \Gb_2} \times P_{\Gb_2}$ (See \eqref{eqn:survey-3-doneh} and \eqref{eqn:survey-3-target}) are both independent of $\Dc$.
\end{theorem}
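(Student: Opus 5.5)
Throughout we condition on $\Dc$ and on the sample sizes $(\tilde n_1, n_2, \{n_{1h}\}_h)$. By \eqref{eqn:survey-3-doneh}, the calibration points within each stratum are then i.i.d.\ from $P_{(\Xb_1,Y_1,\Zb_1)\mid\Gb_1}\times P_{\Gb_{1h}\mid\rfull_{1h}=1}$ and independent across strata. The fitted scores $\mathcal S_1,\dots,\mathcal S_L$ are fixed functions of $\dtra\subset\dttra$. The estimated weights $\widehat w_h$ are fixed functions of $\Dc$.

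\textbf{Step 1 (reference quantity).} For each $\ell$ and threshold $t$ I would define the population functional
\[
F_\ell(t) := \Pb\bigl(\mathcal S_\ell(\Xb_2,Y_2,\Zbt_2)\le t \mid \Dc\bigr).
\]
By Assumption~\ref{ass:invariant} and the identity $w_h = dP_{\text{target}}/dP_{\text{train},h}$ from \eqref{eqn:proposed-construction-w}, for every $h$ it equals
\[
F_\ell(t) = \Eb\bigl[w_h(\Xb_{1h},\Zbt_{1h})\,\mathbf 1\{\mathcal S_\ell(\Xb_{1h},Y_{1h},\Zbt_{1h})\le t\}\bigr],
\]
so it also equals the average of these expressions over $h$. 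The empirical counterpart is
\[
\widehat F_\ell(t)=\frac1H\sum_h \frac{\sum_{i\in\icalh}\widehat w_h(\Xb_{1i},\Zbt_{1i})\mathbf 1\{s_i\le t\}}{\sum_{j\in\icalh}\widehat w_h(\Xb_{1j},\Zbt_{1j})}.
\]
By the definition \eqref{eqn:proposed-construction-threshold} of the quantile, $\widehat F_\ell(\widehat t_{\alpha,\ell})\ge 1-\alpha$. The coverage of $\widehat C_{\alpha,\hat\ell}$ equals $F_{\hat\ell}(\widehat t_{\alpha,\hat\ell})$. Hence
\[
F_{\hat\ell}(\widehat t_{\alpha,\hat\ell}) \ge 1-\alpha-\sup_{\ell,t}\bigl|\widehat F_\ell(t)-F_\ell(t)\bigr|.
\]
This uniform deviation is what handles the data-dependent selection $\hat\ell$, following the aggregation argument of \citet{aggregation}.

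\textbf{Step 2 (decomposition).} For each $h$ I would write the ratio estimator with $\widehat w_h$ replaced by $w_h$ and split the error into two parts.
\begin{itemize}
\item[(a)] The weight-estimation error. The numerator changes by at most $\frac1{n_{\mathrm{cal},h}}\sum_i|\widehat w_h-w_h|$, and the normalizer changes by the same amount. Since the true normalizer $\frac1{n_{\mathrm{cal},h}}\sum_j w_h$ has mean $1$, the ratio error is bounded by roughly twice the empirical $L^1$ error. Its conditional mean is term (I), and its fluctuation is controlled by concentration. This is the origin of the leading factor $2$.
\item[(b)] The sampling error with true weights,
\[
\sup_t\Bigl|\frac{1}{n_{\mathrm{cal},h}}\sum_i w_h\mathbf 1\{s_i\le t\} - F_\ell(t)\Bigr|,
\]
together with the normalizer deviation $\bigl|\frac1{n_{\mathrm{cal},h}}\sum w_h-1\bigr|$.
\end{itemize}

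\textbf{Step 3 (uniform concentration).} For term (b), the function class $\{w_h\mathbf 1\{\mathcal S_\ell\le t\}:t\in\Rb\}$ is a VC-type class scaled by the envelope $\sup w_h$. I would bound this envelope by $\max_{\zbt,h}\frac{v_{2\zbt}p_{2\gb}}{v_{1\zbt}q_{1h\gb}}$ times a bound on $r_\zbt$, which the regularity conditions of Assumption~\ref{ass:kliep-1} should supply (e.g.\ bounded density ratios). A DKW/Massart-type inequality combined with a union bound over the $HL$ pairs $(h,\ell)$ then gives a deviation of order
\[
\text{envelope}\times\frac{1+\sqrt{\log(2HL)}}{\sqrt{n_{\mathrm{cal},h}}}
\]
on an event of probability tending to $1$. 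With $n_{\mathrm{cal},h}\asymp n_{1h}$ this yields term (II), up to constants $C_1,C_2$.

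\textbf{Step 4 (the event and the main obstacle).} The event $\mathcal E(\Dc)$ collects all conditions that must hold for the argument above: $\widehat q_{1h\gb}$ and $\widehat v_{1\zbt}$ bounded away from zero, $\widehat w_h$ bounded, and the KLIEP estimates well defined. These hold with probability tending to $1$ by laws of large numbers for the multinomial counts (Theorem~\ref{thm:multinomial}) and the KLIEP consistency results.

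I expect the main obstacle to be the ratio (self-normalization) step in (a) combined with the fact that the statement is made conditionally on $\Dc$ alone. Because $\dcal$ is random and the statement conditions only on $\Dc$, the empirical $L^1$ error and the deviations must either be replaced by their expectations, or the bound must be read as holding with high probability and then converted. I would first try to bound $\Eb[\sup|\widehat F-F|\mid\Dc]$ directly, using symmetrization and a Dudley-type bound for the sampling part. This gives the $\sqrt{\log(2HL)}$ factor multiplying both (I) and (II). Then, since the coverage can be written as the expectation of $\mathbf 1\{S\le\widehat t\}$ over calibration randomness, I would integrate the deviation bound into the coverage bound. Taking expectations also accounts for the extra $+2$ and the $C_1(1+\sqrt{\log 2HL})$ multiplier on (I).
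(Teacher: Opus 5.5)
Your argument is correct, but it isolates the weight-estimation error in a different place from the paper.

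\textbf{The paper's route.} The paper keeps $\widehat w_h$ inside the empirical process. Because $\widehat w_h$ is $\Dc$-measurable, it centers $\widehat w_h\mathbf 1\{\mathcal S_\ell\le t\}$ at its conditional mean given $\Dc$. It controls the normalizer and the fluctuation terms by Hoeffding and by Bousquet's form of Talagrand's inequality, with envelope $\lVert\widehat w_h\rVert_\infty\le C=3\bar M\max_{\zbt,h}\frac{v_{2\zbt}p_{2\gb}}{v_{1\zbt}q_{1h\gb}}$. Securing this sup-norm bound on the estimator, via $\lVert\widehat r_{\zbt}\rVert_\infty\le\bar M$, is the main job of $\mathcal E(\Dc)$. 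The weight error is pushed to the population level, where it is the deterministic quantity $|\Eb[\widehat w_h\mid\Dc]-1|+\Eb[|\widehat w_h-w_h|\mid\Dc]\le 2\,(\mathrm{I})$. The paper then gets Gaussian-type tails for $\sqrt{n_h}W_{\ell,h}$ at scale $\Delta_h=\sqrt{n_h}\,(\mathrm{I})_h+4CC_3$, union-bounds over the $HL$ pairs, and integrates the tail. The leading $2$ comes from the range $u\le 2$. The factor $\sqrt{\log(2HL)}$ multiplies (I) only because (I) was bundled into $\Delta_h$.

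\textbf{Your route.} You swap $\widehat w_h$ for $w_h$ on the calibration sample. The only empirical process then involves the fixed true weights, whose envelope $\eta_1\max\frac{v_{2\zbt}p_{2\gb}}{v_{1\zbt}q_{1h\gb}}$ comes from (A1) rather than from a bound on the estimator. The weight error appears as $\frac1{n_h}\sum_i|\widehat w_h-w_h|(\Tb_{hi})$, whose conditional mean is exactly (I). This bound holds simultaneously for every indicator, so it needs no union bound. Contrary to what you say in Step 4, your route therefore gives a constant multiple of (I) with no $\sqrt{\log(2HL)}$ factor. Averaging over $h$ at the level of expectations also removes $\log H$ from the sampling term. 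Your bound is thus sharper than the stated one and implies it once $C_1,C_2$ are enlarged.

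\textbf{What your route costs.} You must handle the random normalizer $B'_h=\frac1{n_h}\sum_i w_h(\Tb_{hi})$. One way is to work on $\{B'_h\ge 1/2\}$: the deviation is at most $1$, and the complement has probability at most $e^{-n_h/(2K^2)}\lesssim K/\sqrt{n_h}$ with $K=\lVert w_h\rVert_\infty$. This gives a factor $4$ rather than $2$ on (I) unless refined. You also obtain only an expectation bound, whereas the paper's argument additionally yields exponential tail bounds for the deviation given $\Dc$.

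As you noticed, the DKW-plus-union-bound version in Step 3 cannot be used as stated, because an event over $\dcal$ cannot be part of $\mathcal E(\Dc)$. The expectation bound in Step 4 is the version that works.
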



Theorem~\ref{thm:coverage} holds for any score function, applies to both continuous and categorical outcomes, and does not depend on the rule used to select $\widehat C_{\alpha, \hat \ell}$.
Since the prefactor grows only logarithmically in $H$, the coverage lower bound remains stable as $H$ increases.
When $H$ is small (e.g., $H = 1$), one can obtain a tighter bound; we provide such a refinement in appendix~\ref{app:smallh}.

The first slack term (I) reflects the use of the estimated weight function $\widehat w_h$ in place of $w_h$. The use of the KLIEP-based estimator $\widehat w_h$ is made for concreteness and analytical tractability; in principle, the same coverage decomposition applies to any density ratio estimator that achieves comparable error bounds. 
The second slack term (II) arises from data-dependent selection via the random index $\widehat \ell \in [L]$, which depends on both $\Dc$ and $\dcal$.
Note that term (II) also depends on the discrepancy between subgroup proportions in the training and target distributions.
Term (I) determines the overall rate, which is in turn determined by the convergence rate of $\widehat r_{\zbt}$ (provided in Theorem~\ref{thm:kliep-1} of appendix). 
Due to the curse of dimensionality, this rate depends on $d$, the dimension of $\Xb_t$.
Combining Theorems~\ref{thm:coverage} and~\ref{thm:kliep-1} then yields the following corollary.

\begin{corollary}[Theoretical coverage guarantee for $\widehat C_{\alpha, \hat \ell}$] \label{cor:coverage}
    Conditional on $\nrawone$, $n_2$, and $\{n_{1h}\}_{h=1}^H$, and under Assumptions~\ref{ass:super}--\ref{ass:invariant} and Assumption~\ref{ass:kliep-1} in appendix, as $n_{11}, \dots, n_{1H}, n_2 \to \infty$, if $\min_{1 \le h \le H} n_{1h} \gtrsim \tilde n_1/H$ and $\min_{1 \le h \le H} n_{1h} \lesssim n_2$, then the selected prediction set $\widehat C_{\alpha, \hat \ell}$ satisfies
    \[
    \begin{adjustbox}{max width=\linewidth}
    $\displaystyle
    \begin{aligned}
        \Pb\Big(Y_2 \in \widehat{C}_{\alpha,\hat\ell}(\Xb_2, \Zbt_2) \Bigm| \Dc, \nrawone, n_2, \{n_{1h}\}_{h=1}^H \Big) 
        \ge 1 - \alpha  - 
        \begin{cases}
            O_p\big((\min_{1 \le h \le H} n_{1h})^{-\frac{1}{2}}\big) & d = 0,\\
            O_p\big(\min(\tilde n_1, n_2)^{-\frac{1}{2+\gamma}}\big) & d = 1, 2, \\
            O_p\Big(\Bigl(\tfrac{\log\min(\tilde n_1, n_2)}{\min(\tilde n_1, n_2)}\Bigr)^{\frac{1}{d}}\Big) & d \ge 3,
        \end{cases}
    \end{aligned}
    $
    \end{adjustbox}
    \]
    for any arbitrarily small $\gamma > 0$. The $O_p(\cdot)$ terms are taken with respect to the randomness in $\Dc$,
    conditional on $\nrawone$, $n_2$, and $\{n_{1h}\}_{h=1}^H$.
    Moreover, $(\Xb_2, Y_2, \Zbt_2) \sim P_{(\Xb_2, Y_2, \Zb_2) \mid \Gb_2} \times P_{\Gb_2}$ is an independent test point drawn from the $t = 2$ population.
\end{corollary}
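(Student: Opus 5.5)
The corollary follows by bounding the two slack terms (I) and (II) of Theorem~\ref{thm:coverage} separately and then reading off rates case by case in $d$. On the event $\mathcal E(\Dc)$, whose conditional probability tends to one, the coverage deficit is at most $C(1+\sqrt{\log(2HL)})\,\{(\mathrm{I})+(\mathrm{II})\}$. Since $H$ and $L$ are fixed, the prefactor is a constant. Because $\Pb(\mathcal E(\Dc)^c\mid\cdot)\to0$, a bound that holds on $\mathcal E(\Dc)$ converts directly into an $O_p$ statement for the conditional coverage. Term (II) is deterministic given the sample sizes, and it equals $O((\min_h n_{1h})^{-1/2})$ as long as the ratios $v_{2\zbt}p_{2\gb}/(v_{1\zbt}q_{1h\gb})$ are bounded. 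This boundedness is implied by positivity of all subgroup proportions, which I take to be part of Assumption~\ref{ass:kliep-1} or of the finiteness of $\k$.

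For term (I), I write $w_h=r_\zbt\cdot a_\zbt\cdot b_{h\gb}$ and $\widehat w_h=\widehat r_\zbt\cdot\widehat a_\zbt\cdot\widehat b_{h\gb}$, with $a=v_2/v_1$ and $b=p_2/q_{1h}$. A telescoping decomposition gives
\[
|\widehat w_h-w_h|\le |\widehat r_\zbt-r_\zbt|\,|\widehat a_\zbt\widehat b_{h\gb}| + r_\zbt\,|\widehat a_\zbt-a_\zbt|\,|\widehat b_{h\gb}| + r_\zbt a_\zbt\,|\widehat b_{h\gb}-b_{h\gb}|.
\]
Taking the expectation over $(\Xb_{1h},\Zbt_{1h})$ and summing over the finitely many $\zbt$, the first piece is controlled by $\max_\zbt \Eb_{P_{\Xb_1\mid\Zbt_1=\zbt}}|\widehat r_\zbt-r_\zbt|$. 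This is exactly the quantity bounded by Theorem~\ref{thm:kliep-1}, which gives rate $\min(\tilde n_1,n_2)^{-1/(2+\gamma)}$ for $d=1,2$ and $(\log m/m)^{1/d}$ for $d\ge3$, with $m=\min(\tilde n_1,n_2)$. Here I need the subgroup-wise sample sizes to grow proportionally to $\tilde n_1$ and $n_2$; this holds with probability tending to one by the multinomial structure of Theorem~\ref{thm:multinomial} and the law of large numbers.

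The remaining pieces involve empirical proportions. For the ratios $\widehat v_{t\zbt}$, the relevant counts are of order $\tilde n_1$ and $n_2$, so these ratios are $O_p(\tilde n_1^{-1/2}+n_2^{-1/2})$-accurate. For $\widehat q_{1h\gb}$, which is built from $\ntrah\asymp n_{1h}$ i.i.d.\ categorical draws, the accuracy is $O_p(n_{1h}^{-1/2})$. Because the true proportions are bounded away from zero, the reciprocals inherit the same rates via the delta method. The factor $\Eb[r_\zbt]=1$ keeps the middle term integrable.

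Collecting the rates, the total deficit is
\[
O_p\big((\min_h n_{1h})^{-1/2}+\tilde n_1^{-1/2}+n_2^{-1/2}\big)+\text{(KLIEP rate)}.
\]
When $d=0$ there is no KLIEP term, since $r_\zbt\equiv1$. The conditions $\min_h n_{1h}\gtrsim\tilde n_1/H$ and $\min_h n_{1h}\lesssim n_2$ then give $\tilde n_1^{-1/2}+n_2^{-1/2}\lesssim(\min_h n_{1h})^{-1/2}$, so the deficit is $O_p((\min_h n_{1h})^{-1/2})$. When $d\ge1$, the same two conditions give $\min_h n_{1h}\asymp\tilde n_1$ up to the fixed $H$, so every parametric $n^{-1/2}$ term is $O(m^{-1/2})$. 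Each such term is dominated by the slower KLIEP rate, since $1/(2+\gamma)<1/2$ and $1/d<1/2$ once $d\ge3$.

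The step I expect to be delicate is the first piece of term (I). Theorem~\ref{thm:kliep-1} presumably controls $\widehat r_\zbt-r_\zbt$ in some norm, such as $L^2(P_{\Xb_1\mid\Zbt_1})$ or Hellinger, on an event of high probability. This has to be converted into the $L^1$ expectation required by (I) while the random factor $\widehat a\,\widehat b$ multiplies it. To handle this, I will first bound $\widehat a\,\widehat b$ by a constant on a high-probability event, and I will fold that event into $\mathcal E(\Dc)$. I will then apply Cauchy--Schwarz (or Hellinger-to-$L^1$ bounds together with the boundedness of $r_\zbt$ from Assumption~\ref{ass:kliep-1}) to pass from the norm controlled by Theorem~\ref{thm:kliep-1} to $L^1$.
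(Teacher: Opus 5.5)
Your proposal is correct and follows essentially the paper's route: the paper derives the corollary from Theorem~\ref{thm:coverage} together with Proposition~\ref{prop:weight}, whose proof makes the same split of $|\widehat w_h-w_h|$ and then uses the two balance conditions as you do. That split has a density-ratio piece, converted from the Hellinger rate of Theorem~\ref{thm:kliep-1} to $L^1$ using $\|\widehat r_{\zbt}\|_\infty\le\bar M$ on a high-probability event and $r_{\zbt}\le\eta_1$, and a proportion piece of order $O_p(\min(n_{1h},n_2)^{-1/2})$; the paper also uses $\min_h n_{1h}\gtrsim\tilde n_1/H$ to guarantee that the KLIEP subsamples have size $\gtrsim\tilde n_1$. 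Your Cauchy--Schwarz variant of the $L^1$ conversion is slightly cleaner, since $\int|\widehat r_{\zbt}-r_{\zbt}|\,dP_{1\zbt}\le h(h+2)$ with $h=h_{P_{1\zbt}}(r_{\zbt},\widehat r_{\zbt})$ requires no sup-norm bound on $\widehat r_{\zbt}$.
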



We conclude by comparing the miscoverage error of GWCP \citep{gwcp} to that of our method in the special case $H = 1$, $d=0$, and $l=s$, where sampling is from a single stratum and $\Gb_{ti}$ is the only covariate. This setting results in the simplified WCP weight function $w(\xb, \zbt) = p_{2\gb}/q_{1\gb}$.
GWCP assumes covariate shift only in $\Gb_{ti}$, estimates $q_{1\gb}$ from $\dcal$, and uses a proof specialized to this setting to obtain the sharper $O(n_1^{-1})$ bound.
In contrast, our method is designed for more general settings where shift may occur in both continuous and categorical variables $(\Xb_t, \Zbt_t)$ and involves set selection, and thus relies on the general bound in Theorem~\ref{thm:coverage}, with $\widehat w$ estimated from $\dtra$.
Consequently, we estimate $q_{1\gb}$ from $\dtra$ rather than $\dcal$, giving $\widehat q_{1\gb} = q_{1\gb} + O_p(\ntra^{-1/2})$ and a miscoverage error of $O_p(n_1^{-1/2})$.

\section{Simulation Study} \label{sec-simul}

In this section, we present a simulation study to evaluate the performance of our proposed method, DA-WCP, under distributional shifts in both $\Xb_{t}$ and $\Zbt_{t}$. For simplicity, we focus on the setting with $H = 1$, dropping the stratum index $h$, and consider categorical covariates consisting only of demographic variables, i.e., $\Zbt_{t} = \Gb_{t}$.
We begin with the most general setting (S1) that mimics complex real survey data, and then conduct an ablation study by sequentially removing sources of distributional shift or modifying the error structure in scenarios (S2)--(S5).
In particular, our simulation study focuses on four elements: (1) $P_{\Gb_1 \mid \rfull_1=1} \neq P_{\Gb_2}$, i.e., the training and target demographic proportions differ; (2) $P_{\Gb_2 \mid \rcov_2=1} \neq P_{\Gb_2}$, i.e., the demographic proportions in the sample at $t=2$ differ from those in the population; (3) $P_{\Xb_1 \mid \Gb_1} \neq P_{\Xb_2 \mid \Gb_2}$, i.e., covariate shift exists in the conditional distribution $P_{\Xb_t \mid \Gb_t}$; and (4) the error distribution in the outcome model is heteroscedastic.
The resulting five scenarios (S1)--(S5) are summarized in Table~\ref{tab:simul}.

\begin{table}[!t]
\centering
\caption{Five scenarios for the simulation study}
\label{tab:simul}
\vspace{-3mm}

\begin{tabular*}{\textwidth}{@{\extracolsep{\fill}}ccccc@{}}
\toprule
\textbf{Scenario} 
& $P_{\Gb_1 \mid \rfull_1=1} \neq P_{\Gb_2}$ 
& $P_{\Gb_2 \mid \rcov_2=1} \neq P_{\Gb_2}$ 
& $P_{\Xb_1 \mid \Gb_1} \neq P_{\Xb_2 \mid \Gb_2}$ 
& Heteroscedastic error \\
\midrule
(S1) & \cmark & \cmark & \cmark & \cmark \\ 
(S2) & \xmark & \cmark & \cmark & \cmark \\ 
(S3) & \cmark & \xmark & \cmark & \cmark \\ 
(S4) & \cmark & \cmark & \xmark & \cmark \\ 
(S5) & \cmark & \cmark & \cmark & \xmark \\
\bottomrule
\end{tabular*}

\end{table}

In this simulation, we set $\Xb_{t} \in \Rb^4$, $Y_{t} \in \Rb$, and $\Gb_{t} \in \{1,\ldots,5\}$.
Datasets $\done$ and $\dtwo$ with sample sizes $n_1=n_2 = 5,000$ are simulated reflecting each of the five scenarios as well as a missingness (nonresponse) pattern; see appendix~\ref{app:simul-setup} for the simulation setup.

We compare DA-WCP, our proposed method that addresses covariate shift in both $\Xb_t$ and $\Gb_t$, with GWCP \citep{gwcp}, which accounts only for shift in $\Gb_t$. Both reduce to ordinary CP in the absence of covariate shift and outperform it otherwise; thus, we exclude ordinary CP from the comparison. To compare these methods across different score functions and notions of validity, we evaluate the eight methods listed in Table~\ref{table:methods}.
For the absolute residual score, we use a fixed covariate set without length minimization, whereas for the CQR score, we select a length-minimizing prediction set over a grid of $(\alo,\ahi)$; in both cases, we use linear models to estimate the mean and quantile functions. See appendix~\ref{app:implement} for details.
For evaluation, we use an i.i.d.\ test sample of size 5,000 from the $t=2$ population. 

\begin{table}[t]
\centering
\caption{Compared methods for continuous outcomes}
\label{table:methods}
\vspace{-3mm}
\begin{tabular*}{\textwidth}{@{\extracolsep{\fill}}ccccc}
\toprule
\multirow{2}{*}{\textbf{Method}} 
& \multicolumn{2}{c}{\textbf{Marginal}} 
& \multicolumn{2}{c}{\textbf{Group-conditional}} \\
\cmidrule{2-3}\cmidrule{4-5}
& Absolute residual & CQR & Absolute residual & CQR \\
\midrule
GWCP  & \texttt{GWCP\_AR\_m}     & \texttt{GWCP\_CQR\_m}     & \texttt{GWCP\_AR\_g}     & \texttt{GWCP\_CQR\_g} \\
DA-WCP& \texttt{DA\_WCP\_AR\_m}  & \texttt{DA\_WCP\_CQR\_m}  & \texttt{DA\_WCP\_AR\_g}  & \texttt{DA\_WCP\_CQR\_g} \\
\bottomrule
\end{tabular*}
\end{table}

First, we compare the four marginal methods listed in Table~\ref{table:methods} across scenarios (S1)--(S5).
The datasets $\mathcal{D}_1$ and $\mathcal{D}_2$ are independently generated 100 times, and empirical coverage is computed using the i.i.d.\ test data; the results are summarized in Figure~\ref{fig:simul-marg}. In scenarios (S1)--(S3), covariate shift occurs in the conditional distribution $P_{\Xb_t \mid \Gb_t}$. As expected, the two DA-WCP methods attain the desired coverage, whereas the two GWCP methods show substantial undercoverage. GWCP does not necessarily exhibit undercoverage, as it undercovers in our setting but may overcover in others. Conditions characterizing undercoverage and overcoverage, along with additional simulations, are provided in appendices~\ref{app:coverage} and~\ref{app:simul}. In scenarios (S4) and (S5), both DA-WCP and GWCP attain the desired coverage. In (S4), where covariate shift exists only in the group variables, GWCP performs well, as expected. We note that in (S5), there is no guarantee that GWCP works well. However, when the errors are homoscedastic and the mean and quantile functions are consistently estimated, the GWCP and DA-WCP thresholds are nearly identical, and both attain the desired coverage. Online supplementary material, Section~\ref{app:coverage}, further explains this phenomenon.

\begin{figure}[t]
    \centering
    \includegraphics[width=\textwidth]{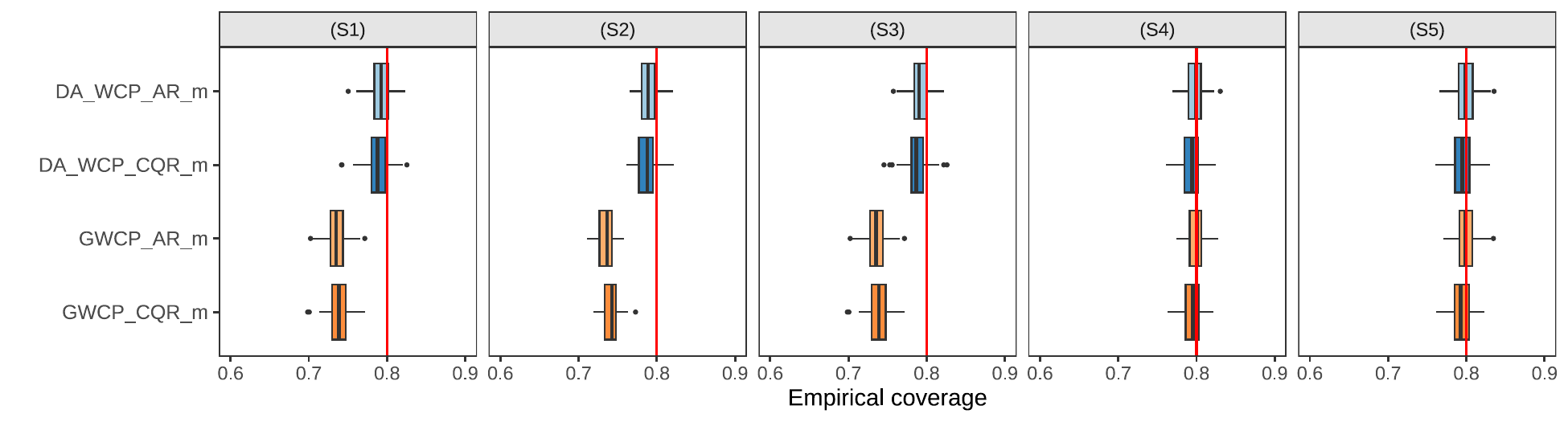}
    \caption{Empirical coverage of marginal methods under the target level \(1-\alpha=0.8\), evaluated across scenarios (S1)--(S5) using 100 independently generated datasets.}
    \label{fig:simul-marg}
\end{figure}

Next, we compare the four group-conditional methods listed in Table~\ref{table:methods} by visualizing their empirical group-wise coverages in Figure~\ref{fig:simul-group}. In scenarios (S1)--(S3), similar to the marginal case, the two DA-WCP methods attain coverage closer to the target level than the two GWCP methods across all groups. In scenarios (S4) and (S5), by contrast, all four methods achieve coverage close to the target level across all groups. Overall, these results suggest that the DA-WCP methods maintain stable coverage across scenarios, making them a reliable choice in practice, especially when the covariate distribution is unknown.

\begin{figure}[t]
    \centering
    \includegraphics[width=\textwidth]{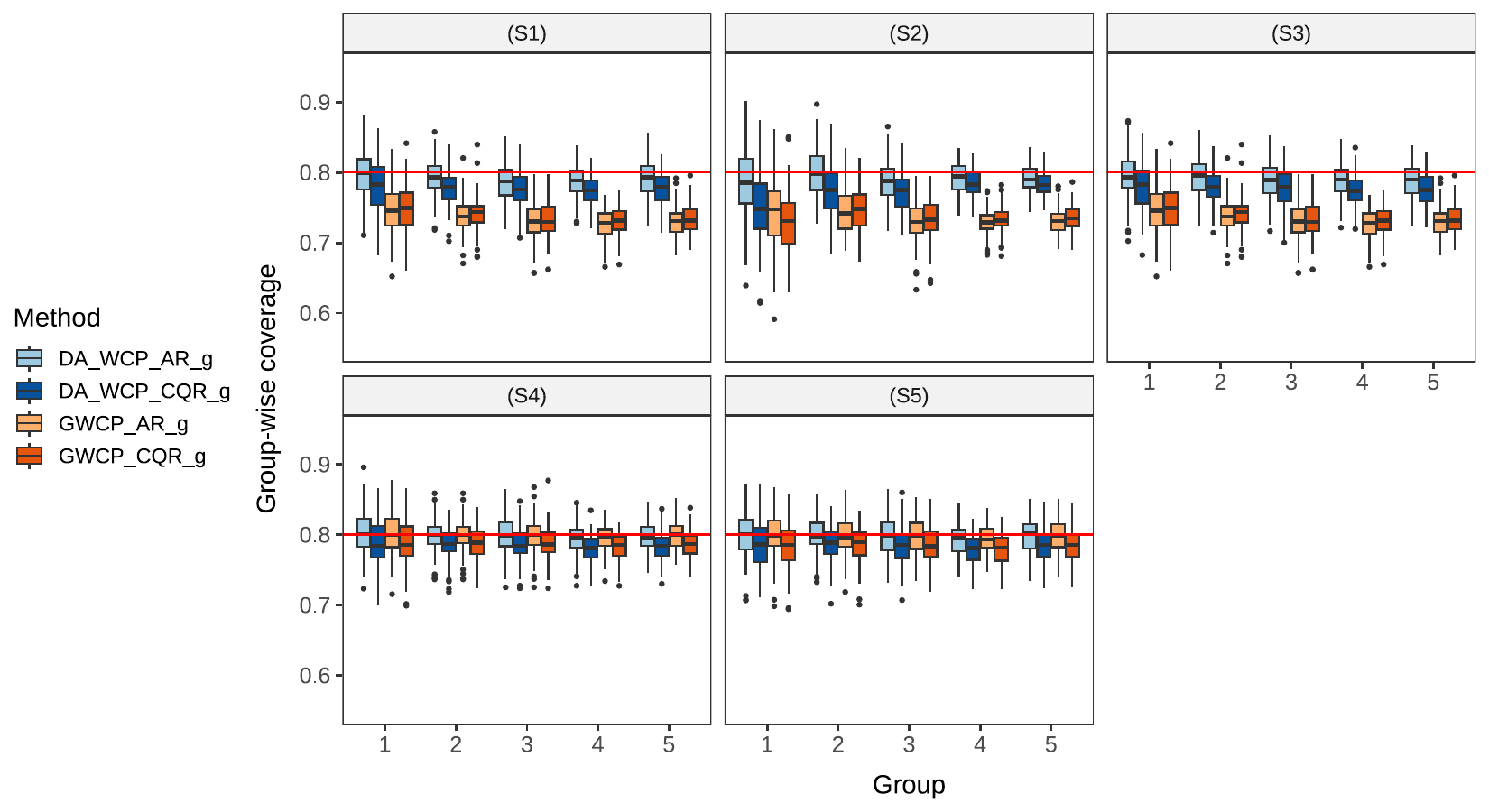}
    \caption{Group-wise coverage of group-conditional methods at the target level \(1 - \alpha = 0.8\), across scenarios (S1)--(S5) using 100 independently generated datasets.}
    \label{fig:simul-group}
\end{figure}

\section{Prediction of LDL-C using repeated NHANES surveys} \label{sec-ldl}

\subsection{Problem Setup and Objectives} \label{sec-ldl-prob}

We apply the proposed DA-WCP framework to construct population-level prediction sets for low-density lipoprotein cholesterol (LDL-C). We use NHANES 1999--2004 wave as the past ($t=1$) dataset and NHANES 2015--2020/03 wave as the present ($t=2$) dataset, with the target distribution corresponding to the $t=2$ population.

The continuous covariates, $\Xb_t \in \Rb^3$, are total cholesterol (TC), high-density lipoprotein cholesterol (HDL-C), and body mass index (BMI). Although triglycerides (TG) are biologically related to LDL-C, we exclude TG because it is measured only for participants who fasted, resulting in substantial missingness. In addition, TG exhibits greater biological variability than TC or HDL-C \citep{tg}, making it less reliable for prediction across survey waves.

The categorical demographic variables are $\Gb_t = (G_{t1}, G_{t2}, G_{t3})^\top \in [2] \times [4] \times [3]$, representing gender, age group, and race or ethnicity. 
Because LDL-C analysis typically targets adult populations with stable lipid profiles, we restrict the sample to individuals aged 20 years or older. 
Moreover, since only the race/ethnicity categories that appear in both waves should be used, we restrict our analysis to three groups: Hispanic, Non-Hispanic White, and Non-Hispanic Black.
These variables define the sampling domains in NHANES and ensure sufficient representation within each subgroup. Non-demographic categorical covariates $\Zb_t$ are excluded to avoid sparsity in the cross-classification of $(\Zb_t, \Gb_t)$, which may otherwise compromise reliable conditional estimation, and thus $\Zbt_t = \Gb_t$.

\begin{table}[!t]
\centering
\renewcommand{\arraystretch}{1.0} 
\caption{Baseline characteristics of variables in the NHANES 1999–2004 and 2015–2020/03 datasets. For demographic variables, percentages of each category are reported, while for continuous covariates, means are presented. Weighted estimates were computed using NHANES sampling weights to approximate the U.S. civilian, noninstitutionalized population.}
\label{table:nhanes}
\vspace{-2mm}
\resizebox{\textwidth}{!}{%
\begin{tabular}{
    ll
    S[table-format=3.1]
    S[table-format=3.1]
    S[table-format=3.1]
    S[table-format=3.1]
}
\toprule
\multicolumn{2}{c}{\textbf{Variable}} &
\multicolumn{2}{c}{\textbf{1999--2004}} &
\multicolumn{2}{c}{\textbf{2015--2020/03}} \\
\cmidrule(lr){3-4} \cmidrule(lr){5-6}
& &
{\makecell[c]{\textbf{U.S. Population}\\(Weighted NHANES)}} &
{\makecell[c]{\textbf{NHANES}\\(Unweighted)}} &
{\makecell[c]{\textbf{U.S. Population}\\(Weighted NHANES)}} &
{\makecell[c]{\textbf{NHANES}\\(Unweighted)}} \\
\midrule
\multirow{2}{*}{Gender (\%)} & Male & 48.8 & 48.8 & 48.9 & 49.4 \\
 & Female & 51.2 & 51.2 & 51.1 & 50.6 \\ 
\midrule
\multirow{5}{*}{Age (\%)} & 20--34 & 21.7 & 13.3 & 20.6 & 14.1 \\
 & 35--49 & 23.0 & 11.9 & 18.8 & 13.9 \\
 & 50--64 & 15.0 & 10.1 & 19.5 & 15.8 \\
 & $\ge$ 65 & 11.5 & 14.0 & 15.4 & 14.8 \\
 & Others & 28.8 & 50.7 & 25.6 & 41.4 \\
\midrule
\multirow{4}{*}{Race (\%)} & Hispanic & 14.8 & 32.6 & 18.0 & 26.5 \\
 & Non-Hispanic White & 67.8 & 38.9 & 60.0 & 32.7 \\
 & Non-Hispanic Black & 12.0 & 24.3 & 12.0 & 24.4 \\
 & Others & 5.4 & 4.2 & 10.1 & 16.5 \\
\midrule
\multicolumn{2}{l}{Total cholesterol (mg/dL)} & 193.4 & 186.3 & 183.9 & 178.6 \\ 
\multicolumn{2}{l}{HDL-C (mg/dL)} & 51.9 & 52.4 & 54.5 & 53.8 \\
\multicolumn{2}{l}{BMI (kg/$m^2$)} & 26.0 & 24.9 & 27.5 & 26.4 \\
\multicolumn{2}{l}{LDL-C (mg/dL)} & 114.3 & 109.2 & 108.2 & 106.1 \\
\bottomrule
\end{tabular}%
}
\end{table}

Table~\ref{table:nhanes} illustrates two sources of covariate shift between the training and target distributions.
First, population-level shifts between NHANES 1999–2004 and 2015–2020/03 are evident. 
For example, the weighted NHANES estimates show an increase in the Hispanic population and a decrease in non-Hispanic Whites. 
Second, NHANES employs a sampling design that intentionally oversamples specific demographic subgroups to improve estimation precision, as reflected in the differences between weighted and unweighted summaries.

Although the proposed method accommodates stratified and clustered sampling designs, the publicly available NHANES data lack the geographic identifiers needed to implement the stratified or PSU-level versions of the method. These variables are restricted for confidentiality and are available only through the NCHS Research Data Center. 
Accordingly, following the guideline in Section~\ref{sec-survey-1-1}, in the NHANES application we use a single-stratum ($H=1$) representation and analyze the data under the PPSWOR-within-a-single-stratum design. 

We apply our method to predict LDL-C at \(t=2\). Our method enables population-level predictive inference even when the \(t=2\) responses are entirely unavailable; however, evaluating empirical coverage requires access to the \(t=2\) responses. Accordingly, although LDL-C is measured in both waves, we treat the \(t=2\) response as unobserved when constructing prediction sets and use it only for evaluation. Detailed procedures on construction and evaluation are provided in appendix~\ref{app:eval}.

\subsection{\label{sec-ldl-conti} Conformal Prediction for Continuous Outcomes}

We apply the proposed DA-WCP method to construct prediction intervals for continuous LDL-C and compare its performance with GWCP. The four marginal methods listed in Table~\ref{table:methods} are evaluated. 
Figure~\ref{fig:conti-marg} presents the empirical coverage and average interval length of these methods at the target coverage level $1-\alpha=0.8$, based on 100 random splits of the training and calibration datasets. 
For implementation, we estimate the mean and quantile functions using weighted linear models with the sampling weights. 
With the absolute residual (AR) score, DA-WCP attains coverage closer to the target level, whereas GWCP tends to overcover; with the CQR score, both methods slightly overcover. Under both scores, DA-WCP yields shorter intervals and is therefore more efficient.
This difference likely arises from how the two procedures handle covariate shift: GWCP adjusts only for shifts in demographic covariates $\Gb_t = (G_{t1}, G_{t2}, G_{t3})$ (gender, age group, and race/ethnicity), while DA-WCP accounts for shifts in both categorical covariates and continuous covariates $\Xb_t \in \mathbb{R}^3$ (total cholesterol, HDL-C, and BMI). 
Because these clinical covariates are strongly associated with LDL-C and vary substantially across NHANES waves, accounting for shifts in both $\Xb_t$ and $\Gb_t$ enables DA-WCP to produce valid and efficient prediction intervals.

\begin{figure}[t]
  \centering
  \includegraphics[width=\linewidth]{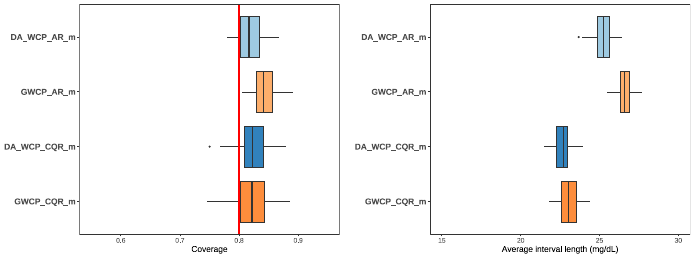}
  \caption{Empirical coverage (left) and average prediction interval length (right) for the four marginal methods, \texttt{DA\_WCP\_AR\_m}, \texttt{GWCP\_AR\_m}, \texttt{DA\_WCP\_CQR\_m}, and \texttt{GWCP\_CQR\_m}, under the target coverage $1 - \alpha = 0.8$, based on 100 random splits.}
  \label{fig:conti-marg}
\end{figure}

Moreover, for both DA-WCP and GWCP, using the CQR score yields shorter prediction intervals than using the AR score. This is because we select the pair $(\alo, \ahi)$ that yields the minimum average interval length, and the CQR score can capture more complex behavior of the interval length. 
When the error distribution is heteroscedastic,  
CQR allows the interval length to vary with $(\xb,\zbt)$, whereas AR yields an interval length that is constant and does not depend on $(\xb,\zbt)$, as summarized in Table~S1 in the appendix. Consequently, AR typically requires longer intervals to attain the desired coverage in such settings.

Results for the group-conditional methods are omitted because stratifying by demographic subgroup yields small within-subgroup sample sizes, leading to unstable estimates of subgroup-wise coverage. The corresponding results are provided in appendix~\ref{app:ldl-group}.

\subsection{\label{sec-ldl-categorical} Conformal Prediction for Categorical Outcomes}

In clinical practice, decisions involving LDL-C, such as when to initiate treatment, adjust medication, or schedule follow-up, are typically based on categorical thresholds rather than exact numerical values. These thresholds align with clinical guidelines and are less affected by assay variability, making categorical prediction more meaningful for decision-making.

We classify LDL-C levels according to the National Cholesterol Education Program (NCEP) guidelines: optimal ($<100$ mg/dL), near optimal ($100 \le$ LDL-C $<130$), borderline high ($130 \le$ LDL-C $<160$), high ($160 \le$ LDL-C $<190$), and very high ($\ge 190$ mg/dL). Conformal sets are constructed within this discrete label space. For an individual with covariates $(\xb, \zbt)$, the conformal set $\widehat{C}_\alpha(\xb, \zbt)$ is a subset of these five categories, and its cardinality $|\widehat{C}_\alpha(\xb, \zbt)|$ represents the number of categories included. This serves as an analogue to interval length in continuous prediction, where smaller sets indicate more precise and informative predictions. A single-category set provides a clear clinical interpretation, while a two-category set such as $\{\text{near optimal}, \text{borderline high}\}$ indicates uncertainty near a decision threshold but remains useful for communication. Larger sets spanning multiple categories offer limited value, as they cover a wide range of potential clinical states. Hence, the frequency distribution of $|\widehat{C}_\alpha(\xb, \zbt)|$ provides a practical measure of efficiency in categorical prediction.

We construct marginal prediction sets using DA-WCP and compare them with GWCP.
The soft classifiers employed are 
XGBoost, Random Forest, and Neural Network. 
For a target coverage level of $1-\alpha=0.8$, we compute the empirical coverage and the distribution of $|\widehat{C}_\alpha(\xb, \zbt)|$ on the test dataset, averaged over 100 random splits. 
The results are summarized in Table~\ref{tab:ldl-categorical}, and those for the group-conditional methods are provided in appendix~\ref{app:ldl-group}.

\begin{table}[t]
\centering
\renewcommand{\arraystretch}{1.0}
\caption{Empirical coverage and distribution of prediction set sizes across six methods, reported as mean (standard deviation). Proportions indicate the fraction of prediction sets containing exactly one, two, or three LDL-C categories.}
\label{tab:ldl-categorical}
\vspace{-2mm}

\begin{minipage}{\textwidth}
\resizebox{\linewidth}{!}{
\begin{tabular}{@{}lcccccc@{}}
\toprule
\textbf{Method} & \texttt{DA\_WCP\_XGB} & \texttt{DA\_WCP\_RF} & \texttt{DA\_WCP\_NN} & \texttt{GWCP\_XGB} & \texttt{GWCP\_RF} & \texttt{GWCP\_NN} \\
\midrule
Coverage & 0.841 (0.022) & 0.834 (0.021) & 0.840 (0.021) & 0.858 (0.022) & 0.864 (0.020) & 0.861 (0.021) \\
\midrule
Size = 1 & 0.872 (0.015) & 0.916 (0.012) & 0.956 (0.014) & 0.831 (0.015) & 0.851 (0.014) & 0.912 (0.015) \\
Size = 2 & 0.125 (0.015) & 0.084 (0.012) & 0.044 (0.014) & 0.163 (0.014) & 0.149 (0.014) & 0.088 (0.015) \\
Size = 3 & 0.003 (0.001) & 0.000 (0.000) & 0.000 (0.000) & 0.005 (0.002) & 0.000 (0.000) & 0.000 (0.000) \\
\bottomrule
\end{tabular}
}
\end{minipage}

\end{table}

Across all methods, both DA-WCP and GWCP slightly exceed the nominal coverage level. 
However, DA-WCP consistently attains coverage closer to the target $1 - \alpha$ across all classifiers, indicating more accurate empirical coverage.
In other words, the observed frequency with which the true category falls within the prediction set aligns more closely with the nominal level.
Consequently, DA-WCP produces smaller and thus more informative prediction sets than GWCP across all classifiers.
Among the soft classifiers, \texttt{nnet} yields the highest proportion of single-category prediction sets, reflecting tighter and more decisive predictions, followed by \texttt{ranger} and \texttt{xgboost}, while all three exhibit comparable empirical coverage.

Overall, these findings empirically demonstrate that DA-WCP yields more accurate predictions than GWCP for all soft classifiers, indicating that accounting for covariate shift in both $\Xb_t$ and $\Zbt_t$ is crucial even when the outcome is categorical.

\section{Conclusion}\label{sec-conc}

In this paper, we proposed a methodology for constructing population-level prediction sets at the current time point using past and present complex survey data, where the outcomes are completely missing in the present dataset.  
Our approach applies the WCP framework \citep{wcp} to complex survey data collected at different time points under minimal assumptions. 

The main novelty of our work lies in applying conformal prediction to complex survey data collected at two different time points.  
Since survey data are samples from finite populations that change over time, prior studies such as \citet{designcp} have considered conformal prediction only within a single survey wave. In contrast, this is the first study to extend conformal prediction to a multi-wave setting.  
In addition, unlike GWCP by \citet{gwcp}, which models covariate shift only in subgroup variables, our method accommodates shifts in both categorical and continuous covariates, resulting in empirical coverage consistently close to the nominal level in applications such as LDL-C prediction.
  
The validity of our approach depends on how well the underlying assumptions hold in practice. In Assumption~\ref{ass:super}, we abstract away from dependence induced by multi-stage cluster sampling; substantial within-cluster dependence may affect the resulting coverage guarantees. The MAR assumption in Assumption~\ref{ass:nonresponse} may also be violated in practice; for example, individuals with high BMI may skip LDL-C examination, leading to an MNAR mechanism. Extending conformal prediction methods to fully general complex survey designs and to settings with MNAR nonresponse remains an important direction for future research.

Beyond population-level predictive inference, the proposed framework can also be applied to missing-data imputation. Specifically, by treating the present data with observed covariates and missing outcomes as the target distribution, one can obtain valid prediction sets via a nearly identical procedure.

\bibliographystyle{abbrvnat}
\bibliography{references}

\clearpage
\begin{center}
{\LARGE\bfseries
\parbox{0.9\textwidth}{\centering
Supplement to ``Predicting Current Outcomes From Historical Survey Data With Weighted Conformal Prediction''
}
}
\end{center}

\vspace{8mm}
\begin{center}
{\Large\bfseries Table of Contents}
\end{center}

\vspace{1mm}
\begin{center}
\begin{minipage}{0.9\textwidth}
\noindent\hyperref[app:kliep]{S1\quad KLIEP: Method and Theoretical Results}\\[0.35em]
\hyperref[app:thm1]{S2\quad Proof of Theorem 1}\\[0.35em]
\hyperref[app:thm2]{S3\quad Proof of Theorem 2}\\[0.35em]
\hyperref[app:smallh]{S4\quad Refined Coverage Bound for Small $H$}\\[0.35em]
\hyperref[app:theoretical-group]{S5\quad Theoretical Results for Group-Conditional Methods}\\[0.35em]
\hyperref[app:proof-2]{S6\quad Useful Lemmas}\\[0.35em]
\hyperref[app:threshold]{S7\quad Details on the WCP-type Threshold}\\[0.35em]
\hyperref[app:implement]{S8\quad Implementation Details}\\[0.35em]
\hyperref[app:reduction]{S9\quad Reduction of Simplified Designs}\\[0.35em]
\hyperref[app:geo]{S10\quad Extension to Geographic Sampling Domains}\\[0.35em]
\hyperref[app:partial]{S11\quad When Outcomes are Partially Observed at $t = 2$}\\[0.35em]
\hyperref[app:eval]{S12\quad Computing Empirical Coverage in Real Data}\\[0.35em]
\hyperref[app:coverage]{S13\quad Coverage Properties of GWCP Relative to DA-WCP}\\[0.35em]
\hyperref[app:simul-setup]{S14\quad Simulation Setup}\\[0.35em]
\hyperref[app:simul]{S15\quad Additional Simulation Results}\\[0.35em]
\hyperref[app:ldl-group]{S16\quad LDL-C Prediction via Group-Conditional Methods}
\end{minipage}
\end{center}

\bigskip
\clearpage

\setcounter{section}{0}
\setcounter{subsection}{0}
\setcounter{subsubsection}{0}
\setcounter{figure}{0}
\setcounter{table}{0}
\setcounter{equation}{0}
\setcounter{apxlemma}{0}
\setcounter{apxass}{0}
\setcounter{apxthm}{0}
\setcounter{apxcor}{0}
\setcounter{apxprop}{0}

\renewcommand{\thesection}{S\arabic{section}}
\renewcommand{\thesubsection}{\thesection.\arabic{subsection}}
\renewcommand{\thesubsubsection}{\thesubsection.\arabic{subsubsection}}
\renewcommand{\thefigure}{S\arabic{figure}}
\renewcommand{\thetable}{S\arabic{table}}
\numberwithin{equation}{section}

\makeatletter
\renewcommand{\theHsection}{supp.\arabic{section}}
\renewcommand{\theHsubsection}{supp.\arabic{section}.\arabic{subsection}}
\renewcommand{\theHsubsubsection}{supp.\arabic{section}.\arabic{subsection}.\arabic{subsubsection}}
\renewcommand{\theHfigure}{supp.\arabic{figure}}
\renewcommand{\theHtable}{supp.\arabic{table}}
\renewcommand{\theHequation}{supp.\arabic{section}.\arabic{equation}}
\makeatother

\section{KLIEP: Method and Theoretical Results} \label{app:kliep}

Here we provide a brief review of the density ratio estimation method proposed by \citet{Sugiyama2007, Sugiyama2008}, and present theoretical results from the original paper along with our extensions to the multivariate setting. The method is called the \textit{Kullback--Leibler Importance Estimation Procedure} (KLIEP).

\subsection{The KLIEP Method} \label{app:kliep-1}
We briefly describe the rationale behind the estimation of the density ratio function \(r_\zbt\) using the KLIEP method.
The notation follows that used throughout the main article.
For each $\zbt \in \k$, let $f_{1\zbt}$ and $f_{2\zbt}$ denote the true densities of $P_{\Xb_1 \mid \Zb_1 = \zbt}$ and $P_{\Xb_2 \mid \Zb_2 = \zbt}$, respectively, and define
\[
\ittraz := \{i \in \ittra \mid \Zbt_{1i} = \zbt\}, \qquad \itwoz := \{i \in \itwo \mid \Zbt_{2i} = \zbt\}.
\]
Then, our goal is to estimate the \textit{importance}, or density ratio function
\[
r_\zbt(\xb) := \frac{f_{2\zbt}(\xb)}{f_{1\zbt}(\xb)}, \qquad  \forall \xb \in \mathrm{supp}(f_{1\zb}),
\]
using $\{\Xb_{1i}\}_{i \in \ittraz}$ and $\{\Xb_{2i}\}_{i \in \itwoz}$, without relying on separate estimation of the density functions.
Denote \(\nttrz = |\ittraz|\) and \(\ntez = |\itwoz|\).
Now we model the importance function $r_\zbt(\cdot)$ using a set of basis functions
$\mathcal{F}_\zbt := \{\varphi_\theta : \theta\in\Theta_\zbt\}$,
where $\Theta_\zbt$ is an index set.
Let $\nz := (\nttrz, \ntez)$.
Given $\nz$, we allow the selected index set
$\Theta_{\zbt}(\nz)\subseteq \Theta_\zbt$ to depend on the data, and define
\[
\mathcal{F}_{\zbt}(\nz)
:= \{\varphi_\theta : \theta\in\Theta_\zbt(\nz)\},
\]
and thus $\mathcal{F}_{\zbt}(\nz)$ may depend on the data. The set of finite linear combinations of the basis functions in $\mathcal F_\zbt$ with nonnegative coefficients and its bounded subset are denoted by
\begin{align*}
    \mathcal G_\zbt := \left\{ \sum_l \alpha_l \varphi_{\theta_l} \Bigm| \alpha_l \ge 0, \varphi_{\theta_l} \in \mathcal F_\zbt \right\}, \qquad \mathcal G^M_\zbt := \{g \in \mathcal{G}_\zbt \mid \lVert g \rVert_\infty \le M\},
\end{align*}
for $M > 0$, and their subsets at $\nz$ samples are denoted by
\begin{align*}
    \mathcal{G}_\zbt(\nz) &:= \left\{ \sum_l \alpha_l \varphi_{\theta_l} \Bigm| \alpha_l \ge 0, \ \varphi_{\theta_l} \in \mathcal F_{\zbt}(\nz) \right\} \subseteq \mathcal G_\zbt, \\
    \mathcal{G}^M_\zbt(\nz) &:= \left\{g \in \mathcal{G}_\zbt(\nz) \mid \lVert g \rVert_\infty \le M\right\} \subseteq \mathcal G^M_\zbt.
\end{align*}
Given a model \(\rnz \in \gnz\) with parameters $\{\widehat \alpha_l\}$,
an estimate of the target density \(\ptez\) is expressed as $\hatptez(\xb) = \rnz(\xb)\ptrz(\xb)$.
The KLIEP method then determines the parameters $\{\widehat \alpha_{l}\}$ by minimizing the Kullback-Leibler (KL) divergence from $\ptez$ to $\hatptez$:
\begin{align*}
D_\text{KL}(\ptez \ || \ \hatptez) &= \int_{\mathcal{X}} \ptez(\xb) \log \frac{\ptez(\xb)}{\rnz(\xb)\ptrz(\xb)} d\xb \\
&= \int_{\mathcal{X}} \ptez(\xb) \log \frac{\ptez(\xb)}{\ptrz(\xb)} d\xb - \int_{\mathcal{X}} \ptez(\xb) \log \rnz(\xb) d\xb.
\end{align*}
Since the first term is independent of $\rnz$, one can simply maximize the following:
\begin{equation} \label{eqn:kliep-1-1}
\int_{\mathcal{X}} \ptez(\xb) \log \rnz(\xb) d\xb \approx \frac{1}{\ntez}\sum_{i \in \itwoz} \log \rnz(\Xb_{2i}).
\end{equation}
Moreover, $\rnz$ should be properly normalized since $\hatptez$ is a probability density function:
\begin{equation} \label{eqn:kliep-1-2}
1 = \int_{\mathcal{X}} \hatptez(\xb)d\xb = \int_{\mathcal{X}} \rnz(\xb)\ptrz(\xb)d\xb \approx \frac{1}{\nttrz}\sum_{i \in \ittraz} \rnz(\Xb_{1i}).
\end{equation}
Combining \eqref{eqn:kliep-1-1} and \eqref{eqn:kliep-1-2}, since $\rnz \in \gnz$, the optimization criterion for the KLIEP method can be summarized as follows:
\begin{align*}
\underset{g \in \gnz}{\text{maximize}}  \sum_{i \in \itwoz} \log g(\Xb_{2i}) \quad  \text{subject to} \quad  \frac{1}{\nttrz}\sum_{i \in \ittraz}g(\Xb_{1i}) = 1.
\end{align*}
Now let $\hatgnz$ be the feasible set of the optimization criterion:
\[
\hatgnz := \Bigg\{g \in \gnz \Bigm| \frac{1}{\nttrz} \sum_{i \in \ittraz} g(\Xb_{1i}) = 1\Bigg\}.
\]
Under this notation, the solution $\rnz$ is given as follows:
\begin{equation} \label{eqn:kliep-1-3}
\rnz := \argmax_{g \in \hatgnz} \frac{1}{\ntez} \sum_{i \in \itwoz} \log g(\Xb_{2i}).
\end{equation}
For KLIEP implementation, \citet{Sugiyama2007} employed a Gaussian kernel model centered at the input points $\{\Xb_{2i}\}_{i \in \itwoz}$, which is also used in our framework:
\[
\gnz = \left\{ \sum_{\ell=1}^b \alpha_\ell K_\sigma(\cdot, \bm{c}_\ell) \Bigm| \alpha_\ell \ge 0 \right\},
\]
where $K_\sigma(\xb, \xb') = \exp(-\lVert \xb - \xb' \rVert^2 / 2\sigma^2)$ denotes a Gaussian kernel with bandwidth $\sigma$, $\{\bm{c}_\ell\}_{\ell = 1}^b$ are template points randomly selected from $\{\Xb_{2i}\}_{i \in \itwoz}$, and $b$ is a fixed number, for which the original paper used $b = \min(100, \ntez)$.
The \textsc{R} code used for implementing the KLIEP method throughout this paper is available at \url{https://github.com/chihoonlee-snu/DA-WCP}.

\subsection{Theoretical Results From the KLIEP Method} \label{app:kliep-2}

Here we present multivariate extensions of the theoretical results for the KLIEP method of \citet{Sugiyama2008}. 
In particular, we adapt their analysis to our setting with $d$-dimensional covariates and provide proofs for the extended results. We begin by introducing a set of assumptions from the original paper, tailored to our context where the variables are $d$-dimensional.

\begin{apxass}[cf. \citet{Sugiyama2008}] \label{ass:kliep-1}
    For each $\zbt \in \k$, 
    \begin{enumerate}
    \item[(A1)] $P_{1\zbt} := P_{\Xb_1 \mid \Zb_1 = \zbt}$ and $P_{2\zbt} := P_{\Xb_2 \mid \Zb_2 = \zbt}$ are mutually absolutely continuous, and there exist universal constants $\eta_0, \eta_1, \eta_2 > 0$ such that
    \[
    0 < \eta_0 \le r_\zbt = \frac{dP_{2\zbt}}{dP_{1\zbt}} \le \eta_1,
    \qquad 
    0 < \eta_2 \le f_{2\zbt},
    \]
    on their common support. 
    In addition, there exists a universal constant $a>0$ such that $\mathrm{supp}(P_{1\zbt}) = \mathrm{supp}(P_{2\zbt})$ is a compact, convex subset of $[-a,a]^d$ with nonempty interior.

    \item[(A2)] $\varphi_\theta \ge 0$ for all $\varphi_\theta \in \mathcal F_\zbt$, and there exist universal constants $\epsilon_0, \xi_0 > 0$ such that
    \[
    \int_{\mathcal{X}} \varphi_\theta(\xb) \ptrz(\xb)d\xb \ge \epsilon_0, \ \ \lVert \varphi_\theta \rVert_\infty \le \xi_0, \qquad \forall \varphi_\theta \in \mathcal F_\zbt.
    \]
    
    \item[(A3)] For the Gaussian RBF kernel with parameter \(\sigma > 0\),
    \[
    K_\sigma(\xb, \xb') = \exp\left(-\frac{\lVert \xb - \xb' \rVert^2}{2\sigma^2}\right), \qquad \forall  \xb, \xb' \in \mathbb{R}^d,
    \]
    \(r_\zbt\) is a mixture of Gaussian RBF kernels:
    \[
    r_\zbt(\xb) = \int_{\mathcal{X}} K_{\sigma_\zbt}(\xb, \xb') \, dF_\zbt(\xb'), \qquad \forall \xb \in \mathrm{supp}(P_{1\zbt}),
    \]
    for some \(\sigma_\zbt > 0\) and a positive finite measure \(F_\zbt\) whose support is contained in \(\mathrm{supp}(P_{1\zbt})\).
    \end{enumerate}
\end{apxass}

In Assumption~\ref{ass:kliep-1}, conditions (A1) and (A2) correspond to Assumption~1 in \citet{Sugiyama2008}, with additional requirements imposed on the common support. Condition (A3) is motivated by Example~1 of the same paper, which assumes that the true density ratio is a location mixture of normals.
We now state Theorem~\ref{thm:kliep-1}, which establishes the convergence rate of the estimated density ratio $\rnz$ to the true ratio $r_\zbt$ in Hellinger distance under $P_{1\zbt}$.  
The proof is inspired by the argument of Example~1 in \citet{Sugiyama2008}, but it requires a nontrivial modification. 
In particular, we extend the result to the multivariate setting and replace the rectangular support assumption in Example~1 with the compact and convex common support condition in (A1) of Assumption~\ref{ass:kliep-1}. 
Accordingly, we adapt the proof strategy to this setting and present the argument in full.

\begin{apxthm} \label{thm:kliep-1}
    Under Assumption~\ref{ass:kliep-1}, for each $\zbt \in \k$, and conditional on 
    $\nttrz = |\ittraz|$ and $\ntez = |\itwoz|$, for $\rnz$ defined in \eqref{eqn:kliep-1-3}, we have
    \[
    h_{P_{1\zbt}}(r_\zbt, \rnz) = \begin{cases}
        O_p(\min(\nttrz, \ntez)^{-\frac{1}{2+\gamma}})  & \text{if} \ \ d = 1, 2,\\
         O_p\left(\left(\frac{\log\min(\nttrz, \ntez)}{\min(\nttrz, \ntez)}\right)^{\frac{1}{d}}\right)     & \text{if} \ \ d \ge 3,
    \end{cases}
    \]
    for any sufficiently small $\gamma > 0$, where $h_{P_{1\zbt}}(r_\zbt, \rnz)$ is a (generalized) Hellinger distance
    \[
    h_{P_{1\zbt}}(r_\zbt, \rnz) = \left(\int \left(\sqrt{r_\zbt(\xb)} - \sqrt{\rnz(\xb)}\right)^2 dP_{1\zbt}(\xb) \right)^{\frac{1}{2}}.
    \]
\end{apxthm}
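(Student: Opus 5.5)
The plan is to apply the general convergence theorem for KLIEP of \citet{Sugiyama2008} (their Theorem~1). That result bounds the Hellinger error $h_{P_{1\zbt}}(r_\zbt,\rnz)$ in terms of two ingredients: the bracketing entropy of the model class $\gnzmbar$, and an approximation error $\inf_{g}$ of $r_\zbt$ by elements of $\gnz$. Conditioning on $\nz$ reduces everything to two i.i.d.\ samples from $P_{1\zbt}$ and $P_{2\zbt}$, by Assumption~\ref{ass:super}. Assumption~\ref{ass:kliep-1}(A1)--(A2) is exactly their Assumption~1, so the argument reduces to two checks. First, I would verify an entropy bound $\log N_{[\,]}(\epsilon,\gnzmbar,L_2(P_{1\zbt}))\lesssim \epsilon^{-\gamma}$ for arbitrary small $\gamma$ (or a $\log$-type bound). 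Second, I would exhibit $g_n\in\gnz$ with $\|g_n-r_\zbt\|$ small at an explicit rate $\delta_n$. The resulting Hellinger rate is then $\max(\delta_n^{1/2}\text{ or }\delta_n,\; n^{-1/(2+\gamma)})$ with $n=\min(\nttrz,\ntez)$.

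\textbf{Step 1 (entropy).} $\gnzmbar$ is contained in the class of sup-norm bounded, nonnegative mixtures of Gaussian kernels with fixed bandwidth on the compact set $[-a,a]^d$. Gaussian RBF mixtures are analytic with uniformly bounded derivatives of all orders on the compact convex support. Hence standard smooth-function entropy bounds (van der Vaart--Wellner, Thm.~2.7.1 with smoothness $\beta$ arbitrarily large), or the known RKHS-ball entropy $\log N\lesssim(\log(1/\epsilon))^{d+1}$, give $\log N_{[\,]}(\epsilon)\lesssim \epsilon^{-\gamma}$ for every $\gamma>0$. Convexity and nonempty interior of the support are used here to apply these entropy results on a regular domain. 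This yields the $n^{-1/(2+\gamma)}$ estimation term.

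\textbf{Step 2 (approximation; the main obstacle).} By (A3), $r_\zbt(\xb)=\int K_{\sigma_\zbt}(\xb,\xb')\,dF_\zbt(\xb')$. The model places kernel centers at the observed points $\{\Xb_{2i}\}_{i\in\itwoz}$, which are i.i.d.\ from $P_{2\zbt}$ with density at least $\eta_2$ on the support. I would discretize $F_\zbt$: partition the support into Voronoi-type cells around the sample centers and move the mass of each cell to its center. The Lipschitz property of $K_\sigma$ in $\xb'$ then gives
\[
\sup_{\xb}\Bigl|r_\zbt(\xb)-\sum_\ell F_\zbt(A_\ell)K_{\sigma_\zbt}(\xb,\bm c_\ell)\Bigr|\lesssim F_\zbt(\mathcal X)\,\max_{\xb'\in\mathrm{supp}}\min_\ell\|\xb'-\bm c_\ell\|.
\]
The key quantity is therefore the covering radius of $n$ i.i.d.\ points from a density bounded below on a compact convex set with nonempty interior. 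Convexity with nonempty interior guarantees a cone/volume condition, namely that balls of radius $\rho$ centered in the support have measure $\gtrsim\rho^d$. A union bound over a $\rho$-net then shows the covering radius is $O_p((\log n/n)^{1/d})$. This is the step requiring real modification of Example~1, which used rectangles and $d=1$. I also need to rescale the coefficients so that the candidate satisfies the empirical normalization constraint in $\hatgnz$. The rescaling factor is $1+O_p(n^{-1/2}+\delta_n)$ by (A1)--(A2) and a CLT, so it does not change the rate.

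\textbf{Step 3 (combine).} Plugging $\delta_n=(\log n/n)^{1/d}$ and the entropy exponent into Sugiyama's bound gives the final rate $\max\{(\log n/n)^{1/d},\,n^{-1/(2+\gamma)}\}$. Here I take the approximation error entering linearly, since $r_\zbt\ge\eta_0$ lets one pass from sup-norm error to Hellinger and to KL-type error without loss. For $d=1,2$ one has $(\log n/n)^{1/d}\le n^{-1/(2+\gamma)}$ eventually, so the entropy term dominates. For $d\ge3$ the approximation term dominates. This yields the two cases stated in Theorem~\ref{thm:kliep-1}.
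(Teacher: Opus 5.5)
Your proposal is correct and has the same skeleton as the paper's proof. You check (A1)--(A2) plus an $\epsilon^{-\gamma}$ bracketing bound so that the KLIEP convergence theorem of \citet{Sugiyama2008} applies. You then build a normalized element of $\hatgnz$ from kernels centred at the $t=2$ sample points, using a covering-radius bound that rests on the thickness of the convex support (the paper's Lemma~\ref{lem:thickness} with a Chernoff/union bound over a packing). Finally you convert the $O_p(\epsilon_{\nz}+\nttrz^{-1/2})$ sup-norm error to Hellinger error via $r_\zbt\ge\eta_0$. You differ only in the two sub-lemmas.

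For entropy, the paper derives $\log N(\epsilon,\mathcal G^M_\zbt,\|\cdot\|_\infty)\lesssim(\log(M/\epsilon))^{d+1}$ from a multivariate version of the Ghosal--van der Vaart moment-matching lemma (Lemma~\ref{lem:ghosal}). Your H\"older-ball (or Gaussian RKHS-ball) route is more elementary and is enough, since only a $(M/\epsilon)^{\gamma}$ bound is needed. You should still say why the derivatives are uniformly bounded. Nonnegative coefficients, $\|g\|_\infty\le M$, and $K_\sigma\ge e^{-D^2/(2\sigma^2)}$ on a support of diameter $D$ together bound the total mixing mass by a multiple of $M$. Because $\mathcal G_\zbt$ is a cone, the resulting bound then scales correctly in $M$.

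For approximation, the paper first replaces $F_\zbt$ by a finitely supported $F'_\zbt$ via Lemma~\ref{lem:ghosal}, then moves each atom to a nearby $\Xb_{2i}$. Your direct Voronoi quantization of $F_\zbt$ onto the sample centres gives the same $O(\rho)$ error with no moment matching. Nothing is lost: the paper's bound on $\Pb(W_\zbt(\nz)^{\mathsf c})$ already requires every cell of a packing of $\mathcal X_\zbt$ to be hit, so the sparsity of $F'_\zbt$ is never exploited.

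One hypothesis should be stated explicitly rather than absorbed into an ``$\inf_g$''. The theorem the paper invokes (Theorem~2 of \citet{Sugiyama2008}) needs a specific $g^*\in\hatgnz$ with $r_\zbt/g^*\le c_1^2$. This holds on an event of probability tending to one, by your sup-norm bound and $r_\zbt\ge\eta_0$, exactly as in the paper's event $E_\zbt(\nz)$.
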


\begin{proof}[Proof of Theorem~\ref{thm:kliep-1}]
    We begin by verifying that the assumptions required for Theorem~2 of \citet{Sugiyama2008} are satisfied. 
    First, conditions (A1) and (A2) of Assumption~\ref{ass:kliep-1} correspond to the first two requirements of Assumption~1 in \citet{Sugiyama2008}. 
    For the third requirement of the same assumption, a multivariate extension of Theorem~3.1 in \citet{ghosal01}, which follows directly from Lemma~\ref{lem:ghosal}, yields
    \[
    \log N(\epsilon, \mathcal G^M_\zbt, \lVert \cdot \rVert_\infty) 
    \le K \left(\log \frac{M}{\epsilon}\right)^{d+1},
    \]
    for any $0 < \epsilon < 1/2$ and $M > 1$, for some constant $K > 0$, where $\mathcal F_\zbt := \{K_{\sigma_{\zbt}}(\cdot, \xb') \mid \xb'\in\mathrm{supp}(P_{1\zbt})\}$.
    Thus, for sufficiently small $\epsilon > 0$ and any $\gamma > 0$, we have
    \[
    \log N_{[]}\big(\epsilon, \mathcal G^M_\zbt, L_2(P_{1\zbt})\big) 
    \le K\left(\frac{M}{\epsilon}\right)^\gamma, 
    \qquad \forall\zbt \in \k,
    \]
    and therefore the third requirement is also satisfied.
    Using the notation in Appendix~\ref{app:kliep-1}, it then remains to show that there exists \(g_{\zbt}^* \in \hatgnz\) such that \(r_\zbt/g_{\zbt}^* \le c_1^2\) for some constant \(c_1\).
    For any $\sigma > 0$ and any positive finite measure $F'$ on $[-a, a]^d$, define
    \begin{equation} \label{eqn:kliep-2-0}
    g_{F', \sigma}(\xb) := \int K_\sigma(\xb, \xb')dF'(\xb'), \qquad \forall \xb \in \Rb^d.
    \end{equation}
    Then, by (A3) of Assumption \ref{ass:kliep-1}, $r_\zbt = g_{F_\zbt, \sigma_\zbt}$ for each $\zbt \in \k$. By Lemma~\ref{lem:ghosal}, for any $0 < \epsilon_{\nz} < 1/2$, 
    there exists a discrete finite positive measure $F'_\zbt$ satisfying
    \begin{equation} \label{eqn:kliep-2-1}
        \| g_{F_\zbt,\sigma_\zbt} - g_{F'_\zbt,\sigma_\zbt} \|_\infty \lesssim \epsilon_{\nz},
        \quad
        F'_\zbt([-a,a]^d) = F_\zbt([-a,a]^d),
        \quad
        \mathrm{supp}(F'_\zbt) \subseteq \mathrm{supp}(F_\zbt).
    \end{equation}
    Without loss of generality, assume that \(\lVert g_{F_\zbt, \sigma_\zbt} - g_{F'_\zbt, \sigma_\zbt} \rVert_\infty \le \epsilon_{\nz}\) and $\sigma_\zbt = 1$. 
    Now, write $\mathcal X_{\zbt} = \mathrm{supp}(P_{1\zbt}) = \mathrm{supp}(P_{2\zbt})$, and define 
    $M_{\zbt}(\nz) := M(\mathcal X_{\zbt}, \|\cdot\|_2, \epsilon_{\nz}/2)$ as its 
    $\epsilon_{\nz}/2$-packing number, which is finite because $\mathcal X_{\zbt}$ is compact.  
    Let $\{\xb_1,\ldots,\xb_{M_{\zbt}(\nz)}\}$ be a corresponding maximal packing set. 
    For each $j = 1,\ldots, M_{\zbt}(\nz)$, define the Voronoi cell
    \[
    V_j := \left\{
    \xb \in \mathcal X_{\zbt} :
    \|\xb-\xb_j\|_2 \le \|\xb-\xb_s\|_2 \ \text{for all } s \in \{1,\dots,M_\zbt(\nz)\}\setminus\{j\}
    \right\}.
    \]
    Then $\mathrm{diam}(V_j) \le \epsilon_{\nz}$ for all $j$, and $\{V_j\}_{j=1}^{M_{\zbt}(\nz)}$ forms a partition of $\mathcal X_{\zbt}$.
    Next, for each $j = 1,\ldots, M_{\zbt}(\nz)$ and $i \in \itwoz$, define $R_{ij} := 1\{\Xb_{2i} \in V_j\}$. For each fixed $j$, the random variables $\{R_{ij} : i \in \itwoz\}$ are independent Bernoulli. Applying the Chernoff bound yields, for any $\delta>0$,
    \[
    \Pb\left(\sum_{i \in \itwoz} R_{ij} < (1-\delta)\sum_{i \in \itwoz} \Eb[R_{ij}]\right)
    < \exp\left(-\frac{\delta^2}{2} \sum_{i \in \itwoz} \Eb[R_{ij}]\right).
    \]
    By Lemma~\ref{lem:thickness}, there exist $r_0>0$ and $c_0>0$ such that, whenever $\epsilon_{\nz} \le r_0$,
    \begin{equation} \label{eqn:thick}
    \lambda(V_j)
    \ge \lambda(B(\xb_j, \epsilon_{\nz}/4)\cap \mathcal X_{\zbt})
    \ge c_0\,\epsilon_{\nz}^d,
    \end{equation}
    where $\lambda$ denotes Lebesgue measure on $\Rb^d$, and the first inequality follows from the inclusion $B(\xb_j, \epsilon_{\nz}/4)\cap \mathcal X_{\zbt} \subseteq V_j$.
    Setting $\delta = 1/2$, we obtain
    \begin{align} \label{eqn:kliep-2-2}
    \Pb\left(\sum_{i \in \itwoz} R_{ij} < \frac{1}{2}\sum_{i \in \itwoz} \Eb[R_{ij}]\right)
    &< \exp\left(-\frac{1}{8} \sum_{i \in \itwoz} \Eb[R_{ij}]\right)
    \le \exp\left(-\frac{\eta_2 c_0 \min(\nttrz,\ntez)\epsilon_{\nz}^d}{8}\right),
    \end{align}
    since $\ntez \ge \min(\nttrz,\ntez)$ and $\Eb[R_{ij}] \ge \eta_2 \lambda(V_j)\ge \eta_2 c_0 \epsilon_{\nz}^d$ by (A1) of Assumption~\ref{ass:kliep-1} and \eqref{eqn:thick}.
    Now, define the event $W_\zbt(\nz)$ as follows:
    \[
    W_\zbt(\nz) := \left\{\max_{\xb \in \text{supp}(F'_\zbt)} \min_{i \in \itwoz} \ \lVert \xb - \Xb_{2i} \rVert_2 \le \epsilon_{\nz}\right\} \supseteq \left\{\forall 1 \le j \le M_\zbt(\nz), \ \sum_{i \in \itwoz} R_{ij} \ge 1 \right\}.
    \]
    Then, by \eqref{eqn:kliep-2-2}, if $\exp(-\eta_2c_0\min(\nttrz, \ntez)\epsilon_{\nz}^d/8)/\epsilon_{\nz}^d \rightarrow 0$ as $\min(\nttrz, \ntez) \rightarrow \infty$,
    \begin{align*}
    \Pb(W_\zbt(\nz)^\mathsf{c}) &\le  \Pb(\exists 1 \le j \le M_\zbt(\nz) \ \text{such that}  \sum_{i \in \itwoz} R_{ij} = 0) \\
    &\le \sum_{j = 1}^{M_\zbt(\nz)} \Pb\left(\sum_{i \in \itwoz} R_{ij} < \frac{1}{2}\sum_{i \in \itwoz} \Eb[R_{ij}] \right) \\
    &\lesssim \frac{1}{\epsilon_{\nz}^d}  \exp\left(-\frac{\eta_2c_0\min(\nttrz, \ntez)\epsilon_{\nz}^d}{8}\right) \rightarrow 0,
    \end{align*}
    since $M_\zbt(\nz) \lesssim 1/\epsilon_{\nz}^d$.
    Therefore, $\Pb(W_\zbt(\nz)) \rightarrow 1$. Moreover, one can show that for $\forall \xb_1, \xb_2, \xb_3 \in [-a, a]^d$, 
    $| K_1(\xb_1, \xb_2) - K_1(\xb_1, \xb_3)| 
    \lesssim \lVert \xb_2-\xb_3 \rVert_2/\sqrt{e} + \lVert \xb_2-\xb_3 \rVert_2^2/2$ because
    \begin{align*}
        |K_1(\xb_1, \xb_2) - K_1(\xb_1, \xb_3)| 
        &= \exp(-\lVert \xb_1-\xb_2 \rVert_2^2/2)\,
           |1-\exp((\lVert \xb_1-\xb_2 \rVert_2^2 - \lVert \xb_1-\xb_3 \rVert_2^2)/2)| \\
        &\le \exp(-\lVert \xb_1-\xb_2 \rVert_2^2/2)\cdot \exp(2da^2)\cdot 
           \bigl|\lVert \xb_1-\xb_2 \rVert_2^2 - \lVert \xb_1-\xb_3 \rVert_2^2\bigr|/2 \\
        &\lesssim \exp(-\lVert \xb_1-\xb_2 \rVert_2^2/2)\cdot 
           \sum_{i=1}^d (|x_{1i}-x_{2i}|\cdot|x_{2i}-x_{3i}| + |x_{2i}-x_{3i}|^2/2) \\
        &\le \exp(-\lVert \xb_1-\xb_2 \rVert_2^2/2)\cdot 
           [\lVert \xb_1-\xb_2 \rVert_2 \cdot \lVert \xb_2-\xb_3 \rVert_2 
           + \lVert \xb_2-\xb_3 \rVert_2^2/2] \\
        &\le \lVert \xb_2-\xb_3 \rVert_2/\sqrt{e} + \lVert \xb_2-\xb_3 \rVert_2^2/2.
    \end{align*}
    where the mean value theorem was applied in the first inequality, and the Cauchy–Schwarz inequality was used in the third. The last inequality holds since \(x\exp(-x^2/2) \le 1/\sqrt{e}\) for any \(x > 0\).
    On the event $W_{\zbt}(\nz)$, each $\xb' \in \mathrm{supp}(F'_\zbt)$ has a corresponding index 
    $i(\xb') \in \itwoz$ such that $\|\xb' - \Xb_{2i(\xb')}\|_2 \le \epsilon_{\nz}$. 
    Based on this, define a function $\tilde g_{\zbt}^*$ by  
    \[\tilde g_{\zbt}^*(\xb)
    = \sum_{\xb' \in \mathrm{supp}(F'_\zbt)} F'_\zbt(\{\xb'\})\, 
      K_1(\xb, \Xb_{2i(\xb')}), \qquad  \forall \xb \in \mathcal X_{\zbt},
    \]
    since $F'_\zbt$ is a discrete measure. Then, on $W_{\zbt}(\nz)$, we have
    \begin{equation} \label{eqn:kliep-2-3}
    \lVert \tilde g_{\zbt}^* - g_{F'_\zbt,1} \rVert_\infty
    \le 
    F'_\zbt([-a,a]^d)\,(\epsilon_{\nz}/\sqrt e + \epsilon_{\nz}^2/2)
    = O(\epsilon_{\nz}).
    \end{equation}
    Combining \eqref{eqn:kliep-2-1} with \eqref{eqn:kliep-2-3} yields 
    $\lVert \tilde g_{\zbt}^* - r_\zbt \rVert_\infty = O(\epsilon_{\nz})$.  
    Now define $g_{\zbt}^* := \tilde g_{\zbt}^*/(Q_n \tilde g_{\zbt}^*)$, where $Q_n \tilde g_\zbt^* = \sum_{i\in\ittraz} \tilde g_\zbt^*(\Xb_{1i})/\nttrz$.
    Then $g_{\zbt}^* \in \hatgnz$, and observe that 
    \[
    |1 - Q_n \tilde g_{\zbt}^*|
    = 
    |1 - Q_n(\tilde g_{\zbt}^* - g_{F'_\zbt,1} + g_{F'_\zbt,1} - r_\zbt + r_\zbt)|
    \le 
    O(\epsilon_{\nz}) + |1 - Q_n r_\zbt|
    = O_p(\epsilon_{\nz}+\nttrz^{-1/2}).
    \]
    Hence,
    \begin{equation} \label{eqn:kliep-2-4}
    \lVert g_{\zbt}^* - \tilde g_{\zbt}^* \rVert_\infty
    =
    \lVert  g_{\zbt}^* \rVert_\infty\, |1 - Q_n \tilde g_{\zbt}^*|
    =
    O_p(\epsilon_{\nz}+\nttrz^{-1/2}).
    \end{equation}
    Combining \eqref{eqn:kliep-2-3} and \eqref{eqn:kliep-2-4} gives
    \(
    \lVert g_{\zbt}^* - r_\zbt \rVert_\infty
    = O_p(\epsilon_{\nz}+\nttrz^{-1/2}).
    \)
    Now, for any fixed $c_1 > 1$, define the event
    $E_\zbt(\nz) := \{\lVert g_{\zbt}^* - r_\zbt \rVert_\infty \le (1-1/c_1^2)\eta_0\}$.
    Then, we have $\Pb(E_\zbt(\nz)) \to 1$ as $\min(\nttrz, \ntez) \to \infty$, and on $E_\zbt(\nz)$ we have
    \[
    g_{\zbt}^*(\xb) \ge r_\zbt(\xb) - (1-1/c_1^2)\eta_0 \ge r_\zbt(\xb) - (1-1/c_1^2)r_\zbt(\xb) = r_\zbt(\xb)/c_1^2, \qquad \forall \xb \in \mathcal X_{\zbt}.
    \]
    Thus, on $E_\zbt(\nz)$ we have $r_\zbt/g_\zbt^* \le c_1^2$ for any $c_1 > 1$, and hence all conditions of Theorem~2 in \citet{Sugiyama2008} are satisfied. Therefore,
    \begin{equation} \label{eqn:kliep-2-4-1}
        h_{P_{1\zbt}}(r_\zbt, \rnz) =
        O_p\!\left(\min(\nttrz,\ntez)^{-1/(2+\gamma)} 
        + h_{P_{1\zbt}}(r_\zbt, g_{\zbt}^*) \right),
    \end{equation}
    for any sufficiently small $\gamma > 0$.
    Now, let
    \[
    \epsilon_{\nz}
    =
    \left(
    \frac{10 \log\min(\nttrz,\ntez)}
         {\eta_2 c_0 \min(\nttrz,\ntez)}
    \right)^{1/d},
    \]
    so that $\exp(-\eta_2 c_0 \min(\nttrz,\ntez)\epsilon_{\nz}^d/8)
        /\epsilon_{\nz}^d \to 0$.
    Now, since $r_\zbt$ is bounded from below by (A1) of Assumption~\ref{ass:kliep-1} and $\lVert g_{\zbt}^* - r_\zbt \rVert_\infty = O_p(\epsilon_{\nz}+\nttrz^{-1/2})$, we obtain
    \begin{equation} \label{eqn:kliep-2-4-2}
    h_{P_{1\zbt}}(r_\zbt, g_{\zbt}^*)
    =
    O_p(\epsilon_{\nz}+\nttrz^{-1/2})
    =
    O_p\!\left(
    \nttrz^{-1/2}
    +
    \Bigl(\frac{\log\min(\nttrz,\ntez)}{\min(\nttrz,\ntez)}\Bigr)^{1/d}
    \right).
    \end{equation}
    Combining~\eqref{eqn:kliep-2-4-1} and \eqref{eqn:kliep-2-4-2} then yields
    \begin{align*}
    h_{P_{1\zbt}}(r_\zbt, \rnz)
    &=
    O_p\!\left(
    \min(\nttrz,\ntez)^{-1/(2+\gamma)}
    + \nttrz^{-1/2}
    + \Bigl(\frac{\log\min(\nttrz,\ntez)}{\min(\nttrz,\ntez)}\Bigr)^{1/d}
    \right),
    \end{align*}
    for any sufficiently small $\gamma>0$.
    Hence, if $d \ge 3$,
    \[
    h_{P_{1\zbt}}(r_\zbt, \rnz)
    =
    O_p\!\left(
    \Bigl(\frac{\log\min(\nttrz,\ntez)}{\min(\nttrz,\ntez)}\Bigr)^{1/d}
    \right),
    \]
    and if $1 \le d \le 2$, for any small $\gamma>0$,
    \[
    h_{P_{1\zbt}}(r_\zbt, \rnz)
    =
    O_p(\min(\nttrz,\ntez)^{-1/(2+\gamma)}).
    \]
\end{proof}

Using Theorem~\ref{thm:kliep-1}, we obtain the following proposition, which gives the convergence rate of the maximum $L^1$ loss of the estimated WCP weight function $\widehat w_h$ taken over all strata $h \in [H]$. Corollary~\ref{cor:coverage} then follows immediately from Theorem~\ref{thm:coverage} and Proposition~\ref{prop:weight}.

\begin{apxprop} \label{prop:weight}
    Define $\Dc := \dttra \cup \dtwo$. Conditional on $\nrawone$, $n_2$, and $\{n_{1h}\}_{h=1}^H$, under Assumptions~\ref{ass:super}--\ref{ass:invariant} and Assumption~\ref{ass:kliep-1}, if $\min_{1 \le h \le H} n_{1h} \gtrsim \tilde n_1/H$ and $\min_{1 \le h \le H} n_{1h} \lesssim n_2$, then as $n_{11}, \dots, n_{1H}, n_2 \to \infty$, the estimated WCP weight function $\widehat w_h$ satisfies
    \[
    \begin{adjustbox}{max width=\linewidth}
    $\displaystyle
    \max_{1 \le h \le H} \Eb\left[|\widehat w_h(\Xb_{1h}, \Zbt_{1h}) - w_h(\Xb_{1h}, \Zbt_{1h})| \Bigm| \Dc, \tilde n_1, n_2, \{n_{1h}\}_{h=1}^H  \right] = 
    \begin{cases}
        O_p\big((\min_{1 \le h \le H} n_{1h})^{-\frac{1}{2}}\big) & d = 0,\\
        O_p\big(\min(\tilde n_1, n_2)^{-\frac{1}{2+\gamma}}\big) & d = 1, 2, \\
        O_p\Big(\Bigl(\tfrac{\log\min(\tilde n_1, n_2)}{\min(\tilde n_1, n_2)}\Bigr)^{\frac{1}{d}}\Big) & d \ge 3,
    \end{cases}
    $
    \end{adjustbox}
    \]
    for any sufficiently small $\gamma > 0$, where $(\Xb_{1h}, \Zbt_{1h}) \sim P_{(\Xb_1, \Zb_1) \mid \Gb_1} \times P_{\Gb_{1h} \mid \rfull_{1h}=1}$ for each $h \in [H]$.
    The conditions $\min_{1\le h\le H} n_{1h} \gtrsim \tilde n_1/H$ and $\min_{1\le h\le H} n_{1h} \lesssim n_2$ are mild balance requirements ensuring that (i) no stratum at $t=1$ has a vanishingly small sample size relative to the average stratum sample size $\tilde n_1/H$, and (ii) the sample size at $t=2$ is not smaller than the smallest stratum size at $t=1$.
\end{apxprop}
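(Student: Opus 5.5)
Our plan is to decompose the weight error multiplicatively and bound each factor separately. Fix $h$ and write, for $\zbt=(\gb^\top,\zb^\top)^\top$,
\[
\widehat w_h - w_h = \bigl(\widehat r_\zbt - r_\zbt\bigr)\frac{\widehat v_{2\zbt}p_{2\gb}}{\widehat v_{1\zbt}\widehat q_{1h\gb}} + r_\zbt\left(\frac{\widehat v_{2\zbt}p_{2\gb}}{\widehat v_{1\zbt}\widehat q_{1h\gb}} - \frac{v_{2\zbt}p_{2\gb}}{v_{1\zbt}q_{1h\gb}}\right).
\]
Since $(\Xb_{1h},\Zbt_{1h})$ is independent of $\Dc$, the conditional expectation becomes $\sum_{\zbt} P(\Zbt_{1h}=\zbt)\int |\widehat w_h - w_h|(\xb,\zbt)\,dP_{1\zbt}(\xb)$. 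This is a finite sum over the fixed set $\k$, so it suffices to bound each summand uniformly in $h$.

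\textbf{Categorical factors.} Conditional on the counts, $\widehat v_{1\zbt}$, $\widehat v_{2\zbt}$ and $\widehat q_{1h\gb}$ are empirical proportions of i.i.d. categorical draws. By Theorem~\ref{thm:multinomial} and the i.i.d. representations \eqref{eqn:survey-3-doneh}--\eqref{eqn:survey-3-dtwoh}, these draws have sample sizes of order $\tilde n_1$, $n_2$ and $\ntrah\asymp n_{1h}$ respectively. Each is therefore $\sqrt{n}$-consistent and bounded away from zero with probability tending to one, provided the true proportions are positive; this is needed for $w_h$ to be defined.

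For the maximum over $h$, we apply a union bound with Hoeffding's inequality: $\max_h|\widehat q_{1h\gb}-q_{1h\gb}| = O_p(\sqrt{\log H/\min_h n_{1h}})$. We treat $H$ as fixed, or absorb the log factor. Since $r_\zbt\le\eta_1$ by (A1), the second term is $O_p(\tilde n_1^{-1/2}+n_2^{-1/2}+(\min_h n_{1h})^{-1/2})$. The balance conditions $\min_h n_{1h}\gtrsim\tilde n_1/H$ and $\min_h n_{1h}\lesssim n_2$ reduce this to $O_p((\min_h n_{1h})^{-1/2})$. When $d=0$ we have $r_\zbt\equiv 1=\widehat r_\zbt$, so this is the whole bound.

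\textbf{Density-ratio factor.} The ratio factor multiplying $\widehat r_\zbt - r_\zbt$ is $O_p(1)$, so it remains to bound $\int|\widehat r_\zbt-r_\zbt|\,dP_{1\zbt}$. Using
\[
|\widehat r-r| = |\sqrt{\widehat r}-\sqrt r|\,(\sqrt{\widehat r}+\sqrt r)
\]
and Cauchy--Schwarz, this is at most $h_{P_{1\zbt}}(r_\zbt,\widehat r_\zbt)\,\bigl(\int(\sqrt{\widehat r}+\sqrt r)^2\,dP_{1\zbt}\bigr)^{1/2}$. The second factor is bounded by $2\int\widehat r\,dP_{1\zbt} + 2\eta_1$. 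The integral $\int\widehat r\,dP_{1\zbt}$ is $1+o_p(1)$: the empirical constraint fixes the sample mean at one, and uniform convergence over the bounded class $\mathcal G^M_\zbt$ follows from the entropy bound in the proof of Theorem~\ref{thm:kliep-1}. Alternatively, $\|\widehat r\|_\infty = O_p(1)$ follows from nonnegative coefficients and the normalization.

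Theorem~\ref{thm:kliep-1} then gives the stated rates in terms of $\min(\nttrz,\ntez)$. These subgroup counts are binomial with success probabilities bounded below, so $\min(\nttrz,\ntez)\asymp\min(\tilde n_1,n_2)$ with probability tending to one. Because the sample sizes are random, conditional $O_p$ statements must be combined by conditioning on the counts; this is routine.

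\textbf{Main obstacle.} The main obstacle is controlling $\int\widehat r_\zbt\,dP_{1\zbt}$, or equivalently $\|\widehat r_\zbt\|_\infty$, so that the Hellinger rate transfers to an $L^1$ rate. I would first try the uniform law over $\mathcal G^M_\zbt$ on the event that $\widehat r_\zbt\in\mathcal G^M_\zbt$ for a large $M$. This needs a crude a priori bound: $\widehat r_\zbt(\Xb_{1i})\le \nttrz$ from the constraint, combined with Gaussian-kernel smoothness. If that fails, we would invoke the corresponding boundedness result from \citet{Sugiyama2008}. Finally, for $d\ge1$ the nonparametric rate dominates the $n^{-1/2}$ categorical terms, and taking the maximum over finitely many $\zbt$ and over $h$ gives the claim.
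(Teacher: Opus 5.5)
Your proposal is correct and follows the paper's proof of Proposition~\ref{prop:weight}. It uses the same split of $\widehat w_h-w_h$ into a density-ratio term and a categorical-ratio term, and the same $\sqrt{n}$-consistency of $\widehat v_{1\zbt},\widehat v_{2\zbt},\widehat q_{1h\gb}$ combined through the balance conditions. It also uses the same Hellinger-to-$L^1$ transfer, fed by Theorem~\ref{thm:kliep-1} and $\min(\nttrz,\ntez)\gtrsim\min(\tilde n_1,n_2)$; the paper obtains this count bound from $\nttrz\ge\ntrz$ and the condition $\min_h n_{1h}\gtrsim\tilde n_1/H$.

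The paper handles your ``main obstacle'' with Lemma~\ref{lem:kliep-1}, which is your second suggestion made rigorous. On $S_\nb$ every basis function has empirical mean at least $\epsilon_0/2$, so the normalization forces $\|\widehat r_\zbt\|_\infty\le\bar M=2\xi_0/\epsilon_0$. Your Cauchy--Schwarz form does not even need this: since $\int r_\zbt\,dP_{1\zbt}=1$, the triangle inequality in $L^2(P_{1\zbt})$ gives $\|\sqrt{\widehat r_\zbt}+\sqrt{r_\zbt}\|_{L^2(P_{1\zbt})}\le 2+h_{P_{1\zbt}}(r_\zbt,\widehat r_\zbt)$, hence $\int|\widehat r_\zbt-r_\zbt|\,dP_{1\zbt}\le h_{P_{1\zbt}}(r_\zbt,\widehat r_\zbt)\bigl(2+h_{P_{1\zbt}}(r_\zbt,\widehat r_\zbt)\bigr)$.
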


\begin{proof}[Proof of Proposition~\ref{prop:weight}]
    Recall that in Section~4, for each $\zbt = (\gb^\top, \zb^\top)^\top \in \k$ and $h \in [H]$, we defined
    \[
    v_{1\zbt} := P_{\Zb_1 \mid \Gb_1 = \gb}(\zb), 
    \quad 
    v_{2\zbt} := P_{\Zb_2 \mid \Gb_2 = \gb}(\zb), 
    \quad 
    q_{1h\gb} := P_{\Gb_{1h} \mid \rfull_{1h} = 1}(\gb),
    \]
    and $p_{2\gb} = N_{2\gb}/N_2$ is assumed to be known. We estimate the unknown quantities as follows:
    \[
    \widehat v_{1\zbt} 
    = 
    \frac{\sum_{i \in \ittra} \mathbf 1\{\Zbt_{1i} = \zbt\}}
           {\sum_{i \in \ittra} \mathbf 1\{\Gb_{1i} = \gb\}}, 
    \quad 
    \widehat v_{2\zbt} 
    = 
    \frac{\sum_{i \in \itwo} \mathbf 1\{\Zbt_{2i} = \zbt\}}
           {\sum_{i \in \itwo} \mathbf 1\{\Gb_{2i} = \gb\}},
    \quad
    \widehat q_{1h\gb} = \frac{\sum_{i \in \itra} \mathbf 1\{\Gb_{1i} = \gb\}}{\ntra}.
    \]
    With the subgroup-specific density ratio function $\rnz$ estimated via the KLIEP method as in \eqref{eqn:kliep-1-3}, the estimated weight function in stratum $h$ is then given by
    \[
    \widehat w_h(\xb, \zbt) 
    = 
    \rnz(\xb)\cdot 
    \frac{\widehat v_{2\zbt}\, p_{2\gb}}
         {\widehat v_{1\zbt}\, \widehat q_{1h\gb}}, \qquad \forall \xb \in \Rb^d, \ \zbt \in \k.
    \]
    Then, since $\Zbt_{1h} = (\Gb_{1h}^\top, \Zb_{1h}^\top)^\top$ has distribution
    $P_{\Zb_1 \mid \Gb_1}\times P_{\Gb_{1h} \mid \rfull_{1h} = 1}$ for each $h \in [H]$, we have
    \begin{align*}
        &\Eb\left[|\widehat{w}_h(\Xb_{1h}, \Zbt_{1h}) - w_h(\Xb_{1h}, \Zbt_{1h})| \Bigm| \Dc\right] \nonumber \\
        &= \sum_{\zbt}  \Eb\left[ |\widehat{w}(\Xb_{1h}, \Zbt_{1h}) - w_h(\Xb_{1h}, \Zbt_{1h})| \Bigm| \Dc, \Zbt_{1h} = \zbt \right] \cdot v_{1\zbt} q_{1h\gb} \nonumber \\
        &\le \sum_{\zbt}  \Eb\left[ \Bigl|\rnz(\Xb_{1h})\cdot \frac{\widehat v_{2\zbt}p_{2\gb}}{\widehat v_{1\zbt}\widehat q_{1h\gb}} - r_\zbt(\Xb_{1h})\cdot \frac{ v_{2\zbt}p_{2\gb}}{v_{1\zbt} q_{1h\gb}}\Bigr| \Bigm| \Dc, \Zbt_{1h} = \zbt \right]  \nonumber \\
        &\le \sum_{\zbt}  \Eb\left[ \left|(\rnz(\Xb_{1h}) - r_\zbt(\Xb_{1h}))\cdot\frac{\widehat v_{2\zbt}p_{2\gb}}{\widehat v_{1\zbt}\widehat q_{1h\gb}}\right| +\left|r_\zbt(\Xb_{1h})\cdot \left(\frac{\widehat v_{2\zbt}p_{2\gb}}{\widehat v_{1\zbt}\widehat q_{1h\gb}}  - \frac{ v_{2\zbt}p_{2\gb}}{v_{1\zbt} q_{1h\gb}} \right)\right| \Bigm| \Dc, \Zbt_{1h} = \zbt \right]  \nonumber \\
        &= \sum_{\zbt} \Bigl\{\Eb\left[ |\rnz(\Xb_{1h}) - r_\zbt(\Xb_{1h})| \Bigm| \Dc, \Zbt_{1h} = \zbt \right] \cdot \frac{\widehat v_{2\zbt}p_{2\gb}}{\widehat v_{1\zbt}\widehat q_{1h\gb}} +
        p_{2\gb} \cdot \Bigl| \frac{\widehat v_{2\zbt}}{\widehat v_{1\zbt}\widehat q_{1h\gb}} - \frac{v_{2\zbt}}{v_{1\zbt}q_{1h\gb}}\Bigr|\Bigr\},
    \end{align*}
    where the last equality follows from $\Eb[r_\zbt(\Xb_{1h}) \mid \Zbt_{1h} = \zbt] = 1$ for all $\zbt$.
    Although we also condition on $\nrawone$, $n_2$, and $\{n_{1h}\}_{h=1}^H$, we suppress this conditioning in the notation to avoid overly long expressions.
    Moreover, since $\Xb_{1h}\mid \Zbt_{1h}=\zbt$ has the same conditional distribution $P_{\Xb_1\mid \Zbt_1=\zbt}$ for all $h \in [H]$, we obtain, for the maximum $L^1$ loss,
    \begin{equation}\label{eqn:bound-3-1}
    \begin{adjustbox}{max width=\linewidth}
    $\displaystyle
    \begin{aligned}
    &\max_{1 \le h \le H} \Eb\left[|\widehat w_h(\Xb_{1h}, \Zbt_{1h}) - w_h(\Xb_{1h}, \Zbt_{1h})| \Bigm| \Dc \right] \\
    &\le \sum_{\zbt} \Bigl\{ \underbrace{\Eb\left[ |\rnz(\Xb_{1}) - r_\zbt(\Xb_{1})| \Bigm| \Dc, \Zbt_{1} = \zbt \right] \cdot \frac{\widehat v_{2\zbt}p_{2\gb}}{\widehat v_{1\zbt} \min_{1 \le h \le H} \widehat q_{1h\gb}}}_{\mathrm{(I)}} + \underbrace{\max_{1 \le h \le H}  \Bigl| \frac{\widehat v_{2\zbt}}{\widehat v_{1\zbt}\widehat q_{1h\gb}} - \frac{v_{2\zbt}}{v_{1\zbt}q_{1h\gb}}\Bigr|}_{\mathrm{(II)}} \Bigr\},
    \end{aligned}
    $
    \end{adjustbox}
    \end{equation}
    where $\Xb_1\mid \Zbt_1=\zbt$ follows the distribution $P_{\Xb_1\mid \Zbt_1=\zbt}$ for all $\zbt$.
    For (I), the term trivially vanishes when $d = 0$.
    If $d \ge 1$, let $\nb := (\nz)_{\zbt \in \k}$, and define
    \begin{equation} \label{eqn:def-s}
    S_\nb := \bigcap_{\zbt \in \k} \snz,
    \end{equation}
    where $\snz$ is defined in Lemma~\ref{lem:kliep-1}. Then, from the proof of Lemma~\ref{lem:kliep-2}, on the event $S_\nb$,
    we have for each $\zbt \in \k$ that
    \begin{align} \label{eqn:bound-3-2}
    \Eb\left[|\rnz(\Xb_1) - r_\zbt(\Xb_1)|\Bigm| \Dc, \Zbt_1 = \zbt \right]
    &\le \int_{\mathcal X} |\rnz(\xb) - r_\zbt(\xb)| f_{1\zbt}(\xb) \, d\xb \nonumber \\
    &\le (\bar M^{1/2} + \eta_1^{1/2}) \, h_{P_{1\zbt}}(r_\zbt, \rnz),
    \end{align}
    since $\|\rnz\|_\infty \le \bar M$ on the event $\snz$ by Lemma~\ref{lem:kliep-1}.
    Moreover, the same lemma implies that 
    $\Pb(\snz \mid \nttrz, \ntez) \to 1$ as 
    $\nttrz, \ntez \to \infty$. 
    Now, conditional on $\tilde n_1$ and $\{n_{1h}\}_{h=1}^H$, we have
    \begin{align*}
    \nttrz \ge \ntrz 
    &= \sum_{h=1}^H \sum_{i \in \itrah} \mathbf 1\{\Zb_{1i} = \zb, \Gb_{1i} = \gb\} \\
    &> \sum_{h=1}^H \frac{n_{1h}v_{1\zbt}q_{1h\gb}}{4} \\
    &\gtrsim \frac{\tilde n_1 v_{1\zbt} \min_{1 \le h \le H}q_{1h\gb}}{4}
    \end{align*}
    with probability tending to one as $n_{11}, \dots, n_{1H} \to \infty$, since $\min_{1 \le h \le H} n_{1h} \gtrsim \tilde n_1/H$.
    Moreover, conditional on $n_2$, we have $\ntez > n_2v_{2\zbt}(\min_{1\le h \le H} q_{2h\gb})/2$ with probability tending to one as $n_2 \to \infty$.
    Therefore, $\min(\nttrz,\ntez) \gtrsim \min(\tilde n_1, n_2)$ with probability tending to one as $n_{11}, \dots, n_{1H}, n_2 \to \infty$.
    Therefore, we obtain
    \begin{equation} \label{eqn:app:kliep:theoretical-snp}
    \Pb(S_\nb \mid \tilde n_1, n_2, \{n_{1h}\}_{h=1}^H) \to 1
    \quad\text{as } \ n_{11}, \dots, n_{1H}, n_2 \to \infty.
    \end{equation}
    It then follows from Theorem~\ref{thm:kliep-1},  \eqref{eqn:bound-3-2}, and \eqref{eqn:app:kliep:theoretical-snp} that
    \begin{equation} \label{eqn:bound-3-3}
    \Eb\left[|\rnz(\Xb_1) - r_\zbt(\Xb_1)| \Bigm| \Dc, \Zbt_1 = \zbt \right] = 
    \begin{cases}
        0 & \text{if}  \ \ d = 0,\\
        O_p(\min(\tilde n_1, n_2)^{-\frac{1}{2+\gamma}})  & \text{if} \ \  d = 1, 2,\\
         O_p\left(\left(\frac{\log\min(\tilde n_1, n_2)}{\min(\tilde n_1, n_2)}\right)^{\frac{1}{d}}\right)     & \text{if} \ \  d\ge 3,
    \end{cases}
    \end{equation}
    for any sufficiently small $\gamma > 0$, and the same bound holds for (I) since 
    $\widehat v_{2\zbt}/(\widehat v_{1\zbt}  \min_{1 \le h \le H} \widehat q_{1h\gb}) = O_p(1)$ as $n_{11}, \dots, n_{1H}, n_2 \to \infty$.
    For (II), fix $h \in [H]$. If there exists a non-demographic categorical covariate $\Zb_{ti}$ (i.e., $l > s$), then using
    \[
    \widehat v_{1\zbt}\widehat q_{1h\gb} = v_{1\zbt} q_{1h\gb} + O_p(n_{1h}^{-1/2}), \qquad \widehat v_{2\zbt} = v_{2\zbt} + O_p(n_2^{-1/2}),
    \]
    the term is of order $O_p(\min(n_{1h}, n_2)^{-1/2})$.
    If $l = s$, the term simplifies to
    $|1/\widehat q_{1h\gb} - 1/q_{1h\gb}|$,
    which is of order $O_p(n_{1h}^{-1/2})$.
    In either case, since $\min_{1 \le h \le H} n_{1h} \lesssim n_2$, term (II) is of order $O_p((\min_{1 \le h \le H} n_{1h})^{-1/2})$, and it is also $O_p(\tilde n_1^{-1/2})$ since $\min_{1 \le h \le H} n_{1h} \gtrsim \tilde n_1/H$.
    Finally, combining \eqref{eqn:bound-3-1} and \eqref{eqn:bound-3-3} yields the desired result.
\end{proof}

\section{Proof of Theorem 1} \label{app:thm1}

{%
\renewcommand{\thetheorem}{1}
\begin{theorem}[The distribution of subgroup-wise sample sizes within each stratum]
For each $t \in \{1,2\}$ and $h \in [H]$, let $\mathcal D_{th}$ denote the subset of $\mathcal D_t$ (defined in Section~3.1) corresponding to stratum $h$, and let $n_{th} = |\mathcal D_{th}|$. For each $\gb \in \kg$, let $n_{th\gb}$ denote the sample size of subgroup $\gb$ in stratum $h$ at time $t$. Under Assumptions~\ref{ass:super}--\ref{ass:multinomial}, we have
\[
(n_{th\gb})_{\gb} \mid n_{th} \sim \mathrm{Mult}\big(n_{th}, (q_{th\gb})_{\gb}\big),
\]
where $q_{1h\gb} = N_{1h\gb}\pi_{1\gb}a_{1\gb}^y/\sum_{\gb'} N_{1h\gb'}\pi_{1\gb'}a_{1\gb'}^y$ and $q_{2h\gb} = N_{2h\gb}\pi_{2\gb}a_{2\gb}^{xz}/\sum_{\gb'} N_{2h\gb'}\pi_{2\gb'}a_{2\gb'}^{xz}$.
\end{theorem}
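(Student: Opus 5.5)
The plan is to combine the multinomial approximation of Assumption~\ref{ass:multinomial} with independent Bernoulli thinning from nonresponse (Assumption~\ref{ass:nonresponse}), and then condition on the total. Fix $t$ and $h$. By Assumption~\ref{ass:multinomial}, $(n^0_{th\gb})_{\gb} \sim \mathrm{Mult}(m_{th}, (\tilde q_{th\gb})_{\gb})$. Equivalently, the $m_{th}$ draws are i.i.d.\ with subgroup label $\gb$ chosen with probability $\tilde q_{th\gb}$.

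\textbf{Step 1: thinning.} By Assumption~\ref{ass:nonresponse}, each sampled unit in subgroup $\gb$ is retained independently. For $t=1$ the relevant indicator is $\rfull$, with probability $a^y_{1\gb}$. For $t=2$ it is $\rcov$, with probability $a^{xz}_{2\gb}$. By Assumption~\ref{ass:design}, these indicators are independent of the selection. Write $a_{t\gb}$ for the relevant retention probability. Each of the $m_{th}$ draws then lands in one of the cells $\{(\gb,\text{retained})\}_{\gb} \cup \{\text{dropped}\}$, independently, with cell probabilities $\tilde q_{th\gb} a_{t\gb}$ and $1-\sum_{\gb}\tilde q_{th\gb}a_{t\gb}$. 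Hence the vector $\bigl((n_{th\gb})_{\gb},\, m_{th}-n_{th}\bigr)$ is multinomial with $m_{th}$ trials.

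\textbf{Step 2: condition on the total.} I will use the standard fact that a multinomial vector, conditioned on the sum of a subset of its cells, is multinomial on that subset with renormalized probabilities. Concretely, I will compute
\[
\Pb\big((n_{th\gb})_{\gb}=(k_{\gb})_{\gb} \mid n_{th}=n\big),
\]
dividing the joint multinomial pmf by the binomial pmf of $n_{th}$. The factors $m_{th}!/(m_{th}-n)!$ and $(1-\sum\tilde q a)^{m_{th}-n}$ cancel. What remains is $\frac{n!}{\prod_{\gb} k_{\gb}!}\prod_{\gb}\bigl(\tilde q_{th\gb}a_{t\gb}/\sum_{\gb'}\tilde q_{th\gb'}a_{t\gb'}\bigr)^{k_{\gb}}$.

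\textbf{Step 3: simplify.} Substituting $\tilde q_{th\gb} = N_{th\gb}\pi_{t\gb}/\sum_{\gb'}N_{th\gb'}\pi_{t\gb'}$, the common denominator cancels. This yields the stated $q_{1h\gb}$ (with $a^y$) and $q_{2h\gb}$ (with $a^{xz}$).

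\textbf{Main obstacle.} The delicate point is Step 1: justifying that the retention indicators are independent of the sampled labels and across units, so that thinning applies draw by draw. For $t=2$ there is a further wrinkle, since $\dtwo$ is defined by $\rcov=1$ alone, whereas for $t=1$ the data $\done$ require $\rfull=1$, and $\rfull=1$ implies $\rcov=1$ by definition. I will state explicitly that only the marginal retention probability of the defining indicator matters, namely $a^y_{1\gb}$ or $a^{xz}_{2\gb}$. This uses unit-level independence and the independence of sampling from $(\rcov,\rfull)$ in Assumption~\ref{ass:design}. Assumption~\ref{ass:super} plays no role in the counts beyond guaranteeing this independence from the covariate values.
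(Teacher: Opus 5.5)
Your proposal is correct and follows the paper's proof. The paper likewise treats $n_{th\gb}\mid n^0_{th\gb}$ as a binomial thinning with probability $a^y_{1\gb}$ or $a^{xz}_{2\gb}$, combines it with the multinomial of Assumption~\ref{ass:multinomial}, and conditions on $n_{th}$; your Steps 2--3 just spell out the conditioning computation and the cancellation of $\sum_{\gb'}N_{th\gb'}\pi_{t\gb'}$, which the paper calls immediate (one small slip: independence of response from $(\Xb_{ti},Y_{ti},\Zb_{ti})$ comes from Assumption~\ref{ass:nonresponse}, not Assumption~\ref{ass:super}, though the count argument never uses it).
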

}

\begin{proof}[Proof of Theorem 1]
Fix a stratum $h \in [H]$. Recall that $\mathcal D_{1h}$ and $\mathcal D_{2h}$ consist of units with $\rfull_{1i} = 1$ and $\rcov_{2i} = 1$, respectively. Therefore, by Assumption~\ref{ass:nonresponse}, we have
\begin{equation} \label{eqn:app:thm1-ng}
n_{1h\gb} \mid m_{1h\gb} \sim B(m_{1h\gb}, a_{1\gb}^y), \qquad 
n_{2h\gb} \mid m_{2h\gb} \sim B(m_{2h\gb}, a_{2\gb}^{xz}),
\end{equation}
for each $\gb \in \kg$. Combining Assumption~\ref{ass:multinomial} with \eqref{eqn:app:thm1-ng} yields
\[
(n_{1h\gb})_\gb \mid n_{1h} \sim \mathrm{Mult}\left(n_{1h}, \Big(\frac{\tilde q_{1h\gb} a_{1\gb}^y}{\sum_{\gb'} \tilde q_{1h\gb'} a_{1\gb'}^y}\Big)_{\gb} \right), \quad 
(n_{2h\gb})_\gb \mid n_{2h} \sim \mathrm{Mult}\left(n_{2h}, \Big(\frac{\tilde q_{2h\gb} a_{2\gb}^{xz}}{\sum_{\gb'} \tilde q_{2h\gb'} a_{2\gb'}^{xz}}\Big)_\gb \right),
\]
and the theorem follows immediately.
\end{proof}

\section{Proof of Theorem 2} \label{app:thm2}

{%
\renewcommand{\thetheorem}{2}
\begin{theorem}[Lower bound on the coverage probability]
    Define $\Dc := \dttra \cup \dtwo$, where \(\dttra\) is given in Section~\ref{sec-method}.
    Let $\nrawone := |\irawone|$, $n_2 := |\itwo|$, and $n_{1h} := |\ioneh|$ for each $h \in [H]$.
    Under Assumptions~\ref{ass:super}--\ref{ass:invariant} and Assumption~\ref{ass:kliep-1}, the selected prediction set $\widehat C_{\alpha, \hat \ell}$ satisfies the following.
    There exists an event $\mathcal E(\Dc)$ that depends only on $\Dc$ such that
    $\Pb(\mathcal E(\Dc) \mid \tilde n_1, n_2, \{n_{1h}\}_{h=1}^H ) \to 1$
    as $n_{11}, \dots, n_{1H}, n_2 \to \infty$, and, on $\mathcal E(\Dc)$,
    \[
    \begin{adjustbox}{max width=\linewidth}
    $\displaystyle
    \begin{aligned}
        &\Pb\left(Y_2 \in \widehat{C}_{\alpha, \hat \ell}(\Xb_2, \Zbt_2) \Bigm| \Dc, \tilde n_1, n_2, \{n_{1h}\}_{h=1}^H \right)  \ge 1-\alpha - \Big(2 + C_1\big(1 +\sqrt{\log(2HL)}\big)\Big) \\ 
        &\times \Bigg(\max_{1 \le h \le H} \Eb\left[|\widehat w_h(\Xb_{1h}, \Zbt_{1h}) - w_h(\Xb_{1h}, \Zbt_{1h})| \Bigm| \Dc, \tilde n_1, n_2, \{n_{1h}\}_{h=1}^H  \right] + 
    \max_{\zbt, h} \frac{v_{2\zbt} p_{2\gb}}{v_{1\zbt} q_{1h\gb}} \cdot \frac{C_2}{\sqrt{\min_{h} n_{1h}}}  \Bigg)
    \end{aligned}
    $
    \end{adjustbox}
    \]
    for some universal constants $C_1, C_2> 0$, where $q_{1h\gb}$ is defined in Theorem~1 and $v_{1\zbt}$ and $v_{2\zbt}$ are defined in Section~4.
    Here, $(\Xb_{1h}, \Zbt_{1h}) \sim P_{(\Xb_1, \Zb_1) \mid \Gb_1} \times P_{\Gb_{1h} \mid \rfull_{1h}=1}$ for each $h \in [H]$, and
    the test point $(\Xb_2, Y_2, \Zbt_2) \sim P_{(\Xb_2, Y_2, \Zb_2) \mid \Gb_2} \times P_{\Gb_2}$ are both independent of $\Dc$.
\end{theorem}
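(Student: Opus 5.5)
Conditional on $\Dc$ and the sample sizes, the weights $\widehat w_h$ are fixed functions and the candidate score functions $\mathcal S_1,\dots,\mathcal S_L$ (fit on $\dtra\subseteq\dttra$) are fixed. Within each stratum, the calibration points $\{(\Xb_{1i},Y_{1i},\Zbt_{1i})\}_{i\in\icalh}$ are then i.i.d.\ from $P_{(\Xb_1,Y_1,\Zb_1)\mid\Gb_1}\times P_{\Gb_{1h}\mid\rfull_{1h}=1}$, by \eqref{eqn:survey-3-doneh} and Theorem~\ref{thm:multinomial}. The plan follows the split-CP selection argument of \citet{aggregation}, adapted to weighted empirical measures.

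\textbf{Step 1: reduce to weighted CDFs.} For each $\ell$ and $h$ define the population-weighted CDF $F_{\ell,h}(t):=\Eb[w_h(\Xb_{1h},\Zbt_{1h})\mathbf 1\{\mathcal S_\ell(\Xb_{1h},Y_{1h},\Zbt_{1h})\le t\}]$. By Assumption~\ref{ass:invariant} and the definition \eqref{eqn:proposed-construction-w} of $w_h$ as the likelihood ratio to the target \eqref{eqn:survey-3-target}, this equals $\Pb(\mathcal S_\ell(\Xb_2,Y_2,\Zbt_2)\le t\mid\Dc)$ for every $h$. Hence the coverage of $\widehat C_{\alpha,\ell}$ is $\frac1H\sum_h F_{\ell,h}(\widehat t_{\alpha,\ell})$. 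By construction, $\frac1H\sum_h\widehat F_{\ell,h}(\widehat t_{\alpha,\ell})\ge1-\alpha$, where
\[
\widehat F_{\ell,h}(t)=\frac{\sum_{i\in\icalh}\widehat w_h(\Xb_{1i},\Zbt_{1i})\mathbf 1\{s_{\ell,i}\le t\}}{\sum_{j\in\icalh}\widehat w_h(\Xb_{1j},\Zbt_{1j})}.
\]
Since $\hat\ell$ is arbitrary, it suffices to bound
\[
\max_{\ell}\sup_t\frac1H\sum_h\bigl|\widehat F_{\ell,h}(t)-F_{\ell,h}(t)\bigr|\le\max_{\ell,h}\sup_t\bigl|\widehat F_{\ell,h}(t)-F_{\ell,h}(t)\bigr|.
\]

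\textbf{Step 2: decompose the error.} Write $\widehat F_{\ell,h}=A_{\ell,h}/B_h$, with $A$ the numerator average and $B_h$ the average of $\widehat w_h$ over $\icalh$. The error splits into three parts:
\begin{itemize}
\item[(a)] replacing $\widehat w_h$ by $w_h$ inside the expectations, which gives $\Eb|\widehat w_h-w_h|$ both in the numerator and in the normalizer $\Eb\widehat w_h$ versus $\Eb w_h=1$;
\item[(b)] the empirical-process deviation $\sup_t|A_{\ell,h}(t)-\Eb A_{\ell,h}(t)|$;
\item[(c)] the deviation $|B_h-\Eb B_h|$.
\end{itemize}
The ratio contributes roughly a factor $2$ on (a), which I expect is the source of the leading ``$2$''.

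For (b) and (c), the class $\{\widehat w_h\mathbf 1\{\mathcal S_\ell\le t\}:t\}$ is a bounded VC-type class. Suppose $\widehat w_h$ is uniformly bounded by a constant multiple of $\max_{\zbt,h}v_{2\zbt}p_{2\gb}/(v_{1\zbt}q_{1h\gb})$. Then a DKW/Talagrand-type bound (bounded differences plus symmetrization) gives deviation at most $M(C+\sqrt{2\log(2HL/\delta)})/\sqrt{n_{\mathrm{cal},h}}$ with probability $1-\delta/(HL)$ for each pair $(\ell,h)$. A union bound over the $HL$ pairs, followed by integrating the tail to pass to the conditional expectation (which turns $\delta$ into the $1+\sqrt{\log(2HL)}$ factor), yields term (II) with $n_{\mathrm{cal},h}\asymp n_{1h}$.

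\textbf{Step 3: define the good event.} $\mathcal E(\Dc)$ will be the event on which $\widehat w_h\le 2\max w_h$ uniformly and $\Eb\widehat w_h\ge1/2$. This requires $\|\widehat r_\zbt\|_\infty\le\bar M$ (event $S_\nb$, Lemma~\ref{lem:kliep-1}) and $\widehat v,\widehat q$ being within a factor of the truth, which holds with probability tending to one, as in the proof of Proposition~\ref{prop:weight}.

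\textbf{Main obstacle.} The main difficulty is handling the selection $\hat\ell$, which depends on $\dcal$, together with the random normalizer. My approach is to bound the supremum over $\ell$ and $t$ before selection, rather than conditioning on $\hat\ell$. I also need to verify that the ratio perturbation stays linear in the error; for that I will use $|a/b-a'/b'|\le|a-a'|/b+a'|b-b'|/(bb')$ with $a'\le b'$.
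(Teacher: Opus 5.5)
Your proposal is correct and follows essentially the paper's proof. The paper also splits $\sup_t|\widehat F_{\ell,h}(t)-\Pb(\mathcal S_\ell(\Xb_2,Y_2,\Zbt_2)\le t\mid\Dc)|$ into three pieces: a normalizer deviation, an empirical-process deviation over the VC class $\{\widehat w_h\mathbf 1\{\mathcal S_\ell\le t\}:t\in\Rb\}$, and a weight-error term bounded by $2\,\Eb[|\widehat w_h-w_h|\mid\Dc]$ using $\Eb[w_h]=1$. It works on an event where $\lVert\widehat r_\zbt\rVert_\infty\le\bar M$, $\widehat v,\widehat q$ are within constant factors of the truth, and $\Eb[\widehat w_h\mid\Dc]>1/2$, then takes a union bound over the $HL$ pairs and integrates the tail. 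Your McDiarmid step in place of the paper's Bousquet--Talagrand inequality is harmless, since the paper's variance proxy there equals the squared envelope anyway.

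Two small remarks. First, the event should read $\lVert\widehat w_h\rVert_\infty\le 3\bar M\max_{\zbt,h}v_{2\zbt}p_{2\gb}/(v_{1\zbt}q_{1h\gb})$ rather than $\widehat w_h\le 2\max w_h$, because KLIEP gives only $\lVert\widehat r_\zbt\rVert_\infty\le\bar M$, not sup-norm consistency. Second, you keep the $\Dc$-measurable bias outside the tail integration, whereas the paper folds it into the scale $\Delta_h$. As a result your bookkeeping gives the sharper coefficient $2$ on term (I), which implies the stated bound.
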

}

We adopt the proof strategy of Theorem~1 in \citet{aggregation}, but extending the argument to our setting requires additional work. The main complications stem from three features that are absent in the original paper: (i) we study \emph{weighted} conformal prediction, so the relevant conformal scores and quantile arguments must account for non-uniform weights; (ii) the weights are not known a priori, and we therefore use an \emph{estimated} weight function $\widehat w_h$; and (iii) we allow for multiple strata, potentially following different distributions, and construct a single prediction set by aggregating data across strata.

\begin{proof}[Proof of Theorem 2]
For notational convenience, for each stratum $h \in [H]$, we rewrite $\dcalh$ in this proof as
$
\dcalh = \{(\Xb_{1i}, Y_{1i}, \Zbt_{1i})\}_{i \in \icalh} = \{\Tb_{hi}\}_{i=1}^{n_h},
$
where $n_h := \ncalh = n_{1h}/2$, and we denote $\widehat w_h(\Xb_{1i}, \Zbt_{1i})$ by $\widehat w_h(\Tb_{hi})$ and $\mathcal S_\ell(\Xb_{1i}, Y_{1i}, \Zbt_{1i})$ by $\mathcal S_\ell(\Tb_{hi})$.
Then, for each $\ell \in [L]$, the threshold $\tl$ of the $\ell$-th marginal prediction set $\widehat C_{\alpha,\ell}$ can be written as
\begin{align} \label{eqn:app:thm2-threshold}
    \tl &:= \q_{1-\alpha}\left(\frac{1}{H} \sum_{h=1}^H \sum_{i = 1}^{n_h} \frac{\widehat w_h(\Tb_{hi})}{\sum_{j = 1}^{n_h} \widehat w_h(\Tb_{hj})} \cdot \delta_{\Sl(\Tb_{hi})} \right) \nonumber \\
    &= \inf\left\{t \in \mathbb R : \frac{1}{H} \sum_{h=1}^H \sum_{i=1}^{n_h} \frac{\widehat w_h(\Tb_{hi})}{\sum_{j=1}^{n_h} \widehat w_h(\Tb_{hj})} \cdot \mathbf 1\{\Sl(\Tb_{hi}) \le t\} \ge 1-\alpha\right\}.
\end{align}
Define $P_{1h} := P_{(\Xb_1, Y_1, \Zb_1) \mid \Gb_1} \times P_{\Gb_{1h} \mid \rfull_{1h}=1}$ for each $h \in [H]$ and 
$P_2 := P_{(\Xb_2, Y_2, \Zb_2) \mid \Gb_2} \times P_{\Gb_2}$, and let the test point be $\Tb_2 := (\Xb_2, Y_2, \Zbt_2)$.  
As described in Section~3.3, conditional on $\nrawone$, $n_2$, and $\{n_{1h}\}_{h=1}^H$, for each stratum $h \in [H]$, it holds that $\Tb_{hi} \iid P_{1h}, \ i \in [n_h]$ and $\Tb_2 \sim P_2$, so the true WCP weight function can be written as $w_h = dP_2 / dP_{1h}$. 
Then, for the selected index $\widehat\ell \in [L]$, our goal is to derive a lower bound on the conditional coverage probability of $\widehat C_{\alpha,\hat\ell}$:
\[
\Pb\big(Y_2 \in \widehat C_{\alpha,\hat\ell}(\Xb_2, \Zbt_2) \bigm| \Dc \big)
=
\Pb\big(\mathcal S_{\hat\ell}(\Tb_2) \le \widehat t_{\alpha,\hat\ell} \bigm| \Dc \big).
\]
We derive this bound on the event $\edc$, defined as
\begin{equation} \label{eqn:thm2-stilde}
\edc := S_\nb  \cap  \left\{ \min_{1\le h \le H} \Eb[\widehat w_h(\Tb_{1h}) \mid \Dc] > \frac{1}{2} \right\}
\cap
\bigcap_{\substack{
h \in [H]  \\
\zbt \in \k
}}
\left\{
\widehat v_{1\zbt} \widehat q_{1h\gb} > \frac{v_{1\zbt} q_{1h\gb}}{2}, \ \widehat v_{2\zbt} < \frac{3v_{2\zbt}}{2}
\right\}
,
\end{equation}
where $S_\nb$ and the notation $\nb$ are defined in~\eqref{eqn:def-s}.
Note that, by construction, $\edc$ depends only on $\Dc$, since all unknown quantities are estimated solely using $\Dc$.

Since $\Eb[|\widehat w_h(\Tb_{1h}) - w_h(\Tb_{1h})| \mid \Dc] = o_p(1)$ by Proposition~\ref{prop:weight} and $\Eb[w_h(\Tb_{1h})] = 1$ by \eqref{eqn:thm2-term3-2} for all $h \in [H]$, it follows that
$\Eb[\min_{1 \le h \le H} \widehat w_h(\Tb_{1h}) \mid \Dc] > 1/2$ with probability tending to one.
Moreover, we have
$\widehat v_{1\zbt} \widehat q_{1h\gb} = v_{1\zbt} q_{1h\gb} + O_p(n_{1h}^{-1/2})$ and $\widehat v_{2\zbt} = v_{2\zbt} + O_p(n_2^{-1/2})$ for all $\zbt$ and $h$. Combining these with \eqref{eqn:app:kliep:theoretical-snp} yields
$\Pb(\edc \mid \tilde n_1, n_2, \{n_{1h}\}_{h=1}^H) \to 1$ as $n_{11}, \dots, n_{1H}, n_2 \to \infty$.
Furthermore, Lemma~\ref{lem:kliep-1} implies that $\lVert \widehat r_\zbt \rVert_\infty \le \bar M$ for a universal constant $\bar M  = 2\xi_0/\epsilon_0$ for all $\zbt$ on $\edc$.
On the event $\edc$, we therefore have
\begin{equation} \label{eqn:thm2-bound}
    \max_{1 \le h \le H} \lVert \widehat w_h \rVert_\infty 
    \le \max_{\substack{
    1 \le h \le H  \\
    \zbt \in \k
    }}
    \left\{
        \frac{\widehat v_{2\zbt} p_{2\gb}}{\widehat v_{1\zbt} \widehat q_{1h\gb}} 
        \lVert \rnz \rVert_\infty
    \right\}
    \le 3\bar{M} \cdot
    \max_{\substack{
    1 \le h \le H  \\
    \zbt \in \k
    }} \frac{v_{2\zbt} p_{2\gb}}{v_{1\zbt} q_{1h\gb}}
    =: C.
\end{equation}

\noindent
On the event $\edc$, for each $\ell \in [L]$, we aim to derive a concentration bound for
\begin{equation} \label{eqn:app:thm2-wldef}
W_\ell(\Dc, \dcal) := \sup_{t \in \mathbb R} 
\left|
    \frac{1}{H} \sum_{h=1}^H \sum_{i=1}^{n_h} \frac{\widehat w_h(\Tb_{hi})}{\sum_{j=1}^{n_h} \widehat w_h(\Tb_{hj})}
    \cdot \mathbf 1\{\Sl(\Tb_{hi}) \le t\}
    -
    \Pb\!\left(\Sl(\Tb_2) \le t \mid \Dc, \dcal \right)
\right|,
\end{equation}
conditional on~$\Dc$.
To this end, for each stratum $h \in [H]$, we define
\[
W_{\ell, h}(\Dc, \dcal) := \sup_{t \in \mathbb R} 
\left|
    \sum_{i=1}^{n_h} \frac{\widehat w_h(\Tb_{hi})}{\sum_{j=1}^{n_h} \widehat w_h(\Tb_{hj})}
    \cdot \mathbf 1\{\Sl(\Tb_{hi}) \le t\}
    -
    \Pb\!\left(\Sl(\Tb_2) \le t \mid \Dc, \dcal \right)
\right|,
\]
and we first derive a concentration bound for $W_{\ell, h}(\Dc, \dcal)$ conditional on $\Dc$, and then a concentration bound for $W_\ell(\Dc, \dcal)$ using the following relationship:
\begin{equation} \label{eqn:app:thm2-wlwlh}
W_\ell(\Dc, \dcal) \le \frac{1}{H} \sum_{h=1}^H W_{\ell, h}(\Dc, \dcal).
\end{equation}
To bound $W_{\ell, h}(\Dc, \dcal)$, we observe that it admits the following decomposition:
\begin{align} \label{eqn:thm2-decompose}
W_{\ell, h}(\Dc, \dcal) &\le \underbrace{\sup_{t \in \mathbb R} \left| \frac{\sum_{i=1}^{n_h} \widehat w_h(\Tb_{hi}) \mathbf 1\{\Sl(\Tb_{hi}) \le t\}}{\sum_{i=1}^{n_h} \widehat w_h(\Tb_{hi})} - \frac{\sum_{i=1}^{n_h} \widehat w_h(\Tb_{hi}) \mathbf 1\{\Sl(\Tb_{hi}) \le t\}}{\sum_{i=1}^{n_h} \Eb[\widehat w_h(\Tb_{hi}) \mid \Dc]} \right|}_{(\text{I})} \nonumber \\
&\ + \underbrace{\sup_{t \in \mathbb R} \left| \frac{\sum_{i=1}^{n_h} \widehat w_h(\Tb_{hi}) \mathbf 1\{\Sl(\Tb_{hi}) \le t\}}{\sum_{i=1}^{n_h} \Eb[\widehat w_h(\Tb_{hi}) \mid \Dc]} - \frac{\sum_{i=1}^{n_h} \Eb[\widehat w_h(\Tb_{hi}) \mathbf 1\{\Sl(\Tb_{hi}) \le t\} \mid \Dc]}{\sum_{i=1}^{n_h} \Eb[\widehat w_h(\Tb_{hi}) \mid \Dc]} \right|}_{(\text{II})} \nonumber \\
&\ + \underbrace{\sup_{t \in \mathbb R} \left| \frac{\sum_{i=1}^{n_h} \Eb[\widehat w_h(\Tb_{hi}) \mathbf 1\{\Sl(\Tb_{hi}) \le t\} \mid \Dc]}{\sum_{i=1}^{n_h} \Eb[\widehat w_h(\Tb_{hi}) \mid \Dc]} - \frac{\sum_{i=1}^{n_h} \Eb[w_h(\Tb_{hi}) \mathbf 1\{\Sl(\Tb_{hi}) \le t\} \mid \Dc]}{\sum_{i=1}^{n_h} \Eb[w_h(\Tb_{hi})]} \right|}_{(\text{III})} \nonumber \\
&\ + \underbrace{\sup_{t \in \mathbb R} \left| \frac{\sum_{i=1}^{n_h} \Eb[w_h(\Tb_{hi}) \mathbf 1\{\Sl(\Tb_{hi}) \le t\} \mid \Dc]}{\sum_{i=1}^{n_h} \Eb[w_h(\Tb_{hi})]} - \Pb(\Sl(\Tb_2) \le t \mid \Dc, \dcal) \right|}_{(\text{IV})}.
\end{align}

\noindent
Since $\sup_{t \in \Rb} \mathbf 1\{\Sl(\Tb_{hi}) \le t\} = 1$, the term~(I) can be simplified as follows:
\begin{equation} \label{eqn:thm2-term1-1}
(\text{I}) = \left| \frac{1}{\Eb[\widehat w_h(\Tb_{h1}) \mid \Dc]} \cdot \frac{1}{n_h}\sum_{i=1}^{n_h} \widehat w_h(\Tb_{hi}) - 1 \right|.
\end{equation}

\noindent
For each $\ell \in [L]$ and $h \in [H]$, define the following function classes:
\begin{align*}
\fc &:= \left\{\widehat w_h(\cdot) \mathbf 1\{\Sl(\cdot) \le t\} : t \in \mathbb R\right\}, \\
\tfc &:= \left\{\widehat w_h(\cdot) \mathbf 1\{\Sl(\cdot) \le t\} - \Eb\left[\widehat w_h(\Tb_{h1}) \mathbf 1\{\Sl(\Tb_{h1}) \le t\}\mid \Dc\right] : t \in \mathbb R\right\}.
\end{align*}
Then, the term~(II) can be expressed as
\begin{align} \label{eqn:thm2-term2-1}
(\text{II}) &= \frac{1}{n_h \Eb[\widehat w_h(\Tb_{h1}) \mid \Dc]} \cdot \sup_{t \in \mathbb R} \left| \sum_{i=1}^{n_h} \left\{ \widehat w_h(\Tb_{hi}) \mathbf 1\{\Sl(\Tb_{hi}) \le t\} - \Eb[\widehat w_h(\Tb_{hi}) \mathbf 1\{\Sl(\Tb_{hi}) \le t\} \mid \Dc] \right\} \right| \nonumber \\
&= \frac{1}{n_h \Eb[\widehat w_h(\Tb_{h1}) \mid \Dc]} \cdot \sup_{f \in \tfc} \left|\sum_{i=1}^{n_h} f(\Tb_{hi})\right|.
\end{align}

\noindent
To bound \eqref{eqn:thm2-term2-1}, we apply Talagrand's inequality to $\tfc$. 
For this, we first obtain an upper bound on $\Eb\!\left[\sup_{f \in \tfc}|\sum_{i=1}^{n_h} f(\Tb_{hi})| \,\middle|\, \Dc\right]$.
For i.i.d.~Rademacher variables $\varepsilon_i \in \{-1, +1\}$, the symmetrization lemma gives
\begin{align} \label{eqn:thm2-term2-2}
\Eb\left[\sup_{f \in \tfc} \Big|\sum_{i=1}^n f(\Tb_{hi})\Big| \,\middle|\, \Dc\right] 
&= \Eb\left[\sup_{f \in \fc} \Big|\sum_{i=1}^n \left\{f(\Tb_{hi}) - \Eb[f(\Tb_{hi}) \mid \Dc]\right\}\Big| \,\middle|\, \Dc\right]  \nonumber \\
&\le 2\Eb\left[\sup_{f \in \fc} \Big|\sum_{i=1}^n \varepsilon_i f(\Tb_{hi})\Big| \,\middle|\, \Dc\right].
\end{align}
Given $\{\Tb_{hi}\}_{i=1}^{n_h}$, define the stochastic process $\{Z_f\}_{f \in \fc}$ and the empirical $L_2$-norm $\lVert \cdot \rVert_{2, n_h}$ as
\[
Z_f := \frac{1}{\sqrt n_h} \sum_{i=1}^{n_h} \varepsilon_i f(\Tb_{hi}), \qquad 
\lVert f \rVert_{2, n_h} := \sqrt{\frac{1}{n_h} \sum_{i=1}^{n_h} f(\Tb_{hi})^2}, \qquad \forall f \in \fc.
\]
Then, for any $f, g \in \fc$ and $\delta > 0$, Hoeffding's inequality implies
\begin{align*}
    \Pb\left(|Z_f - Z_g| \ge \delta \,\middle|\, \Dc, \{\Tb_{hi}\}_{i=1}^{n_h}\right) 
    &= \Pb\left(\Big|\sum_{i=1}^n \varepsilon_i (f(\Tb_{hi}) - g(\Tb_{hi}))\Big| \ge \sqrt{n_h} \delta \,\middle|\, \Dc, \{\Tb_{hi}\}_{i=1}^{n_h} \right) \\
    &\le 2\exp\left(-\frac{\delta^2}{2\lVert f-g \rVert_{2,n_h}^2}\right).
\end{align*}
Because the process $\{Z_f\}_{f \in \fc}$ satisfies the sub-Gaussian increment condition with respect to the metric $\lVert f - g \rVert_{2,n_h}$, we can apply Dudley's entropy integral bound. Thus, we obtain
\begin{equation*} 
\Eb\left[\sup_{f \in \fc} \Big|\sum_{i=1}^{n_h} \varepsilon_i f(\Tb_{hi})\Big| \,\middle|\, \Dc, \{\Tb_{hi}\}_{i=1}^{n_h} \right] 
\le C_1 \sqrt{n_h} \int_0^{\sup_{f \in \fc} \lVert f \rVert_{2, n_h}} 
\sqrt{\log N(\epsilon, \fc \cup \{0\}, \lVert \cdot \rVert_{2, n_h})} \, d\epsilon,
\end{equation*}
for some universal constant $C_1 > 0$.  
Because the VC subgraph dimension of $\{\mathbf 1\{\Sl(\cdot) \le t\}: t \in \mathbb R\}$ is one and $\widehat w_h \ge 0$, we have $V(\fc)=1$.  
Hence, with $F_{\ell, h} := \sup_{f \in \fc} f$, it follows that
\[
N(\epsilon \lVert F_{\ell, h} \rVert_{2,n_h},\, \fc,\, \lVert \cdot \rVert_{2,n_h})
\;\le\; \frac{C_2}{\epsilon^2},
\]
for some universal constant $C_2 > 0$, for any $\epsilon > 0$. Combining these observations, we obtain
\begin{align} \label{eqn:thm2-term2-3}
&\Eb\left[\sup_{f \in \fc} \Big|\sum_{i=1}^{n_h} \varepsilon_i f(\Tb_{hi})\Big| \,\middle|\, \Dc, \{\Tb_{hi}\}_{i=1}^{n_h} \right]  \nonumber \\
&\le C_1 \sqrt{n_h} \int_0^{\frac{\sup_{f \in \fc} \lVert f \rVert_{2, n_h}}{\lVert F_{\ell, h} \rVert_{2, n_h}}} 
\sqrt{\log N(\epsilon\lVert F_{\ell, h} \rVert_{2, n_h}, \fc \cup \{0\}, \lVert \cdot \rVert_{2, n_h})} \cdot \lVert F_{\ell, h} \rVert_{2, n_h} \ d\epsilon \nonumber \\
&\le CC_1\sqrt{n_h}\int_0^1 \sqrt{\log(1 + N(\epsilon\lVert F_{\ell, h} \rVert_{2, n_h}, \fc, \lVert \cdot \rVert_{2, n_h}))} \ d\epsilon \nonumber \\
&\le CC_1\sqrt{n_h} \cdot \int_0^1 \sqrt{\log\left(1 + \frac{C_2}{\epsilon^2}\right)}  d\epsilon = CC_3\sqrt{n_h}
\end{align}
for all realizations of $\{\Tb_{hi}\}_{i=1}^{n_h}$ and some universal $C_3 > 0$, since 
$\lVert F_{\ell, h} \rVert_{2, n_h} \le \lVert \widehat w_h \rVert_\infty \le C$ on $\edc$ by \eqref{eqn:thm2-bound}
and $\sup_{f \in \fc} \lVert f \rVert_{2, n_h} \le \lVert F_{\ell, h} \rVert_{2, n_h}$.
By \eqref{eqn:thm2-term2-2} and \eqref{eqn:thm2-term2-3}, we have
\begin{align} \label{eqn:thm2-term2-4}
\Eb\left[\sup_{f \in \tfc} \Big|\sum_{i=1}^{n_h} f(\Tb_{hi})\Big| \,\middle|\, \Dc\right] 
&\le 2\Eb\left[\sup_{f \in \fc} \Big|\sum_{i=1}^{n_h} \varepsilon_i f(\Tb_{hi})\Big| \,\middle|\, \Dc\right] \nonumber \\
&= 2 \Eb\left[\Eb\left[\sup_{f \in \fc} \Big|\sum_{i=1}^{n_h} \varepsilon_i f(\Tb_{hi})\Big| \,\middle|\, \Dc, \{\Tb_{hi}\}_{i=1}^{n_h} \right]\,\middle|\, \Dc\right] 
\le 2CC_3\sqrt{n_h}.
\end{align}

\noindent
For term~(III), we can further decompose and bound it as follows:
\begin{align} \label{eqn:thm2-term3-1}
(\text{III}) &\le \sup_{t \in \mathbb R} \left| \frac{\sum_{i=1}^{n_h} \Eb[\widehat w_h(\Tb_{hi}) \mathbf 1\{\Sl(\Tb_{hi}) \le t\} \mid \Dc]}{\sum_{i=1}^{n_h} \Eb[\widehat w_h(\Tb_{hi}) \mid \Dc]} - \frac{\sum_{i=1}^{n_h} \Eb[\widehat w_h(\Tb_{hi}) \mathbf 1\{\Sl(\Tb_{hi}) \le t\} \mid \Dc]}{\sum_{i=1}^{n_h} \Eb[w_h(\Tb_{hi})]} \right| \nonumber \\
&\qquad + \sup_{t \in \mathbb R} \left| \frac{\sum_{i=1}^{n_h} \Eb[\widehat w_h(\Tb_{hi}) \mathbf 1\{\Sl(\Tb_{hi}) \le t\} \mid \Dc]}{\sum_{i=1}^{n_h} \Eb[w_h(\Tb_{hi})]} - \frac{\sum_{i=1}^{n_h} \Eb[w_h(\Tb_{hi}) \mathbf 1\{\Sl(\Tb_{hi}) \le t\} \mid \Dc]}{\sum_{i=1}^{n_h} \Eb[w_h(\Tb_{hi})]} \right| \nonumber \\
&\le \sup_{t \in \Rb} \ \Eb[\widehat w_h(\Tb_{h1}) \mathbf 1\{\Sl(\Tb_{h1}) \le t\} \mid \Dc] \cdot \left| \frac{1}{\Eb[\widehat w_h(\Tb_{h1}) \mid \Dc]} - \frac{1}{\Eb[w_h(\Tb_{h1})]} \right|  \nonumber \\
&\qquad + \sup_{t \in \Rb} \Eb\left[|\widehat w_h(\Tb_{h1}) - w_h(\Tb_{h1})| \cdot \mathbf 1\{\Sl(\Tb_{h1}) \le t\} \mid \Dc\right] \nonumber \\
&= \left| \Eb[\widehat w_h(\Tb_{h1}) \mid \Dc] - 1\right| + \Eb\left[|\widehat w_h(\Tb_{h1}) - w_h(\Tb_{h1})| \mid \Dc\right] \nonumber \\
&\le 2 \Eb\left[|\widehat w_h(\Tb_{h1}) - w_h(\Tb_{h1})| \mid \Dc\right],
\end{align}
since by the definition of the true weight function $w_h$, 
\begin{equation} \label{eqn:thm2-term3-2}
\Eb[w_h(\Tb_{h1})] = \Eb_{\Tb \sim P_{1h}}[w_h(\Tb)] =\Eb_{\Tb \sim P_2}\left[\frac{dP_2}{dP_{1h}}(\Tb) \cdot \frac{dP_{1h}}{dP_2}(\Tb)\right] = 1.
\end{equation}

\noindent
Finally, for term~(IV), since for each $t \in \Rb$ and each $i \in [n_h]$
\begin{align*} 
\Eb[w_h(\Tb_{hi}) \mathbf 1\{\Sl(\Tb_{hi}) \le t\} \mid \Dc] &= \Eb_{\Tb \sim P_{1h}}[w_h(\Tb) \mathbf 1\{\Sl(\Tb) \le t\} \mid \Dc] \nonumber \\
&= \Eb_{\Tb \sim P_2}[\mathbf 1\{\Sl(\Tb) \le t\} \mid \Dc] \nonumber \\
&= \Pb(\Sl(\Tb_2) \le t \mid \Dc, \dcal),
\end{align*}
we conclude that the term (IV) is equal to zero.
Now, for each $h \in [H]$, define 
\[
\Delta_h(\Dc) := \sqrt{n_h} \cdot\Eb\left[|\widehat w_h(\Tb_{h1}) - w_h(\Tb_{h1})| \bigm| \Dc, \nrawone, n_2, \{n_{1h}\}_{h=1}^H \right] + 4CC_3.
\]
Then, combining \eqref{eqn:thm2-decompose} and \eqref{eqn:thm2-term3-1}, for any $u > 2$, we have
\begin{align} \label{eqn:thm2-bound-1}
&\Pb\left(\sqrt{n_h} W_{\ell, h}(\Dc, \dcal) > u \Delta_h(\Dc) \Bigm| \Dc\right) \nonumber \\
&\le \Pb\left((\text{I}) + (\text{II}) + 2 \Eb\left[|\widehat w_h(\Tb_{h1}) - w_h(\Tb_{h1})| \mid \Dc\right] > u\Eb\left[|\widehat w_h(\Tb_{h1}) - w_h(\Tb_{h1})| \mid \Dc\right] + \frac{4CC_3u}{\sqrt{n_h}} \Bigm|  \Dc\right) \nonumber \\
&\le \Pb\left((\text{I}) > \frac{(u-2)\Delta_h(\Dc) + 8CC_3}{2\sqrt{n_h}} \Bigm| \Dc\right) + \Pb\left((\text{II}) > \frac{(u-2)\Delta_h(\Dc) + 8CC_3}{2\sqrt{n_h}} \Bigm| \Dc\right).
\end{align}
On the event $\edc$ defined in \eqref{eqn:thm2-stilde}, since $\lVert \widehat w_h \rVert_\infty \le C$ and $\Eb[\widehat w_h(\Tb_{h1}) \mid \Dc] > 1/2$, 
using \eqref{eqn:thm2-term1-1} together with Hoeffding's inequality yields the following bound:
\begin{align} \label{eqn:thm2-bound-2}
&\Pb\left((\text{I}) > \frac{(u-2)\Delta_h(\Dc) + 8CC_3}{2\sqrt{n_h}} \Bigm| \Dc\right) \nonumber \\ 
&= \Pb\left(\Big|\frac{1}{n_h}\sum_{i=1}^{n_h} \left\{\widehat w_h(\Tb_{hi}) - \Eb[\widehat w_h(\Tb_{hi}) \mid \Dc]\right\}\Big| > \frac{(u-2)\Delta_h(\Dc) + 8CC_3}{2\sqrt{n_h}} \cdot \Eb\left[\widehat w_h(\Tb_{h1}) \mid \Dc\right]  \,\middle|\, \Dc\right) \nonumber \\
&\le \Pb\left(\Big|\frac{1}{n_h}\sum_{i=1}^{n_h} \left\{\widehat w_h(\Tb_{hi}) - \Eb[\widehat w_h(\Tb_{hi}) \mid \Dc]\right\}\Big| > \frac{(u-2)\Delta_h(\Dc)}{4\sqrt{n_h}} \,\middle|\, \Dc\right) \nonumber \\
&\le 2 \exp\left(-\frac{(u-2)^2\Delta_h(\Dc)^2}{32C^2}\right).
\end{align}
Furthermore, by \eqref{eqn:thm2-term2-1} and \eqref{eqn:thm2-term2-4}, we obtain
\begin{align*} 
&\Pb\left((\text{II}) > \frac{(u-2)\Delta_h(\Dc) + 8CC_3}{2\sqrt{n_h}} \,\middle|\, \Dc\right) \nonumber \\
&= \Pb\left( \sup_{f \in \tfc} \Big|\sum_{i=1}^{n_h} f(\Tb_{hi})\Big| > \frac{(u-2)\Delta_h(\Dc) + 8CC_3}{2\sqrt{n_h}} \cdot n_h\Eb[\widehat w_h(\Tb_{h1}) \mid \Dc] \,\middle|\, \Dc\right) \nonumber \\
&\le \Pb\left( \sup_{f \in \tfc} \Big|\sum_{i=1}^{n_h} f(\Tb_{hi})\Big| > \frac{(u-2)\sqrt n_h\Delta_h(\Dc)}{4} + 2CC_3\sqrt{n_h} \,\middle|\, \Dc\right) \nonumber \\
&\le \Pb\left(\sup_{f \in \tfc} \Big|\sum_{i=1}^{n_h} f(\Tb_{hi})\Big| - \Eb\left[\sup_{f \in \tfc} \Big|\sum_{i=1}^{n_h} f(\Tb_{hi})\Big| \,\middle|\, \Dc\right] > \frac{(u-2)\sqrt n_h\Delta_h(\Dc)}{4} \,\middle|\, \Dc \right).
\end{align*}
Note that since $\sup_{f \in \tfc} \left|\sum_{i=1}^{n_h} f(\Tb_{hi}) \right| \le Cn_h$ on $\edc$, 
the above probability becomes zero whenever $(u - 2)\Delta_h(\Dc) > 4C\sqrt n_h$.  
When $(u - 2)\Delta_h(\Dc) \le 4C\sqrt n_h$, note that on $\edc$,
\[
\frac{1}{n_h}\sum_{i=1}^{n_h} \sup_{f \in \tfc} \Eb[f^2(\Tb_{hi}) \mid \Dc] 
= \frac{1}{n_h}\sum_{i=1}^{n_h} \sup_{f \in \fc} \Eb[\{f(\Tb_{hi}) - \Eb[f(\Tb_{hi}) \mid \Dc]\}^2 \mid \Dc] \le C^2.
\]
Applying Bousquet's version of Talagrand's inequality (Theorem~7.3 of \citet{Bousquet03_s}) to the class $\tfc / C$, with $\sigma^2 = 1$ and 
$\nu_{n_h} := n_h + 2\Eb[\sup_{f \in \tfc}|\sum_{i=1}^{n_h} f(\Tb_{hi})| \mid \Dc]/C$, yields
\begin{align} \label{eqn:thm2-bound-3}
\Pb\left((\text{II}) > \frac{(u-2)\Delta_h(\Dc) + 8CC_3}{2\sqrt{n_h}} \Bigm| \Dc\right) 
&\le \exp\left(-\nu_{n_h} h_1\left(\frac{(u-2)\sqrt{n_h}\Delta_h(\Dc)}{4C\nu_{n_h}}\right)\right) \nonumber \\
&\le \exp\left(-\frac{n_h(u-2)^2\Delta_h(\Dc)^2/16C^2}{2\nu_{n_h} + (u-2)\sqrt n_h\Delta_h(\Dc)/6C}\right) \nonumber \\
&\le \exp\left(-\frac{n_h(u-2)^2\Delta_h(\Dc)^2/16}{2C^2 n_h + 4C^2C_3\sqrt{n_h} + C(u-2)\sqrt n_h\Delta_h(\Dc)/6}\right) \nonumber \\
&\le \exp\left(-\frac{n_h(u-2)^2\Delta_h(\Dc)^2/16}{2C^2(1+2C_3)n_h + 2C^2n_h/3}\right) \nonumber \\
&= \exp\left(-\frac{C_4(u-2)^2\Delta_h(\Dc)^2}{C^2}\right)
\end{align}
for any $u$ satisfying $(u - 2)\Delta_h(\Dc) \le 4C\sqrt n_h$ and $h_1(x) := (x+1)\log(1+x) - x$, where $C_4 = 3/(128+192C_3)$.
The second inequality uses the fact that 
$h_1(x) \ge \frac{3x}{4} \log(1 + \frac{2x}{3}) \ge \frac{x^2}{2 + 2x/3}$ for all $x > 0$, 
and the third follows from~\eqref{eqn:thm2-term2-4}.
Combining the two cases, we conclude that \eqref{eqn:thm2-bound-3} holds for all $u > 2$.
Putting together \eqref{eqn:thm2-bound-1}, \eqref{eqn:thm2-bound-2}, and \eqref{eqn:thm2-bound-3}, on $\edc$ we obtain
\begin{equation} \label{eqn:app:thm2-wlh}
\Pb\left(\sqrt{n_h} W_{\ell, h}(\Dc, \dcal) > u \Delta_h(\Dc) \Bigm| \Dc\right) \le 3\exp\left(-\frac{C_4(u-2)^2\Delta_h(\Dc)^2}{C^2}\right), \quad \forall  u > 2,
\end{equation}
for all $\ell \in [L]$ and $h \in [H]$, since $C_4 < 1/32$.
Now, define
\begin{equation*} 
\Delta(\Dc) := \max_{1 \le h \le H} \frac{\Delta_h(\Dc)}{\sqrt{n_{h}}}.
\end{equation*}
By \eqref{eqn:app:thm2-wlwlh} and \eqref{eqn:app:thm2-wlh}, on $\edc$ we have
\begin{align*}
    \Pb\left(W_{\ell}(\Dc, \dcal) > u \Delta(\Dc) \,\middle|\, \Dc\right) &\le \sum_{h=1}^H  \Pb\left(W_{\ell, h}(\Dc, \dcal) > u \Delta(\Dc) \,\middle|\, \Dc\right) \\
    &\le \sum_{h=1}^H  \Pb\left(\sqrt{n_h} W_{\ell, h}(\Dc, \dcal) > u \Delta_h(\Dc) \bigm| \Dc\right) \\ 
    &\le 3 \sum_{h=1}^H \exp\left(-\frac{C_4(u-2)^2\Delta_h(\Dc)^2}{C^2}\right), \quad \forall \ell \in [L], \ u > 2.
\end{align*}
Next, define $W(\Dc, \dcal) := \max_{1 \le \ell \le L} W_{\ell}(\Dc, \dcal)$. 
Then, on $\edc$,
\begin{align} \label{eqn:thm2-bound-4}
\Pb\left(W(\Dc, \dcal) > u \Delta(\Dc) \,\middle|\, \Dc\right) 
&= \Pb\left( \max_{1 \le \ell \le L} W_{\ell}(\Dc, \dcal) > u \Delta(\Dc) \,\middle|\, \Dc\right) \nonumber \\
& \le 3L \sum_{h=1}^H \exp\left(-\frac{C_4(u-2)^2\Delta_h(\Dc)^2}{C^2}\right), \quad \forall u > 2.
\end{align}
Furthermore, by \eqref{eqn:app:thm2-wldef}, it is clear that
\[
\frac{1}{H}\sum_{h=1}^H\sum_{i=1}^{n_h} \frac{\widehat w_h(\Tb_{hi})}{\sum_{j=1}^{n_h} \widehat w_h(\Tb_{hj})} \cdot \mathbf 1\{\hatsl(\Tb_{hi}) \le \thatl\} 
- \Pb(\hatsl(\Tb_2) \le \thatl \mid \Dc, \dcal) \le W(\Dc, \dcal).
\]
Consequently, it follows that
\begin{align*}
\Pb(\hatsl(\Tb_2) \le \thatl \mid \Dc, \dcal) 
&\ge \frac{1}{H}\sum_{h=1}^H\sum_{i=1}^{n_h} \frac{\widehat w_h(\Tb_{hi})}{\sum_{j=1}^{n_h} \widehat w_h(\Tb_{hj})} \cdot \mathbf 1\{\hatsl(\Tb_{hi}) \le \thatl\} - W(\Dc, \dcal) \\
&\ge 1 - \alpha - W(\Dc, \dcal),
\end{align*}
where the last inequality follows from the definition of $\thatl$ in \eqref{eqn:app:thm2-threshold}.
Taking expectations with respect to $\dcal$ on both sides, we obtain
\begin{equation} \label{eqn:thm2-bound-5}
\Pb(\hatsl(\Tb_2) \le \thatl \mid \Dc) \ge 1-\alpha - \Eb\left[W(\Dc, \dcal) \mid \Dc\right]. 
\end{equation}
Now, using \eqref{eqn:thm2-bound-4}, on $\edc$ we can bound $\Eb\left[W(\Dc, \dcal) \mid \Dc\right]$ as
\begin{align} \label{eqn:thm2-bound-6}
&\Eb[W(\Dc, \dcal) \mid \Dc] \nonumber \\
&=\int_0^\infty \Pb\left(W(\Dc, \dcal) > u \Delta(\Dc) \Bigm| \Dc\right) du \cdot \Delta(\Dc) \nonumber \\ 
&\le \left(2 + \frac{\sqrt{\log(2HL)}}{4C_3\sqrt{C_4}} + 3L \sum_{h=1}^H \int_{2 + \frac{\sqrt{\log(2HL)}}{4C_3\sqrt{C_4}}}^\infty \exp\left(-\frac{C_4(u-2)^2\Delta_h(\Dc)^2}{C^2}\right) du\right) \cdot \Delta(\Dc) \nonumber \\ 
&\le \left(2 + \frac{\sqrt{\log(2HL)}}{4C_3\sqrt{C_4}} + 3HL \int_{\sqrt{\log(2HL)}}^\infty \exp(-s^2) ds \cdot \frac{1}{4C_3\sqrt{C_4}}\right) \cdot \Delta(\Dc) \nonumber \\ 
&\le \left(2+ C_5\Big(1 + \sqrt{\log(2HL)}\Big)\right) \cdot \Delta(\Dc),
\end{align}
for $C_5 = 1/(4C_3\sqrt{C_4})$, where the second inequality holds since $\Delta_h(\Dc) \ge 4CC_3$ for all $h$ by its definition, and the last inequality follows from $3HL \int_{\sqrt{\log(2HL)}}^\infty \exp(-s^2) ds \le 1$ for $HL \ge 1$. Finally, combining \eqref{eqn:thm2-bound-5} and \eqref{eqn:thm2-bound-6}, the statement of the theorem follows from \eqref{eqn:thm2-bound}, since
\begin{align} \label{eqn:thm2-bound-7}
\Delta(\Dc) \le \max_{1 \le h \le H} \Eb\left[|\widehat w_h(\Xb_{1h}, \Zbt_{1h}) - w_h(\Xb_{1h}, \Zbt_{1h})| \Bigm| \Dc, \tilde n_1, n_2, \{n_{1h}\}_{h=1}^H  \right] + \frac{4\sqrt{2}CC_3}{\sqrt{\min_{1 \le h \le H} n_{1h}}},
\end{align}
where we used that $n_h = n_{1h}/2$ for all $h \in [H]$.
\end{proof}

\section{Refined Coverage Bound for Small $H$}  \label{app:smallh}

In Theorem~2, the miscoverage error of the selected prediction set $\widehat C_{\alpha, \hat \ell}$, namely
\[
1-\alpha - \Pb\left(Y_2 \in \widehat C_{\alpha, \hat \ell}(\Xb_2, \Zbt_2) \,\middle|\, \Dc\right),
\]
is bounded by a factor of the form $2 + C_1 \big(1 + \sqrt{\log(2HL)}\big)$. Because the dependence on $H$ enters only through a logarithmic term, this bound deteriorates slowly as $H$ increases.
However, Theorem~2 implies that even when $H$ is small, the resulting bound can be conservative, since the factor $2 + C_1 \big(1 + \sqrt{\log(2HL)}\big)$ multiplies the estimation error of the WCP weight function. In particular, the bound need not be sharp even when $H$ is small.

In fact, when $H$ is small (e.g., $H=1$ when individuals are sampled from the entire population), a sharper coverage bound can be achieved by a slight modification of the proof in Appendix~\ref{app:thm2}. In this section, we state a corollary of Theorem~2 that provides such a refined guarantee and present its proof, highlighting only the modified steps.

\begin{apxcor}[Lower bound on the coverage probability for small $H$] \label{cor:app:smallh}
    Define $\Dc := \dttra \cup \dtwo$, where \(\dttra\) is given in Section~4.
    Denote $\nrawone := |\irawone|$, $n_2 := |\itwo|$, and $n_{1h} := |\ioneh|$ for each $h \in [H]$.
    Under Assumptions~\ref{ass:super}--\ref{ass:invariant} and Assumption~\ref{ass:kliep-1},
    the selected prediction set $\widehat C_{\alpha, \hat \ell}$ satisfies the following.
    There exists an event $\mathcal E(\Dc)$ that depends solely on $\Dc$ such that
    $\Pb(\mathcal E(\Dc) \bigm| \tilde n_1, n_2, \{n_{1h}\}_{h=1}^H ) \to 1$
    as $n_{11}, \dots, n_{1H}, n_2 \to \infty$, and on $\mathcal E(\Dc)$,
    \[
    \begin{adjustbox}{max width=\linewidth}
    $\displaystyle
    \begin{aligned}
        &\Pb\left(Y_2 \in \widehat{C}_{\alpha, \hat \ell}(\Xb_2, \Zbt_2) \Bigm| \Dc, \tilde n_1, n_2, \{n_{1h}\}_{h=1}^H \right)  \ge 1-\alpha - C_2 \cdot \max_{\zbt, h} \frac{v_{2\zbt} p_{2\gb}}{v_{1\zbt} q_{1h\gb}} \cdot \frac{2 + C_1\big(H + \sqrt{\log(2L)}\big)}{\sqrt{\min_{1 \le h \le H} n_{1h}}} \\ 
        & \qquad \qquad \quad - \big(2 + C_1(H-1)\big) \cdot \max_{1 \le h \le H} \Eb\left[|\widehat w_h(\Xb_{1h}, \Zbt_{1h}) - w_h(\Xb_{1h}, \Zbt_{1h})| \Bigm| \Dc, \tilde n_1, n_2, \{n_{1h}\}_{h=1}^H  \right]
    \end{aligned}
    $
    \end{adjustbox}
    \]
    for universal constants $C_1, C_2 > 0$, which are the same constants as those appearing in Theorem~2.
    Here, the training covariates $(\Xb_{1h}, \Zbt_{1h}) \sim P_{(\Xb_1, \Zb_1) \mid \Gb_1} \times P_{\Gb_{1h} \mid \rfull_{1h}=1}$ and
    the test point $(\Xb_2, Y_2, \Zbt_2) \sim P_{(\Xb_2, Y_2, \Zb_2) \mid \Gb_2} \times P_{\Gb_2}$ are both independent of $\Dc$.
\end{apxcor}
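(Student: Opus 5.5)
The plan is to reuse the proof of Theorem~2 in Appendix~\ref{app:thm2} essentially verbatim, up to and including the per-stratum decomposition \eqref{eqn:thm2-decompose}. What changes is how the deterministic bias and the random fluctuations are combined across strata and candidates. The loss in Theorem~2 has a single source. It folds the weight-estimation error into $\Delta_h(\Dc)$, so the union bound over all $HL$ pairs $(h,\ell)$ multiplies the bias $\Eb[|\widehat w_h-w_h|\mid\Dc]$ by $\sqrt{\log(2HL)}$. For small $H$ I would keep the bias outside the tail bound and union-bound only over the purely stochastic terms.

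First step. Work on the same event $\edc$ from \eqref{eqn:thm2-stilde}, so that \eqref{eqn:thm2-bound} gives $\|\widehat w_h\|_\infty\le C$ and $\Eb[\widehat w_h(\Tb_{h1})\mid\Dc]>1/2$. The event is unchanged, so its probability still tends to one. Term (IV) vanishes and term (III) is bounded deterministically by \eqref{eqn:thm2-term3-1}. Combined with \eqref{eqn:app:thm2-wlwlh}, this gives, uniformly in $\ell$,
\[
W_\ell(\Dc,\dcal)\le \frac1H\sum_{h=1}^H R_{\ell,h} + 2\max_{1\le h\le H}\Eb\big[|\widehat w_h(\Tb_{h1})-w_h(\Tb_{h1})|\bigm|\Dc\big],
\]
where $R_{\ell,h}:=(\mathrm{I})+(\mathrm{II})$ for stratum $h$. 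The bias thus enters with the factor $2$. The remaining $C_1(H-1)$ factor in the corollary should come from the cross-stratum normalisation. In (I) one compares $\tfrac1{n_h}\sum_i\widehat w_h(\Tb_{hi})$ with $\Eb[\widehat w_h\mid\Dc]$, whose distance from $1$ is again controlled by the bias. When these normalisation errors for the $H-1$ strata other than the worst one are carried through the Bernstein-type bound, I expect at most $C_1(H-1)$ additional copies of $\max_h\Eb|\widehat w_h-w_h|$. I would redo that bookkeeping carefully so that the constant matches $2+C_1(H-1)$.

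Second step: the stochastic part. Set $\Delta_h':=4CC_3$, which contains no weight-error term. Rerun \eqref{eqn:thm2-bound-1}--\eqref{eqn:app:thm2-wlh} with $\Delta_h'$ in place of $\Delta_h$, using Hoeffding for (I) and Bousquet--Talagrand together with the Dudley/VC bound \eqref{eqn:thm2-term2-4} for (II). This yields $\Pb(\sqrt{n_h}R_{\ell,h}>u\cdot 4CC_3\mid\Dc)\le 3\exp(-C_4'(u-2)^2)$ for all $u>2$. Then use $\max_\ell\frac1H\sum_h R_{\ell,h}\le\frac1H\sum_h\max_\ell R_{\ell,h}$, so the union bound runs over $\ell$ only. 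Integrating the tail as in \eqref{eqn:thm2-bound-6} gives
\[
\Eb\Big[\max_\ell R_{\ell,h}\Bigm|\Dc\Big]\lesssim \frac{C\,(2+C_1\sqrt{\log(2L)})}{\sqrt{n_h}}.
\]
Summing over $h$, together with any crude per-stratum additive slack (one $C_1$ per stratum), produces the $2+C_1(H+\sqrt{\log(2L)})$ factor. Finally, replace $C$ using \eqref{eqn:thm2-bound}, which gives $C_2\max_{\zbt,h}v_{2\zbt}p_{2\gb}/(v_{1\zbt}q_{1h\gb})$, and use $n_h=n_{1h}/2$.

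Conclusion and main obstacle. As in \eqref{eqn:thm2-bound-5}, the definition of $\thatl$ gives $\Pb(\hatsl(\Tb_2)\le\thatl\mid\Dc)\ge1-\alpha-\Eb[\max_\ell W_\ell\mid\Dc]$; inserting the two bounds above yields the corollary. The delicate step is the first one. One must split term (I) into its centred fluctuation, which goes into $R_{\ell,h}$, and the deterministic offset $|\Eb[\widehat w_h\mid\Dc]-1|$, which goes into the bias, without reintroducing the bias inside the exponential tail. One must also check that the constants $C_1,C_2$ can be taken equal to those of Theorem~2.
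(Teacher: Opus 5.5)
Your route is correct, and it differs from the paper's. It also proves more than the statement asks.

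\textbf{How the paper proceeds.} The paper keeps the weight error inside $\Delta_h(\Dc)$ and reuses the joint union bound \eqref{eqn:thm2-bound-4} over all $HL$ pairs verbatim. It changes only the final integration. It singles out $h^\ast=\argmax_h\Delta_h(\Dc)/\sqrt{n_h}$ and cuts the integral at $2+C\sqrt{\log(2L)}/(\sqrt{C_4}\,\Delta_{h^\ast}(\Dc))$. It then bounds the other $H-1$ strata crudely through $\Delta_h(\Dc)\ge 4CC_3$. That is the only source of the factor $C_1(H-1)$ on the bias and of the $H$ in the stochastic factor.

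\textbf{What your route gives.} You remove the bias deterministically via \eqref{eqn:thm2-term3-1} before any tail bound. You then union-bound over $\ell$ only, inside each stratum, using $\max_\ell\frac1H\sum_h R_{\ell,h}\le\frac1H\sum_h\max_\ell R_{\ell,h}$. With $\Delta_h'=4CC_3$ and the paper's Hoeffding/Bousquet steps unchanged, this yields
\[
\Eb\Big[\max_{\ell}W_\ell(\Dc,\dcal)\Bigm|\Dc\Big]\le 2\max_{1\le h\le H}\Eb\big[|\widehat w_h(\Tb_{h1})-w_h(\Tb_{h1})|\bigm|\Dc\big]+\frac{4\sqrt{2}\,CC_3\big(2+C_5(1+\sqrt{\log(2L)})\big)}{\sqrt{\min_{h}n_{1h}}},
\]
with $C_1=C_5$ and $C_2$ exactly as in Theorem~2. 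This bound has no $H$-dependence at all. It implies the corollary, because $2\le 2+C_1(H-1)$ and $1\le H$. It implies Theorem~2 as well.

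\textbf{What each route buys.} Your route also avoids a step in the paper that does not hold as written. For the $h\neq h^\ast$ integrals, the paper raises the lower limit from $2+C\sqrt{\log(2L)}/(\sqrt{C_4}\Delta_{h^\ast}(\Dc))$ to $2+\sqrt{\log(2L)}/(4C_3\sqrt{C_4})$. The second value is larger because $\Delta_{h^\ast}(\Dc)\ge 4CC_3$. Raising the lower limit of a positive integrand gives a lower bound, not an upper bound, whenever the bias in stratum $h^\ast$ is nonzero. The paper's route buys only that it reuses \eqref{eqn:thm2-bound-4} without rerunning the tail estimates.

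\textbf{Remarks to drop.} Two of your side remarks rest on mechanisms that do not exist.
\begin{itemize}
\item Term (I) has no deterministic offset. By \eqref{eqn:thm2-term1-1} it equals $|n_h^{-1}\sum_i\widehat w_h(\Tb_{hi})-\Eb[\widehat w_h(\Tb_{h1})\mid\Dc]|/\Eb[\widehat w_h(\Tb_{h1})\mid\Dc]$, which is purely centred. The offset $|\Eb[\widehat w_h(\Tb_{h1})\mid\Dc]-1|$ sits in (III) and is already covered. So the ``delicate step'' you flag is automatic.
\item You average over $h$ rather than sum. Nothing in your decomposition generates $C_1(H-1)$ extra copies of the bias or a per-stratum $C_1$.
\end{itemize}
Do not try to reverse-engineer the weaker constants $2+C_1(H-1)$ and $2+C_1(H+\sqrt{\log(2L)})$. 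Just note that your bound dominates them.
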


\begin{proof}[Proof of Corollary~\ref{cor:app:smallh}]
In this proof, we use the same notation as in Appendix~\ref{app:thm2}.
Define
\[
\hast := \argmax_{1 \le h \le H} \frac{\Delta_h(\Dc)}{\sqrt{n_h}},
\]
so that $\Delta(\Dc) = \Delta_{\hast}(\Dc)/\sqrt {n_{\hast}}$.
We then modify the argument leading to \eqref{eqn:thm2-bound-6} as follows:
\begin{align*}
    &\Eb[W(\Dc, \dcal) \mid \Dc] \\
    &=\int_0^\infty \Pb\left(W(\Dc, \dcal) > u \Delta(\Dc) \Bigm| \Dc\right) du \cdot \Delta(\Dc)  \\
    &\le \Bigg(2 + \frac{\sqrt{\log(2L)}}{\sqrt{C_4/C^2} \Delta_{\hast}(\Dc)} + 3L \sum_{h=1}^H \int_{2 + \frac{\sqrt{\log(2L)}}{\sqrt{C_4/C^2} \Delta_{\hast}(\Dc)}}^\infty \exp\left(-\frac{C_4(u-2)^2\Delta_h(\Dc)^2}{C^2}\right) du\Bigg) \cdot \Delta(\Dc) \\
    &\le 2\Delta(\Dc) + \frac{\sqrt{\log(2L)}}{\sqrt{C_4/C^2}\sqrt{n_{\hast}}} + 3L\int_{2 + \frac{\sqrt{\log(2L)}}{\sqrt{C_4/C^2} \Delta_{\hast}(\Dc)}}^\infty \exp\left(-\frac{C_4(u-2)^2\Delta_{\hast}(\Dc)^2}{C^2}\right) du \cdot \Delta(\Dc)  \\
    & \qquad + 3L(H-1)\int_{2 + \frac{\sqrt{\log(2L)}}{4C_3\sqrt{C_4}}}^\infty \exp\left(-C_4(u-2)^2(4C_3)^2\right) du \cdot \Delta(\Dc) \\
    &\le 2\Delta(\Dc) + \frac{\sqrt{\log(2L)}}{\sqrt{C_4/C^2}\sqrt{n_{\hast}}} + 3L\int_{\sqrt{\log(2L)}}^\infty \exp(-s^2) ds \cdot \frac{\Delta(\Dc) }{\sqrt{C_4/C^2} \Delta_{\hast}(\Dc)}  \\
    & \qquad + (H-1) \cdot 3L\int_{\sqrt{\log(2L)}}^\infty \exp(-s^2) ds \cdot \frac{\Delta(\Dc) }{4C_3\sqrt{C_4}}  \\
    &\le 2\Delta(\Dc) + C \cdot\frac{1 + \sqrt{\log(2L)}}{\sqrt{C_4}\sqrt{n_{\hast}}} + (H-1) \cdot \frac{\Delta(\Dc)}{4C_3\sqrt{C_4}} \\
    &\le \big(2 + C_5(H-1)\big) \Delta(\Dc) + \frac{4\sqrt{2}CC_3C_5\Big(1 + \sqrt{\log(2L)}\Big)}{\sqrt{\min_{1 \le h \le H} n_{1h}}} \\
    &\le \big(2 + C_5(H-1)\big) \cdot \max_{1 \le h \le H} \Eb\left[|\widehat w_h(\Xb_{1h}, \Zbt_{1h}) - w_h(\Xb_{1h}, \Zbt_{1h})| \Bigm| \Dc, \tilde n_1, n_2, \{n_{1h}\}_{h=1}^H  \right] \\
    & \qquad + \left(2 + C_5\Big(H + \sqrt{\log(2L)}\Big)\right) \cdot \frac{4\sqrt{2} CC_3}{\sqrt{\min_{1 \le h \le H} n_{1h}}},
\end{align*}
where $C_5 = 1/(4C_3\sqrt{C_4})$.
Here, the second inequality uses the fact that $\Delta_h(\Dc) \ge 4CC_3$ for all $h \neq h^\ast$.
The fourth inequality follows from the bound $3L \int_{\sqrt{\log(2L)}}^\infty \exp(-s^2) ds \le 1$ for $L \ge 1$,
and the final inequality follows from \eqref{eqn:thm2-bound-7}.
\end{proof}

The advantage of Corollary~\ref{cor:app:smallh} is most apparent in the small-$H$ regime. For example, when $H=1$, Corollary~\ref{cor:app:smallh} separates the dependence on $H$ and $L$, and the estimation error is multiplied by a constant factor equal to $2$, yielding a substantially sharper bound than Theorem~2. On the other hand, the corollary bound scales linearly with $H$, and can therefore be looser than the bound in Theorem~2 when $H$ is large. Taken together, these results suggest using Theorem~2 in the large-$H$ regime and Corollary~\ref{cor:app:smallh} in the small-$H$ regime.

\section{Theoretical Results for Group-Conditional Methods} \label{app:theoretical-group}

In this section, we present theoretical results for group-conditional prediction sets constructed as in Appendix~\ref{app:group}, paralleling the results for marginal sets given in Theorem~2 and Corollary~1.
For each subgroup $\gb \in \kg$, let $\widehat C_{\alpha, \gb, 1}, \dots, \widehat C_{\alpha, \gb, L}$ denote $L$ candidate group-conditional prediction sets of the form~\eqref{eqn:group-conditional}, and let $\widehat \ell_\gb \in [L]$ denote the selected index.

The following corollary is obtained by applying Corollary~\ref{cor:app:smallh} to the group-conditional setting.
As will be discussed in Appendix~\ref{app:group}, conditioning on $\Gb_{ti}$ eliminates distributional differences across strata, so this setting corresponds to the special case $H=1$ in Corollary~\ref{cor:app:smallh}.
Therefore, substituting $H=1$ into Corollary~\ref{cor:app:smallh} and applying the corresponding group-conditional modification yields the coverage bound for the selected prediction set $\widehat C_{\alpha, \gb, \hat \ell_\gb}$.

\begin{apxcor}[Lower bound on the coverage probability of the group-conditional set] \label{cor:group-1}
    For each subgroup $\gb \in \kg$, define $\Dc_\gb := \dttrag \cup \dtwog$, where $\dttrag$ and $\dtwog$ are given in Appendix~\ref{app:group}.
    Conditional on $n_{1\gb} = |\ioneg|$, $\tilde n_{1\gb} = |\tilde{\mathcal I}_{1\gb}|$, and $n_{2\gb} = |\itwog|$, and under Assumptions~\ref{ass:super}--\ref{ass:invariant} and Assumption~\ref{ass:kliep-1}, the selected subgroup-specific prediction set $\widehat C_{\alpha, \gb, \hat \ell_\gb}$ satisfies the following. There exists an event $\mathcal E(\Dc_\gb)$ that depends solely on $\Dc_\gb$ such that
    $\Pb(\mathcal E(\Dc_\gb) \mid n_{1\gb}, \tilde n_{1\gb}, n_{2\gb} ) \to 1$
    as $n_{1\gb}, n_{2\gb} \to \infty$, and on $\mathcal E(\Dc_\gb)$,
    \[
    \begin{adjustbox}{max width=\linewidth}
    $\displaystyle
    \begin{aligned}
        &\Pb\left(Y_{2\gb} \in \widehat{C}_{\alpha, \gb, \hat \ell_\gb}(\Xb_{2\gb}, \Zb_{2\gb}) \,\middle|\, \Dc_\gb, n_{1\gb}, \tilde n_{1\gb}, n_{2\gb} \right)  \ge 1-\alpha - C_2 \cdot \max_{\zbt} \frac{v_{2\zbt}}{v_{1\zbt}} \cdot \frac{2 + C_1\big(1+\sqrt{\log(2L)}\big)}{\sqrt{n_{1\gb}}} \\ 
        &\qquad \qquad \qquad \qquad \qquad \qquad \quad  - 2 \Eb\left[\left|\widehat w_{\gb}(\Xb_{1\gb}, \Zb_{1\gb}) - w_{\gb}(\Xb_{1\gb}, \Zb_{1\gb})\right| \bigm| \Dc_\gb, n_{1\gb}, \tilde n_{1\gb}, n_{2\gb}  \right] 
    \end{aligned}
    $
    \end{adjustbox}
    \]
    for universal constants $C_1, C_2 > 0$, which are the same constants as those appearing in Theorem~2.
    Here, $w_{\gb}$ and $\widehat w_{\gb}$ denote the subgroup-specific WCP weight function and its estimator, respectively, as given in Appendix~\ref{app:group}.
    Moreover, the training covariates $(\Xb_{1\gb}, \Zb_{1\gb}) \sim P_{(\Xb_1, \Zb_1) \mid \Gb_1 = \gb}$ and the test point $(\Xb_{2\gb}, Y_{2\gb}, \Zb_{2\gb}) \sim P_{(\Xb_2, Y_2, \Zb_2) \mid \Gb_2 = \gb}$ are both independent of $\Dc_\gb$.
\end{apxcor}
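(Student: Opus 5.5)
The plan is to reduce the group-conditional setting to the single-stratum ($H=1$) case of Corollary~\ref{cor:app:smallh}, and then check that each ingredient of the proof of Theorem~2 carries over after conditioning on $\Gb_{ti}=\gb$. Fix $\gb$ and condition on $n_{1\gb}$, $\tilde n_{1\gb}$ and $n_{2\gb}$.

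By Assumptions~\ref{ass:super}--\ref{ass:multinomial}, the subgroup-$\gb$ calibration points are i.i.d.\ from $P_{(\Xb_1,Y_1,\Zb_1)\mid\Gb_1=\gb}$, irrespective of stratum. This holds because the within-subgroup law does not depend on $h$; stratum differences enter only through $q_{1h\gb}$, which is now conditioned away. For the same reason the group-$\gb$ test point is drawn from $P_{(\Xb_2,Y_2,\Zb_2)\mid\Gb_2=\gb}$. Assumption~\ref{ass:invariant} then gives the covariate-shift structure with true weight
\[
w_\gb(\xb,\zb)=r_\zbt(\xb)\,\frac{v_{2\zbt}}{v_{1\zbt}} ,
\]
and the estimator $\widehat w_\gb=\widehat r_\zbt\,\widehat v_{2\zbt}/\widehat v_{1\zbt}$ is built only from $\Dc_\gb$. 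So the group-conditional threshold is exactly the $H=1$ threshold \eqref{eqn:app:thm2-threshold}, with $P_{1h}$ replaced by $P_{(\Xb_1,Y_1,\Zb_1)\mid\Gb_1=\gb}$ and $P_2$ replaced by its conditional version. The factor $p_{2\gb}/q_{1h\gb}$ drops out of the sup-norm bound \eqref{eqn:thm2-bound}, leaving
\[
C=3\bar M\max_{\zbt}\frac{v_{2\zbt}}{v_{1\zbt}} .
\]

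Next, define $\mathcal E(\Dc_\gb)$ as the analogue of \eqref{eqn:thm2-stilde}. It is the intersection of three events: the KLIEP event $\bigcap_{\zb}S_\zbt(\nz)$ from Lemma~\ref{lem:kliep-1}; the event $\Eb[\widehat w_\gb\mid\Dc_\gb]>1/2$; and the event $\widehat v_{1\zbt}>v_{1\zbt}/2$, $\widehat v_{2\zbt}<3v_{2\zbt}/2$. Each has probability tending to one as $n_{1\gb},n_{2\gb}\to\infty$. For the first, $\tilde n_{\mathrm{tra},\zbt}$ and $n_{2\zbt}$ grow linearly in $n_{1\gb}$ and $n_{2\gb}$ with high probability (binomial concentration). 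The second follows from the $L^1$ consistency of $\widehat w_\gb$, which comes from Theorem~\ref{thm:kliep-1} and the Hellinger-to-$L^1$ step \eqref{eqn:bound-3-2}, together with $\Eb[w_\gb]=1$. The third follows from $\sqrt n$-consistency of empirical proportions.

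On this event I would rerun the four-term decomposition \eqref{eqn:thm2-decompose} with $H=1$. Term (IV) vanishes by the change of measure. Term (III) is at most $2\Eb|\widehat w_\gb-w_\gb|$. Term (I) is controlled by Hoeffding's inequality, and term (II) by symmetrization, Dudley's entropy integral for the VC-one class, and Bousquet's inequality. This yields the tail bound \eqref{eqn:app:thm2-wlh}, and a union bound over $\ell$ gives \eqref{eqn:thm2-bound-4} with $H=1$. Integrating the tail exactly as in the proof of Corollary~\ref{cor:app:smallh} with $H=1$ makes the $(H-1)$ term disappear. The weight-error coefficient becomes $2$, and the remaining term is $(2+C_1(1+\sqrt{\log 2L}))\,C_2\max_\zbt (v_{2\zbt}/v_{1\zbt})/\sqrt{n_{1\gb}}$, where $n_h=n_{1\gb}/2$ is absorbed into $C_2$. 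Finally, combining with $\Pb(\hatsl(\Tb_2)\le\thatl\mid\Dc_\gb)\ge 1-\alpha-\Eb[W\mid\Dc_\gb]$ gives the stated bound.

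The main obstacle is the first step: justifying that, conditionally on the group sizes, the pooled group-$\gb$ calibration data across strata are genuinely i.i.d.\ from a single law, independent of the split and of $\Dc_\gb$. Without this the $H=1$ reduction fails. I would establish it directly from Assumption~\ref{ass:super}, where the within-group law is common to all units with $\Gb=\gb$, and from the independence of sampling and nonresponse from $(\Xb,Y,\Zb)$ in Assumptions~\ref{ass:nonresponse}--\ref{ass:design}. Together these make the selected units' values i.i.d.\ from the superpopulation law. Beyond that step, the argument only requires checking that the constants in the proof of Theorem~2 do not depend on $p_{2\gb}$ or $q_{1h\gb}$.
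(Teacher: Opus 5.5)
Your proposal is correct and takes essentially the same route as the paper. The paper obtains this corollary by observing that conditioning on $\Gb_{ti}=\gb$ removes all stratum heterogeneity, and then substituting $H=1$ into Corollary~\ref{cor:app:smallh} with the factor $p_{2\gb}/q_{1h\gb}$ dropped from the weight; you additionally spell out the checks the paper leaves implicit (the i.i.d.\ pooling of group-$\gb$ calibration points across strata, the modified high-probability event, and the resulting constant $C = 3\bar M \max_{\zbt} v_{2\zbt}/v_{1\zbt}$), all of which are consistent with the paper's argument.
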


The following corollary then follows directly from Corollary~\ref{cor:group-1} and a group-conditional analogue of Proposition~\ref{prop:weight}; in particular, the argument does not rely on any assumptions about the relative orders of the sample sizes $n_{1\gb}$ and $n_{2\gb}$.

\begin{apxcor}[Coverage probability of the group-conditional prediction set] \label{cor:group-2}
    For each $\gb \in \k$, define $\Dc_\gb = \dttrag \cup \dtwog$, where $\dttrag$ and $\dtwog$ are given in Appendix~\ref{app:group}.
    Conditional on $n_{1\gb}$, $\tilde n_{1\gb}$, and $n_{2\gb}$, and under Assumptions~\ref{ass:super}--\ref{ass:invariant} and Assumption~\ref{ass:kliep-1}, as $n_{1\gb}, n_{2\gb} \rightarrow \infty$, the selected subgroup-specific prediction set $\widehat C_{\alpha, \gb, \hat \ell_\gb}$ satisfies
    \[
    \begin{adjustbox}{max width=\linewidth}
    $\displaystyle
    \begin{aligned}
        &\Pb\left(Y_{2\gb} \in \widehat{C}_{\alpha, \gb, \hat \ell_\gb}(\Xb_{2\gb}, \Zb_{2\gb}) \mid \Dc_\gb, n_{1\gb}, \tilde n_{1\gb}, n_{2\gb} \right) \ge 1 - \alpha  - \begin{cases}
        O_p\big(n_{1\gb}^{-\frac{1}{2}}\big) &   d = 0, \ l = s,\\
        O_p\big(\min(n_{1\gb}, n_{2\gb})^{-\frac{1}{2}}\big) &   d = 0, \ l > s,\\
        O_p\big(\min(n_{1\gb}, n_{2\gb})^{-\frac{1}{2+\gamma}}\big)  &  d = 1, 2, \\
        O_p\!\left(\Bigl(\tfrac{\log\min(n_{1\gb}, n_{2\gb})}{\min(n_{1\gb}, n_{2\gb})}\Bigr)^{\frac{1}{d}}\right) &  d \ge 3,
        \end{cases}
    \end{aligned}
    $
    \end{adjustbox}
    \]
    for any arbitrarily small $\gamma > 0$.
    The $O_p(\cdot)$ terms are taken with respect to the randomness in $\Dc_\gb$, conditional on $n_{1\gb}$, $\tilde n_{1\gb}$, and $n_{2\gb}$.
    Moreover, $(\Xb_{2\gb}, Y_{2\gb}, \Zb_{2\gb}) \sim P_{(\Xb_2, Y_2, \Zb_2) \mid \Gb_2 = \gb}$ is a subgroup-specific test point independent of $\Dc_\gb$.
\end{apxcor}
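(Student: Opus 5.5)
The plan is to combine Corollary~\ref{cor:group-1} with a group-conditional analogue of Proposition~\ref{prop:weight}. Corollary~\ref{cor:group-1} already gives, on an event $\mathcal E(\Dc_\gb)$ whose conditional probability tends to one, a miscoverage slack of two parts. The first is a deterministic part of order $\max_{\zbt} (v_{2\zbt}/v_{1\zbt})\,(2+C_1(1+\sqrt{\log(2L)}))/\sqrt{n_{1\gb}} = O(n_{1\gb}^{-1/2})$, since $L$ and the proportions are fixed. The second is $2\,\Eb[|\widehat w_\gb - w_\gb|(\Xb_{1\gb},\Zb_{1\gb}) \mid \Dc_\gb]$. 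Because the event has probability tending to one, a bound holding on it transfers to an $O_p$ statement. It therefore suffices to show that the $L^1$ weight error has the stated rates in each case.

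Within subgroup $\gb$ the weight is $w_\gb(\xb,\zbt) = r_\zbt(\xb)\, v_{2\zbt}/v_{1\zbt}$, with $\zbt=(\gb^\top,\zb^\top)^\top$. The stratum and $p_{2\gb}/q_{1h\gb}$ factors disappear because we condition on $\Gb=\gb$. The estimator is $\widehat r_\zbt\,\widehat v_{2\zbt}/\widehat v_{1\zbt}$. I would repeat the decomposition in the proof of Proposition~\ref{prop:weight}, conditioning on $\Zb_{1\gb}=\zb$ and using $\Eb[r_\zbt(\Xb_1)\mid \Zbt_1=\zbt]=1$:
\[
\Eb\bigl[|\widehat w_\gb - w_\gb| \bigm| \Dc_\gb\bigr] \le \sum_{\zb} \Bigl\{ \Eb\bigl[|\widehat r_\zbt - r_\zbt|(\Xb_1)\bigm| \Dc_\gb, \Zbt_1=\zbt\bigr]\frac{\widehat v_{2\zbt}}{\widehat v_{1\zbt}} + \Bigl|\frac{\widehat v_{2\zbt}}{\widehat v_{1\zbt}} - \frac{v_{2\zbt}}{v_{1\zbt}}\Bigr| \Bigr\}.
\]

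The cases then follow from these two terms. If $d=0$, the first term vanishes. If moreover $l=s$, the second term is identically $0$ and only the $O(n_{1\gb}^{-1/2})$ selection term remains. If $l>s$, then $\widehat v_{1\zbt}$ and $\widehat v_{2\zbt}$ are multinomial frequencies within $\gb$, with errors $O_p(n_{1\gb}^{-1/2})$ and $O_p(n_{2\gb}^{-1/2})$. The delta method, with $v_{1\zbt}>0$, gives $O_p(\min(n_{1\gb},n_{2\gb})^{-1/2})$. If $d\ge1$, I would use the bound \eqref{eqn:bound-3-2}: on $S_{\mathbf n}$, the $L^1$ error is at most $(\bar M^{1/2}+\eta_1^{1/2})\,h_{P_{1\zbt}}(r_\zbt,\widehat r_\zbt)$. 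Theorem~\ref{thm:kliep-1} then supplies the stated rates in $\min(\tilde n_{1\zbt}, n_{2\zbt})$. These rates dominate the $n^{-1/2}$ terms.

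The main obstacle is converting the within-cell counts $\tilde n_{1\zbt}, n_{2\zbt}$ into $n_{1\gb}, n_{2\gb}$ without the balance conditions used in Proposition~\ref{prop:weight}. Here the subgroup is fixed, so there is no stratum heterogeneity. Conditional on $n_{1\gb}$, $\tilde n_{1\gb}$ and $n_{2\gb}$, the cell counts are binomial with fixed success probabilities $v_{1\zbt}$ and $v_{2\zbt}$. A Chernoff bound gives $\tilde n_{1\zbt}\ge \tilde n_{1\gb} v_{1\zbt}/2 \ge n_{1\gb}v_{1\zbt}/2$, and similarly for $n_{2\zbt}$, with probability tending to one. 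Hence $\min(\tilde n_{1\zbt},n_{2\zbt}) \gtrsim \min(n_{1\gb},n_{2\gb})$, and $S_{\mathbf n}$ holds with probability tending to one. The plug-in ratios $\widehat v_{2\zbt}/\widehat v_{1\zbt}$ are $O_p(1)$. Summing over the finitely many $\zb$ completes the argument.
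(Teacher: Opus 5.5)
Your proposal is correct and follows the paper's route: the paper derives this corollary from Corollary~\ref{cor:group-1} combined with a group-conditional analogue of Proposition~\ref{prop:weight}, and your case analysis matches that derivation. Your observation that no balance condition between $n_{1\gb}$ and $n_{2\gb}$ is needed (fixing $\gb$ removes stratum heterogeneity, and the cell counts are binomial given the subgroup counts) also matches the paper's remark. One harmless slip: the KLIEP cell count is binomial over $\ittrag$, of size $\tilde n_{1\gb} - n_{1\gb}/2 \ge n_{1\gb}/2$, so the Chernoff step gives a lower bound of order $n_{1\gb} v_{1\zbt}/4$ rather than $\tilde n_{1\gb} v_{1\zbt}/2$, which leaves the rates unchanged.
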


\section{Useful Lemmas} \label{app:proof-2}

Here we provide several lemmas, along with their proofs, that are used in the proofs in Appendix~\ref{app:kliep} and Appendix~\ref{app:thm2}.

\begin{apxlemma}[Ahlfors--David regularity of convex compact sets with nonempty interior] \label{lem:thickness}
Let $S \subset \Rb^d$ be a nonempty compact convex set with nonempty interior.
Then there exist constants $r_0>0$ and $c_0>0$ such that for all $x\in S$ and all $0 < r \le r_0$,
\[
\lambda( B(x,r) \cap S) \;\ge\; c_0\, r^d,
\]
where $\lambda$ denotes the Lebesgue measure on $\mathbb{R}^d$.
\end{apxlemma}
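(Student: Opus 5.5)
The plan is to use an interior ball together with convexity: around any point $x\in S$, the convex hull of $x$ and a fixed interior ball is a ``cone'' contained in $S$, and a definite fraction of it lies within distance $r$ of $x$. Since $S$ has nonempty interior, fix $y_0\in S$ and $\rho>0$ with $B(y_0,\rho)\subseteq S$. Let $D:=\mathrm{diam}(S)<\infty$ (compactness), and note $D\ge 2\rho>0$. Set $r_0:=D$; other choices only change $c_0$.

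Fix $x\in S$ and $0<r\le r_0$, and put $\theta:=r/(2D)\in(0,1/2]$. Consider the homothety $\phi(z)=x+\theta(z-x)$. By convexity of $S$, with $x\in S$ and $B(y_0,\rho)\subseteq S$, the image $\phi(B(y_0,\rho))=B(x+\theta(y_0-x),\theta\rho)$ lies in $S$, since each of its points is the convex combination $(1-\theta)x+\theta z$ with $z\in S$. Next, every point $w$ of this small ball satisfies
\[
\|w-x\|_2\le \theta\|y_0-x\|_2+\theta\rho\le \theta(D+\rho)\le 2\theta D=r,
\]
so the ball is contained in $B(x,r)$. Hence
\[
\lambda(B(x,r)\cap S)\ \ge\ \lambda\bigl(B(x+\theta(y_0-x),\theta\rho)\bigr)=\omega_d\,\theta^d\rho^d=\omega_d\Bigl(\frac{\rho}{2D}\Bigr)^{d} r^d,
\]
where $\omega_d$ is the volume of the unit ball. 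We may therefore take $c_0:=\omega_d(\rho/(2D))^d$, which does not depend on $x$ or $r$.

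The only delicate step is choosing the scaling $\theta$ so that the shrunken interior ball stays inside $B(x,r)$ uniformly over $x$. This is exactly what boundedness ($D<\infty$) provides, together with convexity, which guarantees that the homothetic image stays in $S$. Everything else is routine. If one wants the statement closed at the boundary, there is no issue: $x\in S$ may lie on $\partial S$, and the argument never uses interiority of $x$.
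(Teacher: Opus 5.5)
Your proof is correct. It rests on the same basic fact as the paper's proof: by convexity, $\mathrm{conv}(\{x\}\cup B(y_0,\rho))\subseteq S$ for every $x\in S$. The execution differs, however.

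The paper splits into the cases $\|y_0-x\|_2\le\rho$ and $\|y_0-x\|_2>\rho$. In the second case it inscribes a truncated cone $C_x$ with apex $x$. It then uses homogeneity of Lebesgue measure, plus rotation invariance to make $c_*$ independent of $x$, to obtain $\lambda(B(x,r)\cap C_x)=c_*r^d/r_*^d$ for $r\le r_*=\rho(1-\rho/D)$.

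You instead apply the homothety of ratio $\theta=r/(2D)$ centered at $x$ to the interior ball. This directly exhibits a ball of radius $r\rho/(2D)$ inside $B(x,r)\cap S$. Your route has three advantages:
\begin{itemize}
\item it treats all $x\in S$ uniformly, with no case analysis;
\item it needs no scaling argument for the cone, and in particular no check that $B(x,r)$ stays away from the base of the truncated cone;
\item it gives explicit constants $r_0=D$ and $c_0=\omega_d(\rho/(2D))^d$.
\end{itemize}
The paper's cone captures more volume and so yields a larger $c_0$. That makes no difference here, because the lemma is only used to lower-bound $\lambda(V_j)$ by a constant multiple of $\epsilon^d$ in the KLIEP convergence-rate proof, where the value of the constant plays no role.
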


\begin{proof}[Proof of Lemma~\ref{lem:thickness}]
Since $S$ has nonempty interior, there exist $z \in S$ and $\rho>0$ such that 
$B(z,\rho) \subset S$.  
Because $S$ is compact, its diameter $D := \sup\{\|u - v\|_2 : u,v \in S\}$ is finite.
Fix any $x \in S$ and consider the convex hull
\[
K_x := \operatorname{conv}\left( \{x\} \cup B(z,\rho)\right),
\]
which is contained in $S$ by convexity.  
Let $v := z - x$. If $\|v\|_2 \le \rho$, then $x \in B(z,\rho)$ and thus $K_x = B(z,\rho)$. 
Hence, for any $0 < r \le \rho/2$, one can always find a closed ball of radius $r/2$ contained in $B(x,r) \cap B(z,\rho)$, and therefore
\begin{equation} \label{eqn:thickness-1}
\lambda(B(x, r) \cap K_x) = \lambda(B(x, r) \cap B(z, \rho))\ge \frac{v_d}{2^d} r^d,
\end{equation}
where $v_d = \lambda(B(0,1))$.  
Now suppose $\rho < \|v \| \le D$. Then $x \notin B(z,\rho)$, and we instead write
\begin{align*}
    K_x &= \{x + tu : u \in B(v, \rho), \ 0 \le t \le 1\} \\
    &= \left\{x + tu : u \in B\left(\frac{v}{\|v\|_2}, \frac{\rho}{\|v\|_2}\right), \ 0 \le t \le \|v\|_2\right\} \\
    &\supseteq \left\{x + tu : u \in B\left(\frac{v}{\|v\|_2}, \frac{\rho}{D}\right), \ 0 \le t \le \rho\right\} \\
    &= \left\{x + tu : u \in B\left(\rho\frac{v}{\|v\|_2}, \frac{\rho^2}{D}\right), \ 0 \le t \le 1\right\} =: C_x.
\end{align*}
For $r_* := \rho(1-\rho/D) > 0$, the volume $c_* := \lambda(B(x, r_*) \cap C_x) > 0$ depends only on $\rho$ and $D$, not on $x$. 
Then, for any $0 < r \le r_*$, homogeneity of the Lebesgue measure yields
\begin{equation} \label{eqn:thickness-2}
\lambda(B(x, r) \cap K_x) \ge \lambda(B(x, r) \cap C_x) = c_*\frac{r^d}{r_*^d}.
\end{equation}
Finally, define $c_0 := \min(v_d/2^d,\, c_*/r_*^d)$ and $r_0 := \min(\rho/2,\, r_*)$. 
By combining \eqref{eqn:thickness-1} and \eqref{eqn:thickness-2}, it follows that for every $0 < r \le r_0$,
\[
\lambda(B(x, r) \cap S) \ge \lambda(B(x, r) \cap K_x) \ge c_0 r^d.
\]
\end{proof}

The following lemma is a multivariate extension of Lemma~3.1 in \citet{ghosal01}, and we use the same notation as in the proof of Theorem~\ref{thm:kliep-1}. 
The proof follows the original argument, with only the steps requiring multivariate extension detailed here.

\begin{apxlemma}[cf. \citet{ghosal01}, Lemma 3.1] \label{lem:ghosal}
    Let $0 < \epsilon < \frac{1}{2}$, $\sigma > 0$ be given. For any probability measure $F$ on $[-a, a]^d$, where $a \le B(\log(1/\epsilon))^\gamma$ and $\gamma \ge 1/2$ and $B > 0$ are constants, there exists a discrete probability measure $F'$ on $[-a, a]^d$ with at most $N \lesssim (\log(1/\epsilon))^{2d\gamma}$ support points in $[-a, a]^d$ such that
    \[
    \lVert g_{F, \sigma} - g_{F', \sigma} \rVert_\infty \lesssim \epsilon, \qquad \mathrm{supp}(F') \subseteq \mathrm{supp}(F),
    \]
    where $g_{F, \sigma}$ is defined in \eqref{eqn:kliep-2-0}.
    \begin{proof}[Proof of Lemma~\ref{lem:ghosal}]
        Define $M = \max(2\sqrt d a, \sqrt 8 \sigma (\log\frac{1}{\epsilon})^{1/2})$. Then, for any probability measures $F, F'$ on $[-a, a]^d$, we have
        \begin{align*}
            \sup_{\lVert \xb \rVert \ge M} |g_{F, \sigma}(\xb) - g_{F', \sigma}(\xb)| \lesssim \epsilon.
        \end{align*}
        Now, by Taylor's expansion of $e^y$ and putting $y = -\lVert \xb \rVert^2/2\sigma^2$, we have
        \begin{equation} \label{eqn:lem:3-1-0}
        \left| f_\sigma(\xb) - \sum_{j = 0}^{k-1} \frac{(-1)^j (2\pi)^{-d/2}\sigma^{-(2j+d)}}{2^j \cdot j!} \lVert \xb \rVert^{2j} \right| \le (2\pi \sigma^2)^{-d/2} \frac{(e^{1/2}2^{-1/2}\sigma^{-1}\lVert \xb \rVert)^{2k}}{k^k}, \quad \forall k > 1,
        \end{equation}
        and hence for any probability measures $F, F'$ on $[-a, a]^d$ and $k > 1$,
        \begin{align} \label{eqn:lem:3-1-1}
            \sup_{\lVert \xb \rVert \le M}  |g_{F, \sigma}(\xb) - g_{F', \sigma}(\xb)| \nonumber  &= \sup_{\lVert \xb \rVert \le M} \left|\int f_\sigma(\xb-\boldsymbol{z})d(F-F')(\boldsymbol{z})\right| \nonumber \\
            &\le \sup_{\lVert \xb \rVert \le M} \left|\int \sum_{j=0}^{k-1} (2\pi)^{-d/2}\frac{(-1)^j \sigma^{-(2j+d)}\lVert \xb - \boldsymbol{z} \rVert^{2j}}{2^j \cdot j!} d(F - F')(\boldsymbol{z}) \right| \nonumber \\
            & \quad + 2 \sup_{\substack{\lVert \xb \rVert \le M \\ \lVert \boldsymbol{z} \rVert \le \sqrt{d} a}}
             \left| f_\sigma(\xb-\boldsymbol{z}) - \sum_{j=0}^{k-1} (2\pi)^{-d/2}\frac{(-1)^j \sigma^{-(2j+d)}\lVert \xb - \boldsymbol{z} \rVert^{2j}}{2^j \cdot j!} \right|.
        \end{align}
        For each $j = 0, \ldots, k-1$ and for any $\xb = (x_1, \ldots, x_d) \in \Rb^d$, $\boldsymbol{z} = (z_1, \ldots, z_d) \in \mathbb{R}^d$, the term $\lVert \xb - \boldsymbol{z} \rVert^{2j}$ can be expressed as $\lVert \xb - \boldsymbol{z} \rVert^{2j} = \sum_{\ab \in A_j} c_{\xb, \ab, j}  z_1^{\alpha_1}\cdots z_d^{\alpha_d}$, where $A_j$ is defined as
        \begin{align*}
            A_j = \{\ab = (\alpha_1, \ldots, \alpha_d)^\top \in \mathbb{Z}_{\ge 0}^d \mid \sum_{i=1}^d \alpha_i \le 2j\},
        \end{align*}
        and $c_{\xb, \ab, j}$ depends on $\xb$, $\ab$, and $j$. Therefore, for any $k > 1$,
        \begin{align*}
            \sum_{j=0}^{k-1} (2\pi)^{-d/2}\frac{(-1)^j \sigma^{-(2j+d)}\lVert \xb - \boldsymbol{z} \rVert^{2j}}{2^j \cdot j!} &= \sum_{j=0}^{k-1} b_j \left(\sum_{\ab \in A_j} c_{\xb, \ab, j} z_1^{\alpha_1}\cdots z_d^{\alpha_d}\right)\\ &= \sum_{\ab \in A_{k-1}} \left(\sum_{j \ge \lVert \ab \rVert_1/2} b_j c_{\xb, \ab, j}\right) z_1^{\alpha_1}\cdots z_d^{\alpha_d},
        \end{align*}
        for $b_j = (-1)^j 2^{-j} (2\pi)^{-d/2}\sigma^{-(2j+d)}/j!$. Therefore, the first term on the RHS of \eqref{eqn:lem:3-1-1} vanishes, if
        \begin{equation} \label{eqn:lem-3-1-2}
            \int z_1^{\alpha_1}\cdots z_d^{\alpha_d} dF(\boldsymbol{z}) = \int z_1^{\alpha_1}\cdots z_d^{\alpha_d} dF'(\boldsymbol{z}), \quad \forall \ab = (\alpha_1, \ldots, \alpha_d)^\top \in A_{k-1}.
        \end{equation}
         Moreover, with $a = e^{1/2}2^{-1/2}\sigma^{-1}\max(3B, \sqrt{18}\sigma)$, by \eqref{eqn:lem:3-1-0}, the second term on the RHS of \eqref{eqn:lem:3-1-1} is bounded by a constant multiple of $\left(a(\log(1/\epsilon)^\gamma\right)^{2k}/k^k$. Therefore, if we choose $k$ to be the smallest integer exceeding $(1+a^2)(\log(1/\epsilon))^{2\gamma}$, it follows that
        \[
        \sup_{\lVert \xb \rVert \le M} |g_{F, \sigma}(\xb) - g_{F', \sigma}(\xb)| \lesssim \epsilon.
        \]
        Finally, by Lemma A.1 in \citet{ghosal01} and \eqref{eqn:lem-3-1-2}, the discrete probability measure $F'$ can be chosen to be a discrete distribution on $\mathrm{supp}(F)$ with at most $N = |A_{k-1}|+1$ support points. If $\epsilon$ is small, then $k > d$ and the following holds:
        \[
        |A_{k-1}| = \binom{2(k-1)+d}{d} \le \frac{(2k+d)^d}{d!} \le \frac{3^d}{d!} k^d.
        \]
        Therefore, $F'$ has at most $N \lesssim (\log(1/\epsilon))^{2d\gamma}$ support points.
    \end{proof}
\end{apxlemma}

In the following lemmas, we use the same notation as in the proof of Theorem~\ref{thm:kliep-1}.

\begin{apxlemma}[cf. \citet{Sugiyama2008}, Lemma 4] \label{lem:kliep-1}
For each $\zbt \in \k$, define
\[
\snz
:= \left\{\inf_{g \in \fnz} \frac{1}{\nttrz} \sum_{i \in \ittraz} g(\Xb_{1i}) \ge \frac{\epsilon_0}{2}\right\},
\]
where $\nb_\zbt := (\nttrz, \ntez)^\top$.
Then $\Pb(\snz \mid \nttrz, \ntez) \to 1$ as $\nttrz, \ntez \to \infty$. Moreover, on the event $\snz$, we have $\hatgnz \subseteq \gnzmbar$.
Here, $\bar{M} := 2\xi_0 / \epsilon_0$, where $\xi_0$ and $\epsilon_0$ are the constants appearing in (A2) of Assumption~\ref{ass:kliep-1}.
\end{apxlemma}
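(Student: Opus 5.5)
This lemma has two parts, and I would handle them separately. The first is the uniform law of large numbers on the normalization constraint. The second is a deterministic consequence of it on the event $\snz$.

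\textbf{Part 1: $\Pb(\snz\mid\nttrz,\ntez)\to1$.}

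The plan is to show that, with probability tending to one,
\[
\inf_{g\in\fnz}\frac{1}{\nttrz}\sum_{i\in\ittraz}g(\Xb_{1i})\;\ge\;\frac{\epsilon_0}{2}.
\]
By (A2), every basis function satisfies $\int\varphi_\theta\,\ptrz\,d\xb\ge\epsilon_0$. So it suffices to prove
\[
\sup_{\varphi\in\mathcal F_\zbt}\Big|\frac{1}{\nttrz}\sum_{i\in\ittraz}\varphi(\Xb_{1i})-\int\varphi\,\ptrz\,d\xb\Big|\;\le\;\frac{\epsilon_0}{2}
\]
with probability tending to one.

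Bounding the supremum over the full class $\mathcal F_\zbt$, rather than the data-dependent $\fnz$, removes the dependence of $\fnz$ on the sample. The templates are drawn from $\{\Xb_{2i}\}_{i\in\itwoz}$, which is independent of $\{\Xb_{1i}\}_{i\in\ittraz}$. If needed, I can therefore also condition on the templates.

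The sample $\{\Xb_{1i}\}_{i\in\ittraz}$ is i.i.d.\ from $P_{1\zbt}$ given $\nttrz$, by Assumption~\ref{ass:super}. The class $\mathcal F_\zbt=\{K_{\sigma_\zbt}(\cdot,\xb'):\xb'\in\mathrm{supp}(P_{1\zbt})\}$ is uniformly bounded by $\xi_0$. It is also Lipschitz in the parameter $\xb'$, which ranges over the compact set $[-a,a]^d$; this Lipschitz property is exactly the kernel difference bound used in the proof of Theorem~\ref{thm:kliep-1}. Hence the $\|\cdot\|_\infty$ covering numbers are polynomial in $1/\epsilon$, so the class is Glivenko--Cantelli.

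To turn this into a rate, I would apply symmetrization together with Dudley's bound, as in the proof of Theorem~2, and then Markov's inequality. This gives
\[
\sup_{\varphi\in\mathcal F_\zbt}\Big|\frac{1}{\nttrz}\sum_{i\in\ittraz}\varphi(\Xb_{1i})-\int\varphi\,\ptrz\,d\xb\Big|\;=\;O_p\big(\nttrz^{-1/2}\big),
\]
which yields Part 1. Alternatively, I could cite the corresponding step of Lemma~4 in \citet{Sugiyama2008}, which only needs the bracketing entropy condition verified in the proof of Theorem~\ref{thm:kliep-1}.

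\textbf{Part 2: $\hatgnz\subseteq\gnzmbar$ on $\snz$.}

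Take $g=\sum_l\alpha_l\varphi_{\theta_l}\in\hatgnz$, so that $\alpha_l\ge0$, $\varphi_{\theta_l}\in\fnz$, and $\frac{1}{\nttrz}\sum_{i\in\ittraz}g(\Xb_{1i})=1$. Then
\[
1=\sum_l\alpha_l\,\frac{1}{\nttrz}\sum_{i\in\ittraz}\varphi_{\theta_l}(\Xb_{1i})\;\ge\;\frac{\epsilon_0}{2}\sum_l\alpha_l,
\]
so $\sum_l\alpha_l\le 2/\epsilon_0$. Since $0\le\varphi_{\theta_l}\le\xi_0$, it follows that
\[
\|g\|_\infty\;\le\;\xi_0\sum_l\alpha_l\;\le\;\frac{2\xi_0}{\epsilon_0}\;=\;\bar M.
\]
Therefore $g\in\gnzmbar$, as required.

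\textbf{Main obstacle.} The delicate point is the uniformity in Part 1 over a data-dependent basis. I resolve it by taking the supremum over the fixed parametric kernel family and using a compactness/Lipschitz covering argument for it.
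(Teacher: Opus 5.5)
Your proof is correct and follows essentially the same argument the paper relies on: the paper gives no proof of its own and defers to Lemma~4 of \citet{Sugiyama2008}, whose standard argument is the one you give. It proves a uniform law of large numbers over the full basis class, so that $\inf_{\varphi}\frac{1}{\nttrz}\sum_{i}\varphi(\Xb_{1i}) \ge \epsilon_0/2$ with probability tending to one, and then uses the deterministic bound $\sum_l \alpha_l \le 2/\epsilon_0$, hence $\|g\|_\infty \le 2\xi_0/\epsilon_0 = \bar M$. Your Lipschitz-in-the-center covering argument is a self-contained substitute for invoking the bracketing-entropy bound verified in the proof of Theorem~\ref{thm:kliep-1}, and either one suffices for the Glivenko--Cantelli step.
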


\begin{apxlemma} \label{lem:kliep-2}
    For each $\zbt \in \k$, on the event $\snz$ defined in Lemma \ref{lem:kliep-1},
    \[
    |\munk - 1| \le (\bar{M}^{1/2} + \eta_1^{1/2}) \cdot h_{P_{1\zbt}}(r_\zbt, \rnz).
    \]
    \begin{proof}
    Since $\rnz \in \hatgnz \subseteq \gnzmbar$ on $\snz$ by Lemma~\ref{lem:kliep-1}, 
        \begin{align*}
            |\munk - 1| &= \left|\int_\mathcal{X} \rnz(\xb)\ptrz(\xb)d\xb -  \int_\mathcal{X} r_\zbt(\xb)\ptrz(\xb)d\xb \right| \\
            &\le \int_\mathcal{X} |\rnz(\xb)-r_\zbt(\xb)|\ptrz(\xb)d\xb \\
            &\le \lVert \rnz^{1/2} + r_\zbt^{1/2}\rVert_\infty \cdot \int_\mathcal{X} |\rnz^{1/2}(\xb)-r_\zbt^{1/2}(\xb)|\ptrz(\xb)d\xb \\
            &\le (\bar{M}^{1/2} + \eta_1^{1/2}) \cdot h_{P_{1\zbt}}(r_\zbt, \rnz),
        \end{align*}
        where the last inequality follows from Assumption \ref{ass:kliep-1} and the Cauchy-Schwarz inequality.
    \end{proof}
\end{apxlemma}

\section{Details on the WCP-type Threshold} \label{app:threshold}

In this section, we compare our threshold choice with the original WCP threshold, focusing on the special case $H=1$ for ease of comparison.
Recall that, at miscoverage level $\alpha \in (0,1)$, the WCP-type threshold of the proposed method when $H=1$ is given by
\begin{equation*}
\widehat{t}_\alpha := \text{Quantile}_{1-\alpha}\left( \sum_{i \in \mathcal{I}_{\mathrm{cal}}} \frac{\widehat{w}(\Xb_{1i}, \Zbt_{1i})}{\sum_{j \in \mathcal{I}_{\mathrm{cal}}} \widehat{w}(\Xb_{1j}, \Zbt_{1j})} \delta_{s_i} \right),
\end{equation*}
where we drop the stratum index $h$ since $H = 1$.
By contrast, directly applying the threshold of Section~2.1 with the estimated WCP weight function $\widehat w$ yields
\[
\widehat{t}_\alpha^+(\xb, \zbt) = \text{Quantile}_{1-\alpha}\left(  \frac{\sum_{i \in \mathcal{I}_{\mathrm{cal}}}\widehat{w}(\Xb_{1i}, \Zbt_{1i})\delta_{s_i} + \widehat w(\xb, \zbt)\delta_{+\infty}}{\sum_{j \in \mathcal{I}_{\mathrm{cal}}} \widehat{w}(\Xb_{1j}, \Zbt_{1j}) + \widehat{w}(\xb, \zbt)}  \right).
\]

Since $\widehat{t}_\alpha^+(\xb, \zbt)$ depends on the test point $(\xb, \zbt)$, using the fixed threshold $\widehat{t}_\alpha$ is computationally more convenient. 
Because $\widehat{t}_\alpha \le \widehat{t}_\alpha^+(\xb, \zbt)$ for all $(\xb, \zbt)$, the resulting prediction set is slightly less conservative than that of the original WCP method, similar to the GWCP threshold.
This trade-off is clearly illustrated in \citet{gwcp}.
However, since Corollary~1 implies that the proposed method attains coverage close to the nominal level when the sample size is sufficiently large, the slight loss of conservativeness becomes negligible.
Thus, the proposed approach benefits from the computational simplicity of using the fixed threshold $\widehat{t}_\alpha$ without sacrificing asymptotic coverage.

\section{Implementation Details} \label{app:implement}

In this section, we provide implementation details for constructing both marginal and group-conditional prediction sets.
In both cases, we use \textit{sampling weights} when constructing the score functions.
For concreteness, consider the absolute residual score, where we estimate the mean function $\mu$ using a linear regression model. When the homoscedasticity assumption is violated and the sample subgroup proportions differ from those in the population, the weighted least squares (WLS) estimator remains consistent, whereas the ordinary least squares (OLS) estimator may not \citep{wls}.
Because complex surveys typically involve oversampling---which can induce discrepancies between the sample and population demographics---we use WLS with the sampling weights when estimating $\mu$ for the absolute residual score. Similarly, we use the same weights when estimating $q_{\alo}$ and $q_{\ahi}$ for the CQR score.
More specifically, because the score functions are constructed using observations for which $(\Xb_{1i}, Y_{1i}, \Zbt_{1i})$ are fully observed, we use the sampling weight $\wfull_{1i}$.

\subsection{Implementation Details for Marginal Prediction Sets} \label{app:marg}

To fit a linear model, each component of the categorical covariate variable $\Zbt_{ti} \in \k$ is encoded using dummy variables.
Let $K := K_1 + \cdots + K_l$ and define $\gammab = (\gammab_1^\top, \ldots, \gammab_l^\top)^\top \in \Rb^{K-l}$, where $\gammab_j = (\gamma_{j,1}, \ldots, \gamma_{j,(K_j-1)})^\top \in \Rb^{K_j-1}$ denotes the coefficient vector associated with the dummy encoding of the $j$th component, for $j = 1,\ldots,l$.
Formally, for each $\zbt = (k_1, \ldots, k_l)^\top \in \k$, we define
\begin{equation*}
\gammab^\top \zbt := \sum_{j=1}^l \sum_{s=1}^{K_j-1} \gamma_{j,s}\,\mathbf{1}\{k_j=s\}, 
\qquad \forall \gammab \in \Rb^{K-l}.
\end{equation*}

We now describe how to construct marginal prediction sets for continuous $Y_{ti} \in \Rb$, using absolute residual and CQR scores, and for categorical $Y_{ti} \in [M]$, using soft classifiers.
The procedures below are written with sampling weights. When sampling weights are unavailable (e.g., in simulations), we simply omit them and use the unweighted versions of the same procedures.

\paragraph*{1. Absolute residual score. } 
We estimate the conditional mean function $\mu(\xb,\zbt):=\Eb[Y_t \mid \Xb_t=\xb, \Zbt_t=\zbt]$ under the conditional distribution $P_{Y_t \mid (\Xb_t=\xb,\ \Zbt_t=\zbt)}$, which is invariant over time by Assumption~5, using the weighted least squares (WLS) estimator:
\[
\big(\widehat\beta_0, \hatbb^\top, \hatgb^\top\big)^\top 
= \argmin_{\beta_0 \in \mathbb R, \bb \in \mathbb{R}^d, \gammab \in \mathbb{R}^{K - l}} 
\sum_{i \in \itra} W_{1i}^Y 
(Y_{1i} - \beta_0 - \bb^{\top} \Xb_{1i} - \gammab^{\top} \Zbt_{1i})^2.
\]
The estimated regression function is then given by 
$\widehat \mu(\xb, \zbt) = \widehat \beta_0 + \hatbb^\top\xb + \hatgb^\top\zbt$, 
and the score function is defined as 
$\mathcal S(\xb, y, \zbt) = |y - \widehat \mu(\xb, \zbt)|$. 
Then, for the threshold $\widehat t_\alpha$, 
the resulting prediction set takes the form of an interval:
\[
\widehat C_\alpha(\xb, \zbt) 
= \left\{y \in \Rb : \widehat \mu(\xb, \zbt) - \widehat t_\alpha \le y \le \widehat \mu(\xb, \zbt) + \widehat t_\alpha\right\}.
\]

Using the absolute residual score results in prediction intervals of constant length $2\widehat t_\alpha$ across all values of $(\xb, \zbt)$, which may be conceived as undesirable. 
The CQR method addresses this limitation by producing intervals whose lengths adapt adaptively to both $\xb$ and $\zbt$.

\paragraph*{2. CQR score.} 
The quantile function $q_\theta(\xb, \zbt)$ of the conditional distribution $P_{Y_t \mid (\Xb_t=\xb,\ \Zbt_t=\zbt)}$ at level $\theta \in (0, 1)$, which also remains consistent over time by Assumption~5, can be estimated by minimizing the \textit{weighted pinball loss} as follows:
\[
(\widehat\beta_{0, \theta}, \hatbb_\theta^\top, \hatgb_\theta^\top)^\top  
= \argmin_{\beta_0 \in \mathbb{R}, \bb \in \mathbb{R}^d, \gammab \in \mathbb{R}^{K - l}} 
\sum_{i \in \itra} W_{1i}^Y \cdot
\rho_\theta(Y_{1i} - \beta_0 - \bb^{\top} \Xb_{1i} - \gammab^{\top} \Zbt_{1i}),
\]
where $\rho_\theta(\cdot)$ is the pinball loss function defined by 
$\rho_\theta(x) = (\theta \mathbf{1}\{x \ge 0\} + (1-\theta)\mathbf{1}\{x < 0\})\cdot |x|$.  
The estimated quantile function is then given by 
$\widehat q_\theta(\xb, \zbt) = \widehat \beta_{0, \theta} + \hatbb_\theta^\top \xb + \hatgb_\theta^\top \zbt$, 
and the score function is defined as 
$\mathcal S(\xb, y, \zbt) = \max\{\widehat q_{\alo}(\xb, \zbt) - y, \; y - \widehat q_{\ahi}(\xb, \zbt)\}$.  
For the threshold $\widehat t_\alpha$, 
the resulting prediction set takes the form of an interval:
\[
\widehat{C}_\alpha(\xb, \zbt; \alo, \ahi) = \left\{y \in \mathbb{R} : \widehat{q}_{\alo}(\xb, \zbt) - \widehat t_\alpha \le y \le \widehat{q}_{\ahi}(\xb, \zbt) + \widehat t_\alpha\right\}.
\]
The length of the CQR prediction interval becomes 
$\mathrm{Len}(\widehat{C}_\alpha(\xb, \zbt; \alo,\ahi)) = 2\widehat t_\alpha + \widehat{q}_{\ahi}(\xb, \zbt) - \widehat{q}_{\alo}(\xb, \zbt)$, 
which varies with both $\xb$ and $\zbt$; in particular, within each $\zbt$, it changes linearly in $\xb$ with a constant slope.
The quantile levels $\alo$ and $\ahi$ are treated as hyperparameters, which may be chosen to minimize the prediction interval length. 
Specifically, we set the grids
\begin{equation} \label{eqn:grid}
\mathcal A_{\mathrm{lo}}
:= \left\{ \frac{j}{2(1+n_\alpha)} : j = 1, \ldots, n_\alpha \right\}, 
\qquad 
\mathcal A_{\mathrm{hi}}
:= \left\{\frac{1}{2} + \frac{j}{2(1+n_\alpha)} : j = 1, \ldots, n_\alpha \right\},
\end{equation}
for $\alo$ and $\ahi$, respectively, resulting in a total of $L = n_\alpha^2$ combinations of $(\alo, \ahi)$. 
For each combination, we estimate $\hatqlo$ and $\hatqhi$, construct the prediction set $\widehat C_\alpha(\cdot, \cdot \ ; \alo, \ahi)$, and choose the combination $(\alo^*, \ahi^*)$ that yields the minimum average interval length:
\[
(\alo^*, \ahi^*)
:= \argmin_{\alo \in \mathcal A_{\mathrm{lo}},\, \ahi \in \mathcal A_{\mathrm{hi}}} \
\frac{1}{|\itest|} \sum_{i \in \itest} \mathrm{Len}\big(\widehat C_\alpha(\Xb_{2i}, \Zbt_{2i} ; \alo, \ahi)\big),
\]
using the test dataset $\dtest = \{(\Xb_{2i}, \Zbt_{2i})\}_{i \in \itest}$. Indeed, the pair $(\alo^*, \ahi^*)$ may be selected using \emph{any} dataset, including the calibration dataset $\dcal$. As can be seen from the proof in Appendix~\ref{app:thm2}, the coverage guarantee in Theorem~2 does not depend on the selection mechanism of the prediction set.
Finally, the selected prediction set is then given by $\widehat C_\alpha(\cdot, \cdot \ ; \alo^*, \ahi^*)$.

\paragraph*{3. Soft classification score. }
For categorical outcomes $Y_{ti} \in [M]$, we train a soft classifier $\widehat f(\cdot, \cdot  ; \boldsymbol{\theta}) : \Rb^d \times \k \rightarrow \Delta^{M-1}$, which can incorporate sampling weights during training (e.g., XGBoost, random forest, neural network), where $\boldsymbol{\theta}$ denotes a vector of hyperparameters. 
For each $\xb \in \Rb^d$ and $\zbt \in \k$, write $\widehat f(\xb, \zbt;  \boldsymbol{\theta}) = (\widehat f_1(\xb, \zbt;  \boldsymbol{\theta}), \cdots, \widehat f_M(\xb, \zbt;  \boldsymbol{\theta}))^\top$.
Then, using the score function $\mathcal{S}(\xb, y, \zbt) = 1 - \widehat f_y(\xb, \zbt;  \boldsymbol{\theta})$ together with the threshold $\widehat{t}_\alpha$, the resulting prediction set is given by
\begin{equation*}
\widehat{C}_\alpha(\xb, \zbt;  \boldsymbol{\theta}) = \left\{y \in [M] : \widehat{f}_y(\xb, \zbt;  \boldsymbol{\theta}) \ge 1 - \widehat{t}_\alpha\right\}.
\end{equation*}
Now, let $\boldsymbol{\Theta}$ denote a grid over $\boldsymbol{\theta}$. Then, for each $\boldsymbol{\theta} \in \boldsymbol{\Theta}$, we train $\widehat f(\cdot, \cdot  ; \boldsymbol{\theta})$, construct the prediction set $\widehat{C}_\alpha(\cdot, \cdot;  \boldsymbol{\theta})$, and choose $\boldsymbol{\theta}^* \in \boldsymbol{\Theta}$ to minimize the average prediction set size:
\[
\boldsymbol{\theta}^* := \argmin_{\boldsymbol{\theta} \in \boldsymbol{\Theta}} \frac{1}{|\itest|} \sum_{i \in \itest} |\widehat C_\alpha(\Xb_{2i}, \Zbt_{2i}; \boldsymbol{\theta})|,
\]
using the test dataset $\dtest = \{(\Xb_{2i}, \Zbt_{2i})\}_{i \in \itest}$. Again, the hyperparameter $\boldsymbol{\theta}^*$ can be selected using \emph{any} dataset. Finally, the selected prediction set is given by $\widehat C_\alpha(\cdot, \cdot; \boldsymbol{\theta}^*)$.

\subsection{Construction and Implementation Details of Group-Conditional Prediction Sets} \label{app:group}

\subsubsection{Construction of Group-Conditional Prediction Sets}

To obtain a prediction set $\widehat C_\alpha(\cdot, \cdot)$ that satisfies group-conditional validity, we construct, for each subgroup $\gb \in \kg$, a subgroup-specific prediction set $\widehat C_{\alpha,\gb}(\cdot, \cdot)$ using subgroup-specific datasets $\donegt$ and $\dtwog$, which are defined as follows:
\begin{align*}
\donegt &:= \{(\Xb_{1i}, Y_{1i}, \Zb_{1i}, \rfull_{1i}) \mid i \in \ionet, \Gb_{1i} = \gb \} = \{(\Xb_{1i}, Y_{1i}, \Zb_{1i}, \rfull_{1i})\}_{i \in \ionegt}, \\ 
\dtwog &:= \{(\Xb_{2i}, \Zb_{2i}) \mid i \in \itwo, \Gb_{2i} = \gb \} = \{(\Xb_{2i},  \Zb_{2i})\}_{i \in \itwog}.
\end{align*}
Moreover, define $\doneg$, which serves as the training dataset, as
\[
\doneg := \{(\Xb_{1i}, Y_{1i}, \Zb_{1i}) \mid i \in \ionegt, \rfull_{1i} = 1 \} = \{(\Xb_{1i}, Y_{1i}, \Zb_{1i})\}_{\ioneg}.
\]
Under the WCP framework, $\widehat C_{\alpha,\gb}$ is designed to achieve the group-conditional coverage target
\[
\Pb\big(Y_{2\gb} \in \widehat C_{\alpha,\gb}(\Xb_{2\gb}, \Zb_{2\gb})\big) \ge 1 - \alpha,
\]
for the corresponding subgroup.
In practice, since the WCP weight functions are estimated, the resulting coverage is lower-bounded by $1-\alpha$ up to an additional estimation error term, as characterized in Corollary~\ref{cor:group-1} and Corollary~\ref{cor:group-2}.
Here, the probability is taken over $\donegt$, $\dtwog$, and an independent test point $(\Xb_{2\gb}, Y_{2\gb}, \Zb_{2\gb}) \sim P_{(\Xb_2, Y_2, \Zb_2) \mid \Gb_2 = \gb}$, where the conditional distribution is specified in Assumption~\ref{ass:super}.
We then define $\widehat C_\alpha(\xb, \zbt) := \widehat C_{\alpha,\gb}(\xb, \zb)$ for each $\xb \in \Rb^d$ and $\zbt = (\gb^\top, \zb^\top)^\top \in \k$.

For each subgroup $\gb$, we first randomly split $\doneg$ into a training set $\dtrag$ and a calibration set $\dcalg$, with index sets $\itrag$ and $\icalg$, typically with $|\itrag| = |\icalg| = |\ioneg|/2$.
We then define $\dttrag := \{(\Xb_{1i}, Y_{1i}, \Zb_{1i}, \rfull_{1i})\}_{i \in \ittrag}$, where $\ittrag := \ionegt \setminus \icalg$.
Using the subgroup-specific training dataset $\dtrag$, we train the subgroup-specific score function $\mathcal{S}_\gb(\xb, y, \zb)$, and compute the nonconformity scores $s_i = \mathcal{S}_\gb(\Xb_{1i}, Y_{1i}, \Zb_{1i})$ for each $i \in \icalg$.
By Assumption~\ref{ass:super}, the subgroup-specific calibration dataset $\dcalg$ satisfies
\[
(\Xb_{1i}, Y_{1i}, \Zb_{1i}) \iid P_{(\Xb_1, Y_1, \Zb_1) \mid \Gb_1 = \gb}, \qquad i \in \icalg,
\]
while the subgroup-specific target distribution is given by $P_{(\Xb_2, Y_2, \Zb_2) \mid \Gb_2 = \gb}$.
Accordingly, by Assumption~5, the subgroup-specific WCP weight function is defined as
\[
w_\gb(\xb, \zb) := \Bigg(\frac{dP_{\Xb_2 \mid \Zbt_2 = \zbt}}{dP_{\Xb_1 \mid \Zbt_1 = \zbt}}\Bigg)(\xb) \cdot \frac{P_{\Zb_2 \mid \Gb_2 = \gb}(\zb)}{P_{\Zb_1 \mid \Gb_1 = \gb}(\zb)} = r_\zbt(\xb) \cdot \frac{v_{2\zbt}}{v_{1\zbt}},
\]
for all $\xb \in \Rb^d$ and $\zbt = (\gb^\top, \zb^\top)^\top \in \k$.
Note that, since conditioning on the subgroup membership variable $\Gb_{ti}$ eliminates distributional differences across strata, the weight function no longer depends on the stratum.
We estimate this weight function via $\widehat r_\zbt(\cdot)$ using the KLIEP method described in Appendix~\ref{app:kliep-1}, based on $\{\Xb_{1i} \mid i \in \ittrag, \Zb_{1i} = \zb\}$ and $\{\Xb_{2i} \mid i \in \itwog, \Zb_{2i} = \zb\}$.
Then, we define and use the following subgroup-specific threshold:
\[
\widehat t_{\alpha, \gb} := \q_{1-\alpha}\left(\sum_{i \in \icalg} \frac{\widehat w_\gb(\Xb_{1i}, \Zb_{1i})}{\sum_{j \in \icalg} \widehat w_\gb(\Xb_{1j}, \Zb_{1j})} 
\cdot \delta_{s_i}\right).
\]
Using this threshold, the resulting group-conditional prediction set is given by
\begin{equation} \label{eqn:group-conditional}
    \widehat C_{\alpha}(\xb, \zbt) = \widehat C_{\alpha, \gb}(\xb, \zb) = \left\{y \in \mathcal Y : \mathcal S_\gb(\xb, y, \zb) \le \widehat t_{\alpha, \gb}\right\}, \qquad \forall \xb \in \mathbb R^d, \ \zbt = (\gb^\top, \zb^\top)^\top. 
\end{equation}

\subsubsection{Implementation Details for Group-Conditional Prediction Sets}
To fit a linear model, each component of the non-demographic categorical covariates $\Zb_{ti} \in \kz$ is encoded using dummy variables.
Let $K := K_{s+1} + \cdots + K_l$ and define $\gammab = (\gammab_{s+1}^\top, \ldots, \gammab_l^\top)^\top \in \Rb^{K-(l-s)}$, where $\gammab_j = (\gamma_{j,1}, \ldots, \gamma_{j,(K_j-1)})^\top \in \Rb^{K_j-1}$ denotes the coefficient vector associated with the dummy encoding of the $j$th component, for $j = s+1,\ldots,l$.
Formally, for each $\zb = (k_{s+1}, \ldots, k_l)^\top \in \kz$, we define
\begin{equation*}
\gammab^\top \zb := \sum_{j=s+1}^l \sum_{s=1}^{K_j-1} \gamma_{j,s}\,\mathbf{1}\{k_j=s\}, 
\qquad \forall \gammab \in \Rb^{K-(l-s)}.
\end{equation*}
We now describe the procedure for constructing group-conditional prediction sets for continuous outcomes $Y_{ti} \in \Rb$, using absolute residual and CQR scores, and for categorical outcomes $Y_{ti} \in [M]$, using soft classifiers.
The procedures below are written with sampling weights. When sampling weights are unavailable (e.g., simulations), we simply omit them and use the unweighted versions of the same procedures.

\paragraph{1. Absolute residual score. }
We estimate the subgroup-specific conditional mean function $\mu_\gb(\xb, \zb) := \Eb[Y_t \mid \Xb_t = \xb, \Zb_t = \zb, \Gb_t = \gb]$ separately for each $\gb \in \kg$:
\[
(\widehat \beta_{0, \gb}, \widehat{\boldsymbol \beta}_{\gb}^\top, \widehat \gammab_\gb^\top)^\top  = \argmin_{\beta_{0} \in \mathbb R, \boldsymbol\beta \in \mathbb{R}^d, \gammab \in \Rb^{K-(l-s)}} \sum_{i \in \itrag} W_{1i}^Y(Y_{1i} - \beta_0 - \boldsymbol\beta^{\top} \Xb_{1i} - \gammab^\top \Zb_{1i})^2.
\]
Then, the score function for subgroup $\gb$ is defined as $\mathcal{S}_\gb(\xb, y, \zb) = |y - \widehat \mu_\gb(\xb, \zb)| = |y - \widehat{\beta}_{0, \gb} - \widehat{\boldsymbol\beta}_\gb^\top \xb - \hatgb_{\gb}^\top \zb|$.
For the subgroup-specific threshold $\widehat t_{\alpha, \gb}$, the resulting group-conditional prediction set forms an interval:
\begin{align*}
\widehat C_{\alpha}(\xb, \zbt) = \widehat C_{\alpha, \gb}(\xb, \zb) = \left\{y \in \mathbb{R} : \widehat \mu_\gb(\xb, \zb) - \widehat t_{\alpha, \gb} \le y \le \widehat \mu_\gb(\xb, \zb) + \widehat t_{\alpha, \gb}\right\}.
\end{align*}
This construction yields prediction intervals with constant length $2\widehat t_{\alpha, \gb}$ within each subgroup, while allowing the lengths to vary across subgroups.

\paragraph{2. CQR score. } The subgroup-specific quantile function $q_{\theta, \gb}(\xb, \zb)$ of $Y_t \mid (\Xb_t = \xb, \Zb_t = \zb, \Gb_t = \gb)$ at level $\theta \in (0, 1)$ can be estimated by minimizing the \textit{weighted pinball loss}:
\[
(\widehat\beta_{0, \theta, \gb}, \hatbb_{\theta, \gb}^\top, \hatgb_{\theta, \gb}^\top)^\top  
= \argmin_{\beta_0 \in \mathbb{R}, \bb \in \mathbb{R}^d, \gammab \in \Rb^{K-(l-s)}} 
\sum_{i \in \itrag} W_{1i}^Y 
\rho_\theta(Y_{1i} - \beta_0 - \bb^{\top} \Xb_{1i} - \gammab^\top \Zb_{1i}),
\]
where $\rho_\theta$ is the pinball loss function defined by 
$\rho_\theta(x) = (\theta \mathbf{1}\{x \ge 0\} + (1-\theta)\mathbf{1}\{x < 0\})\cdot |x|$.  
The estimated quantile function is then given by 
$\widehat q_{\theta, \gb}(\xb, \zb) = \widehat \beta_{0, \theta, \gb} + \widehat \bb_{\theta, \gb}^\top \xb + \hatgb_{\theta, \gb}^\top \zb$, 
and the score function is defined as 
$\mathcal S_\gb(\xb, y, \zb) = \max\{\widehat q_{\alpha_{\mathrm{lo}, \gb}, \gb}(\xb, \zb) - y,\; y - \widehat q_{\alpha_{\mathrm{hi}, \gb}, \gb}(\xb, \zb)\}$.  
For the threshold $\widehat t_{\alpha, \gb}$, 
the resulting prediction set takes the form of an interval:
\[
\widehat C_\alpha(\xb, \zbt; \alpha_{\mathrm{lo}, \gb}, \alpha_{\mathrm{hi}, \gb}) 
= \left\{y \in \mathbb{R} : \widehat q_{\alpha_{\mathrm{lo}, \gb}, \gb}(\xb, \zb) - \widehat t_{\alpha, \gb} \le y \le \widehat q_{\alpha_{\mathrm{hi}, \gb}, \gb}(\xb, \zb) + \widehat t_{\alpha, \gb}\right\}.
\]
In this case, the length of the prediction interval becomes 
$\mathrm{Len}(\widehat C_{\alpha, \gb}(\xb, \zb; \alpha_{\mathrm{lo}, \gb}, \alpha_{\mathrm{hi}, \gb})) = 2\widehat t_{\alpha, \gb} + \widehat q_{\alpha_{\mathrm{hi}, \gb}, \gb}(\xb, \zb) - \widehat q_{\alpha_{\mathrm{lo}, \gb}, \gb}(\xb, \zb)$, 
which depends on both $\xb$, $\zb$, and $\gb$; specifically, within each subgroup $\gb$ it is linear in $(\xb, \zb)$ with a subgroup-specific slope.
As in the marginal method, the quantile levels $\alpha_{\mathrm{lo}, \gb}$ and $\alpha_{\mathrm{hi}, \gb}$ are treated as hyperparameters and may be chosen to minimize the prediction interval length.
Given the grids $\mathcal A_{\mathrm{lo}}$ and $\mathcal A_{\mathrm{hi}}$ defined in \eqref{eqn:grid}, for each subgroup $\gb$ we select the pair $(\alpha_{\mathrm{lo}, \gb}^\ast, \alpha_{\mathrm{hi}, \gb}^\ast)$ that minimizes the average interval length:
\[
(\alpha_{\mathrm{lo}, \gb}^\ast, \alpha_{\mathrm{hi}, \gb}^\ast)
:= \argmin_{\alpha_{\mathrm{lo}, \gb} \in \mathcal A_{\mathrm{lo}},\, \alpha_{\mathrm{hi}, \gb} \in \mathcal A_{\mathrm{hi}}} 
\frac{1}{|\itestg|} \sum_{i \in \itestg} \mathrm{Len}\bigl(\widehat C_{\alpha, \gb}(\Xb_{2i}, \Zb_{2i} ; \alpha_{\mathrm{lo}, \gb}, \alpha_{\mathrm{hi}, \gb})\bigr),
\]
using the subgroup-specific test dataset
$
\dtestg = \{(\Xb_{2i}, \Zb_{2i})\}_{i \in \itestg}.
$
As in the marginal case, the pair $(\alpha_{\mathrm{lo}, \gb}^\ast, \alpha_{\mathrm{hi}, \gb}^\ast)$ may be selected using \emph{any} dataset, including the subgroup-specific calibration dataset $\dcalg$.
Finally, the selected prediction set is given by
$\widehat C_\alpha(\cdot, \gb) = \widehat C_\alpha(\cdot, \gb; \alpha_{\mathrm{lo}, \gb}^\ast, \alpha_{\mathrm{hi}, \gb}^\ast)$.

\paragraph{3. Soft classification score. }
For the categorical outcome $Y_{ti} \in [M]$, for each subgroup $\gb$, we train a soft classifier
$\widehat f_\gb(\xb, \zb; \boldsymbol{\theta}_{\gb}) = (\widehat f_{\gb, 1}(\xb, \zb; \boldsymbol{\theta}_{\gb}), \ldots, \widehat f_{\gb, M}(\xb, \zb; \boldsymbol{\theta}_{\gb}))^\top$
using $\dtrag$, while incorporating sampling weights during training, where $\boldsymbol{\theta}_{\gb}$ denotes a vector of hyperparameters for subgroup $\gb$.
With the subgroup-specific score function $\mathcal{S}_\gb(\xb, y, \zb) := 1 - \widehat f_{\gb, y}(\xb, \zb; \boldsymbol{\theta}_{\gb})$ and the threshold $\widehat t_{\alpha, \gb}$, the resulting prediction set is given by
\[
\widehat C_{\alpha}(\xb, \zbt; \boldsymbol{\theta}_{\gb}) = \widehat{C}_{\alpha, \gb}(\xb, \zb; \boldsymbol{\theta}_{\gb}) = \left\{y \in [M] : \widehat{f}_{\gb, y}(\xb, \zb; \boldsymbol{\theta}_{\gb}) \ge 1 - \widehat{t}_{\alpha, \gb}\right\}.
\]
For a grid $\boldsymbol{\Theta}$, for each subgroup $\gb$ we choose $\boldsymbol{\theta}_{\gb}^*$ to minimize the average prediction set size:
\[
\boldsymbol{\theta}_{\gb}^* := \argmin_{\boldsymbol{\theta}_\gb \in \boldsymbol{\Theta}} \frac{1}{|\itestg|} \sum_{i \in \itestg} |\widehat C_{\alpha, \gb}(\Xb_{2i}, \Zb_{2i}; \boldsymbol{\theta}_\gb)|,
\]
using 
$
\dtestg = \{(\Xb_{2i}, \Zb_{2i})\}_{i \in \itestg}.
$
Again, the hyperparameter $\boldsymbol{\theta}_{\gb}^*$ can be selected using \emph{any} dataset. Finally, the selected prediction set is given by $\widehat C_\alpha(\cdot, \gb) = \widehat C_\alpha(\cdot, \gb; \boldsymbol{\theta}_{\gb}^*)$.

\subsection{Implications of Score Choice for Interval Length}
For the continuous outcome, Table~\ref{table:comparison_2} summarizes the resulting interval lengths under the two marginal and two group-conditional methods, where ``AR'' denotes the use of absolute residual scores.
With the absolute residual score, marginal prediction sets yield intervals of constant length, whereas group-conditional sets yield lengths that vary across subgroups but remain constant in $(\xb, \zb)$ within each subgroup.
With the CQR score, marginal sets yield lengths that vary linearly in $(\xb, \zb)$ with a common slope across subgroups $\gb$, while group-conditional sets allow subgroup-specific slopes.
In practice, users may choose the specification that matches their desired interval-length behavior; importantly, our methodology asymptotically guarantees the target coverage regardless of this choice.

\begin{table}[!htbp] 
\centering
\renewcommand{\arraystretch}{1.6}
\caption{Comparison of interval lengths across four choices for continuous outcome}
\label{table:comparison_2}
\vspace{-3mm}
\begin{adjustbox}{max width=\linewidth}
\begin{tabular}{ll}
\toprule
\textbf{Modeling choice} & \textbf{Interval Length} ($\mathrm{Len}(\widehat{C}_\alpha(\xb, \zbt)) = \mathrm{Len}(\widehat{C}_\alpha(\xb, \zb, \gb))$) \\
\midrule
AR (Marginal) & $2\widehat t_\alpha$ \\
AR (Group-conditional) & $2\widehat t_{\alpha, \gb}$ \\
CQR (Marginal) & $2\widehat t_\alpha + (\widehat \beta_{0, \ahi^*} - \widehat \beta_{0, \alo^*}) + (\widehat \bb_{\ahi^*} - \widehat \bb_{\alo^*})^\top\xb + (\hatgb_{\ahi^*} - \hatgb_{\alo^*})^\top \zbt$ \\
CQR (Group-conditional) & $2\widehat t_{\alpha, \gb} + \big(\widehat \beta_{0, \alpha_{\mathrm{hi}, \gb}^\ast, \gb} - \widehat \beta_{0, \alpha_{\mathrm{lo}, \gb}^\ast, \gb} \big) + \big(\widehat \bb_{\alpha_{\mathrm{hi}, \gb}^\ast, \gb} - \widehat \bb_{\alpha_{\mathrm{lo}, \gb}^\ast, \gb} \big)^\top\xb + \big(\hatgb_{\alpha_{\mathrm{hi}, \gb}^\ast, \gb} - \hatgb_{\alpha_{\mathrm{lo}, \gb}^\ast, \gb} \big)^\top\zb$\\
\bottomrule
\end{tabular}
\end{adjustbox}

\end{table}

\section{Reduction of Simplified Designs} \label{app:reduction}

In this section, we show how each sampling design in Table~\ref{tab:simplified-designs} reduces to its corresponding equivalent representation under the assumptions in Section~3.2.
As shown in Table~1, the three designs with Poisson sampling reduce to PPSWOR within a single stratum, and the PPSWOR within selected PSUs reduces to PPSWOR within strata.

\subsection{Poisson Sampling From the Entire Population} \label{app:poisson-pop}

We begin by formally specifying Poisson sampling from the entire population, which serves as a baseline for Poisson sampling designs, in the following assumption.

\begin{apxass}[Poisson sampling from the entire population] \label{ass:poisson-pop}
For each $t \in \{1,2\}$, let $\pi_{t\gb} \in (0,1]$ denote a pre-specified inclusion probability for units in subgroup $\gb$. Each unit with $\Gb_{ti} = \gb$ is then independently sampled with probability $\pi_{t\gb}$. The sampling mechanism is assumed to be independent of $(\rcov_{ti}, \rfull_{ti})$ and $(\Xb_{ti}, Y_{ti}, \Zb_{ti})$.
\end{apxass}

Then, Assumptions~\ref{ass:super} and~\ref{ass:nonresponse} and Assumption~\ref{ass:poisson-pop} together characterize the distribution of subgroup-wise sample sizes of Poisson sampling from the entire population. Throughout, $\mathcal{P}(\lambda)$ denotes the Poisson distribution with rate $\lambda>0$.

\begin{apxthm}[Distribution of subgroup-wise sample sizes]  \label{thm:poisson-pop}
    For each $t \in \{1,2\}$ and $\gb \in \kg$, let $N_{t\gb}$ and $n_{t\gb}$ denote the population size and sample size of subgroup $\gb$ at time $t$, respectively, and assume that $\lim_{N_{t\gb} \rightarrow \infty} N_{t\gb}\pi_{t\gb} = m_{t\gb}$, where $m_{t\gb} > 0$.
    Then, under Assumptions~\ref{ass:super} and~\ref{ass:nonresponse} and Assumption~\ref{ass:poisson-pop}, as $N_{1\gb}, N_{2\gb} \rightarrow \infty$ for all $\gb$, we have
    $n_1 \convd \mathcal{P}(m_1^y)$ and $n_2 \convd \mathcal{P}(m_2^{\xb\zb})$, where $m_1^y := \sum_{\gb} m_{1\gb} a_{1\gb}^y$ and $m_2^{xz} := \sum_{\gb} m_{2\gb} a_{2\gb}^{xz}$, and
    \[
    (n_{t\gb})_{\gb} \mid n_t \convd \mathrm{Mult}(n_t, (q_{t\gb})_{\gb}), \qquad t \in \{1, 2\},
    \]
    where $q_{1\gb} := m_{1\gb}a_{1\gb}^y/m_1^y$ and $q_{2\gb} := m_{2\gb}a_{2\gb}^{xz}/m_2^{xz}$ for each $\gb$.
\end{apxthm}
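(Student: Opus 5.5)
The proof has two parts: a Poisson limit for the total sample size, and the standard fact that independent Poisson counts, conditioned on their sum, are multinomial. Fix $t$ and a subgroup $\gb$.

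\emph{Step 1: distribution of a single subgroup count.} Under Assumption~\ref{ass:poisson-pop}, each of the $N_{t\gb}$ units with $\Gb_{ti}=\gb$ is sampled independently with probability $\pi_{t\gb}$. The sampling is independent of $(\rcov_{ti},\rfull_{ti})$, and by Assumption~\ref{ass:nonresponse} the response indicators are independent across units with $\Pb(\rfull_{1i}=1)=a^y_{1\gb}$ and $\Pb(\rcov_{2i}=1)=a^{xz}_{2\gb}$. Hence unit $i$ enters $\mathcal D_1$ (resp.\ $\mathcal D_2$) independently with probability $\pi_{1\gb}a^y_{1\gb}$ (resp.\ $\pi_{2\gb}a^{xz}_{2\gb}$). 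So exactly,
\[
n_{1\gb}\sim B(N_{1\gb},\pi_{1\gb}a^y_{1\gb}), \qquad n_{2\gb}\sim B(N_{2\gb},\pi_{2\gb}a^{xz}_{2\gb}),
\]
and these counts are mutually independent across $\gb$, because they are determined by disjoint sets of units.

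\emph{Step 2: Poisson limit.} Since $N_{t\gb}\pi_{t\gb}\to m_{t\gb}$, the law of rare events gives $n_{1\gb}\convd\mathcal P(m_{1\gb}a^y_{1\gb})$. This can be checked directly through the probability generating function
\[
(1-\pi_{1\gb}a^y_{1\gb}(1-s))^{N_{1\gb}}\to \exp\bigl(-m_{1\gb}a^y_{1\gb}(1-s)\bigr),
\]
and likewise for $t=2$. The index set of $\gb$ is finite and the counts are independent, so the vector $(n_{t\gb})_\gb$ converges jointly to a vector of independent Poisson variables. This is immediate from the product of pgfs. Summing, with the continuous mapping theorem, gives $n_1\convd\mathcal P(m_1^y)$ and $n_2\convd\mathcal P(m_2^{xz})$.

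\emph{Step 3: conditional multinomial limit.} This is the step needing some care, because it concerns conditional laws. Everything is integer-valued, so convergence in distribution is equivalent to pointwise convergence of probability mass functions. For any fixed vector $(k_\gb)$ with $\sum_\gb k_\gb=k$ and $\Pb(\mathcal P(m_1^y)=k)>0$, write
\[
\Pb\bigl((n_{1\gb})_\gb=(k_\gb)\mid n_1=k\bigr)=\frac{\Pb\bigl((n_{1\gb})_\gb=(k_\gb)\bigr)}{\Pb(n_1=k)}.
\]
The numerator and denominator converge separately to the corresponding Poisson masses by Step 2, and the denominator limit is positive. The limiting ratio is the classical computation
\[
\frac{\prod_\gb e^{-\lambda_\gb}\lambda_\gb^{k_\gb}/k_\gb!}{e^{-\sum\lambda_\gb}(\sum\lambda_\gb)^k/k!}=\binom{k}{(k_\gb)}\prod_\gb\Bigl(\frac{\lambda_\gb}{\sum_{\gb'}\lambda_{\gb'}}\Bigr)^{k_\gb},
\]
with $\lambda_\gb=m_{1\gb}a^y_{1\gb}$. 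This is the $\mathrm{Mult}(k,(q_{1\gb})_\gb)$ mass, since $q_{1\gb}=m_{1\gb}a^y_{1\gb}/m_1^y$.

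The same computation with $\lambda_\gb=m_{2\gb}a^{xz}_{2\gb}$ handles $t=2$. This gives $(n_{t\gb})_\gb\mid n_t\convd\mathrm{Mult}(n_t,(q_{t\gb})_\gb)$ in the sense of convergence of conditional laws for each fixed value of $n_t$.
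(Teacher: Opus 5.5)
Your proposal is correct and follows essentially the same route as the paper: obtain the exact law $n_{1\gb}\sim B(N_{1\gb},\pi_{1\gb}a^y_{1\gb})$ (and its $t=2$ analogue) by combining Poisson sampling with the response probabilities, apply the Poisson limit theorem within each subgroup, and use independence across subgroups to sum the counts. Your Step~3, which checks the conditional multinomial limit through convergence of the ratio of probability mass functions, writes out in detail a step the paper only asserts; it is the standard computation and adds no new idea.
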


\begin{proof}[Proof of Theorem~\ref{thm:poisson-pop}]
For each $t \in \{1,2\}$ and $\gb \in \kg$, let $n_{t\gb}^0$ denote the number of sampled units in subgroup $\gb$ before nonresponse. Since units in subgroup $\gb$ are independently included with probability $\pi_{t\gb}$, we have $n_{t\gb}^0 \sim B(N_{t\gb}, \pi_{t\gb})$. Moreover, since $n_{1\gb} \mid n_{1\gb}^0 \sim B(n_{1\gb}^0, a_{1\gb}^y)$ and $n_{2\gb} \mid n_{2\gb}^0 \sim B(n_{2\gb}^0, a_{2\gb}^{xz})$ hold independently across subgroups, it follows that, for each $\gb$,
\[
n_{1\gb} \sim B(N_{1\gb}, \pi_{1\gb}a_{1\gb}^y), \qquad n_{2\gb} \sim B(N_{2\gb}, \pi_{2\gb}a_{2\gb}^{xz}).
\]
Then, by the Poisson limit theorem and the assumption $\lim_{N_{t\gb} \rightarrow \infty} N_{t\gb}\pi_{t\gb} = m_{t\gb}$, we have $n_{1\gb} \convd \mathcal P(m_{1\gb}a_{1\gb}^y)$ and $n_{2\gb} \convd \mathcal P(m_{2\gb}a_{2\gb}^{xz})$ for each $\gb$.
Since $\{n_{t\gb}\}_{\gb}$ are independent, we obtain $n_1 \convd \mathcal{P}(m_1^y)$ and $n_2 \convd \mathcal{P}(m_2^{\xb\zb})$, where $m_1^y := \sum_{\gb} m_{1\gb} a_{1\gb}^y$ and $m_2^{xz} := \sum_{\gb} m_{2\gb} a_{2\gb}^{xz}$, and
\[
(n_{t\gb})_{\gb} \mid n_t \convd \mathrm{Mult}(n_t, (q_{t\gb})_{\gb}), \qquad t \in \{1, 2\},
\]
where $q_{1\gb} := m_{1\gb}a_{1\gb}^y/m_1^y$ and $q_{2\gb} := m_{2\gb}a_{2\gb}^{xz}/m_2^{xz}$ for each $\gb$.
\end{proof}

Theorem~\ref{thm:poisson-pop} implies that, when subgroup-wise population sizes are sufficiently large, for each $t \in \{1,2\}$ and conditional on the realized sample size $n_t$, the sampled subgroup labels may be modeled as i.i.d.\ draws from a categorical distribution.
Specifically, under Assumptions~\ref{ass:super} and~\ref{ass:nonresponse} and Assumption~\ref{ass:poisson-pop}, and conditional on $n_1$ and $n_2$, the survey datasets $\done$ and $\dtwo$ may be regarded as i.i.d.\ samples:
\begin{align*} 
(\Xb_{1i}, Y_{1i}, \Zb_{1i}, \Gb_{1i}) &\iid P_{(\Xb_1, Y_1, \Zb_1) \mid \Gb_1} \times P_{\Gb_{1} \mid \rfull_{1} = 1}, \quad i \in \ione,  \\
(\Xb_{2i}, \Zb_{2i}, \Gb_{2i}) &\iid P_{(\Xb_2, \Zb_2) \mid \Gb_2} \times P_{\Gb_{2} \mid \rcov_{2} = 1}, \quad i \in \itwo,
\end{align*}
where $P_{\Gb_{1} \mid \rfull_{1} = 1} := \mathrm{Cat}((q_{1\gb})_{\gb})$ and $P_{\Gb_{2} \mid \rcov_{2} = 1} := \mathrm{Cat}((q_{2\gb})_{\gb})$, with $q_{1\gb}$ and $q_{2\gb}$ specified in Theorem~\ref{thm:poisson-pop}.
Therefore, Poisson sampling from the entire population reduces to the PPSWOR-within-a-single-stratum design, corresponding to Section~\ref{sec-survey-prob} with $H=1$.

\subsection{Poisson Sampling Within Strata} \label{app:poisson-strata}

Under Poisson sampling, units are sampled independently with inclusion probabilities that may depend on the demographic subgroup but do not depend on stratum membership. Consequently, stratification plays no role: whether Poisson sampling is applied within each stratum or to the entire population yields the same sampling distribution. Therefore, Poisson sampling within strata is equivalent to Poisson sampling from the entire population.

\subsection{Poisson Sampling Within Selected PSUs} \label{app:poisson-psu}

Here, the finite population is partitioned into $J$ primary sampling units (PSUs), indexed by $j = 1, \dots, J$.
We begin by formally specifying the sampling design of Poisson sampling within selected PSUs in the following assumption.

\begin{apxass}[Poisson sampling within selected PSUs] \label{ass:poisson-psu}
For each $t \in \{1,2\}$, let $\pi_{t\gb} \in (0,1]$ denote a pre-specified inclusion probability for units in subgroup $\gb$.
Sampling proceeds in two stages: we first sample PSUs and then sample individuals within the selected PSUs.

\begin{enumerate}
    \item Let $M_{tj}$ denote the measure of size (MOS) of PSU $j$ at time $t$, with $\sum_{j=1}^J M_{tj} = 1$. We then select $J_0$ PSUs via PPSWOR using $\{M_{tj}\}_{j=1}^J$, so that the inclusion probability of PSU $j$ is $J_0 M_{tj}$. Let $\mathcal J_t \subset [J]$ denote the set of PSUs selected at time $t$.

    \item To achieve the pre-specified inclusion probability, for each selected PSU $j \in \mathcal J_t$ we independently sample units in subgroup $\gb$ with conditional inclusion probability $\pi_{t\gb}/(J_0 M_{tj})$. The sampling mechanism is assumed to be independent of $(\rcov_{ti}, \rfull_{ti})$ and $(\Xb_{ti}, Y_{ti}, \Zb_{ti})$. Then, the overall inclusion probability for a unit in subgroup $\gb$ in PSU $j$ becomes
    \begin{align*}
    &\Pb(\text{PSU } j \text{ is selected}) \cdot \Pb(\text{a unit in subgroup } \gb \text{ in PSU } j \text{ is selected} \mid \text{PSU } j \text{ is selected}) \\
    &= J_0 M_{tj} \cdot \frac{\pi_{t\gb}}{J_0 M_{tj}} = \pi_{t\gb}.
    \end{align*}
\end{enumerate}
\end{apxass}

Then, under Assumptions~\ref{ass:super} and~\ref{ass:nonresponse} and Assumption~\ref{ass:poisson-psu}, conditional on the PSU selection, we can characterize the distribution of subgroup-wise sample sizes. 

\begin{apxthm}[Distribution of subgroup-wise sample sizes] \label{thm:poisson-psu}
For each $t \in \{1,2\}$, subgroup $\gb \in \kg$, and selected PSU $j \in \mathcal J_t$, let $N_{tj\gb}$ and $n_{tj\gb}$ denote the population size and sample size of subgroup $\gb$ in PSU $j$ at time $t$, respectively.
Moreover, define $n_{t\gb} := \sum_{j \in \mathcal J_t} n_{tj\gb}$ and $n_t := \sum_{\gb} n_{t\gb}$.
Assume that $\lim_{N_{tj\gb} \to \infty} N_{tj\gb}\pi_{t\gb} = m_{tj\gb}$ for each $t$, $j$, and $\gb$.
Then, under Assumptions~\ref{ass:super} and~\ref{ass:nonresponse} and Assumption~\ref{ass:poisson-psu}, as $N_{tj\gb} \to \infty$ for all $t$, $j$, and $\gb$ and conditional on the selected PSUs $\mathcal J_t$, we have $n_1 \convd \mathcal P(m_1^y)$ and $n_2 \convd \mathcal P(m_2^{xz})$, where $m_1^y := \sum_{\gb}\sum_{j \in \mathcal J_1} m_{1j\gb}a_{1\gb}^y$ and $m_2^{xz} := \sum_{\gb}\sum_{j \in \mathcal J_2} m_{2j\gb}a_{2\gb}^{xz}$, and
\[
(n_{t\gb})_{\gb} \mid n_t \convd \mathrm{Mult}(n_t, (q_{t\gb})_\gb), \qquad t \in \{1, 2\},
\]
where $q_{1\gb} := \sum_{j \in \mathcal J_1} m_{1j\gb}a_{1\gb}^y/m_1^y$ and $q_{2\gb} := \sum_{j \in \mathcal J_2} m_{2j\gb}a_{2\gb}^{xz}/m_2^{xz}$ for each $\gb$.
\end{apxthm}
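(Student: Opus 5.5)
The plan mirrors the proof of Theorem~\ref{thm:poisson-pop}, done conditionally on the PSU selection. I fix $t$ and condition on $\mathcal J_t$ throughout. Given the selected PSUs, Assumption~\ref{ass:poisson-psu} says that each unit in subgroup $\gb$ of PSU $j \in \mathcal J_t$ is included independently with probability $\pi_{t\gb}/(J_0 M_{tj})$. Assumption~\ref{ass:nonresponse} makes the response indicators independent across units and independent of the sampling. Hence, conditional on $\mathcal J_t$, the counts $n_{tj\gb}$ are independent across $(j,\gb)$.

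Before nonresponse, the count in cell $(j,\gb)$ is $n^0_{tj\gb} \sim B(N_{tj\gb}, \pi_{t\gb}/(J_0M_{tj}))$. Thinning by the response indicator gives $n_{1j\gb} \sim B(N_{1j\gb}, \pi_{1\gb}a^y_{1\gb}/(J_0M_{1j}))$, and the analogous expression with $a^{xz}_{2\gb}$ at $t=2$.

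The one delicate point is the normalization. The hypothesis is stated as $N_{tj\gb}\pi_{t\gb} \to m_{tj\gb}$, while the conditional inclusion rate carries the extra factor $1/(J_0 M_{tj})$. This factor is fixed given the design, so I would absorb it into the limit. Either I read $m_{tj\gb}$ as the limit of the conditional expected count $N_{tj\gb}\pi_{t\gb}/(J_0M_{tj})$, or I note that multiplying every $m_{tj\gb}$ in PSU $j$ by the same constant leaves the formulas intact after relabeling. I would state this convention explicitly so that it matches the stated $m^y_1$ and $q_{t\gb}$.

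With that settled, the Poisson limit theorem gives, for each cell, $n_{1j\gb} \convd \mathcal P(m_{1j\gb}a^y_{1\gb})$, and similarly at $t=2$. Conditional independence then carries this to the joint law. Sums of independent Poissons are Poisson, so $n_{t\gb} = \sum_{j\in\mathcal J_t} n_{tj\gb}$ converges to a Poisson with rate $\sum_j m_{tj\gb}a_{t\gb}$, and $n_t$ converges to $\mathcal P(m^y_1)$ or $\mathcal P(m^{xz}_2)$. The vector $(n_{t\gb})_\gb$ converges jointly to independent Poissons by the continuous mapping theorem.

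The last step is the standard fact that independent Poissons, conditioned on their sum, are multinomial with probabilities proportional to their rates. This yields the stated $\mathrm{Mult}(n_t,(q_{t\gb})_\gb)$ with the claimed $q_{t\gb}$. To pass conditioning through the limit, I note that the conditional pmf $\Pb((n_{t\gb})_\gb=\mathbf k\mid n_t=k)$ is a ratio of joint pmfs on a lattice. Pointwise convergence of lattice pmfs, together with a strictly positive limiting denominator, gives convergence of the conditional pmfs for each fixed $k$.
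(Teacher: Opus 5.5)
Your argument follows the same route as the paper's proof: binomial cell counts after thinning by response, the Poisson limit theorem in each cell $(j,\gb)$, independence across cells, superposition over $j\in\mathcal J_t$, and conditioning independent Poissons on their sum to get a multinomial.

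The one real difference is the normalization, and there you are more careful than the paper. Its proof writes $n_{tj\gb}\sim B(N_{tj\gb},\pi_{t\gb}a_{t\gb})$, with $a_{t\gb}$ standing for $a^y_{1\gb}$ or $a^{xz}_{2\gb}$. That uses the unconditional inclusion probability. Given $\mathcal J_t$, the success probability is $\pi_{t\gb}a_{t\gb}/(J_0M_{tj})$, as you say.

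Your first fix, defining $m_{tj\gb}$ as the limit of $N_{tj\gb}\pi_{t\gb}/(J_0M_{tj})$, is the right one. Present it as a correction of the hypothesis, not as a reading of it.

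Drop the second option. The factor $1/(J_0M_{tj})$ varies with $j$, while $m^y_1$, $m^{xz}_2$ and $q_{t\gb}$ all sum over $j\in\mathcal J_t$, so it does not cancel. Under the literal hypothesis $N_{tj\gb}\pi_{t\gb}\to m_{tj\gb}$:
\begin{itemize}
\item the limiting rate of $n_t$ is $\sum_{\gb}\sum_{j\in\mathcal J_t}m_{tj\gb}a_{t\gb}/(J_0M_{tj})$, which exceeds the stated rate whenever some selected PSU with positive $m_{tj\gb}$ has $J_0M_{tj}<1$;
\item the limiting $q_{t\gb}$ is proportional to $\sum_{j\in\mathcal J_t}m_{tj\gb}a_{t\gb}/(J_0M_{tj})$, which matches the stated one only in special cases such as $M_{tj}$ constant over $\mathcal J_t$.
\end{itemize}
So the statement as written needs your corrected definition of $m_{tj\gb}$ to be true.

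Your ratio-of-lattice-pmfs argument for passing the conditioning through the limit also supplies a step the paper only asserts. It needs the limiting total rate to be positive.
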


\begin{proof}
The proof is essentially the same as that of Theorem~\ref{thm:poisson-pop}. For each $\gb \in \kg$ and $j \in \mathcal J_t$, we have $n_{1j\gb} \sim B(N_{1j\gb}, \pi_{1\gb}a_{1\gb}^y)$ and $n_{2j\gb} \sim B(N_{2j\gb}, \pi_{2\gb}a_{2\gb}^{xz})$. Then, by the Poisson limit theorem and the assumption $\lim_{N_{tj\gb} \to \infty} N_{tj\gb}\pi_{t\gb} = m_{tj\gb}$, we have $n_{1j\gb} \convd \mathcal P(m_{1j\gb}a_{1\gb}^y)$ and $n_{2j\gb} \convd \mathcal P(m_{2j\gb}a_{2\gb}^{xz})$. Moreover, since $\{n_{tj\gb}\}_{j,\gb}$ are independent, we obtain
\[
n_{1\gb} \convd \mathcal P\big(\sum_{j \in \mathcal J_1} m_{1j\gb}a_{1\gb}^y\big), \quad n_{2\gb} \convd \mathcal P\big(\sum_{j \in \mathcal J_2} m_{2j\gb}a_{2\gb}^{xz}\big), \qquad \forall \gb \in \kg,
\]
and $n_1 \convd \mathcal P(m_1^y)$ and $n_2 \convd \mathcal P(m_2^{xz})$, where
\[
m_1^y := \sum_{\gb}\sum_{j \in \mathcal J_1} m_{1j\gb}a_{1\gb}^y, \qquad m_2^{xz} := \sum_{\gb}\sum_{j \in \mathcal J_2} m_{2j\gb}a_{2\gb}^{xz}.
\]
Therefore, for each $t \in \{1, 2\}$ we obtain $(n_{t\gb})_{\gb} \mid n_t \convd \mathrm{Mult}(n_t, (q_{t\gb})_\gb)$, where 
\[
q_{1\gb} := \sum_{j \in \mathcal J_1} m_{1j\gb}a_{1\gb}^y/m_1^y \quad \text{and}  \quad q_{2\gb} := \sum_{j \in \mathcal J_2} m_{2j\gb}a_{2\gb}^{xz}/m_2^{xz}
\]
for each $\gb \in \kg$.
\end{proof}

Theorem~\ref{thm:poisson-psu} implies that, when population sizes are sufficiently large in each subgroup--PSU cell, conditional on the PSU selection and the realized sample sizes $n_1$ and $n_2$, the sampled subgroup labels may be modeled as i.i.d.\ draws. Consequently, as in Appendix~\ref{app:poisson-pop}, the survey datasets $\done$ and $\dtwo$ may be regarded as i.i.d.\ samples, and thus Poisson sampling within selected PSUs again reduces to the PPSWOR-within-a-single-stratum design.

\subsection{PPSWOR Within Selected PSUs} \label{app:ppswor-psu}

As in Appendix~\ref{app:poisson-psu}, the finite population is partitioned into $J$ primary sampling units (PSUs), indexed by $j = 1, \dots, J$. We begin by formally specifying the PPSWOR-within-selected-PSUs sampling design in the following assumption.

\begin{apxass}[PPSWOR within selected PSUs] \label{ass:ppswor-psu}
For each $t \in \{1,2\}$, the survey is designed to produce a self-weighting sample within each subgroup by targeting a pre-specified inclusion probability $\pi_{t\gb} \in (0,1]$ for all units in subgroup $\gb$. Sampling proceeds in two stages: we first select PSUs and then sample individuals within the selected PSUs.

\begin{enumerate}
    \item Let $M_{tj}$ denote the measure of size (MOS) of PSU $j$ at time $t$, with $\sum_{j=1}^J M_{tj} = 1$. We then select $J_0$ PSUs via PPSWOR using $\{M_{tj}\}_{j=1}^J$, so that the inclusion probability of PSU $j$ is $J_0 M_{tj}$. Let $\mathcal J_t \subset [J]$ denote the set of PSUs selected at time $t$.

    \item Let $N_{tj\gb}$ denote the population size of subgroup $\gb$ in PSU $j$. Within each selected PSU $j \in \mathcal J_t$, we sample $m_{tj}$ units via PPSWOR, using inclusion probabilities as size measures. Sampling is assumed to be independent of $(\rcov_{ti}, \rfull_{ti})$ and $(\Xb_{ti}, Y_{ti}, \Zb_{ti})$. The overall selection probability for an individual in subgroup $\gb$ in PSU $j$ is then
    \begin{align*}
    &\Pb(\text{PSU } j \text{ is selected}) \cdot \Pb(\text{a unit in subgroup } \gb \text{ in PSU } j \text{ is selected} \mid \text{PSU } j \text{ is selected}) \\
    &= J_0 M_{tj} \cdot m_{tj} \frac{\pi_{t\gb}}{\sum_{\gb'} N_{tj\gb'}\pi_{t\gb'}}.
    \end{align*}
    The within-PSU sample size $m_{tj}$ is chosen to satisfy the self-weighting condition, namely $m_{tj} = \sum_{\gb} N_{tj\gb} \pi_{t\gb}/(J_0M_{tj})$ for each selected PSU $j \in \mathcal J_t$.
\end{enumerate}
\end{apxass}

As in Section~\ref{sec-survey-ass}, the subgroup composition can be approximated by a with-replacement model. Specifically, for each $j \in \mathcal J_t$, the subgroup probabilities are given by $\tilde q_{tj\gb} := N_{tj\gb}\pi_{t\gb}/\sum_{\gb'} N_{tj\gb'}\pi_{t\gb'}$, which motivates the multinomial approximation in Assumption~\ref{ass:multinomial-ppswor-psu}.

\begin{apxass}[Multinomial approximation for subgroup-wise sample counts] \label{ass:multinomial-ppswor-psu}
For each $t \in \{1,2\}$ and $j \in \mathcal J_t$, letting $m_{tj\gb}$ denote the number of sampled units in subgroup $\gb$ within the selected PSU $j \in \mathcal J_t$ at time $t$ (before nonresponse), we assume $(m_{tj\gb})_{\gb} \sim \mathrm{Mult}(m_{tj}, (\tilde q_{tj\gb})_{\gb})$.
\end{apxass}

The following theorem characterizes the distribution of within-PSU subgroup-wise sample sizes under Assumptions~\ref{ass:super} and~\ref{ass:nonresponse} and Assumptions~\ref{ass:ppswor-psu} and~\ref{ass:multinomial-ppswor-psu}.

\begin{apxthm}[The distribution of within-PSU subgroup-wise sample sizes] \label{thm:ppswor-psu}
    For each $t \in \{1,2\}$ and $j \in \mathcal J_t$, let $\mathcal D_{tj}$ denote the subset of $\mathcal D_t$ in PSU $j$, and let $n_{tj} = |\mathcal D_{tj}|$. For each $\gb \in \kg$, let $n_{tj\gb}$ denote the sample size of subgroup $\gb$ in the selected PSU $j$ at time $t$. Then, under Assumptions~\ref{ass:super} and~\ref{ass:nonresponse} and Assumptions~\ref{ass:ppswor-psu} and~\ref{ass:multinomial-ppswor-psu}, we have
    \[
    (n_{tj\gb})_{\gb} \mid n_{tj} \sim \mathrm{Mult}(n_{tj}, (q_{tj\gb})_{\gb}), \qquad t \in \{1, 2\}, \ j \in \mathcal J_t,
    \]
    where $q_{1j\gb} = N_{1j\gb}\pi_{1\gb}a_{1\gb}^y/\sum_{\gb'} N_{1j\gb'}\pi_{1\gb'}a_{1\gb'}^y$ and $q_{2j\gb} = N_{2j\gb}\pi_{2\gb}a_{2\gb}^{xz}/\sum_{\gb'} N_{2j\gb'}\pi_{2\gb'}a_{2\gb'}^{xz}$.
\end{apxthm}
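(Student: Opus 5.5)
This is the PSU analogue of Theorem~1, so I will follow that proof with strata replaced by selected PSUs, working conditionally on $\mathcal J_t$ throughout. Fix $t$ and a selected PSU $j \in \mathcal J_t$, and condition on the PSU selection. Once $\mathcal J_t$ is fixed, the within-PSU draw is a PPSWOR of size $m_{tj}$ with single-draw probabilities proportional to $\pi_{t\gb}$. By Assumption~\ref{ass:multinomial-ppswor-psu}, the pre-nonresponse counts satisfy $(m_{tj\gb})_\gb \sim \mathrm{Mult}(m_{tj}, (\tilde q_{tj\gb})_\gb)$.

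Next I will thin these counts by nonresponse. By Assumption~\ref{ass:nonresponse}, the response indicators are independent across units and have probabilities depending only on $\gb$. They are also independent of the sampling, as stated in Assumption~\ref{ass:ppswor-psu}. Hence, conditional on $(m_{tj\gb})_\gb$, the retained counts are independent with
\[
n_{1j\gb} \sim B(m_{1j\gb}, a^y_{1\gb}), \qquad n_{2j\gb} \sim B(m_{2j\gb}, a^{xz}_{2\gb}).
\]
Assumption~\ref{ass:super} is only needed to ensure that these mechanisms do not interact with the unit-level variables $(\Xb_{ti}, Y_{ti}, \Zb_{ti})$.

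The key step is the multinomial thinning identity. Write the multinomial as $m_{tj}$ i.i.d.\ categorical draws, each labelled $\gb$ with probability $\tilde q_{tj\gb}$, and attach to each draw an independent Bernoulli$(a_{t\gb})$ retention mark. Each draw then falls independently into one of the cells ``retained with label $\gb$'' or ``dropped''. The cell probabilities are $\tilde q_{tj\gb}a_{t\gb}$ and $1-\sum_{\gb'}\tilde q_{tj\gb'}a_{t\gb'}$, so the vector of retained counts together with the dropped count is $\mathrm{Mult}(m_{tj}, \cdot)$.

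Conditioning a multinomial on the total of a subset of its cells gives a multinomial over that subset with renormalized probabilities. Applying this to the retained cells yields $(n_{tj\gb})_\gb \mid n_{tj} \sim \mathrm{Mult}(n_{tj}, (q_{tj\gb})_\gb)$, where
\[
q_{tj\gb} = \frac{\tilde q_{tj\gb}a_{t\gb}}{\sum_{\gb'}\tilde q_{tj\gb'}a_{t\gb'}}.
\]
Here $a_{t\gb} = a^y_{1\gb}$ for $t=1$ and $a_{t\gb} = a^{xz}_{2\gb}$ for $t=2$.

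Finally, substituting $\tilde q_{tj\gb} = N_{tj\gb}\pi_{t\gb}/\sum_{\gb'}N_{tj\gb'}\pi_{t\gb'}$ cancels the common denominator and gives exactly the stated $q_{1j\gb}$ and $q_{2j\gb}$. The only subtle point is the bookkeeping around conditioning on $\mathcal J_t$. The stage-one PPSWOR affects only $m_{tj}$ through $M_{tj}$, and this factor cancels from all the proportions, so the conclusion holds conditionally on $\mathcal J_t$ as claimed. I do not expect any genuine obstacle beyond stating the thinning–conditioning argument cleanly.
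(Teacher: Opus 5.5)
Your proposal is correct and follows the paper's route. The paper gives no separate proof of this PSU result; it is the direct analogue of Theorem~1, whose proof is exactly what you describe: binomial thinning of the multinomial pre-nonresponse counts (here Assumption~\ref{ass:multinomial-ppswor-psu}), then conditioning on the retained total, with the difference that you spell out the thinning-and-conditioning step the paper leaves implicit.
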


Theorem~\ref{thm:ppswor-psu} implies that, conditional on the PSU selection and the within-PSU realized sample sizes, the sampled subgroup labels may be modeled as i.i.d.\ draws from a categorical distribution within each selected PSU.
For each selected PSU $j \in \mathcal J_t$, let $\mathcal I_{1j} \subset [N_1]$ and $\mathcal I_{2j} \subset [N_2]$ denote the index sets corresponding to the datasets $\mathcal D_{1j}$ and $\mathcal D_{2j}$, respectively.
Then, under Assumptions~\ref{ass:super} and~\ref{ass:nonresponse} and Assumptions~\ref{ass:ppswor-psu} and~\ref{ass:multinomial-ppswor-psu}, conditional on $\mathcal J_1$ and $\mathcal J_2$ and on $n_{1j}$ and $n_{2j}$, the survey datasets $\mathcal D_{1j}$ and $\mathcal D_{2j}$ may be regarded as i.i.d.\ samples:
\begin{align*} 
(\Xb_{1i}, Y_{1i}, \Zb_{1i}, \Gb_{1i}) &\iid P_{(\Xb_1, Y_1, \Zb_1) \mid \Gb_1} \times P_{\Gb_{1j} \mid \rfull_{1j} = 1}, \quad i \in \mathcal I_{1j}, \ \ j \in \mathcal J_1,  \\
(\Xb_{2i}, \Zb_{2i}, \Gb_{2i}) &\iid P_{(\Xb_2, \Zb_2) \mid \Gb_2} \times P_{\Gb_{2j} \mid \rcov_{2j} = 1}, \quad i \in \mathcal I_{2j}, \ \ j \in \mathcal J_2, 
\end{align*}
where $P_{\Gb_{1j} \mid \rfull_{1j} = 1} := \mathrm{Cat}((q_{1j\gb})_{\gb})$ and $P_{\Gb_{2j} \mid \rcov_{2j} = 1} := \mathrm{Cat}((q_{2j\gb})_{\gb})$, with $q_{1j\gb}$ and $q_{2j\gb}$ specified in Theorem~\ref{thm:ppswor-psu}. Therefore, conditional on the PSU selection, the PPSWOR-within-selected-PSUs design reduces to the PPSWOR-within-strata design, corresponding to the setting of Section~\ref{sec-survey-prob} with $H = |\mathcal J_1|$.

\section{Extension to Geographic Sampling Domains} \label{app:geo}

In Section~\ref{sec-survey-1-1}, we mentioned that our analysis can also be applied to complex surveys that adopt geographic sampling domains. Here, we describe how to apply our method to surveys designed to yield self-weighting samples within each geographic stratum, such as the U.S.\ Current Population Survey (CPS) \citep{cps}.

In this case, there is no need to decompose the categorical covariates $\Zbt_{ti}$ into non-demographic and demographic variables. Instead, letting $H$ denote the number of geographic strata, we define $S_{ti} \in [H]$ as the stratum membership of the $i$th unit at time $t$, and treat $S_{ti}$ as fixed by design.
Since sampling domains are selected to capture key population heterogeneity, the geographic stratum indicator $S_{ti}$ is a particularly informative covariate for prediction. Therefore, we use $(\Xb_{ti}, \Zbt_{ti}, S_{ti})$ as covariates, and the datasets $\done$ and $\dtwo$ have the following forms:
\begin{align*}
    \done = \{(\Xb_{1i}, Y_{1i}, \Zbt_{1i}, S_{1i}, \wfull_{1i})\}_{i \in \ione}, \qquad
    \dtwo = \{(\Xb_{2i}, \Zbt_{2i}, S_{2i}, \wcov_{2i})\}_{i \in \itwo}.
\end{align*}

With these definitions, we impose several working assumptions and apply WCP in essentially the same way as in the demographic sampling domain case. In particular, the categorical covariates $\Zbt_{ti}$ play the role of the non-demographic categorical covariates $\Zb_{ti}$, and the stratum membership $S_{ti}$ plays the role of the demographic subgroup variable $\Gb_{ti}$.

\begin{apxass}[Superpopulation framework within geographic strata] \label{ass:app:geo-super}
    For each \(t \in \{1, 2\}\) and \(h \in [H]\), for all units $i \in [N_t]$ with $S_{ti}= h$, the variables $(\Xb_{ti}, Y_{ti}, \Zbt_{ti})$ are assumed to be independently drawn from an underlying distribution
    \begin{equation*}
        P_{(\Xb_t, Y_t, \Zbt_t) \mid S_t = h} 
        = P_{Y_t \mid \Xb_t, \Zbt_t, S_t = h} \times P_{\Xb_t \mid \Zbt_t, S_t = h} \times P_{\Zbt_t \mid S_t = h},
    \end{equation*}
    where the right-hand side is a factorization of the distribution on the left-hand side.
\end{apxass}


\begin{apxass}[Nonresponse mechanism] \label{ass:app:geo-nonresponse}
    The response indicators \((\rcov_{ti}, \rfull_{ti})\) are independent across units and depend only on \(S_{ti}\).
    Specifically, for each $t \in \{1, 2\}$ and $h \in [H]$, for any unit \(i \in [N_t]\) with \(S_{ti} = h\), we have
    $\Pb(\rcov_{ti} = 1) = a_{th}^{xz}$, $\Pb(\rfull_{ti} = 1) = a_{th}^y$, and
    \((\rcov_{ti}, \rfull_{ti}) \indep (\Xb_{ti}, Y_{ti}, \Zbt_{ti})\),
    where $a_{th}^{xz}, a_{th}^y \in (0, 1]$ are unknown parameters.
\end{apxass}

As before, we consider the six simplified designs listed in Table~\ref{tab:simplified-designs}. Recall that, in the previous setting, all designs reduce to PPSWOR within strata, as shown in Appendix~\ref{app:reduction}. In the present setting, a similar reduction argument applies, but the six designs now reduce to two distinct designs. Specifically, the three Poisson sampling designs reduce to PPSWOR from the entire population, whereas PPSWOR within PSUs reduces to PPSWOR within strata; in this case, these two designs are distinct. Briefly, this is because the stratum membership variable $S_{ti}$ not only defines the sampling domain but is also used in the sampling design.

\begin{apxass}[Sampling designs] \label{ass:app:geo-design}
    We consider the following two designs. Under both designs, for each $t \in \{1, 2\}$, the survey is designed to yield a self-weighting sample within each geographic stratum by targeting a pre-specified inclusion probability $\pi_{th} \in (0, 1]$ for units in stratum $h$. Moreover, the sampling is independent of $(\rcov_{ti}, \rfull_{ti})$ and $(\Xb_{ti}, Y_{ti}, \Zbt_{ti})$.
    Let $N_{th}$ denote the population size of stratum $h$, which is treated as known.
    \begin{itemize}
        \item[(a)] {\upshape (PPSWOR from the entire population).} For each $t \in \{1, 2\}$, we sample $m_t$ units using PPSWOR, where the overall inclusion probability of an individual in stratum $h$ is given by $m_t \cdot (\pi_{th} / \sum_{h'=1}^H N_{th'}\pi_{th'})$. The sample size $m_t$ is chosen to satisfy the self-weighting assumption, namely $m_t = \sum_{h=1}^H N_{th} \pi_{th}$.
    
        \item[(b)] {\upshape (PPSWOR within strata).} For each $t \in \{1, 2\}$ and each stratum $h \in [H]$, we sample $m_{th}$ units using PPSWOR, where the overall inclusion probability of an individual is given by $m_{th} \cdot (\pi_{th} /  (N_{th}\pi_{th})) = m_{th}/N_{th}$ (because all units within the same stratum share the same inclusion probability, this is in fact SRSWOR within strata). The sample size $m_{th}$ is chosen to satisfy the self-weighting assumption, namely $m_{th} = N_{th} \pi_{th}$ for each $h$.
    \end{itemize}
\end{apxass}

For design (a) in Assumption~\ref{ass:app:geo-design}, because the population is much larger than the sample size, the stratum composition can be well approximated by a with-replacement model with stratum probabilities $\tilde q_{th} := N_{th}\pi_{th}/\sum_{h'=1}^H N_{th'}\pi_{th'}$ for each $h \in [H]$. This leads to the multinomial approximation in Assumption~\ref{ass:app:geo-multinomial}. 

\begin{apxass}[Multinomial approximation for stratum-wise sample counts] \label{ass:app:geo-multinomial}
    For design (a) in Assumption~\ref{ass:app:geo-design}, let $n_{th}^0$ denote the number of sampled units in stratum $h$ at time $t$ (before nonresponse). We assume $(n_{th}^0)_{h} \sim \mathrm{Mult}(m_t, (\tilde q_{th})_{h})$.
\end{apxass}

Due to nonresponse, the realized sample size within each stratum is random.
The following theorem characterizes the distribution of stratum-wise sample sizes under the assumptions introduced so far, for both designs in Assumption~\ref{ass:app:geo-design}.

\begin{apxthm}[The distribution of stratum-wise sample sizes] \label{thm:app:geo-multinomial}
    For each $t \in \{1, 2\}$ and $h \in [H]$, let $\mathcal D_{th}$ denote the subset of $\mathcal D_t$ in stratum $h$, and let $n_{th} := |\mathcal D_{th}|$. 
    \begin{itemize}
        \item[(a)] {\upshape (PPSWOR from the entire population).} Under Assumptions~\ref{ass:app:geo-super}, \ref{ass:app:geo-nonresponse}, \ref{ass:app:geo-design}(a), and \ref{ass:app:geo-multinomial}, we have $(n_{th})_h \mid n_t \sim \mathrm{Mult}(n_t, (q_{th})_h)$ for each $t \in \{1, 2\}$, where $q_{1h} = N_{1h}\pi_{1h}a_{1h}^y/\sum_{h=1}^H N_{1h}\pi_{1h}a_{1h}^y$ and $q_{2h} = N_{2h}\pi_{2h}a_{2h}^{xz}/\sum_{h=1}^H N_{2h}\pi_{2h}a_{2h}^{xz}$ for each $h \in [H]$.

        \item[(b)] {\upshape (PPSWOR within strata).} Under Assumptions~\ref{ass:app:geo-super}, \ref{ass:app:geo-nonresponse}, and \ref{ass:app:geo-design}(b), for each $h \in [H]$, we have $n_{1h} \sim \mathrm{Binomial}(N_{1h}\pi_{1h}, a_{1h}^y)$ and $n_{2h} \sim \mathrm{Binomial}(N_{2h}\pi_{2h}, a_{2h}^{xz})$.
    \end{itemize}
\end{apxthm}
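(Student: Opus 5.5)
The two parts call for different arguments, so I would handle them separately. Both rely on the elementary fact that binomial thinning of a multinomial vector, followed by conditioning on the total, gives again a multinomial with reweighted probabilities.

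For part (a), the plan mirrors the proof of Theorem~\ref{thm:multinomial}. Fix $t$.

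\begin{itemize}
\item By Assumption~\ref{ass:app:geo-multinomial}, the pre-nonresponse counts satisfy $(n_{th}^0)_h \sim \mathrm{Mult}(m_t,(\tilde q_{th})_h)$.
\item By Assumption~\ref{ass:app:geo-nonresponse}, response indicators are independent across units, have stratum-specific probabilities, and are independent of the sampling (Assumption~\ref{ass:app:geo-design}). Hence, conditionally on $(n_{th}^0)_h$, the counts $n_{1h}\sim B(n_{1h}^0,a_{1h}^y)$ at $t=1$ (complete cases, $\rfull_{1i}=1$) and $n_{2h}\sim B(n_{2h}^0,a_{2h}^{xz})$ at $t=2$ (covariate responders, $\rcov_{2i}=1$), independently across $h$.
\item Multinomial thinning: viewing each of the $m_t$ draws as landing in stratum $h$ and then surviving with probability $a_{th}$ gives $m_t$ i.i.d.\ categorical draws on $[H]\cup\{\text{dropped}\}$ with cell probabilities $\tilde q_{th}a_{th}$. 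Conditioning on the number of non-dropped draws $n_t$ yields $\mathrm{Mult}\bigl(n_t,(\tilde q_{th}a_{th}/\sum_{h'}\tilde q_{th'}a_{th'})_h\bigr)$.
\item Substituting $\tilde q_{th}=N_{th}\pi_{th}/\sum_{h'}N_{th'}\pi_{th'}$, the common denominator cancels, giving exactly $q_{1h}$ and $q_{2h}$ as stated.
\end{itemize}

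For part (b), within stratum $h$ exactly $m_{th}=N_{th}\pi_{th}$ units are sampled, a deterministic number fixed by the self-weighting condition. Each sampled unit independently responds with probability $a_{1h}^y$ (at $t=1$) or $a_{2h}^{xz}$ (at $t=2$), independently of the sampling. Summing $m_{th}$ independent Bernoulli indicators gives the stated binomial laws immediately; no multinomial approximation is needed, since the stratum counts are fixed by design.

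The main (and only mild) obstacle is justifying the thinning-then-conditioning step rigorously in (a). I would verify it by computing the probability generating function of the thinned multinomial, or directly by expressing the joint probability mass function of $(n_{th})_h$ together with the dropped count and dividing by the binomial law of $n_t$. I would also check that the independence of response from the sampling mechanism is what licenses treating each draw's survival as independent of its stratum assignment beyond the probability $a_{th}$.
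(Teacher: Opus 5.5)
Your proposal is correct and follows the paper's route. The paper gives no separate proof of this result, but its proof of Theorem~\ref{thm:multinomial} uses exactly your argument for part (a): conditionally binomial response counts are combined with the multinomial approximation and then conditioned on the total. Part (b) follows immediately, as you describe, from the fixed within-stratum sample size $m_{th}=N_{th}\pi_{th}$ together with responses that are independent across units and independent of the sampling.
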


We finally assume that the conditional distribution of the outcome given the covariates, including stratum membership, is invariant over time.


\begin{apxass}[Temporal invariance of the conditional distribution] \label{ass:app:geo-invariant} 
    For $P_{Y_1 \mid (\Xb_1, \Zbt_1, S_1)}$ and $P_{Y_2 \mid (\Xb_2, \Zbt_2, S_2)}$ defined in Assumption~\ref{ass:app:geo-super}, we assume that
    \[
    P_{Y_1 \mid (\Xb_1 = \xb, \Zbt_1 = \zbt, S_1 = h)} = P_{Y_2 \mid (\Xb_2 = \xb, \Zbt_2 = \zbt, S_2 = h)}, 
    \quad \forall \xb \in \mathbb R^d,\ \zbt \in \k,\ h \in [H].
    \]
\end{apxass}

\noindent
We now specify the problem setup under Assumptions~\ref{ass:app:geo-super}--\ref{ass:app:geo-invariant} for both designs.

\subsection{PPSWOR From the Entire Population} \label{app:geo-pop}

Under Assumptions~\ref{ass:app:geo-super}, \ref{ass:app:geo-nonresponse}, \ref{ass:app:geo-design}(a), and \ref{ass:app:geo-multinomial}, the training and target distributions for design (a) in Assumption~\ref{ass:app:geo-design} can be characterized explicitly.  
Although the stratum membership $S_{ti}$ is treated as fixed at the population level, Theorem~\ref{thm:app:geo-multinomial}(a) implies that, conditional on the realized sample size $n_t$, the stratum-wise sample counts $(n_{th})_h$ may be modeled as arising from i.i.d.\ draws from a categorical distribution for each $t \in \{1, 2\}$. 
Then, under Assumptions~\ref{ass:app:geo-super}, \ref{ass:app:geo-nonresponse}, \ref{ass:app:geo-design}(a), and \ref{ass:app:geo-multinomial}, and conditional on $n_1$ and $n_2$, the survey datasets $\done$ and $\dtwo$ may be regarded as i.i.d.\ samples:
\begin{align} 
(\Xb_{1i}, Y_{1i}, \Zbt_{1i}, S_{1i}) &\iid P_{(\Xb_1, Y_1, \Zbt_1) \mid S_1} \times P_{S_1 \mid \rfull_1 = 1}, \quad i \in \ione, \label{eqn:app:geo-pop-done} \\
(\Xb_{2i}, \Zbt_{2i}, S_{2i}) &\iid P_{(\Xb_2, \Zbt_2) \mid S_2} \times P_{S_{2} \mid \rcov_{2} = 1}, \quad \ i \in \itwo, \label{eqn:app:geo-pop-dtwo}
\end{align}
where $P_{(\Xb_1, Y_1, \Zbt_1) \mid S_1}$ and $P_{(\Xb_2, \Zbt_2) \mid S_2}$ are defined in Assumption~\ref{ass:app:geo-super}. Moreover, $P_{S_1 \mid \rfull_1 = 1} := \mathrm{Cat}((q_{1h})_h)$ and $P_{S_2 \mid \rcov_2 = 1} := \mathrm{Cat}((q_{2h})_h)$, where $\mathrm{Cat}(\cdot)$ denotes the categorical distribution and $q_{1h}$ and $q_{2h}$ are specified in Theorem~\ref{thm:app:geo-multinomial}(a).
The target distribution is then given by
\begin{equation} \label{eqn:app:geo-pop-target}
(\Xb_2, Y_2, \Zbt_2, S_2) \sim P_{(\Xb_2, Y_2, \Zbt_2) \mid S_2} \times P_{S_2},
\end{equation}
where $P_{(\Xb_2, Y_2, \Zbt_2) \mid S_2}$ is defined in Assumption~\ref{ass:app:geo-super}, $P_{S_2} := \mathrm{Cat}((p_{2h})_{h})$, and $p_{2h} := N_{2h}/N_2$.

In this case, the training distribution is \eqref{eqn:app:geo-pop-done} and the target distribution is \eqref{eqn:app:geo-pop-target}. Therefore, the construction of the DA-WCP prediction set follows the procedure in Section~\ref{sec-method}, after the substitutions $\Zb_{ti}\mapsto \Zbt_{ti}$ and $\Gb_{ti}\mapsto S_{ti}$.
Specifically, for any $\xb \in \Rb^d$, $\zbt \in \k$, and $h \in [H]$, the WCP weight function at covariate $(\xb, \zbt, h)$ is given by
\begin{equation*} \label{eqn:app:geo-pop-weight}
    w_h(\xb, \zbt, h) := \left(\frac{dP_{\Xb_2 \mid \Zbt_2 = \zbt, S_2 = h}}{dP_{\Xb_1 \mid \Zbt_1 = \zbt, S_1 = h}}\right)(\xb) \cdot \frac{P_{\Zbt_2 \mid S_2 = h}(\zbt)}{P_{\Zbt_1 \mid S_1 = h}(\zbt)} \cdot \frac{P_{S_2}(h)}{P_{S_1 \mid \rfull_1 = 1}(h)} 
    = r(\xb; \zbt, h) \cdot \frac{v_{2h}(\zbt)}{v_{1h}(\zbt)} \cdot \frac{p_{2h}}{q_{1h}},
\end{equation*}
where $r(\cdot \ ; \zbt, h) := \Big(\tfrac{dP_{\Xb_2 \mid \Zbt_2 = \zbt, S_2 = h}}{dP_{\Xb_1 \mid \Zbt_1 = \zbt, S_1 = h}}\Big)(\cdot)$ denotes the density ratio function for the subgroup--stratum cell $(\zbt, h)$, $v_{1h}(\zbt) := P_{\Zbt_1 \mid S_1 = h}(\zbt)$ and $v_{2h}(\zbt) := P_{\Zbt_2 \mid S_2 = h}(\zbt)$ denote the proportions of subgroup $\zbt$ within stratum $h$, and $q_{1h}$ is the proportion of stratum $h$ under $P_{S_1 \mid \rfull_1 = 1}$ in~\eqref{eqn:app:geo-pop-done}.

We estimate these quantities using our survey datasets.
For each $(\zbt, h)$, the density ratio function $r(\cdot \ ; \zbt, h)$ is estimated via the KLIEP method (see Appendix~\ref{app:kliep-1}), applied to $\{\Xb_{1i} \mid i \in \ione, \Zbt_{1i} = \zbt, S_{1i} = h\}$ and $\{\Xb_{2i} \mid i \in \itwo, \Zbt_{2i} = \zbt, S_{2i} = h\}$.
Moreover, we estimate 
\[
\widehat v_{1h}(\zbt) = \frac{\sum_{i \in \ittra} \mathbf 1\{\Zbt_{1i} = \zbt, S_{1i} = h\}}{\sum_{i \in \ittra} \mathbf 1\{S_{1i} = h \}}, \qquad \widehat v_{2h}(\zbt) = \frac{\sum_{i \in \itwo} \mathbf 1\{\Zbt_{2i} = \zbt, S_{2i} = h\}}{\sum_{i \in \itwo} \mathbf 1\{S_{2i} = h\}}.
\]
Finally, for each $h \in [H]$, we estimate $q_{1h}$ by $\widehat q_{1h} = \sum_{i \in \itra} \mathbf 1\{S_{1i} = h\} / \ntra$.

Given an estimated WCP weight function $\widehat w$ and nonconformity scores $\{s_i\}_{i \in \mathcal{I}_{\mathrm{cal}}}$, for a miscoverage level $\alpha \in (0,1)$, we compute the WCP-type threshold as
\begin{equation*}
    \widehat{t}_\alpha := \text{Quantile}_{1-\alpha}\Bigg(\sum_{i \in \ical} \frac{\widehat{w}(\Xb_{1i}, \Zbt_{1i}, S_{1i})}{\sum_{j \in \ical} \widehat{w}(\Xb_{1j}, \Zbt_{1j}, S_{1j})} \cdot\delta_{s_i} \Bigg).
\end{equation*}
We then construct the marginal prediction set for $Y_2$ at covariates $(\Xb_2, \Zbt_2, S_2)=(\xb,\zbt, h)$ with miscoverage level $\alpha$ as
$\widehat{C}_{\alpha}(\xb, \zbt, h) = \{y \in \mathcal Y : \mathcal{S}(\xb, y, \zbt, h) \le \widehat{t}_\alpha \}$, where $\mathcal{S}(\xb, y, \zbt, h)$ denotes a score function.
The analogous theoretical coverage guarantees given in Section~\ref{sec-theoretical} then hold for the selected prediction set $\widehat C_{\alpha, \hat \ell}$ with $\widehat \ell \in [L]$, under Assumptions~\ref{ass:app:geo-super}--\ref{ass:app:geo-invariant} and appropriate regularity conditions tailored to this setting.

\subsection{PPSWOR Within Strata} \label{app:geo-strata}

For design (b) in Assumption~\ref{ass:app:geo-design}, Theorem~\ref{thm:app:geo-multinomial}(b) implies that the stratum-wise sample counts can be modeled as independent binomial draws across strata. However, the Poisson limit theorem cannot be invoked to approximate them by Poisson draws, since $a_{1h}^y$ and $a_{2h}^{xz}$ are positive constants. Consequently, the conditional distribution of $(n_{th})_h \mid n_t$ cannot be approximated by a multinomial distribution. As a result, even conditional on the realized sample size, the sample stratum labels cannot be modeled as i.i.d.\ draws from a categorical distribution. Therefore, the survey datasets $\done$ and $\dtwo$ cannot be treated as i.i.d.\ samples, and marginal validity cannot be established directly using our WCP-based analysis.

Instead, we exploit the fact that the survey data can be viewed as i.i.d.\ samples \emph{within} each stratum, and we construct prediction sets with stratum-conditional coverage guarantee.
Specifically, for each stratum $h \in [H]$, define $\ioneh = \{i \in \ione \mid S_{1i} = h\}$ and $\itwoh = \{i \in \itwo \mid S_{2i} = h\}$.
Then, under Assumptions~\ref{ass:app:geo-super}, \ref{ass:app:geo-nonresponse}, and \ref{ass:app:geo-design}(b), and conditional on $n_{1h} = |\ioneh|$ and $n_{2h} = |\itwoh|$, the datasets $\doneh$ and $\dtwoh$ may be regarded as i.i.d.\ samples:
\begin{align*} 
(\Xb_{1i}, Y_{1i}, \Zbt_{1i}) &\iid P_{(\Xb_1, Y_1, \Zbt_1) \mid S_1 = h}, \qquad i \in \ioneh,\ h \in [H], \\
(\Xb_{2i}, \Zbt_{2i}) &\iid P_{(\Xb_2, \Zbt_2) \mid S_2 = h}, \qquad \ \ \ i \in \itwoh,\ h \in [H]. 
\end{align*}
The target distribution for stratum $h$ is $P_{(\Xb_2, Y_2, \Zbt_2)\mid S_2 = h}$, the population-level conditional distribution of $(\Xb_2, Y_2, \Zbt_2)$ given $S_2 = h$, as defined in Assumption~\ref{ass:app:geo-super}.

For each stratum $h \in [H]$, let $w_h$ denote the stratum-specific WCP weight function with $P_{(\Xb_1, Y_1, \Zbt_1) \mid S_1 = h}$ as a training distribution and $P_{(\Xb_2, Y_2, \Zbt_2)\mid S_2 = h}$ as a target distribution. Then, for any $\xb \in \Rb^d$ and $\zbt \in \k$, under Assumptions~\ref{ass:app:geo-super}, \ref{ass:app:geo-nonresponse}, \ref{ass:app:geo-design}(b), and \ref{ass:app:geo-invariant}, the WCP weight at covariate $(\xb,\zbt)$ in stratum $h$ is given by
\begin{align*}
w_h(\xb, \zbt) := \left(\frac{dP_{\Xb_2 \mid \Zbt_2 = \zbt, S_2 = h}}{dP_{\Xb_1 \mid \Zbt_1 = \zbt, S_1 = h}}\right)(\xb) \cdot \frac{P_{\Zbt_2 \mid S_2 = h}(\zbt)}{P_{\Zbt_1 \mid S_1 = h}(\zbt)} = r(\xb; \zbt, h) \cdot \frac{v_{2h}(\zbt)}{v_{1h}(\zbt)},
\end{align*}
where $r(\,\cdot\,;\zbt,h) := \Big(\tfrac{dP_{\Xb_2 \mid \Zbt_2 = \zbt, S_2 = h}}{dP_{\Xb_1 \mid \Zbt_1 = \zbt, S_1 = h}}\Big)(\,\cdot\,)$, $v_{1h}(\zbt) := P_{\Zbt_1 \mid S_1 = h}(\zbt)$, and $v_{2h}(\zbt) := P_{\Zbt_2 \mid S_2 = h}(\zbt)$. These quantities are estimated using our survey datasets as described in Appendix~\ref{app:geo-pop}.

For each $h \in [H]$, given an estimated stratum-specific WCP weight function $\widehat w_h$ and nonconformity scores $\{s_i\}_{i \in \icalh}$, we compute the stratum-specific WCP-type threshold as
\begin{equation*}
    \widehat{t}_{\alpha, h} := \text{Quantile}_{1-\alpha}\Bigg(\sum_{i \in \icalh} \frac{\widehat{w}_h(\Xb_{1i}, \Zbt_{1i})}{\sum_{j \in \icalh} \widehat{w}_h(\Xb_{1j}, \Zbt_{1j})} \cdot\delta_{s_i} \Bigg).
\end{equation*}
We then construct the stratum-conditional prediction set for $Y_2$ at covariates $(\Xb_2, \Zbt_2)=(\xb,\zbt)$ given $S_2 = h$, as
$\widehat{C}_{\alpha, h}(\xb, \zbt) = \{y \in \mathcal Y : \mathcal{S}_h(\xb, y, \zbt) \le \widehat{t}_{\alpha, h} \}$, where $\mathcal{S}_h(\xb, y, \zbt)$ is a stratum-specific score function trained by $\dtrah$.
The analogous theoretical results in Appendix~\ref{app:theoretical-group} then hold for the selected prediction set $\widehat C_{\alpha, h, \hat \ell_h}$ with $\widehat \ell_h \in [L]$, under Assumptions~\ref{ass:app:geo-super}, \ref{ass:app:geo-nonresponse}, \ref{ass:app:geo-design}(b), and \ref{ass:app:geo-invariant}, along with appropriate regularity conditions for this setting.

\section{When Outcomes are Partially Observed at $t = 2$} \label{app:partial}

\subsection{Problem Setup}
In this section, we describe how to construct prediction sets when outcomes are partially observed at $t=2$. Let $\dtwot$ denote the dataset with fully observed covariates at $t=2$:
\[
\dtwot = \{(\Xb_{2i}, Y_{2i}, \Zbt_{2i}, \rfull_{2i}, \wcov_{2i}, \wfull_{2i})\}_{i \in \itwot},
\]
and let $\dtwo$ denote the dataset with fully observed covariates and outcomes at $t=2$:
\[
\dtwo = \{(\Xb_{2i}, Y_{2i}, \Zbt_{2i}, \wfull_{2i}) : i \in \itwot, \rfull_{2i} = 1\} = \{(\Xb_{2i}, Y_{2i}, \Zbt_{2i}, \wfull_{2i})\}_{i \in \itwo}.
\]
Then, analogous to Theorem~1, under Assumptions~1--4 we obtain the following corollary.
\begin{apxcor}[The distribution of subgroup-wise sample sizes within stratum] \label{cor:dtwo}
For each $h \in [H]$, let $\mathcal D_{2h}$ denote the subset of the previously defined $\mathcal D_2$ corresponding to stratum $h$, and let $n_{2h} = |\mathcal D_{2h}|$. For each $\gb \in \kg$, let $n_{2h\gb}$ denote the sample size of subgroup $\gb$ in stratum $h$ at $t=2$. Then, under Assumptions~\ref{ass:super}--\ref{ass:multinomial}, we have
\[
(n_{2h\gb})_{\gb} \mid n_{2h} \sim \mathrm{Mult}(n_{2h}, (q_{2h\gb})_{\gb}),
\]
where $q_{2h\gb} = N_{2h\gb}\pi_{2\gb}a_{2\gb}^{y}/\sum_{\gb'} N_{2h\gb'}\pi_{2\gb'}a_{2\gb'}^{y}$.
\end{apxcor}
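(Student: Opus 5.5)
This is a direct analogue of Theorem~1, and I would prove it by the same two-step thinning argument, replacing the covariate-response indicator $\rcov_{2i}$ with the full-response indicator $\rfull_{2i}$. Fix a stratum $h \in [H]$. By Assumption~\ref{ass:multinomial}, the pre-nonresponse subgroup counts satisfy $(n_{2h\gb}^0)_{\gb} \sim \mathrm{Mult}(m_{2h}, (\tilde q_{2h\gb})_{\gb})$ with $\tilde q_{2h\gb} = N_{2h\gb}\pi_{2\gb}/\sum_{\gb'} N_{2h\gb'}\pi_{2\gb'}$. A sampled unit enters the redefined $\dtwo$ exactly when $\rfull_{2i}=1$. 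Assumption~\ref{ass:nonresponse} says these indicators are independent across units, have success probability $a_{2\gb}^y$ depending only on the subgroup, and are independent of the values. Assumption~\ref{ass:design} says the sampling is independent of the response indicators. Together these give $n_{2h\gb} \mid n_{2h\gb}^0 \sim B(n_{2h\gb}^0, a_{2\gb}^y)$, independently across $\gb$, conditional on the pre-nonresponse counts.

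The key step is the standard multinomial-thinning fact. It is cleanest to view the multinomial as $m_{2h}$ i.i.d.\ draws. Each draw independently lands in subgroup $\gb$ with probability $\tilde q_{2h\gb}$, and then independently responds with probability $a_{2\gb}^y$. So each draw has a category in $\{\text{nonrespondent}\} \cup \kg$, with cell probabilities $\tilde q_{2h\gb}a_{2\gb}^y$ for the subgroups and $1 - \sum_{\gb}\tilde q_{2h\gb}a_{2\gb}^y$ for nonresponse. The vector of respondent counts together with the nonrespondent count is therefore multinomial with $m_{2h}$ trials.

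Conditioning a multinomial on the total in a subset of cells gives a multinomial on that subset with renormalized probabilities. Here $n_{2h} = \sum_{\gb} n_{2h\gb}$, so
\[
(n_{2h\gb})_{\gb} \mid n_{2h} \sim \mathrm{Mult}\Big(n_{2h}, \Big(\tfrac{\tilde q_{2h\gb}a_{2\gb}^y}{\sum_{\gb'}\tilde q_{2h\gb'}a_{2\gb'}^y}\Big)_{\gb}\Big).
\]
Substituting the formula for $\tilde q_{2h\gb}$, the common denominator $\sum_{\gb'} N_{2h\gb'}\pi_{2\gb'}$ cancels and leaves the stated $q_{2h\gb}$.

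None of these steps is a real obstacle. The only point needing care is checking that the independence required for the thinning holds for $\rfull_{2i}$, and not merely for $\rcov_{2i}$ as used in Theorem~1. This is covered because Assumption~\ref{ass:nonresponse} is stated jointly for $(\rcov_{ti},\rfull_{ti})$ and for both $t\in\{1,2\}$.
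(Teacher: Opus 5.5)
Your proposal is correct and follows essentially the same route as the paper. The paper obtains this corollary exactly as in the proof of Theorem~1: it thins the multinomial pre-nonresponse counts binomially via Assumption~\ref{ass:nonresponse}, now with success probability $a_{2\gb}^y$, and then conditions on $n_{2h}$ to get renormalized cell probabilities; you simply spell out the multinomial-thinning and conditioning step that the paper states in one line.
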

Note that the definition of $q_{2h\gb}$ here differs from that in Section~\ref{sec-survey-prob}, since $q_{2h\gb}$ now denotes the subgroup probabilities in the $t=2$ sample with fully observed covariates and outcomes. Analogous to Section~3.3, for each stratum $h \in [H]$, under Assumptions~\ref{ass:super}--\ref{ass:multinomial} and conditional on $n_{1h}$ and $n_{2h}$, the survey datasets $\doneh$ and $\dtwoh$, which serve as training data, can be regarded as i.i.d.\ samples:
\begin{align} 
    (\Xb_{1i}, Y_{1i}, \Zb_{1i}, \Gb_{1i}) &\iid P_{(\Xb_1, Y_1, \Zb_1) \mid \Gb_1} \times P_{\Gb_{1h} \mid \rfull_{1h} = 1}, \quad i \in \ioneh, \ \ h \in [H], \label{eqn:doneh} \\
    (\Xb_{2i}, Y_{2i}, \Zb_{2i}, \Gb_{2i}) &\iid P_{(\Xb_2, Y_2, \Zb_2) \mid \Gb_2} \times P_{\Gb_{2h} \mid \rfull_{2h} = 1}, \quad i \in \itwoh, \ \ h \in [H], \label{eqn:dtwoh}
\end{align}
where $P_{(\Xb_1, Y_1, \Zb_1) \mid \Gb_1}$ and $P_{(\Xb_2, Y_2, \Zb_2) \mid \Gb_2}$ are defined in Assumption~1. Moreover, $P_{\Gb_{1h} \mid \rfull_{1h} = 1} := \mathrm{Cat}((q_{1h\gb})_{\gb})$ and $P_{\Gb_{2h} \mid \rfull_{2h} = 1} := \mathrm{Cat}((q_{2h\gb})_{\gb})$, where $q_{1h\gb}$ and $q_{2h\gb}$ are specified in Theorem~1 and Corollary~\ref{cor:dtwo}, respectively. The target distribution, i.e., the $t=2$ population, is given by
\begin{equation} \label{eqn:target}
    (\Xb_2, Y_2, \Zb_2, \Gb_2) \sim P_{(\Xb_2, Y_2, \Zb_2) \mid \Gb_2} \times P_{\Gb_2},
\end{equation}
where $P_{\Gb_2} := \mathrm{Cat}((p_{2\gb})_{\gb})$ and $p_{2\gb} := N_{2\gb}/N_2$ for each $\gb \in \kg$.

\subsection{Construction of the Prediction Set}
\paragraph*{(a) Data Split.}
For each $t \in \{1, 2\}$ and $h \in [H]$, we first randomly split the stratum-specific dataset $\mathcal D_{th}$ into a training set $\mathcal D_{\mathrm{tra}, t, h}$ and a calibration set $\mathcal D_{\mathrm{cal}, t, h}$, with index sets $\mathcal I_{\mathrm{tra}, t, h}$ and $\mathcal I_{\mathrm{cal}, t, h}$, where $n_{\mathrm{tra}, t, h} = |\mathcal I_{\mathrm{tra}, t, h}|$ and $n_{\mathrm{cal}, t, h} = |\mathcal I_{\mathrm{cal}, t, h}|$ are typically chosen as $n_{\mathrm{tra}, t, h} = n_{\mathrm{cal}, t, h} = n_{th}/2$. We then define $\mathcal D_{\mathrm{tra}, t} := \bigcup_{h=1}^H \mathcal D_{\mathrm{tra}, t, h}$ and $\mathcal D_{\mathrm{cal}, t} := \bigcup_{h=1}^H \mathcal D_{\mathrm{cal}, t, h}$, with corresponding index sets $\mathcal I_{\mathrm{tra}, t} := \bigcup_{h=1}^H \mathcal I_{\mathrm{tra}, t, h}$ and $\mathcal I_{\mathrm{cal}, t} := \bigcup_{h=1}^H \mathcal I_{\mathrm{cal}, t, h}$. Finally, we define $\dtra := \mathcal D_{\mathrm{tra}, 1} \cup \mathcal D_{\mathrm{tra}, 2}$ and $\dcal := \mathcal D_{\mathrm{cal}, 1} \cup \mathcal D_{\mathrm{cal}, 2}$.

\paragraph*{(b) Score functions and nonconformity scores.}
Using $\dtra$ defined above, we construct the score function $\mathcal S(\xb, y, \zbt)$, following step~(a) of Section~\ref{sec-method}. For each $t \in \{1, 2\}$, we then compute the nonconformity scores as $s_{ti} = \mathcal S(\Xb_{ti}, Y_{ti}, \Zbt_{ti})$ for $i \in \mathcal I_{\mathrm{cal}, t}$.

\paragraph*{(c) Estimation of the WCP weight function.}
For each $t \in \{1, 2\}$ and $h \in [H]$, let $w_{th}$ denote the WCP weight function with $\mathcal D_{th}$ as training data and \eqref{eqn:target} as the target distribution. Then, for any $\xb \in \Rb^d$ and $\zbt = (\gb^\top, \zb^\top)^\top \in \k$, $w_{th}(\xb, \zbt)$ is given by
\begin{equation*} 
    w_{1h}(\xb, \zbt) = r_{\zbt}(\xb) \cdot \frac{v_{2\zbt}}{v_{1\zbt}} \cdot \frac{p_{2\gb}}{q_{1h\gb}}, \qquad
    w_{2h}(\xb, \zbt) = \frac{p_{2\gb}}{q_{2h\gb}},
\end{equation*}
where $r_\zbt(\cdot) := \Big(\tfrac{dP_{\Xb_2 \mid \Zbt_2 = \zbt}}{dP_{\Xb_1 \mid \Zbt_1 = \zbt}}\Big)(\cdot)$, $v_{1\zbt} := P_{\Zb_1 \mid \Gb_1 = \gb}(\zb)$, and $v_{2\zbt} := P_{\Zb_2 \mid \Gb_2 = \gb}(\zb)$. 

Following step~(b) of Section~\ref{sec-method}, we use $\dcal$ only for computing scores, and we use the remaining data $\dttra := \{(\Xb_{1i}, Y_{1i}, \Zbt_{1i}, \rfull_{1i})\}_{i \in \tilde{\mathcal I}_{\mathrm{tra}, 1}} \cup \{(\Xb_{2i}, Y_{2i}, \Zbt_{2i}, \rfull_{2i})\}_{i \in \tilde{\mathcal I}_{\mathrm{tra}, 2}}$ for estimation, where $\tilde{\mathcal I}_{\mathrm{tra}, t} := \tilde{\mathcal I}_t \setminus \mathcal I_{\mathrm{cal}, t}$ for each $t \in \{1, 2\}$.
In particular, for each $\zbt = (\gb^\top, \zb^\top)^\top \in \k$, we estimate $r_\zbt$ using the KLIEP method applied to $\{\Xb_{1i} \mid i \in \tilde{\mathcal I}_{\mathrm{tra}, 1}, \Zbt_{1i} = \zbt\}$ and $\{\Xb_{2i} \mid i \in \tilde{\mathcal I}_{\mathrm{tra}, 2}, \Zbt_{2i} = \zbt\}$. Moreover, we estimate
\[
\widehat v_{1\zbt} = \frac{\sum_{i \in \tilde{\mathcal I}_{\mathrm{tra}, 1}} \mathbf 1\{\Zbt_{1i} = \zbt\}}{\sum_{i \in \tilde{\mathcal I}_{\mathrm{tra}, 1}} \mathbf 1\{\Gb_{1i} = \gb\}}, \qquad \widehat v_{2\zbt} = \frac{\sum_{i \in \tilde{\mathcal I}_{\mathrm{tra}, 2}} \mathbf 1\{\Zbt_{2i} = \zbt\}}{\sum_{i \in \tilde{\mathcal I}_{\mathrm{tra}, 2}} \mathbf 1\{\Gb_{2i} = \gb\}}.
\]

For each $t \in \{1, 2\}$ and each stratum $h \in [H]$, although $\{\Gb_{ti}\}_{i \in \mathcal I_{\mathrm{tra}, t, h}}$ is not an i.i.d.\ sample from the population at time $t$ within stratum $h$, conditional on $n_{th}$ it may be modeled as an i.i.d.\ sample from $P_{\Gb_{th} \mid \rfull_{th} = 1}$ defined in~\eqref{eqn:doneh} and~\eqref{eqn:dtwoh}. Therefore, we estimate
\[
\widehat q_{1h\gb} = \frac{\sum_{i \in \mathcal I_{\mathrm{tra}, 1, h}} \mathbf 1\{\Gb_{1i} = \gb\}}{n_{\mathrm{tra}, 1, h}}, \qquad
\widehat q_{2h\gb} = \frac{\sum_{i \in \mathcal I_{\mathrm{tra}, 2, h}} \mathbf 1\{\Gb_{2i} = \gb\}}{n_{\mathrm{tra}, 2, h}}.
\]
Finally, for any $\xb \in \mathbb{R}^d$ and $\zbt = (\gb^\top, \zb^\top)^\top \in \k$, we estimate the WCP weight function at time $t$ in stratum $h$ as
\begin{equation*} 
    \widehat w_{1h}(\xb, \zbt) = \widehat r_{\zbt}(\xb) \cdot \frac{\widehat v_{2\zbt}}{\widehat v_{1\zbt}} \cdot \frac{p_{2\gb}}{\widehat q_{1h\gb}}, \qquad
    \widehat w_{2h}(\xb, \zbt) = \frac{p_{2\gb}}{\widehat q_{2h\gb}}.
\end{equation*}

\paragraph*{(d) Construction of the prediction set.}

Given the nonconformity scores $\{s_{ti}\}_{i \in \mathcal{I}_{\mathrm{cal}, t}}$ for $t=1,2$ and the estimated WCP weight functions $\{\widehat{w}_{th}\}_{t,h}$, we proceed as in step~(c) of Section~\ref{sec-method}. In particular, whereas Section~\ref{sec-method} uses $H$ i.i.d.\ calibration datasets from $t=1$, here we have $H$ i.i.d.\ calibration datasets from each of $t=1$ and $t=2$. Extending the construction in Section~4.1(c) in a straightforward manner therefore yields the following WCP-type threshold for a miscoverage level $\alpha \in (0,1)$:
\begin{equation} \label{eqn:d2-threshold}
\widehat{t}_\alpha = \text{Quantile}_{1-\alpha}\Bigg(\sum_{t=1}^2 \sum_{h=1}^H \frac{1}{2H} \sum_{i \in \mathcal I_{\mathrm{cal}, t, h}} \frac{\widehat{w}_{th}(\Xb_{ti}, \Zbt_{ti})}{\sum_{j \in \mathcal I_{\mathrm{cal}, t, h}} \widehat{w}_{th}(\Xb_{tj}, \Zbt_{tj})} \cdot\delta_{s_{ti}} \Bigg).
\end{equation}
The marginal prediction set $\widehat{C}_{\alpha}(\xb,\zbt)$ at miscoverage level $\alpha$ is then constructed as
\begin{equation} \label{eqn:d2-set}
\widehat{C}_{\alpha}(\xb, \zbt) = \left\{y \in \mathbb{R} : \mathcal{S}(\xb, y, \zbt) \le \widehat{t}_\alpha \right\}, \quad \forall \xb \in \Rb^d,\ \zbt \in \k.
\end{equation}

\paragraph*{(e) Selection of the prediction set.}

Following step~(d) of Section~\ref{sec-method}, we construct $L$ prediction sets utilizing different score functions $\mathcal S_1, \dots, \mathcal S_L$. For each $\ell \in [L]$, we apply steps (b)--(d) with $\mathcal S_\ell$ and denote the resulting prediction set in~\eqref{eqn:d2-set} by $\widehat C_{\alpha,\ell}$. We then select the candidate with the smallest average interval length as in Appendix~\ref{app:implement}. Let $\widehat \ell \in [L]$ denote the selected index; the selected prediction set is $\widehat C_{\alpha,\hat \ell}$.

\subsection{Theoretical Guarantee}

Here, we derive a theoretical coverage guarantee only for the proposed marginal prediction sets. Let $\widehat C_{\alpha,1}, \dots, \widehat C_{\alpha,L}$ denote $L$ candidate prediction sets of the form~\eqref{eqn:d2-set}, and let $\widehat \ell \in [L]$ be the selected index. The following corollary, which is a natural extension of Theorem~\ref{thm:coverage}, provides a lower bound on the coverage of the selected set $\widehat C_{\alpha, \hat \ell}$.

\begin{apxcor}[Lower bound on the coverage probability of $\widehat C_{\alpha, \hat \ell}$] \label{cor:d2-coverage}
    For each $t \in \{1, 2\}$, denote $\tilde n_{t} := |\tilde{\mathcal I}_t|$ and $n_{th} := |\mathcal I_{th}|$ for each $h \in [H]$.
    Under Assumptions~\ref{ass:super}--\ref{ass:invariant} and Assumption~\ref{ass:kliep-1}, the selected prediction set $\widehat C_{\alpha, \hat \ell}$ satisfies the following.
    There exists an event $\mathcal E(\dttra)$, depending solely on $\dttra$, such that $\Pb(\mathcal E(\dttra) \;|\; \tilde n_1, \tilde n_2, \{n_{th}\}_{t, h} ) \to 1$ as $n_{11}, \dots, n_{1H}, n_{21}, \dots, n_{2H} \to \infty$, and on $\mathcal E(\dttra)$,
    \[
    \begin{adjustbox}{max width=\linewidth}
    $\displaystyle
    \begin{aligned}
        &\Pb\big(Y_2 \in \widehat{C}_{\alpha, \hat \ell}(\Xb_2, \Zbt_2) \Bigm| \dttra, \tilde n_1, \tilde n_2, \{n_{th}\}_{t, h} \big)  \ge 1-\alpha - \left(2 + C_1\Big(1 +\sqrt{\log(4HL)}\Big)\right) \\ 
        &\times \Bigg( \underbrace{\max_{\substack{
        t = 1, 2 \\
        1 \le h \le H
        }} \Eb\left[\big|\widehat w_{th}(\Xb_{th}, \Zbt_{th}) - w_{th}(\Xb_{th}, \Zbt_{th})\big| \Bigm| \dttra, \tilde n_1, \tilde n_2, \{n_{th}\}_{t, h}  \right]}_{(\mathrm{I})} + 
        \underbrace{\max_{t, \zbt, h} \frac{v_{2\zbt} p_{2\gb}}{v_{1\zbt} q_{th\gb}} \cdot \frac{C_2}{\sqrt{\min_{t, h} n_{th}}}}_{(\mathrm{II})} \Bigg)
    \end{aligned}
    $
    \end{adjustbox}
    \]
    for some universal constants $C_1, C_2> 0$.
    Here, $(\Xb_{th}, \Zbt_{th}) \sim P_{(\Xb_t, \Zb_t) \mid \Gb_t} \times P_{\Gb_{th} \mid \rfull_{th}=1}$ for each $t \in \{1, 2\}$ and $h \in [H]$, and
    the test point $(\Xb_2, Y_2, \Zbt_2) \sim P_{(\Xb_2, Y_2, \Zb_2) \mid \Gb_2} \times P_{\Gb_2}$ (see \eqref{eqn:doneh}, \eqref{eqn:dtwoh}, and \eqref{eqn:target}) are all independent of $\dttra$.
\end{apxcor}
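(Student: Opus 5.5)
The plan is to rerun the proof of Theorem~\ref{thm:coverage} (appendix~\ref{app:thm2}) with the family of $H$ i.i.d.\ calibration blocks enlarged to $2H$ blocks indexed by $(t,h)\in\{1,2\}\times[H]$. Conditional on $\tilde n_1,\tilde n_2,\{n_{th}\}_{t,h}$, block $(t,h)$ is i.i.d.\ from $P_{th}:=P_{(\Xb_t,Y_t,\Zb_t)\mid\Gb_t}\times P_{\Gb_{th}\mid\rfull_{th}=1}$ by \eqref{eqn:doneh}--\eqref{eqn:dtwoh}. The target is $P_2$ in \eqref{eqn:target}, and the true weights satisfy $w_{th}=dP_2/dP_{th}$; for $t=2$ this uses only the invariance of $P_{(\Xb_2,Y_2,\Zb_2)\mid\Gb_2}$ across $\rfull_{2}$, which follows from Assumption~\ref{ass:nonresponse}. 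All estimated weights depend only on $\dttra$, so conditioning on $\dttra$ plays the role that conditioning on $\Dc$ played before. I will define
\[
W_\ell:=\sup_t\Big|\tfrac{1}{2H}\sum_{t',h}\sum_i \widehat w_{t'h}(\Tb_{t'hi})\mathbf 1\{\Sl(\Tb_{t'hi})\le t\}\big/{\textstyle\sum_j}\widehat w_{t'h}(\Tb_{t'hj})-\Pb(\Sl(\Tb_2)\le t\mid\cdot)\Big|,
\]
and bound it by the average of the per-block quantities $W_{\ell,th}$, exactly as in \eqref{eqn:app:thm2-wlwlh}.

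Next I will define the good event $\mathcal E(\dttra)$ as the analogue of \eqref{eqn:thm2-stilde}. It intersects $S_\nb$ (the KLIEP events of Lemma~\ref{lem:kliep-1}), the events $\{\Eb[\widehat w_{th}(\Tb_{th})\mid\dttra]>1/2\}$ for every $(t,h)$, and ratio-control events $\widehat v_{1\zbt}\widehat q_{1h\gb}>v_{1\zbt}q_{1h\gb}/2$, $\widehat v_{2\zbt}<3v_{2\zbt}/2$, $\widehat q_{2h\gb}>q_{2h\gb}/2$. Its probability tends to one by the same consistency arguments: the KLIEP rates from Theorem~\ref{thm:kliep-1} and Proposition~\ref{prop:weight}, and $\sqrt n$-consistency of the empirical proportions, now including $\widehat q_{2h\gb}$. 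On $\mathcal E$ we get a uniform bound $\max_{t,h}\lVert\widehat w_{th}\rVert_\infty\le C$ with $C\propto\max_{t,\zbt,h}v_{2\zbt}p_{2\gb}/(v_{1\zbt}q_{th\gb})$. For $t=2$ the bound $p_{2\gb}/\widehat q_{2h\gb}\le 2p_{2\gb}/q_{2h\gb}$ is dominated by this expression, reading $v$-factors as $1$.

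For each block I will apply the four-term decomposition \eqref{eqn:thm2-decompose}. Term (I) is handled by Hoeffding. Term (II) uses symmetrization, Dudley's bound with the VC-1 class, and Bousquet's inequality, which give the $CC_3\sqrt{n_{th}}$ expectation bound and a sub-Gaussian tail. Term (III) is at most $2\Eb|\widehat w_{th}-w_{th}|$ via $\Eb[w_{th}]=1$, and term (IV) vanishes by the change of measure. This yields the tail bound \eqref{eqn:app:thm2-wlh} for each of the $2H$ blocks. A union bound over $2H$ blocks and $L$ scores replaces $HL$ by $2HL$ in \eqref{eqn:thm2-bound-4}, and integrating the tail as in \eqref{eqn:thm2-bound-6}, with the cutoff placed at $\sqrt{\log(4HL)}$, gives the factor $2+C_1(1+\sqrt{\log(4HL)})$. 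Finally, the quantile definition \eqref{eqn:d2-threshold} gives coverage at least $1-\alpha-\Eb[W\mid\dttra]$, and bounding $\Delta$ as in \eqref{eqn:thm2-bound-7} with $\min_{t,h}n_{th}$ finishes the proof.

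I expect the main obstacle to be bookkeeping rather than new ideas. The first point to check is that the $t=2$ blocks, whose scores use a fitted score function trained on $\dtra$ (which includes $t=2$ data), remain independent of the calibration blocks given $\dttra$. This holds because the splits are disjoint. The second is that the constant $C$ and the event $\mathcal E$ correctly cover the purely categorical weights $w_{2h}$. The union-bound constant change from $2HL$ to $4HL$ is routine.
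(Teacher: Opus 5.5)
Your proposal is correct and takes essentially the paper's route. The paper gives no separate proof: it presents this corollary as the proof of Theorem~\ref{thm:coverage} rerun over $2H$ i.i.d.\ calibration blocks, with the maxima now also ranging over $t$ and the enlarged union bound producing $\log(4HL)$. Your domination claim for the $t=2$ weights is valid, because for each $\gb$ both $v_{1\cdot}$ and $v_{2\cdot}$ are probability vectors over $\zb$, which forces $\max_{\zb} v_{2\zbt}/v_{1\zbt}\ge 1$.
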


Note that the only difference from Theorem~2 is that the maximum (or minimum) now also ranges over time $t$, and the prefactor involves $\log(4HL)$, since there are $2H$ i.i.d.\ calibration datasets here. Combining Corollary~\ref{cor:d2-coverage} and Theorem~\ref{thm:kliep-1} then yields the following theoretical guarantee for the selected prediction set $\widehat C_{\alpha,\hat \ell}$.

\begin{apxcor}[Theoretical coverage guarantee for $\widehat C_{\alpha, \hat \ell}$] \label{cor:d2-theo}
    Conditional on $\nrawone$, $\tilde n_2$, and $\{n_{th}\}_{t, h}$, and under Assumptions~\ref{ass:super}--\ref{ass:invariant} and Assumption~\ref{ass:kliep-1}, as $n_{11}, \dots, n_{1H}, n_{21}, \dots, n_{2H} \to \infty$, if $\min_h n_{1h} \gtrsim \tilde n_1/H$ and $\min_h n_{2h} \gtrsim \tilde n_2/H$, then the selected prediction set $\widehat C_{\alpha, \hat \ell}$ satisfies
    \[
    \begin{adjustbox}{max width=\linewidth}
    $\displaystyle
    \begin{aligned}
        \Pb\big(Y_2 \in \widehat{C}_{\alpha,\hat\ell}(\Xb_2, \Zbt_2) \Bigm| \dttra, \nrawone, \tilde n_2, \{n_{th}\}_{t, h} \big) 
        \ge 1 - \alpha  - 
        \begin{cases}
            O_p\big((\min_{t, h} n_{th})^{-\frac{1}{2}}\big) & d = 0,\\
            O_p\big(\min(\tilde n_1, \tilde n_2)^{-\frac{1}{2+\gamma}}\big) & d = 1, 2, \\
            O_p\Big(\Bigl(\tfrac{\log\min(\tilde n_1, \tilde n_2)}{\min(\tilde n_1, \tilde n_2)}\Bigr)^{\frac{1}{d}}\Big) & d \ge 3,
        \end{cases}
    \end{aligned}
    $
    \end{adjustbox}
    \]
    for any arbitrarily small $\gamma > 0$. The $O_p(\cdot)$ terms are taken with respect to the randomness in $\dttra$, conditional on $\nrawone$, $\tilde n_2$, and $\{n_{th}\}_{t, h}$.
    Moreover, $(\Xb_2, Y_2, \Zbt_2) \sim P_{(\Xb_2, Y_2, \Zb_2) \mid \Gb_2} \times P_{\Gb_2}$ is an independent test point drawn from the $t=2$ population.
\end{apxcor}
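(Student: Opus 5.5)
The plan is to combine Corollary~\ref{cor:d2-coverage} with a rate bound on the weight-estimation error (I) in the style of Proposition~\ref{prop:weight}, and then check that (II) is of lower or equal order.

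\emph{Step 1 (event).} By Corollary~\ref{cor:d2-coverage}, on the event $\mathcal E(\dttra)$, whose conditional probability tends to one, the miscoverage is at most $\bigl(2+C_1(1+\sqrt{\log(4HL)})\bigr)\bigl((\mathrm{I})+(\mathrm{II})\bigr)$. Since $H$ and $L$ are fixed, the prefactor is $O(1)$. Hence it suffices to show that $(\mathrm{I})+(\mathrm{II})$ is $O_p$ of the displayed rate, conditional on the sample sizes. Because $P(\mathcal E^{\mathsf c})\to0$, restricting to $\mathcal E$ is harmless for an $O_p$ statement.

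\emph{Step 2 (term (II)).} The prefactor $\max_{t,\zbt,h} v_{2\zbt}p_{2\gb}/(v_{1\zbt}q_{th\gb})$ is a fixed constant. So (II) is $O\bigl((\min_{t,h}n_{th})^{-1/2}\bigr)$. Using $\min_h n_{th}\gtrsim \tilde n_t/H$, this is also $O(\min(\tilde n_1,\tilde n_2)^{-1/2})$, which is dominated by the nonparametric rates when $d\ge1$.

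\emph{Step 3 (term (I), $t=2$).} For $t=2$ the weight is $p_{2\gb}/q_{2h\gb}$ with $\widehat q_{2h\gb}$ a multinomial frequency from $\mathcal D_{\mathrm{tra},2,h}$ (Corollary~\ref{cor:dtwo}). The delta method then gives $|1/\widehat q_{2h\gb}-1/q_{2h\gb}|=O_p(n_{2h}^{-1/2})$ uniformly over the finitely many $(h,\gb)$. Since the weight does not depend on $\xb$, the conditional $L^1$ error is bounded by this quantity.

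\emph{Step 4 (term (I), $t=1$).} For $t=1$ I would repeat the proof of Proposition~\ref{prop:weight} verbatim, with two changes: $\dtwo$ is replaced by the $t=2$ estimation sample indexed by $\tilde{\mathcal I}_{\mathrm{tra},2}$, and $n_2$ is replaced by $\tilde n_2$. The error decomposes as in \eqref{eqn:bound-3-1} into a KLIEP part and a proportion part.

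\emph{KLIEP part.} On $S_{\nb}$ the KLIEP part is controlled by \eqref{eqn:bound-3-2} and Theorem~\ref{thm:kliep-1}. The key input is that the subgroup counts satisfy $\min(\nttrz,\ntez)\gtrsim\min(\tilde n_1,\tilde n_2)$ with probability tending to one. On the $t=1$ side this follows exactly as before from $\min_h n_{1h}\gtrsim\tilde n_1/H$. On the $t=2$ side I use the analogous condition $\min_h n_{2h}\gtrsim\tilde n_2/H$ together with a Chernoff bound on the multinomial cell counts.

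\emph{Proportion part.} Here $\widehat v_{1\zbt}\widehat q_{1h\gb}=v_{1\zbt}q_{1h\gb}+O_p(n_{1h}^{-1/2})$ and $\widehat v_{2\zbt}=v_{2\zbt}+O_p(\tilde n_2^{-1/2})$. Note that the hypothesis $\min n_{1h}\lesssim n_2$ of Corollary~\ref{cor:coverage} is no longer needed: the rate is stated in terms of $\min_{t,h}n_{th}$, which absorbs both sides.

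\emph{Step 5 (combine).} Taking the maximum over the finitely many $(t,h)$ preserves the $O_p$ rates. For $d=0$ the KLIEP term vanishes, leaving $O_p((\min_{t,h}n_{th})^{-1/2})$. For $d\ge1$ the KLIEP rate dominates the parametric terms, giving the three cases.

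The main obstacle is Step 4. There I must verify that the $t=2$ KLIEP sample sizes $\ntez$ grow proportionally to $\tilde n_2$, and that $\widehat v_{2\zbt}$ and the normalizing quantities stay bounded away from zero. This relies on the Chernoff-type argument from the proof of Proposition~\ref{prop:weight}, now applied to the partially observed $t=2$ sample.
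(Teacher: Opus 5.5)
Your proposal is correct and follows essentially the same route as the paper. The paper obtains this corollary by combining Corollary~\ref{cor:d2-coverage} with the KLIEP rate of Theorem~\ref{thm:kliep-1}, via a Proposition~\ref{prop:weight}-type bound on the weight error. You also correctly identify the role of the new hypothesis $\min_h n_{2h}\gtrsim\tilde n_2/H$: it lower-bounds the $t=2$ KLIEP cell counts through the complete cases, and it makes term (II) and the $t=2$ weight error of order $\min(\tilde n_1,\tilde n_2)^{-1/2}$.
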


\section{Computing Empirical Coverage in Real Data} \label{app:eval}
Here, we describe how to compute the empirical coverage of both marginal and group-conditional prediction sets using the available datasets in the real data application. 
Although our method constructs prediction sets for $Y_2$ at time $t=2$ based only on the covariates $(\Xb_{2i}, \Zbt_{2i})$, the empirical coverage can be evaluated only for units with observed outcomes $Y_{2i}$ at $t=2$. 
Since we do not have i.i.d.\ samples from the target distribution (the $t=2$ population), we instead use the subset of $\mathcal{D}_2$ with observed outcomes as the test dataset. 
Based on this test dataset, we then estimate the coverage probability, i.e., the empirical coverage under the target distribution.

\subsection{Marginal Empirical Coverage} \label{app:eval-1}
We first describe how to compute marginal empirical coverage using the observed outcomes.
When outcomes are partially observed at $t = 2$, the dataset at time $t = 2$ with fully observed covariates can be written as
\[
\dtwo = \{(\Xb_{2i}, Y_{2i}, \Zbt_{2i}, \rfull_{2i}, \wcov_{2i}, \wfull_{2i})\}_{i \in \itwo}.
\]
Let $\dtwo^Y$ denote the subset of $\dtwo$ with observed outcomes:
\[
\dtwo^Y := \{(\Xb_{2i}, Y_{2i}, \Zbt_{2i}, \wfull_{2i}) \mid i \in \itwo, \rfull_{2i} = 1 \} = \{(\Xb_{2i}, Y_{2i}, \Zbt_{2i}, \wfull_{2i})\}_{i \in \itwo^Y}.
\]
We begin by randomly selecting a subset of proportion $\beta \in (0,1)$ from $\dtwo^Y$ to form the test dataset $\dtest$, with index set $\itest \subset \itwo^Y$.
Then, the remaining covariates are
\[
\dttwo := \{(\Xb_{2i}, \Zbt_{2i}) \mid i \in \itwo \setminus \itest \} = \{(\Xb_{2i}, \Zbt_{2i})\}_{i \in \ittwo}.
\]
We use the covariates in $\dttwo$ to estimate the WCP weight function, and use the test dataset $\dtest$ to compute empirical coverage.
The proportion $\beta$ can be chosen to ensure sufficient sample sizes for both $\dttwo$ and $\dtest$; in our application to the NHANES dataset, we set $\beta = 0.6$.

Recall that the prediction set \(\widehat{C}_\alpha(\cdot, \cdot) \equiv \widehat{C}_\alpha(\cdot, \cdot; \drawone, \dttwo)\) constructed with $\drawone$ and $\dttwo$ is said to satisfy marginal validity at miscoverage level \(\alpha \in (0,1)\) if
\begin{equation} \label{eqn:eval-1-2}
    \Pb\big(Y_2 \in \widehat C_\alpha(\Xb_2, \Zbt_2)\big) \ge 1 - \alpha,
\end{equation}
where the probability is taken over \(\drawone\), \(\dttwo\), and an independent test point \((\Xb_2, Y_2, \Zbt_2)\) drawn from the target distribution $P_{(\Xb_2, Y_2, \Zb_2) \mid \Gb_2} \times P_{\Gb_2}$.
However, our test dataset $\dtest$ does not follow the target distribution, since it consists only of individuals in $\dtwo$ with $\rfull_{2i}=1$.
For each stratum $h \in [H]$, define the stratum-specific test dataset as $\dtesth = \{(\Xb_{2i}, Y_{2i}, \Zbt_{2i}, \wfull_{2i})\}_{i \in \itesth}$.
Then, analogously to Section~\ref{sec-survey}, conditioning on \(\ntesth = |\itesth|\) implies
\begin{equation} \label{eqn:eval-1-4}
(\Xb_{2i}, Y_{2i}, \Zb_{2i}, \Gb_{2i}) \iid P_{(\Xb_2, Y_2, \Zb_2) \mid \Gb_2} \times P_{\Gb_{2h} \mid \rfull_{2h} = 1}, \qquad i \in \itesth, \ \ h \in [H],
\end{equation}
where for each $h \in [H]$, the distribution $P_{\Gb_{2h} \mid \rfull_{2h} = 1}$ is defined as
\[
P_{\Gb_{2h} \mid \rfull_{2h} = 1} := \mathrm{Mult}((\tilde q_{2h\gb})_\gb), \qquad \tilde q_{2h\gb} := \frac{N_{2h\gb}\pi_{2\gb}a_{2\gb}^{y}}{\sum_{\gb'} N_{2h\gb'}\pi_{2\gb'}a_{2\gb'}^{y}}.
\]

Now, we first aim to estimate \(\Pb(Y_2 \in \widehat{C}_\alpha(\Xb_2, \Zbt_2) \mid \drawone, \dttwo )\) for fixed \(\drawone\) and \(\dttwo\), using \(\dtest\). To this end, since \((\Xb_2, Y_2, \Zbt_2)\) is independent of \(\drawone\) and \(\dttwo\), we apply the following simple trick:
\begin{align*}
    &\mathbb{P}_{(\Xb_2, Y_2, \Zbt_2) \sim P_{(\Xb_2, Y_2, \Zb_2) \mid \Gb_2} \times P_{\Gb_2}}\left(Y_2 \in \widehat{C}_\alpha(\Xb_2, \Zbt_2) \ \middle|\  \drawone, \dttwo \right) \\
    &= \mathbb{E}_{(\Xb_2, Y_2, \Zbt_2) \sim P_{(\Xb_2, Y_2, \Zb_2) \mid \Gb_2} \times P_{\Gb_2}}\left[\mathbf 1\{Y_2 \in \widehat{C}_\alpha(\Xb_2, \Zbt_2)\} \ \middle|\  \drawone, \dttwo \right]  \\
    &= \mathbb{E}_{(\Xb_2, Y_2, \Zbt_2) \sim P_{(\Xb_2, Y_2, \Zb_2) \mid \Gb_2} \times P_{\Gb_{2h} \mid \rfull_{2h} = 1}}\left[\mathbf 1\{Y_2 \in \widehat{C}_\alpha(\Xb_2, \Zbt_2)\} \cdot \frac{P_{\Gb_2}(\Gb_2)}{P_{\Gb_{2h} \mid \rfull_{2h} = 1}(\Gb_2)} \ \middle| \ \drawone, \dttwo\right],
\end{align*}
for each $h \in [H]$. Then, conditional on $\drawone$ and $\dttwo$, the following quantity is an unbiased estimator of \(\Pb(Y_2 \in \widehat{C}_\alpha(\Xb_2, \Zb_2) \mid \drawone, \dttwo)\):
\begin{equation*}
     \frac{1}{|\itest|} \sum_{h =1}^H \sum_{i \in \itesth}  \frac{P_{\Gb_2}(\Gb_{2i})}{ P_{\Gb_{2h} \mid \rfull_{2h} = 1}(\Gb_{2i})}  \cdot \mathbf 1\big\{Y_{2i} \in \widehat{C}_\alpha(\Xb_{2i}, \Zbt_{2i}; \drawone, \dttwo)\big\}.
\end{equation*}
For each subgroup $\gb \in \kg$ and stratum $h \in [H]$, we estimate $P_{\Gb_{2h} \mid \rfull_{2h} = 1}(\gb)$ by
\begin{equation*}
\widehat P_{\Gb_{2h} \mid \rfull_{2h} = 1}(\gb) := \frac{\sum_{i \in \itwo^Y \setminus \itest} \mathbf 1\{S_{2i} = h, \Gb_{2i} = \gb\}}{\sum_{i \in \itwo^Y \setminus \itest} \mathbf 1\{S_{2i} = h\}},
\end{equation*}
where $S_{2i} \in [H]$ denotes the stratum membership of the $i$th unit at $t = 2$.
We then randomly split $\done$ into training and calibration sets $B$ times to reduce the variance of the estimator.
Let $\widehat C_\alpha(\cdot, \cdot \ ; \drawone^{(b)}, \dttwo)$ denote the marginal prediction set obtained from the $b$th split.
We estimate the left-hand side of~\eqref{eqn:eval-1-2}, the marginal coverage probability, by
\begin{equation*}
\begin{adjustbox}{max width=\linewidth}
$\displaystyle
\mathbb{P}\big(Y_2 \in \widehat{C}_\alpha(\Xb_2, \Zbt_2)\big) \approx \frac{1}{B} \sum_{b=1}^B \left\{ \frac{1}{|\itest|} \sum_{h =1}^H \sum_{i \in \itesth}  \frac{P_{\Gb_2}(\Gb_{2i})}{ \widehat P_{\Gb_{2h} \mid \rfull_{2h} = 1}(\Gb_{2i})}  \cdot \mathbf 1\big\{Y_{2i} \in \widehat{C}_\alpha(\Xb_{2i}, \Zbt_{2i}; \drawone^{(b)}, \dttwo)\big\} \right\}.
$
\end{adjustbox}
\end{equation*}

\subsection{Subgroup-wise Empirical Coverage} \label{app:eval-2}
For each \(\gb \in \kg\), let $\dtwog^Y$ denote the subset of $\dtwog$ with observed outcomes:
\[
\dtwog^Y = \{(\Xb_{2i}, Y_{2i}, \Zb_{2i}, W_{2i}^Y) \mid i \in \itwog, \rfull_{2i} = 1 \} = \{(\Xb_{2i}, Y_{2i}, \Zb_{2i}, W_{2i}^Y)\}_{i \in \itwog^Y}.
\]
We first randomly select a subset of proportion $\beta \in (0,1)$ from $\dtwog^Y$ to form the subgroup-specific test dataset $\dtestg$, with index set $\itestg \subset \itwog^Y$.
Then, the remaining covariates are
\[
\dttwog := \left\{(\Xb_{2i}, \Zb_{2i}) \ \middle| \ i \in \itwog \setminus \itestg \right\} = \left\{(\Xb_{2i}, \Zb_{2i})\right\}_{i \in \ittwog}.
\]
We use \(\dttwog\) to construct group-conditional prediction sets, and \(\dtestg\) to compute empirical coverage.
The proportion $\beta$ can be chosen to ensure sufficient sample sizes for both $\dttwog$ and $\dtestg$; in our application to the NHANES dataset, we set $\beta = 0.6$.

Recall that the prediction set \(\widehat{C}_{\alpha, \gb}(\cdot, \cdot) \equiv \widehat{C}_{\alpha, \gb}(\cdot, \cdot \ ; \donegt, \dttwog)\) constructed with $\donegt$ and $\dttwog$ is said to satisfy group-conditional validity at miscoverage level \(\alpha \in (0, 1)\) if
\begin{equation} \label{eqn:eval-2-2}
    \Pb\big(Y_2 \in \widehat C_{\alpha, \gb}(\Xb_2, \Zb_2) \bigm| \Gb_2 = \gb \big) \ge 1-\alpha, \qquad \forall \gb \in \kg,
\end{equation}
where the probability is taken over \(\donegt\), \(\dttwog\), and an independent test point \((\Xb_2, Y_2, \Zb_2)\) drawn from subgroup $\gb$ of the $t = 2$ population.
Since $(\Xb_{2i}, Y_{2i}, \Zb_{2i}) \iid P_{(\Xb_2, Y_2, \Zb_2) \mid \Gb_2 = \gb}$ for $i \in \itestg$, the subgroup-specific test dataset $\dtestg$ can be regarded as an i.i.d.\ sample from the target distribution within subgroup $\gb$.
Accordingly, conditional on $\donegt$ and $\dttwog$, the following estimator is an unbiased estimator of $\Pb(Y_2 \in \widehat C_{\alpha, \gb}(\Xb_2, \Zb_2) \mid \Gb_2 = \gb, \donegt, \dttwog)$:
\begin{equation*} \label{eqn:group-2}
    \frac{1}{|\itestg|} \sum_{i \in \itestg} \mathbf 1\big\{Y_{2i} \in \widehat{C}_{\alpha, \gb}(\Xb_{2i}, \Zb_{2i} ; \donegt, \dttwog)\big\}.
\end{equation*}
We then randomly split $\doneg$ into subgroup-specific training and calibration sets $B$ times to reduce the variance of the resulting estimator.
Let $\widehat C_{\alpha, \gb}(\cdot, \cdot  \ ; \donegt^{(b)}, \dttwog)$ denote the resulting group-conditional prediction set for the $b$th split.
For each $\gb$, the left-hand side of \eqref{eqn:eval-2-2}, that is, the subgroup-wise coverage probability, is then estimated as
\begin{align*}
\Pb\big(Y_2 \in \widehat{C}_{\alpha, \gb}(\Xb_2, \Zb_2) \bigm| \Gb_2 = \gb\big) \approx \frac{1}{B}\sum_{b=1}^B \left\{ \frac{1}{|\itestg|} \sum_{i \in \itestg} \mathbf 1\big\{Y_{2i} \in \widehat{C}_{\alpha, \gb}(\Xb_{2i},\Zb_{2i} ; \donegt^{(b)}, \dttwog)\big\} \right\}.
\end{align*}

\section{Coverage Properties of GWCP Relative to DA-WCP} \label{app:coverage}

In this section, we provide a theoretical analysis of the conditions under which GWCP undercovers or overcovers, thereby identifying regimes in which DA-WCP, our proposed method, can outperform GWCP.
For ease of comparison, we assume that the categorical covariate consist only of demographic subgroup variables (i.e., $\Zbt_{ti} = \Gb_{ti}$ and $\zbt = \gb$), consider a single stratum (i.e., $H = 1$) so that we suppress the stratum index $h$, and use fixed quantile levels $(\alo, \ahi)$ for the CQR score.
Under this setting, the GWCP and DA-WCP thresholds both take the form
\begin{align*}
\widehat t_\alpha
= 
\q_{1-\alpha}\!\left(
  \sum_{i \in \mathcal{I}_{\mathrm{cal}}}
  \frac{\widehat{w}(\Xb_{1i}, \Gb_{1i})}
       {\sum_{j \in \mathcal{I}_{\mathrm{cal}}} \widehat{w}(\Xb_{1j}, \Gb_{1j})}
  \cdot \delta_{s_i}
\right), 
\end{align*}
where $\widehat w(\xb, \gb) = \rng(\xb) \cdot p_{2\gb}/\widehat q_{1\gb}$ for DA-WCP and $\widehat w(\xb, \gb) = p_{2\gb}/\widehat q_{1\gb}$ for GWCP.

Here, $s_i = |Y_{1i} - \widehat \mu(\Xb_{1i}, \Gb_{1i})|$ for the absolute residual score, and 
$s_i = \max\{\widehat q_{\alo}(\Xb_{1i}, \Gb_{1i}) - Y_{1i},\, Y_{1i} - \widehat q_{\ahi}(\Xb_{1i}, \Gb_{1i})\}$ for the CQR score.  
We denote the DA-WCP threshold by $\tdawcp$ and the GWCP threshold by $\tgwcp$.  
While DA-WCP achieves coverage close to the nominal level across all scenarios, as guaranteed by Corollary~1,  
GWCP will \emph{undercover} if $\tgwcp < \tdawcp$ and \emph{overcover} if $\tgwcp > \tdawcp$.
We first prove the following theorem and then identify the settings in which GWCP undercovers or overcovers.

\begin{apxthm}[Limit representation of the DA-WCP threshold] \label{thm:threshold}
Define $\Dc := \dttra \cup \dtwo$.
Conditional on $\nrawone = |\irawone|$, $n_1 = |\ione|$, and $n_2 = |\itwo|$, for any $\alpha \in (0, 1)$, as $n_1,n_2 \rightarrow \infty$, the DA-WCP threshold $\tdawcp$ satisfies
\[
\begin{adjustbox}{max width=\linewidth}
$\displaystyle
\tdawcp = \inf \left\{y \in \mathbb R : 
\sum_{\gb \in \kg} p_{2\gb} \cdot 
\Eb\left[r_\gb(\Xb_1) \cdot \mathbf 1\{S_{1\gb} \le y\} 
\bigm| \Dc,\, \Gb_1 = \gb \right] 
\ge 1-\alpha \right\} + o_p(1),
$
\end{adjustbox}
\]
where $S_{1\gb} := \mathcal S(\Xb_1, Y_1, \gb)$ has a non-atomic distribution and its support is an interval in $\Rb$ for each $\gb$, and the expectation is taken with respect to $(\Xb_1, Y_1) \sim P_{(\Xb_1, Y_1) \mid \Gb_1 = \gb}$.
\end{apxthm}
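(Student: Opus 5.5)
The plan is to show that the weighted empirical CDF defining $\tdawcp$ converges uniformly to a deterministic limit CDF, and then to pass to quantiles. The continuity and interval-support assumptions on $S_{1\gb}$ are exactly what make that last step work.

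\emph{Setup.} With $H=1$ and $\Zbt=\Gb$, the estimated weight is $\widehat w(\xb,\gb)=\widehat r_\gb(\xb)\,p_{2\gb}/\widehat q_{1\gb}$. Define the data-dependent weighted CDF
\[
\widehat F(y):=\sum_{i\in\ical}\frac{\widehat w(\Xb_{1i},\Gb_{1i})}{\sum_{j\in\ical}\widehat w(\Xb_{1j},\Gb_{1j})}\mathbf 1\{s_i\le y\},
\]
so that $\tdawcp=\inf\{y:\widehat F(y)\ge 1-\alpha\}$. The target limit is
\[
F(y):=\sum_{\gb}p_{2\gb}\,\Eb\bigl[r_\gb(\Xb_1)\mathbf 1\{S_{1\gb}\le y\}\mid\Dc,\Gb_1=\gb\bigr].
\]
Conditionally on $\Dc$ the score function is fixed. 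By the change-of-measure identity used for term (IV) in the proof of Theorem~2, $F(y)=\Pb(\mathcal S(\Tb_2)\le y\mid\Dc)$, so $F$ is a genuine CDF.

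\emph{Uniform convergence.} I will reuse the decomposition \eqref{eqn:thm2-decompose} from the proof of Theorem~2 with $H=1$ and $L=1$. On the event $\edc$ of \eqref{eqn:thm2-stilde}, the bounds \eqref{eqn:thm2-bound-4}--\eqref{eqn:thm2-bound-6} give
\[
\Eb\bigl[\sup_y|\widehat F(y)-F(y)|\mid\Dc\bigr]\lesssim \Eb\bigl[|\widehat w(\Tb_1)-w(\Tb_1)|\mid\Dc\bigr]+n_1^{-1/2}.
\]
The right-hand side is $o_p(1)$ by Proposition~\ref{prop:weight}, and $\Pb(\edc)\to1$ as shown in the proof of Theorem~2. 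Markov's inequality then gives $\sup_y|\widehat F-F|=o_p(1)$ unconditionally, after integrating out the conditioning on the sample sizes.

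\emph{From CDFs to quantiles (main obstacle).} Let $t^*:=\inf\{y:F(y)\ge1-\alpha\}$. I need both $F(t^*+\varepsilon)>1-\alpha$ and $F(t^*-\varepsilon)<1-\alpha$, with margins bounded away from zero in probability. The second inequality holds by the definition of $t^*$.

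For the first, non-atomicity and interval support of each $S_{1\gb}$ make $F$ continuous and strictly increasing on the union of the supports. The weights $r_\gb$ are bounded below by $\eta_0$ via (A1) of Assumption~\ref{ass:kliep-1}, which prevents flat pieces. The danger is that $F$ itself depends on $\Dc$ through the fitted score. To handle this, I would argue that the fitted score converges, so $F$ converges to a fixed, strictly increasing, continuous limit $F_\infty$. If that is unavailable, I would state the margins conditionally on $\Dc$ and use a subsequence argument.

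Given the margins, the event $\{|\tdawcp-t^*|>\varepsilon\}$ forces $\sup_y|\widehat F-F|\ge\delta_\varepsilon$, which has vanishing probability. Hence $\tdawcp=t^*+o_p(1)$, which is the stated representation.
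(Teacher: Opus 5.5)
Your argument is correct at the same level of rigor as the paper's, but the convergence step takes a different route.

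\textbf{How the convergence step differs.} The paper's $\widehat G$ and $G$ are your $\widehat F$ and $F$. The paper argues pointwise. Conditionally on $\Dc$, it applies a law of large numbers to the calibration sample to get $\widehat G(y)\to\tilde G(y)$. It then replaces $\widehat q_{1\gb}$ and $\widehat r_\gb$ by their targets in the numerator using \eqref{eqn:bound-3-3}, and shows the denominator is $1+o_p(1)$. Finally, it upgrades $\widehat G(y)=G(y)+o_p(1)$ for each fixed $y$ to $\lVert\widehat G-G\rVert_\infty=o_p(1)$ by monotonicity, since $G$ is a CDF.

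You instead identify $F(y)$ with $\Pb(\mathcal S(\Tb_2)\le y\mid\Dc)$ through the change of measure behind term (IV). Then $\sup_y|\widehat F-F|$ is exactly the quantity $W(\Dc,\dcal)$ already controlled in the proof of Theorem~2. With $H=L=1$, \eqref{eqn:thm2-bound-6}--\eqref{eqn:thm2-bound-7} give uniform convergence at once. They also give an explicit rate: the $L^1$ weight-estimation error plus $n_1^{-1/2}$.

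Your route buys more than a rate. It sidesteps the paper's almost-sure law of large numbers taken conditionally on a $\Dc$ that itself grows with $n$. It also sidesteps the P\'olya-type upgrade, which does not follow from convergence at each fixed $y$ when the limit $G$ is data-dependent. In exchange, the paper's route is elementary and needs none of the Talagrand/Dudley machinery.

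\textbf{The final quantile step.} Here you and the paper stand on the same footing. The paper's appeal to continuity of the quantile functional tacitly assumes that the margins $F(t^*+\varepsilon)-(1-\alpha)$ and $(1-\alpha)-F(t^*-\varepsilon)$ stay bounded away from zero in probability. Yet $F$ moves with $\Dc$ through the fitted score. You are right to flag this, and right that both margins are needed. The definition of $t^*$ only gives $F(t^*-\varepsilon)<1-\alpha$, not a gap bounded away from zero.

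Neither of your proposed fixes closes it, however.
\begin{itemize}
\item Convergence of the fitted score to a fixed limit is not implied by any hypothesis in the paper, which assumes nothing about the learning algorithm.
\item A subsequence argument cannot produce the margin. Along an almost-surely convergent subsequence, the random CDFs $F_n$ may still flatten around $t_n^*$ faster than $\sup_y|\widehat F_n-F_n|$ vanishes. The quantile gap then need not shrink.
\end{itemize}
To make this step rigorous, you have to add an explicit assumption of this type. For example: for fixed $c,\varepsilon_0>0$, with probability tending to one, $F(y')-F(y)\ge c\,(y'-y)$ for all $t^*-\varepsilon_0\le y<y'\le t^*+\varepsilon_0$. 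Both margins are then at least $c\,\varepsilon$, and your final implication goes through.
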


\begin{proof}[Proof of Theorem~\ref{thm:threshold}]
For the DA-WCP weight function $\widehat w(\xb, \gb) = \rng(\xb) \cdot p_{2\gb}/\widehat q_{1\gb}$, define
\[
\widehat G(y) := 
\frac{\sum_{i \in \ical} \widehat w(\Xb_{1i}, \Gb_{1i}) \cdot \mathbf 1\{s_i \le y\}}
     {\sum_{i \in \ical} \widehat w(\Xb_{1i}, \Gb_{1i})}, 
\qquad \forall y \in \mathbb R. 
\]
By Lemma~\ref{lem:kliep-1}, the event $\edc$ defined by
\[
\edc := \bigcap_{\gb \in \kg} \left\{\widehat q_{1\gb} > \frac{q_{1\gb}}{2}, \ S_\gb(n_{\gb}) \right\},
\]
where $S_\gb(n_{\gb}) = \snz$ in this case, satisfies $\Pb(\edc \mid \tilde n_1, n_1, n_2) \rightarrow 1$ as $n_1, n_2 \rightarrow \infty$.
On $\edc$, since $\lVert \widehat r_\gb \rVert_\infty \le \bar M$ for all $\gb \in \kg$, we have
$\widehat w(\xb, \gb) = \rng(\xb) \cdot p_{2\gb}/\widehat q_{1\gb} \le 2 \bar M/(\min_\gb q_{1\gb})$.
Therefore, on $\edc$, by the strong law of large numbers, for each $y \in \mathbb R$,
\begin{equation} \label{eqn:coverage-1}
\widehat G(y) 
= 
\frac{\sum_{i \in \ical} \widehat w(\Xb_{1i}, \Gb_{1i}) \cdot \mathbf 1\{s_i \le y\}}
     {\sum_{i \in \ical} \widehat w(\Xb_{1i}, \Gb_{1i})} 
\xrightarrow[]{a.s.} 
\frac{\Eb[\widehat w(\Xb_1, \Gb_1) \cdot \mathbf 1\{S_{1} \le y\} \mid \Dc]}
     {\Eb[\widehat w(\Xb_1, \Gb_1) \mid \Dc]} =: \tilde G(y),
\end{equation}
where $S_{1\gb} =  \mathcal S(\Xb_1, Y_1, \Gb_1)$ and the expectation is taken with respect to
$(\Xb_1, Y_1,  \Gb_1) \sim P_{(\Xb_1, Y_1) \mid \Gb_1} \times P_{\Gb_1 \mid \rfull_1 = 1}$.
For the numerator of \eqref{eqn:coverage-1}, for each $y \in \Rb$ we have
\begin{align*}
\Eb\left[\widehat w(\Xb_1, \Gb_1) \cdot \mathbf 1\{S_{1} \le y\} \bigm| \Dc \right] 
&= \sum_\gb q_{1\gb} \cdot \Eb\left[\frac{p_{2\gb}}{\widehat q_{1\gb}} \cdot \rng(\Xb_1) \mathbf 1\{S_{1\gb} \le y\} \Bigm| \Dc, \Gb_1 = \gb \right] \\
&= \sum_\gb (p_{2\gb} + o_p(1)) \cdot \Eb\left[\rng(\Xb_1) \mathbf 1\{S_{1\gb} \le y\} \bigm| \Dc, \Gb_1 = \gb \right] \\
&= \sum_\gb (p_{2\gb} + o_p(1)) \cdot \left\{\Eb\left[r_\gb(\Xb_1) \mathbf 1\{S_{1\gb} \le y\} \bigm| \Dc, \Gb_1 = \gb \right] + o_p(1)\right\} \\
&= \sum_\gb p_{2\gb} \cdot \Eb[r_\gb(\Xb_1) \mathbf 1\{S_{1\gb} \le y\} \mid \Dc, \Gb_1 = \gb] + o_p(1),
\end{align*}
where the second equality uses $\widehat q_{1\gb} = q_{1\gb} + o_p(1)$ and the third equality follows from \eqref{eqn:bound-3-3}.
For the denominator, $\Eb[\widehat w(\Xb_1, \Gb_1) \mid \Dc] = 1 + o_p(1)$ by \eqref{eqn:thm2-term3-2} and Proposition~\ref{prop:weight}.
Thus,
\begin{equation} \label{eqn:coverage-2}
\tilde G(y) 
= \sum_\gb p_{2\gb} \cdot \Eb[r_\gb(\Xb_1) \cdot \mathbf 1\{S_1 \le y\} \mid \Dc, \Gb_1 = \gb] + o_p(1) 
= G(y) + o_p(1), \quad \forall y \in \Rb.
\end{equation}
Fix an arbitrary $\varepsilon > 0$ and $y \in \mathbb R$. Then,
\begin{align*}
&\Pb\!\left(|\widehat G(y) - G(y)| > \varepsilon \right) \\
&\le 
\Eb\!\left[\,\Pb\!\left(|\widehat G(y) - \tilde G(y)| > \tfrac{\varepsilon}{2} \,\big|\, \Dc \right)\!\right]
+ \Pb\!\left(|\tilde G(y) - G(y)| > \tfrac{\varepsilon}{2}\right) \\
&\le 
\Eb\!\left[\Pb\!\left(|\widehat G(y) - \tilde G(y)| > \tfrac{\varepsilon}{2} \,\big|\, \Dc \right) 
\mathbf 1\{\edc\}\right]
+ \Pb(\edc^c)
+ \Pb\!\left(|\tilde G(y) - G(y)| > \tfrac{\varepsilon}{2}\right).
\end{align*}
As $n_1, n_2 \rightarrow \infty$, the second term tends to $0$ since $\Pb(\edc) \to 1$, and the third term vanishes by \eqref{eqn:coverage-2}.
For the first term, by \eqref{eqn:coverage-1}, the integrand inside the expectation converges to $0$ almost surely and is bounded; hence, the entire term also vanishes by the dominated convergence theorem.
Therefore, $\widehat G(y) = G(y) + o_p(1)$ for all $y \in \mathbb R$.
Moreover, since $G$ is monotone increasing and satisfies $\lim_{y \to -\infty} G(y) = 0$ and $\lim_{y \to +\infty} G(y) = 1$, it follows that $G$ is a CDF, and hence $\lVert \widehat G - G \rVert_\infty = o_p(1)$.
Finally, because $S_{1\gb}$ has a non-atomic distribution and its support is an interval in $\Rb$ for each $\gb$, $G$ is strictly increasing, and thus by continuity of the quantile functional, we conclude that
\[
\tdawcp = \q_{1-\alpha}(\widehat G) = \q_{1-\alpha}(G) + o_p(1).
\]
\end{proof}

Similarly, we can derive a limit representation of the GWCP threshold $\tgwcp$. 
The only difference is that the GWCP weight function does not include the density-ratio term, 
and hence $r_\gb(\Xb_1)$ is replaced by $1$ in Theorem~\ref{thm:threshold}.

\begin{apxcor}[Limit representation of the GWCP threshold] \label{cor:threshold}
Conditional on $\nrawone = |\drawone|$, $n_1 = |\done|$, and $n_2 = |\dtwo|$,  
for $\alpha \in (0, 1)$, as $n_1, n_2 \rightarrow \infty$,  
the GWCP threshold $\tgwcp$ satisfies
\[
\tgwcp = \inf \left\{y \in \mathbb R : 
\sum_{\gb \in \k} p_{2\gb} \cdot 
\Pb(S_{1\gb} \le y 
\mid \Dc,\, \Gb_1 = \gb)
\ge 1-\alpha \right\} + o_p(1),
\]
where $S_{1\gb} := \mathcal S(\Xb_1, Y_1, \gb)$ has a non-atomic distribution and its support is an interval in $\Rb$ for each $\gb$, and the expectation is taken with respect to  
$(\Xb_1, Y_1) \sim P_{(\Xb_1, Y_1) \mid \Gb_1 = \gb}$.
\end{apxcor}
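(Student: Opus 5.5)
The plan is to rerun the proof of Theorem~\ref{thm:threshold} with the GWCP weight $\widehat w(\xb,\gb)=p_{2\gb}/\widehat q_{1\gb}$ in place of the DA-WCP weight; effectively, the KLIEP factor $\rng$ is replaced by the constant function $1$. Define
\[
\widehat G(y):=\frac{\sum_{i\in\ical}\widehat w(\Xb_{1i},\Gb_{1i})\mathbf 1\{s_i\le y\}}{\sum_{i\in\ical}\widehat w(\Xb_{1i},\Gb_{1i})},
\]
so that $\tgwcp=\q_{1-\alpha}(\widehat G)$. The target limit is
\[
G_0(y):=\sum_{\gb}p_{2\gb}\,\Pb(S_{1\gb}\le y\mid\Dc,\Gb_1=\gb).
\]

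First, I work on the event $\edc:=\bigcap_{\gb}\{\widehat q_{1\gb}>q_{1\gb}/2\}$. No KLIEP event is needed here. Since $\widehat q_{1\gb}$ is a sample proportion from $\dtra$ (conditionally i.i.d.\ $\mathrm{Cat}((q_{1\gb})_\gb)$ given the sizes), the law of large numbers gives $\Pb(\edc\mid\tilde n_1,n_1,n_2)\to1$. On $\edc$ the weights are bounded by $2\max_\gb p_{2\gb}/\min_\gb q_{1\gb}$. Conditional on $\Dc$ and the training split, the calibration points are i.i.d.\ from $P_{(\Xb_1,Y_1)\mid\Gb_1}\times P_{\Gb_1\mid\rfull_1=1}$ and independent of $\widehat w$ and of the score. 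The strong law of large numbers applied to numerator and denominator then gives, for each $y$, $\widehat G(y)\to\tilde G(y)$ almost surely, with $\tilde G$ the ratio of conditional expectations.

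Second, I compute $\tilde G$. Conditioning on $\Gb_1=\gb$, the numerator equals
\[
\sum_\gb q_{1\gb}\,\frac{p_{2\gb}}{\widehat q_{1\gb}}\,\Pb(S_{1\gb}\le y\mid\Dc,\Gb_1=\gb).
\]
Since $\widehat q_{1\gb}=q_{1\gb}+o_p(1)$ and the probabilities are bounded by $1$, this is $G_0(y)+o_p(1)$. The denominator is $\sum_\gb q_{1\gb}p_{2\gb}/\widehat q_{1\gb}=\sum_\gb p_{2\gb}+o_p(1)=1+o_p(1)$. This step is simpler than in Theorem~\ref{thm:threshold} because no appeal to \eqref{eqn:bound-3-3} or Proposition~\ref{prop:weight} is needed. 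Hence $\tilde G(y)=G_0(y)+o_p(1)$.

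Third, I transfer the conditional almost-sure convergence to unconditional convergence in probability, exactly as in the proof of Theorem~\ref{thm:threshold}. I split $\Pb(|\widehat G(y)-G_0(y)|>\varepsilon)$ into three pieces: the conditional term on $\edc$, handled by dominated convergence; $\Pb(\edc^c)$; and the $\tilde G$-versus-$G_0$ term. This yields pointwise convergence $\widehat G(y)=G_0(y)+o_p(1)$. Because $G_0$ is a mixture of CDFs with weights $p_{2\gb}$ summing to one, it is itself a CDF. The upgrade to $\|\widehat G-G_0\|_\infty=o_p(1)$ uses the standard Pólya-type argument: monotone functions converging pointwise to a continuous CDF converge uniformly, taking a finite grid of points.

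The step I expect to need the most care is the final quantile continuity. Under the hypothesis, each $S_{1\gb}$ is non-atomic with interval support, so $G_0$ is continuous and strictly increasing on its support, and the $(1-\alpha)$-quantile is a continuity point of the quantile functional. Uniform convergence $\widehat G\to G_0$ then gives $\q_{1-\alpha}(\widehat G)=\q_{1-\alpha}(G_0)+o_p(1)$. One caveat: if the supports for different $\gb$ were disjoint intervals, $G_0$ could have flat stretches. I would handle this the way Theorem~\ref{thm:threshold} implicitly does, by assuming that the union of the supports is an interval, or that the level $1-\alpha$ does not fall on a flat part of $G_0$.
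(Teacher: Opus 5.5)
Your proposal is correct and follows the paper's own proof. That proof is only the remark that the argument for Theorem~\ref{thm:threshold} goes through with $r_\gb$ replaced by $1$, and you make that rerun explicit, rightly noting that the KLIEP event and Proposition~\ref{prop:weight} are no longer needed. Two side notes. Your caveat about disjoint supports is fair: the paper passes over it when it asserts that $G$ is strictly increasing. Also, if GWCP computes $\widehat q_{1\gb}$ from $\dcal$ rather than $\dtra$ (as the main text says), your conditional-independence step no longer applies, but it is also unnecessary: the normalized weights collapse to $p_{2\gb}/|\{i\in\ical:\Gb_{1i}=\gb\}|$, so $\widehat G=\sum_{\gb}p_{2\gb}\widehat F_{\gb}$, where $\widehat F_{\gb}$ are the within-group empirical score CDFs.
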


For each $\gb \in \k$ and $y \in \mathbb R$, the following holds:
\begin{equation} \label{eqn:coverage-3}
\Eb\left[r_\gb(\Xb_1) \cdot \mathbf 1\{S_{1\gb} \le y\} 
\,\big|\, \Dc,\, \Gb_1 = \gb\right] 
= 
\Eb\left[r_\gb(\Xb_1) \cdot \Pb(S_{1\gb} \le y \mid \Dc, \Xb_1, \Gb_1 = \gb) 
\,\big|\, \Dc,\, \Gb_1 = \gb \right].
\end{equation}
If $r_\gb(\Xb_1)$ exhibits little variation for each $\gb$, then $r_\gb(\Xb_1) \approx 1$ 
since $\Eb[r_\gb(\Xb_1) \mid \Gb_1 = \gb] = 1$, 
and consequently, by Theorem~\ref{thm:threshold} and Corollary~\ref{cor:threshold}, we have $\tdawcp \approx \tgwcp$. 
On the other hand, if $\Pb(S_{1\gb} \le y \mid \Dc, \Xb_1, \Gb_1 = \gb) \appropto r_\gb(\Xb_1)$ (approximately proportional) for each $\gb$ and $y \in \mathbb R$, we can infer that $\tdawcp < \tgwcp$.
Similarly, $\Pb(S_{1\gb} \le y \mid \Dc, \Xb_1, \Gb_1 = \gb) \appropto 1/r_\gb(\Xb_1)$ implies $\tdawcp > \tgwcp$.
We now examine in detail when each of these situations may arise for both the absolute residual score and the CQR score.

\subsection{Absolute Residual Score} \label{app:coverage-ar}

For the absolute residual score, suppose that the outcome $Y_1$ is generated from the following heteroscedastic model with conditional standard deviation $\sigma(\Xb_1, \Gb_1)$:
\[
Y_1 = \mu(\Xb_1, \Gb_1) + \sigma(\Xb_1, \Gb_1) \cdot \varepsilon, 
\qquad 
\varepsilon \indep (\Xb_1, \Gb_1),
\]
where $\Eb[\varepsilon] = 0$ and $\Eb[\varepsilon^2] = 1$.
We consider the following two situations.

\paragraph{(a) The estimated mean function $\widehat \mu$ is consistent.}

If $\widehat \mu$ is consistent (i.e., $\widehat \mu \approx \mu$), then for the absolute residual score $\mathcal S(\xb, y, \gb) = |y - \widehat \mu(\xb, \gb)|$, for each $\gb$ and $y > 0$,
\begin{align*}
\Pb\big(S_{1\gb} \le y \bigm| \Dc, \Xb_1, \Gb_1 = \gb \big) &= \Pb\big(\widehat \mu(\Xb_1, \gb) - y \le Y_1 \le \widehat \mu(\Xb_1, \gb) + y \bigm| \Dc, \Xb_1, \Gb_1 = \gb \big) \\
&\approx \Pb\big( \mu(\Xb_1, \gb) - y \le Y_1 \le  \mu(\Xb_1, \gb) + y \bigm| \Dc, \Xb_1, \Gb_1 = \gb \big) \\
&= \Pb\left( -\frac{y}{\sigma(\Xb_1, \gb)} \le \varepsilon \le \frac{y}{\sigma(\Xb_1, \gb)} \Bigm| \Dc, \Xb_1, \Gb_1 = \gb \right).
\end{align*}
Therefore, by the previous argument, for each $\gb$ and $y > 0$, if $\sigma(\xb, \gb) \appropto r_\gb(\xb)$ as a function of $\xb$, then $\Pb(S_1 \le y \mid \Dc, \Xb_1 = \xb, \Gb_1 = \gb)$ tends to decrease as $r_\gb(\xb)$ increases, implying that $\tdawcp > \tgwcp$ and GWCP undercovers.
Similarly, if $\sigma(\xb, \gb) \appropto 1/r_\gb(\xb)$ as a function of $\xb$ for each $\gb$, then $\tdawcp < \tgwcp$ and GWCP overcovers.

\paragraph{(b) The estimated mean function $\widehat \mu$ is inconsistent.}
However, $\widehat \mu$ can be inconsistent when the true mean function $\mu$ is nonlinear but $\widehat \mu$ is estimated using a linear model.
In this case, $\widehat \mu \approx \mu$ need not hold even with sufficiently large sample sizes.
Unlike the previous setting, it is generally not possible to explicitly characterize when GWCP overcovers or undercovers; instead, the relationship can be described as follows.
For each $\gb$ and $y > 0$, since
\[
\Pb\big(S_1 \le y \bigm| \Dc, \Xb_1, \Gb_1 = \gb \big) = \Pb\left(\widehat \mu(\Xb_1, \gb) - y \le Y_1 \le \widehat \mu(\Xb_1, \gb) + y \bigm| \Dc, \Xb_1, \Gb_1 = \gb\right),
\]
the relation $\Pb\left(\widehat \mu(\Xb_1, \gb) - y \le Y_1 \le \widehat \mu(\Xb_1, \gb) + y \mid \Dc, \Xb_1, \Gb_1 = \gb\right) \appropto r_\gb(\Xb_1)$ as a function of $\Xb_1$ implies $\tdawcp < \tgwcp$, and hence GWCP overcovers,
whereas the opposite relation implies $\tdawcp > \tgwcp$, and hence GWCP undercovers.

\subsection{CQR Score} \label{app:coverage-cqr}

\paragraph{(a) Homoscedastic error \& $\hatqlo$ and $\hatqhi$ are consistent.}
For the CQR score, we first consider the special case in which the estimated quantile functions $\hatqlo$ and $\hatqhi$ are consistent (i.e., $\hatqlo \approx \qlo$ and $\hatqhi \approx \qhi$), and the error is homoscedastic.
Then we can write $Y_1 = \mu(\Xb_1, \Gb_1) + \epsilon$ with $\epsilon \indep (\Xb_1, \Gb_1)$, and for each $\gb \in \kg$ and $y \in \Rb$ we have
\begin{align*}
\Pb\big(S_{1\gb} \le y \bigm| \Dc, \Xb_1, \Gb_1 = \gb \big) 
&= \Pb\left( \hatqlo(\Xb_1, \gb) - y \le Y_1 \le \hatqhi(\Xb_1, \gb) + y \bigm| \Dc, \Xb_1, \Gb_1 = \gb \right) \\
&\approx \Pb\left( \qlo(\Xb_1, \gb) - y \le Y_1 \le \qhi(\Xb_1, \gb) + y \bigm| \Dc, \Xb_1, \Gb_1 = \gb \right) \\
&= \Pb\left( c_{\alo} - y \le \epsilon \le c_{\ahi} + y \bigm| \Dc, \Xb_1, \Gb_1 = \gb \right),
\end{align*}
where $c_{\alo}$ and $c_{\ahi}$ are the $\alo$ and $\ahi$ quantiles of $\epsilon$, respectively.
Thus, $\Pb(S_{1\gb} \le y \mid \Dc, \Xb_1, \Gb_1 = \gb)$ does not depend on $\Xb_1$, and hence, by the previous argument, we have $\tgwcp \approx \tdawcp$, so both DA-WCP and GWCP achieve coverage close to the nominal level.

\paragraph{(b) General case.}
However, we generally cannot explicitly characterize conditions under which GWCP undercovers or overcovers.
This is because the situation may depend on the choice of $(\alo, \ahi)$ and is therefore more complicated than for the absolute residual score.
Instead, the relationship in each case can only be expressed in the following form.
The relation \[\Pb\left( \hatqlo(\Xb_1, \gb) - y \le Y_1 \le \hatqhi(\Xb_1, \gb) + y \mid \Dc, \Xb_1, \Gb_1 = \gb \right) \appropto r_\gb(\Xb_1)\] as a function of $\Xb_1$, for each $\gb \in \kg$ and $y \in \Rb$, implies $\tdawcp < \tgwcp$, and hence that GWCP overcovers, whereas the opposite relation implies that GWCP undercovers.

\section{Simulation Setup} \label{app:simul-setup}

Here, we present the detailed simulation setup in Section~\ref{sec-simul}.
For simplicity, we assume identical missingness patterns for covariates and outcomes at $t=1$ (i.e., $\rcov_{1i}=1 \Leftrightarrow \rfull_{1i}=1$), so $\drawone=\done$ and $\dttra=\dtra$.
Specifically, we set the realized sample sizes to $n_1 = n_2 = 5{,}000$ and $P_{G_2} = \mathrm{Mult}(0.05, 0.1, 0.15, 0.3, 0.4)$, and generate the datasets $\done=\{(\Xb_{1i}, Y_{1i}, \Gb_{1i})\}_{i=1}^{n_1}$ and $\dtwo=\{(\Xb_{2i}, \Gb_{2i})\}_{i=1}^{n_2}$ as follows:

\begin{enumerate}
    \item For each $i \in [n_1]$, generate $\Gb_{1i} \iid P_{\Gb_1 \mid \rfull_1=1}$, where
    \[
    P_{\Gb_1 \mid \rfull_1=1} =
    \begin{cases}
    P_{\Gb_2}, & \text{if (S2)}, \\
    \mathrm{Mult}(0.2, 0.2, 0.2, 0.2,0.2), & \text{otherwise}.
    \end{cases}
    \]
    
    \item For each $i \in [n_2]$, generate $\Gb_{2i} \iid P_{\Gb_2 \mid \rcov_2=1}$, where
    \[
    P_{\Gb_2 \mid \rcov_2=1} =
    \begin{cases}
    P_{\Gb_2}, & \text{if (S3)}, \\
    \mathrm{Mult}(0.1, 0.15, 0.2, 0.25, 0.3), & \text{otherwise}.
    \end{cases}
    \]
    
    \item For each \(t \in \{1,2\}\) and \(g \in \{1,\dots,5\}\), if \(\Gb_{ti}=g\), generate
    \[
    \Xb_{ti} \iid P_{\Xb_t \mid \Gb_t=g}, \quad \text{where} \quad P_{\Xb_t \mid \Gb_t=g}=\mathcal{N}(\boldsymbol\mu_{tg}, \boldsymbol\Sigma_{tg}).
    \]
    For each \(g \in \{1,\dots,5\}\), set
    \begin{align*}
    \boldsymbol\mu_{1g} &= \big(g-3,\, \sin(\pi g/2),\, 3-g,\, \cos(\pi g/2)\big)^\top, \\
    (\Sigmab_{1g})_{ij} &= 0.2^{\,|i-j|} + 0.1(g-1)\,\mathbf{1}\{i=j\}, \qquad 1 \le i,j \le 4,
    \end{align*}
    and define \(\mub_{2g} = \mub_{1g} + \boldsymbol\Delta_{\mub}\) and \(\Sigmab_{2g} = \Sigmab_{1g} + \boldsymbol\Delta_{\Sigmab}\), where
    \begin{align*}
    \boldsymbol\Delta_{\mub} &=
    \begin{cases}
    (0,0,0,0)^\top, & \text{if (S4)}, \\
    (0.3,\ -0.2,\ 0.4,\ 0.1)^\top, & \text{otherwise},
    \end{cases}
    \\
    \boldsymbol\Delta_{\Sigmab} &=
    \begin{cases}
    \mathrm{diag}(0, 0, 0, 0), & \text{if (S4)}, \\
    \mathrm{diag}(0.05,\ -0.02,\ 0.03,\ 0.01), & \text{otherwise}.
    \end{cases}
    \end{align*}

    \item Finally, at $t=1$, if $(\Xb_{1i}, \Gb_{1i}) = (\xb, g)$, independently generate
    \[
    Y_{1i} \sim P_{Y_1 \mid (\Xb_1=\xb, \Gb_1=g)}, \quad \text{where} \quad
    P_{Y_t \mid (\Xb_t=\xb, \Gb_t=g)} = \mathcal{N}\big(\bb^\top \xb + \gammab_g,\ \sigma^2(\xb, g)\big), \ \ t \in \{1,2\}.
    \]
    We set \( \bb = (5,-3,2,1)^\top \), \( (\gamma_1,\ldots,\gamma_5) = (-10,-5,0,5,10) \), and
    \[
    \sigma^2(\xb, g) =
    \begin{cases}
    5, & \text{if (S5)}, \\
    \exp(\boldsymbol\theta^\top \xb + \eta_g), & \text{otherwise},
    \end{cases}
    \qquad \forall \xb \in \Rb^d,\ g \in \{1,\dots,5\},
    \]
    where $\boldsymbol\theta = (0.5,-0.4,0.3,0.2)^\top$ and $(\eta_1,\ldots,\eta_5) = (1.2,1.6,2.0,2.4,2.8)$.

\end{enumerate}

\section{Additional Simulation Results} \label{app:simul}

Here we present additional simulation results demonstrating that GWCP may either undercover or overcover, whereas the proposed method, DA-WCP, consistently achieves the desired coverage across all settings. Motivated by the theoretical results in Appendix~\ref{app:coverage}, we construct two settings designed to illustrate the following scenarios: (1) \texttt{GWCP\_AR\_m} fails to attain nominal coverage; and (2) both \texttt{GWCP\_AR\_m} and \texttt{GWCP\_CQR\_m} fail to attain nominal coverage.

For the simulation, we adopt an analogous setup with five scenarios (S1)--(S5), as described in Section~\ref{sec-simul} and Appendix~\ref{app:simul-setup}, but use a simplified design to facilitate interpretation. To recall, (S1) represents the most general setting, whereas (S2)--(S5) are constructed by sequentially removing sources of distributional shift or modifying the error structure. In our simulation, we set $X_{ti} \in [0, 1]$, $Y_{ti} \in \mathbb R$, and $Z_{ti} = G_{ti} \in \{1, \ldots, 5\}$. Moreover, we assume $H = 1$, dropping the stratum index, and assume $\rcov_{1i} = 1$ if and only if $\rfull_{1i} = 1$. With realized sample sizes $n_1 = n_2 = 5{,}000$ and $P_{Z_2} = \mathrm{Mult}(0.05, 0.1, 0.15, 0.3, 0.4)$, the datasets are generated by the following distributions:

\begin{enumerate}
    \item $P_{Z_1 \mid \rfull_1 = 1} = \mathrm{Mult}(0.2, \dots, 0.2)$, 
    except in (S2), where $P_{Z_1 \mid \rfull_1 = 1} = P_{Z_2}$.
    
    \item $P_{Z_2 \mid \rcov_2 = 1} = \mathrm{Mult}(0.1, 0.15, 0.2, 0.25, 0.3)$, 
    except in (S3), where $P_{Z_2 \mid \rcov_2 = 1} = P_{Z_2}$.
    
    \item For all scenarios except (S4), for each $k = 1, \ldots, 5$, we set 
    \[
    P_{X_1 \mid Z_1 = k} = \mathrm{Beta}(\alpha_{1k}, \beta_{1k}), \qquad P_{X_2 \mid Z_2 = k} = \mathrm{Beta}(\alpha_{2k}, \beta_{2k}).
    \]
    For (S4), we instead set 
    $P_{X_1 \mid Z_1 = k} = P_{X_2 \mid Z_2 = k} = \mathrm{Beta}(\alpha_{1k}, \beta_{1k})$ 
    for each $k = 1, \ldots, 5$.
    
    \item Finally, for all scenarios except (S5), for each $k = 1, \ldots, 5$ and $x \in \mathbb R$, we set
    \begin{equation*} 
    P_{Y_t \mid (X_t = x, Z_t = k)} = \mathcal{N}(0, \sigma^2(x, k) ), \quad t= 1, 2,
    \end{equation*}
    For (S5), where the error is homoscedastic, we set $\sigma(x, k) \equiv 2$. 
\end{enumerate}

\subsection{Setting 1: GWCP methods with the absolute residual score fail to attain nominal coverage}

For the simulation design, in this setting we set 
$(\alpha_{11}, \ldots, \alpha_{15}) = (2.0, 2.2, 2.4, 2.6, 2.8)$, 
and define $\beta_{1k} = \alpha_{2k} = 6 - \alpha_{1k}$ and $\beta_{2k} = \alpha_{1k}$ for each $k = 1, \ldots, 5$. 
Moreover, for scenarios (S1)--(S4), we specify the conditional standard deviation as 
$\sigma(x, k) = \theta_k x + \eta_k$ for each $k$.
As demonstrated in Appendix~\ref{app:coverage-ar}, \texttt{GWCP\_AR\_m} undercovers when $\sigma(x, k) \appropto r_k(x)$ 
and overcovers when $\sigma(x, k) \appropto 1/r_k(x)$ as a function of $x$ for each $k$. 
In our design, since $r_k(x) = (x / (1 - x))^{6 - 2\alpha_{1k}}$ is an increasing function of $x$ for each $k$, 
we therefore expect \texttt{GWCP\_AR\_m} to undercover when $\theta_k > 0$ and to overcover when $\theta_k < 0$, 
regardless of the choice of $\eta_k$. 
Accordingly, we consider a total of four heteroscedastic error cases: 

\begin{table}[!htbp]
\centering
\begin{threeparttable}
\captionsetup{skip=3pt}
\caption{Four heteroscedastic error cases for Setting 1}
\label{tab:setting-1}
\begin{tabular}{ccc}
\toprule
\textbf{Case} & $(\theta_1,\ldots,\theta_5)$ & $(\eta_1,\ldots,\eta_5)$ \\
\midrule
(a) & $(1.0,1.2,1.4,1.6,1.8)$ & $(1.0,1.2,1.4,1.6,1.8)$ \\
(b) & $(1.0,1.2,1.4,1.6,1.8)$ & $(1.8,1.6,1.4,1.2,1.0)$ \\
(c) & $(-1.0,-1.2,-1.4,-1.6,-1.8)$ & $(3.0,3.2,3.4,3.6,3.8)$ \\
(d) & $(-1.0,-1.2,-1.4,-1.6,-1.8)$ & $(3.8,3.6,3.4,3.2,3.0)$ \\
\bottomrule
\end{tabular}
\end{threeparttable}
\end{table}

We first compare the four marginal methods listed in Table~\ref{table:methods} across scenarios (S1)--(S5).  
Figure~\ref{fig:simul-marg-supp-1} and \ref{fig:simul-length-supp-1} show their empirical coverage and average interval length, respectively, where the panels from top to bottom correspond to cases (a)--(d) listed in Table~\ref{tab:setting-1}.  
Results are averaged over 100 independently generated datasets.

\begin{figure}[!htbp]
    \centering
    \includegraphics[width=\textwidth]{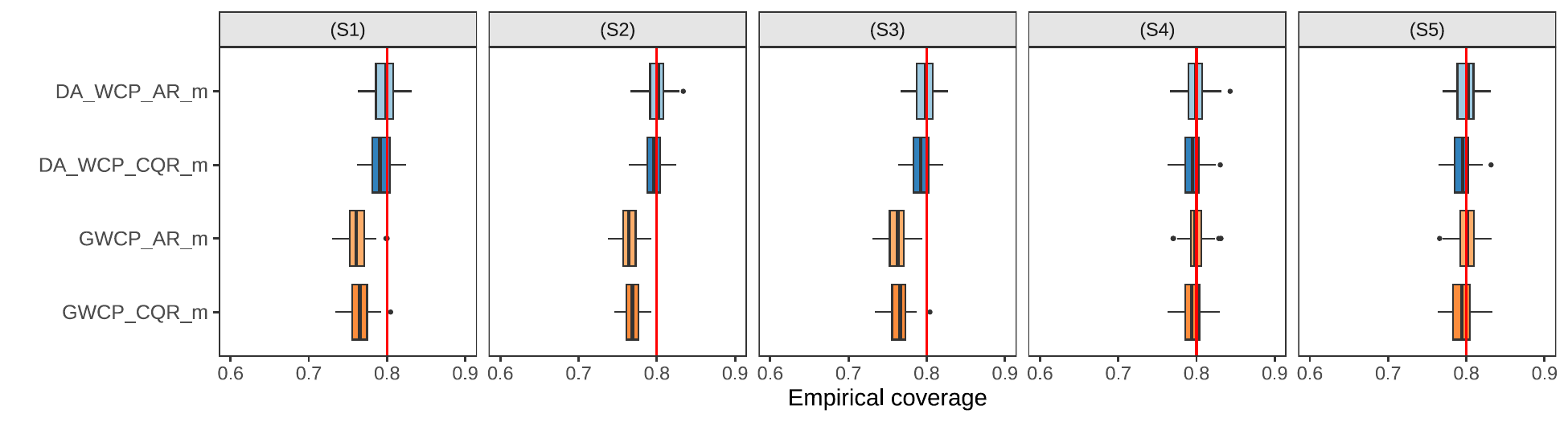}
    \vspace{0.5em} 
    \includegraphics[width=\textwidth]{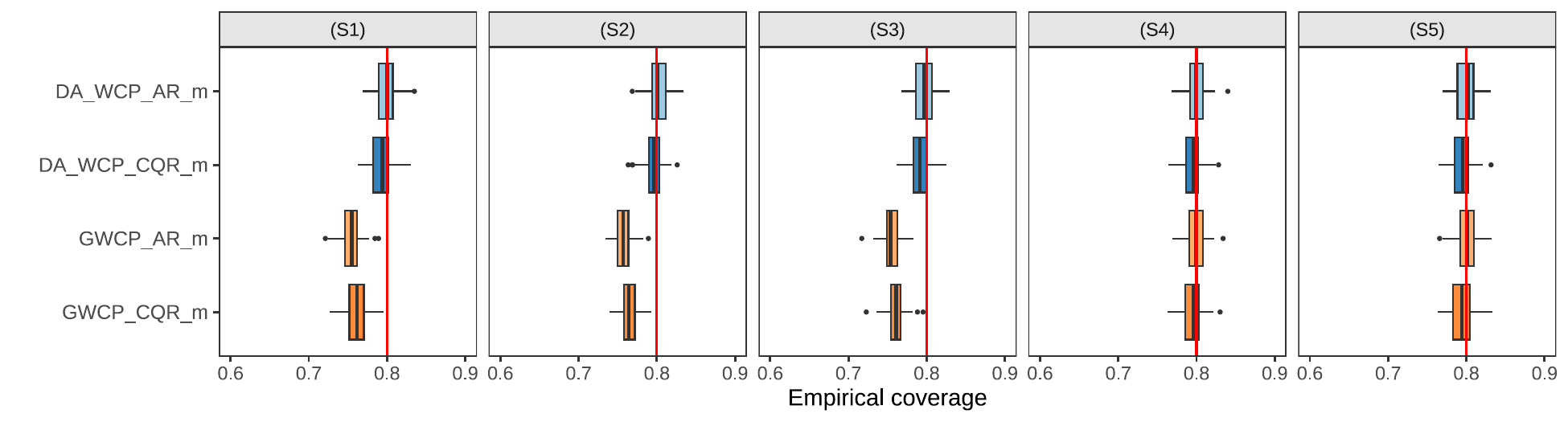}
    \vspace{0.5em} 
    \includegraphics[width=\textwidth]{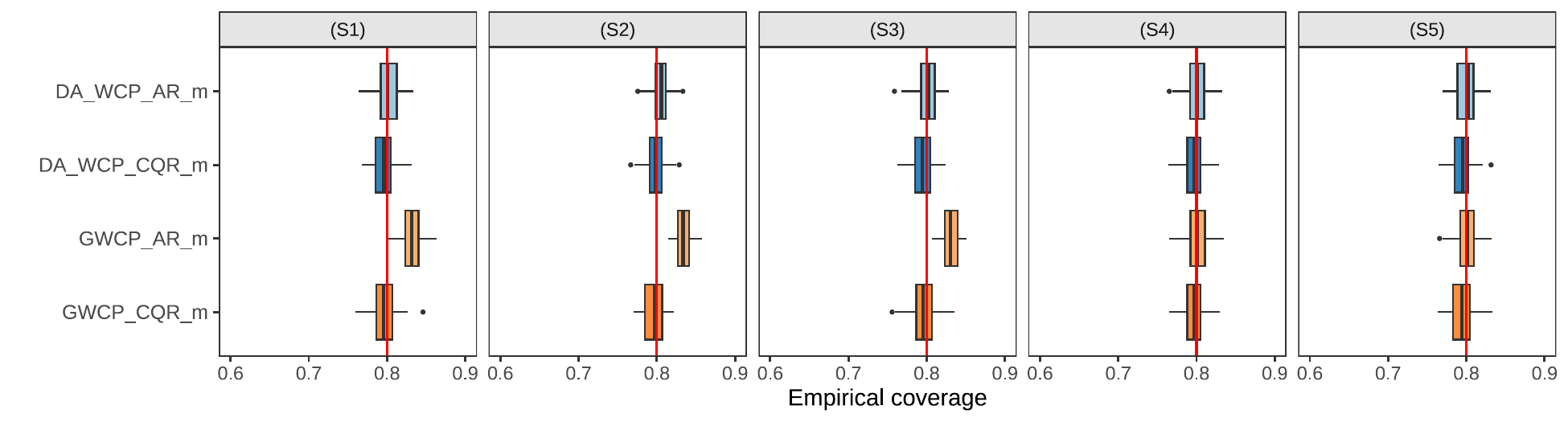}
    \vspace{0.5em} 
    \includegraphics[width=\textwidth]{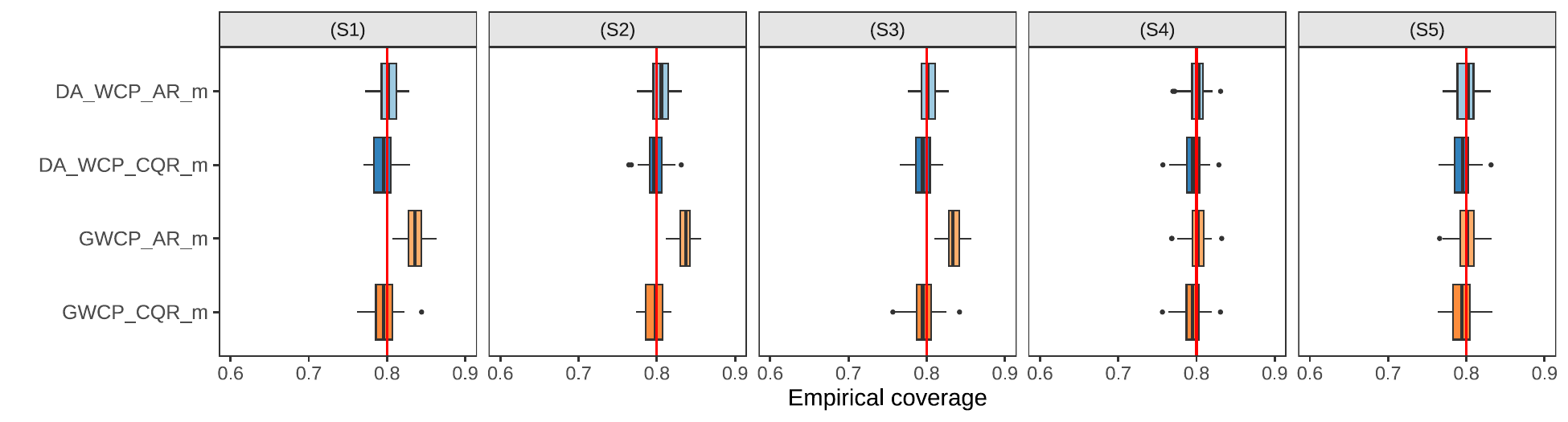}
    \caption{Empirical coverage of four marginal methods under the target level \(1-\alpha=0.8\), evaluated across scenarios (S1)--(S5) using 100 independently generated datasets.  
    From top to bottom, the panels correspond to heteroscedastic error cases (a)--(d) of Setting 1.}
    \label{fig:simul-marg-supp-1}
\end{figure}

\begin{figure}[!htbp]
    \centering
    \includegraphics[width=\textwidth]{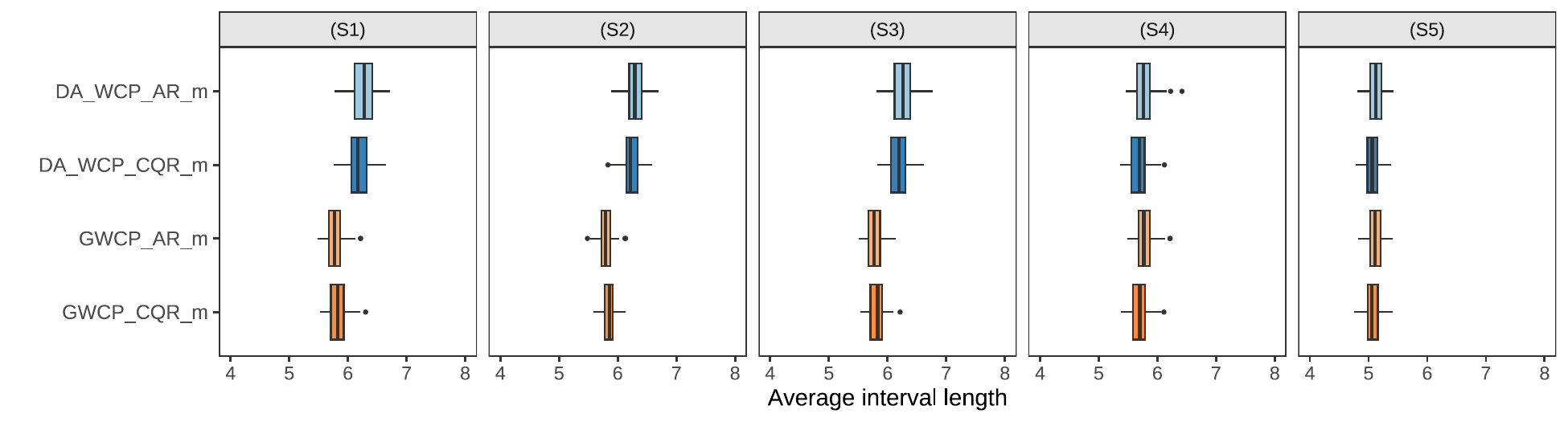}
    \vspace{0.5em} 
    \includegraphics[width=\textwidth]{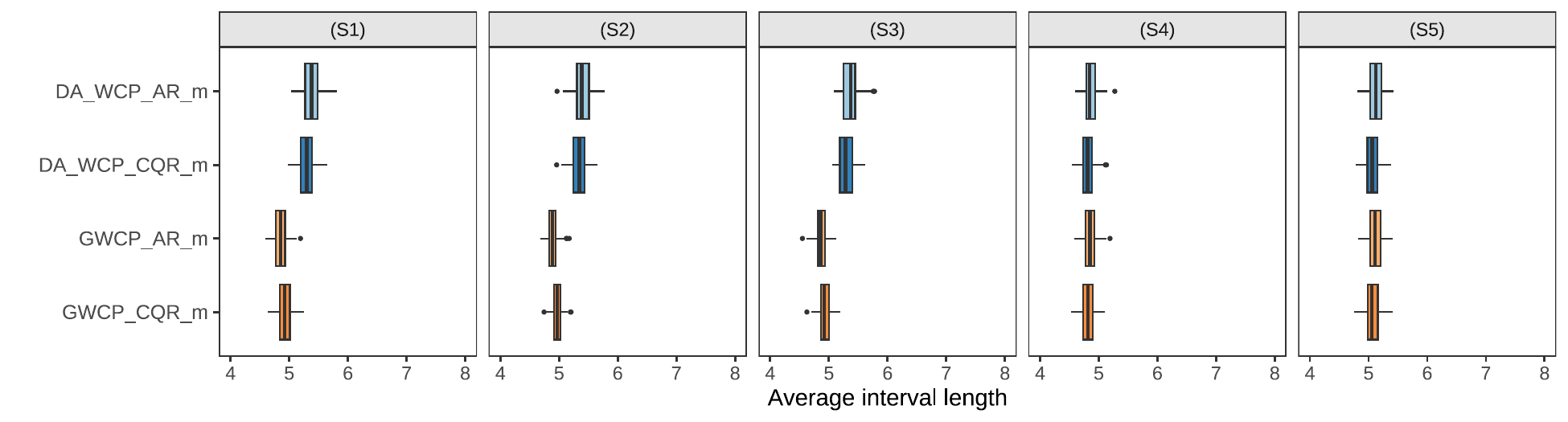}
    \vspace{0.5em} 
    \includegraphics[width=\textwidth]{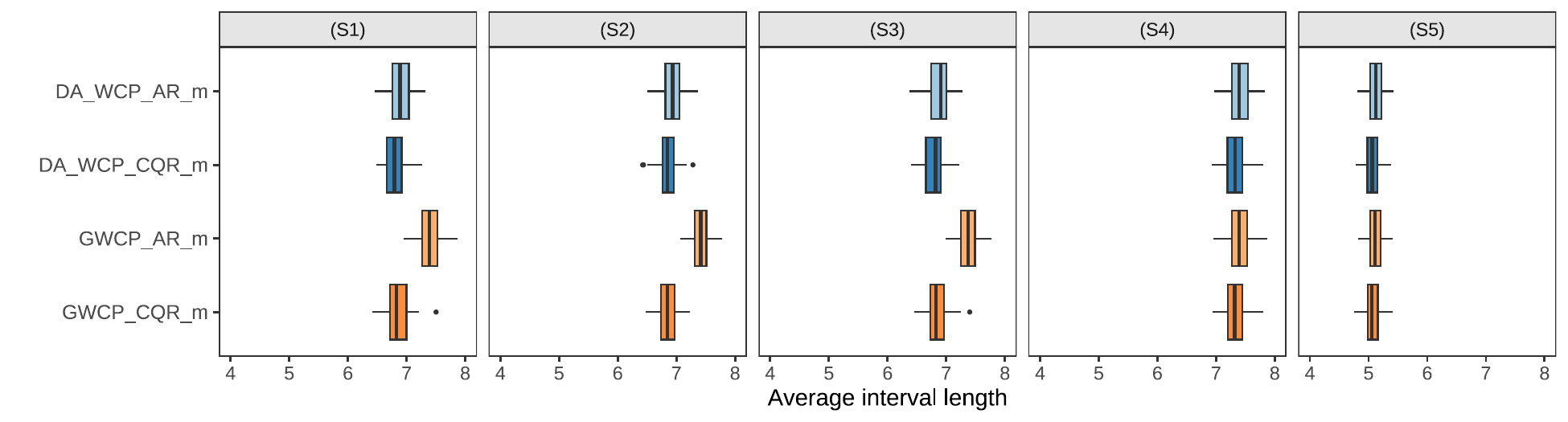}
    \vspace{0.5em} 
    \includegraphics[width=\textwidth]{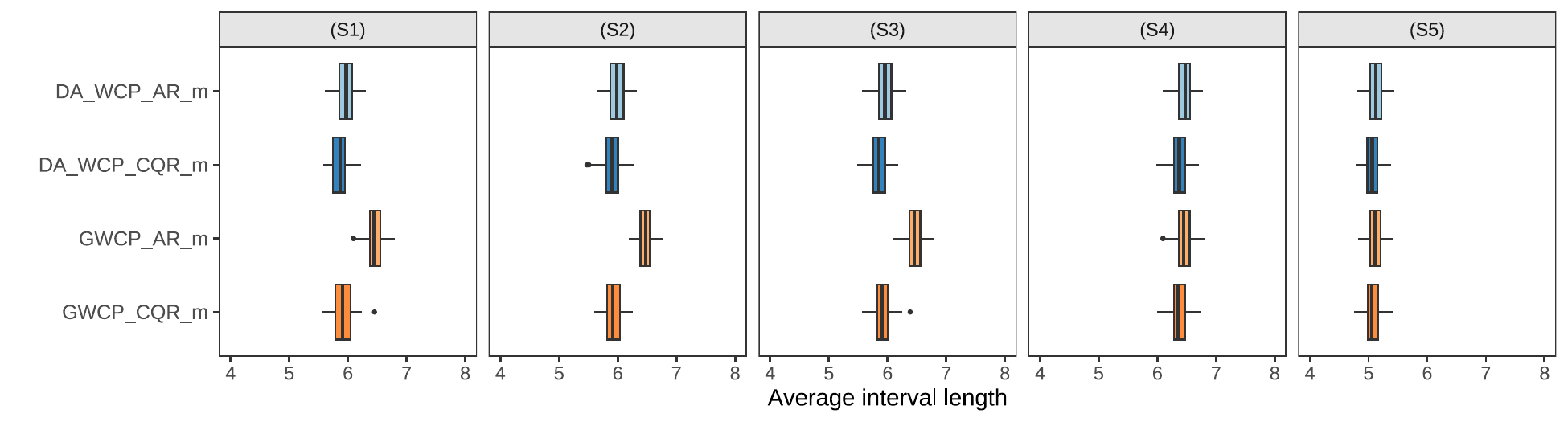}
    \caption{Average interval length of four marginal methods under the target level \(1-\alpha=0.8\), evaluated across scenarios (S1)--(S5) using 100 independently generated datasets.  
    From top to bottom, the panels correspond to heteroscedastic error cases (a)--(d) of Setting 1.}
    \label{fig:simul-length-supp-1}
\end{figure}

Overall, the two DA-WCP methods attain the desired coverage across all scenarios and error cases. 
As expected, cases (a) and (b), as well as cases (c) and (d), exhibit similar coverage patterns across methods, indicating that the choice of $\eta_k$ does not affect coverage.  
Moreover, \texttt{GWCP\_AR\_m} suffers from undercoverage in (a) and (b) and from overcoverage in (c) and (d), as noted above. 
\texttt{GWCP\_CQR\_m} also exhibits undercoverage in (a) and (b), but attains the desired coverage level in (c) and (d), likely due to the set selection step of the CQR methods.

Next, we compare the four group-conditional methods listed in Table~\ref{table:methods} across scenarios (S1)--(S5).  
Figures~\ref{fig:simul-group-supp-1-ab} and~\ref{fig:simul-group-supp-1-cd} present their group-wise empirical coverage, where the panels from top to bottom correspond to cases (a)--(d) listed in Table~\ref{tab:setting-1}.  
As in the marginal setting, cases (a) and (b), as well as cases (c) and (d), exhibit similar group-wise coverage patterns across methods, and the two DA-WCP methods attain the desired coverage across all cases.
Among the two GWCP methods, \texttt{GWCP\_AR\_g} undercovers in (a) and (b) and overcovers in (c) and (d), while \texttt{GWCP\_CQR\_g} also exhibits undercoverage in (a) and (b) but attains the desired coverage level in (c) and (d).
These results indicate that the coverage of GWCP with the absolute residual score can vary considerably under different forms of heteroscedasticity even when $\widehat \mu \approx \mu$.

\begin{figure}[!htbp]
    \centering
    \includegraphics[width=\textwidth]{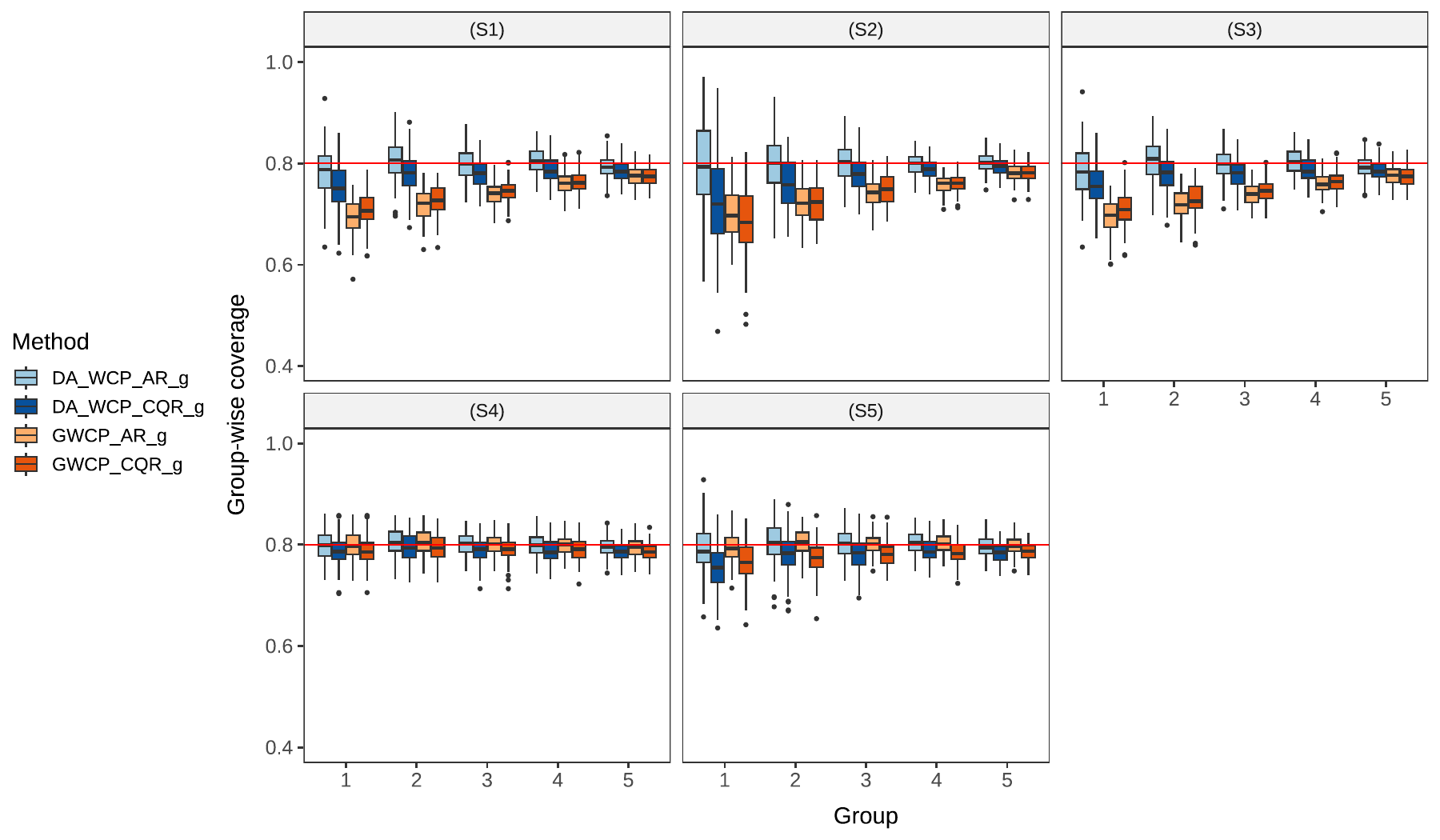}
    \vspace{5mm} 
    \includegraphics[width=\textwidth]{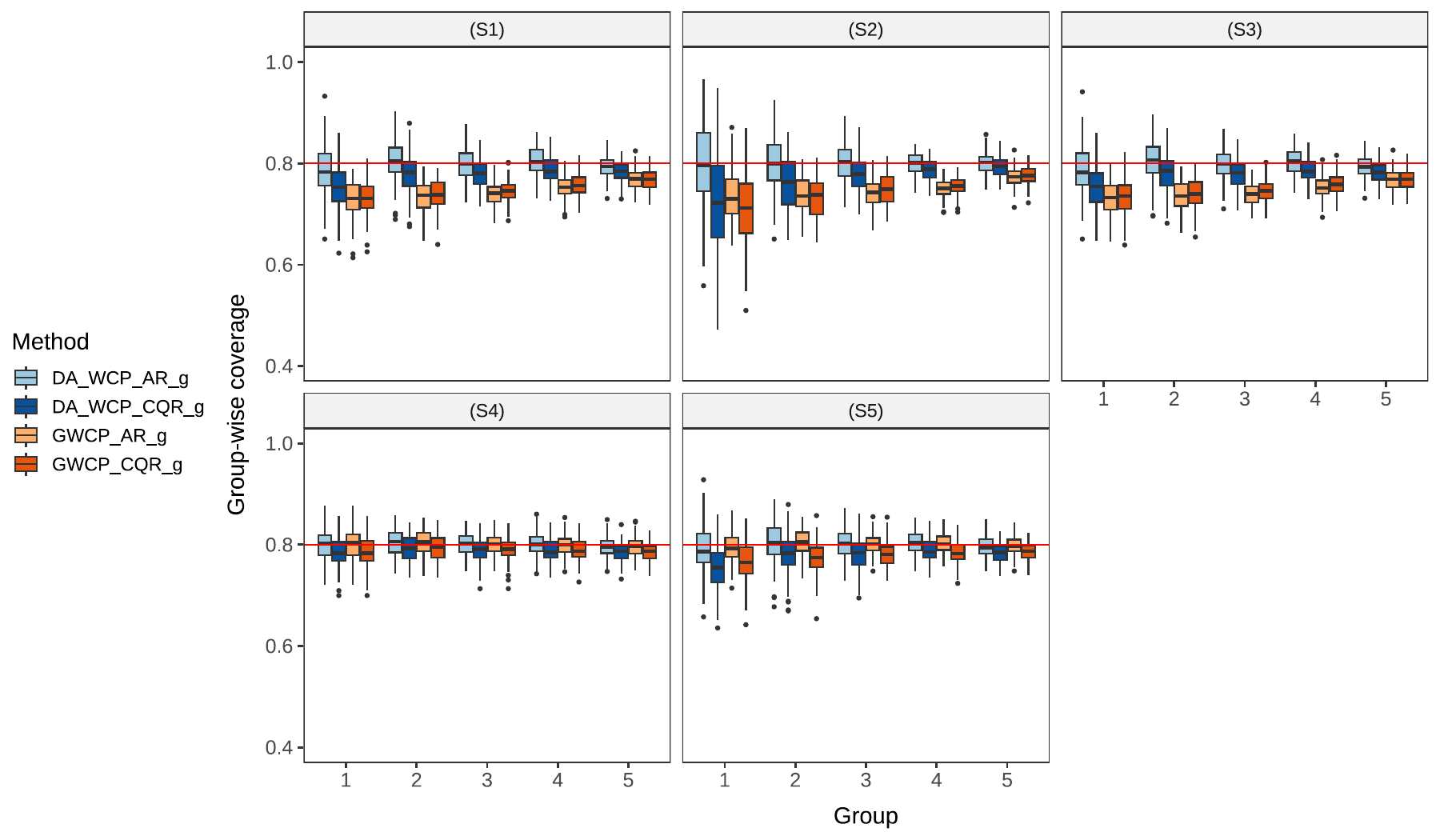}
    \caption{Group-wise empirical coverage of four group-conditional methods under the target level \(1-\alpha=0.8\), evaluated across scenarios (S1)--(S5) using 100 independently generated datasets.  
    The upper and lower panels correspond to cases (a) and (b) of Setting 1, respectively.}
    \label{fig:simul-group-supp-1-ab}
\end{figure}

\begin{figure}[!htbp]
    \centering
    \includegraphics[width=\textwidth]{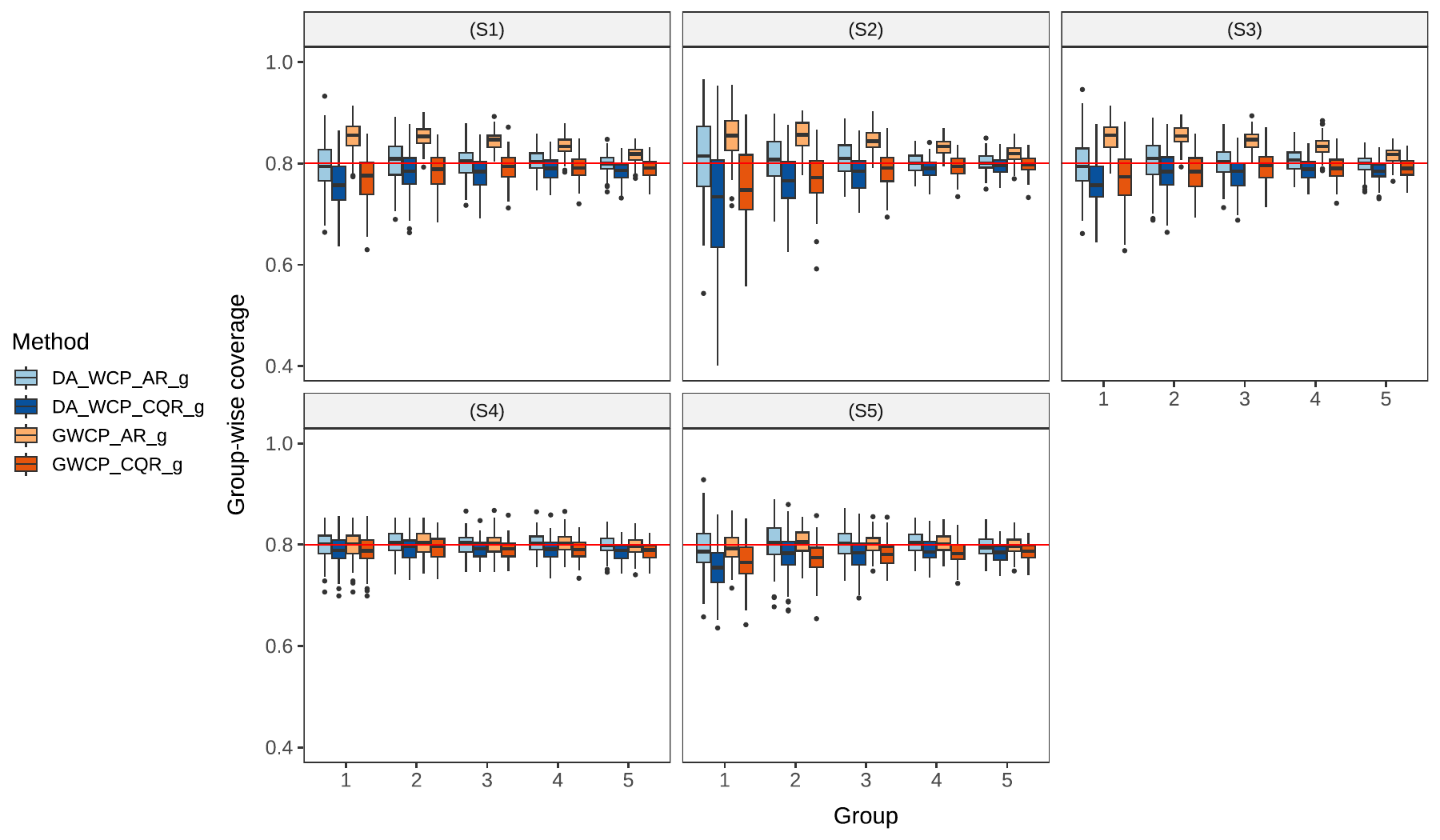}
    \vspace{5mm} 
    \includegraphics[width=\textwidth]{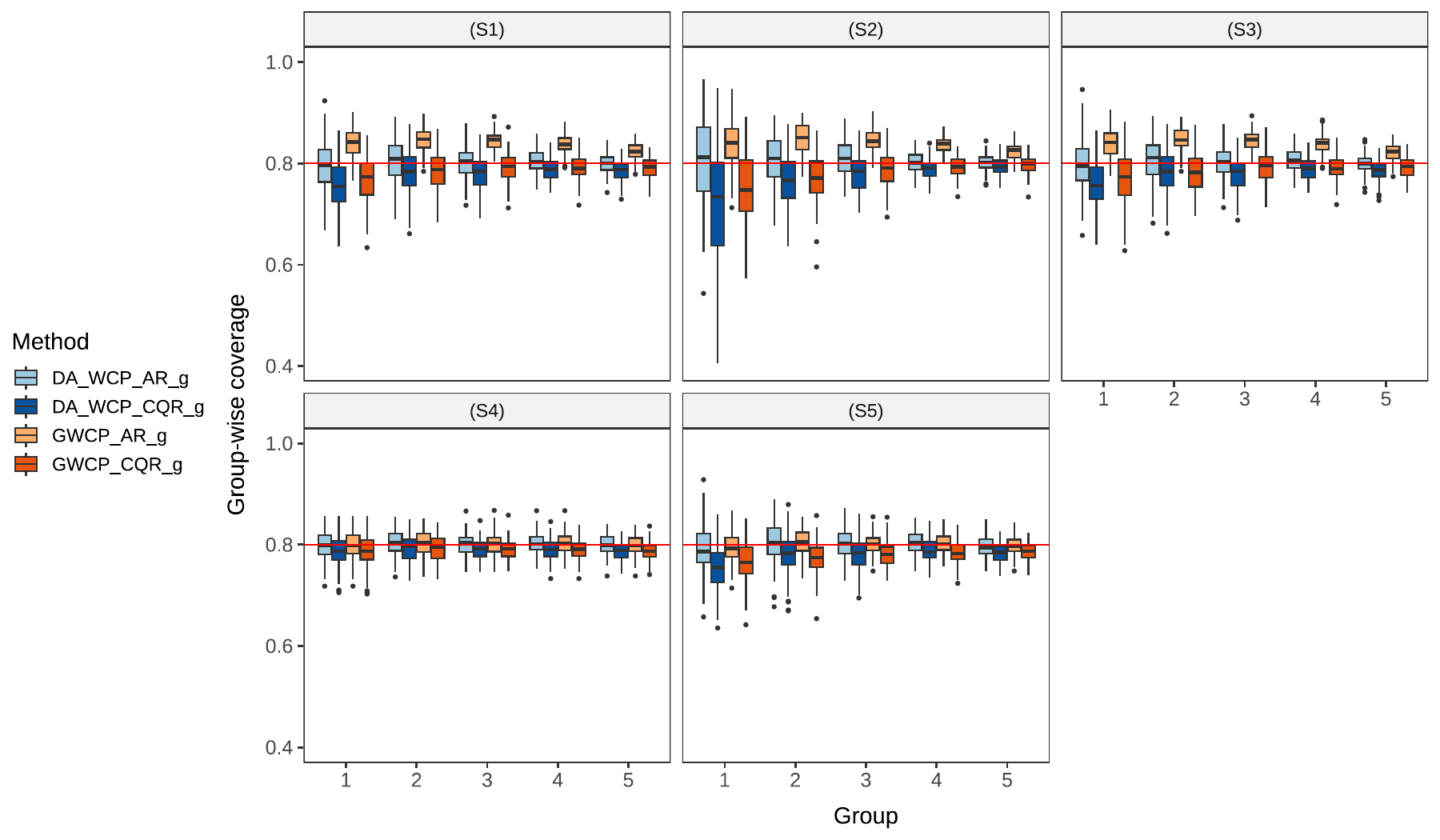}
    \caption{Group-wise empirical coverage of four group-conditional methods under the target level \(1-\alpha=0.8\), evaluated across scenarios (S1)--(S5) using 100 independently generated datasets.  
    The upper and lower panels correspond to cases (c) and (d) of Setting 1, respectively.}
    \label{fig:simul-group-supp-1-cd}
\end{figure}

\subsection{Setting 2 : All GWCP methods fail to attain nominal coverage}

In this setting, we set 
$(\alpha_{11}, \ldots, \alpha_{15}) = (0.95, 0.90, 0.85, 0.80, 0.75)$, 
and define $\beta_{1k} = \alpha_{1k}$ and $\alpha_{2k} = \beta_{2k} = 3 - \alpha_{1k}$ for each $k = 1, \ldots, 5$. 
Moreover, for scenarios (S1)--(S4), we specify the conditional standard deviation as 
$\sigma(x, k) = \sigma_k  |2x - 1| + 1$ for case (a), 
and $\sigma(x, k) = 4 - \sigma_k  |2x - 1|$ for case (b), 
where $(\sigma_1, \ldots, \sigma_5) = (2.2, 2.4, 2.6, 2.8, 3.0)$.

Then, the group-wise density ratio function becomes $r_k(x) \propto (x(1-x))^{3-2\alpha_{1k}}$ for each $k$, 
taking higher values near the center of $[0, 1]$ and lower values toward its boundaries. 
Since we employ a linear model to estimate the mean and quantile functions, $\widehat \mu$ is consistent, whereas $\hatqlo$ and $\hatqhi$ are not. 
As shown in the left panel of Figure~\ref{fig:quantile}, for case (a), where $\sigma(x, k)$ is smaller near the center, 
we observe that $\hatqlo(x, k) < \qlo(x, k)$ near the boundaries and $\hatqlo(x, k) > \qlo(x, k)$ near the center. 
Consequently, for each fixed $k \in \{1, \dots, 5\}$ and $y \in \Rb$, the quantity $\Pb( \hatqlo(x, k) - y \le Y_1 \le \hatqhi(x, k) + y \mid \Dc, X_1 = x, Z_1 = k )$, viewed as a function of $x$, is lower near the boundaries and higher near the center, which is expected to result in overcoverage of GWCP for both absolute residual and CQR scores, as demonstrated in Appendix~\ref{app:coverage}.
Conversely, for case (b), where $\sigma(x, k)$ is larger near the center, 
$\hatqlo(x, k) > \qlo(x, k)$ near the boundaries and $\hatqlo(x, k) < \qlo(x, k)$ near the center, as shown in the right panel of Figure~\ref{fig:quantile}, 
which is expected to lead to undercoverage of GWCP for both scores.

\begin{figure}[!htbp]
    \centering
    \includegraphics[width=0.45\linewidth]{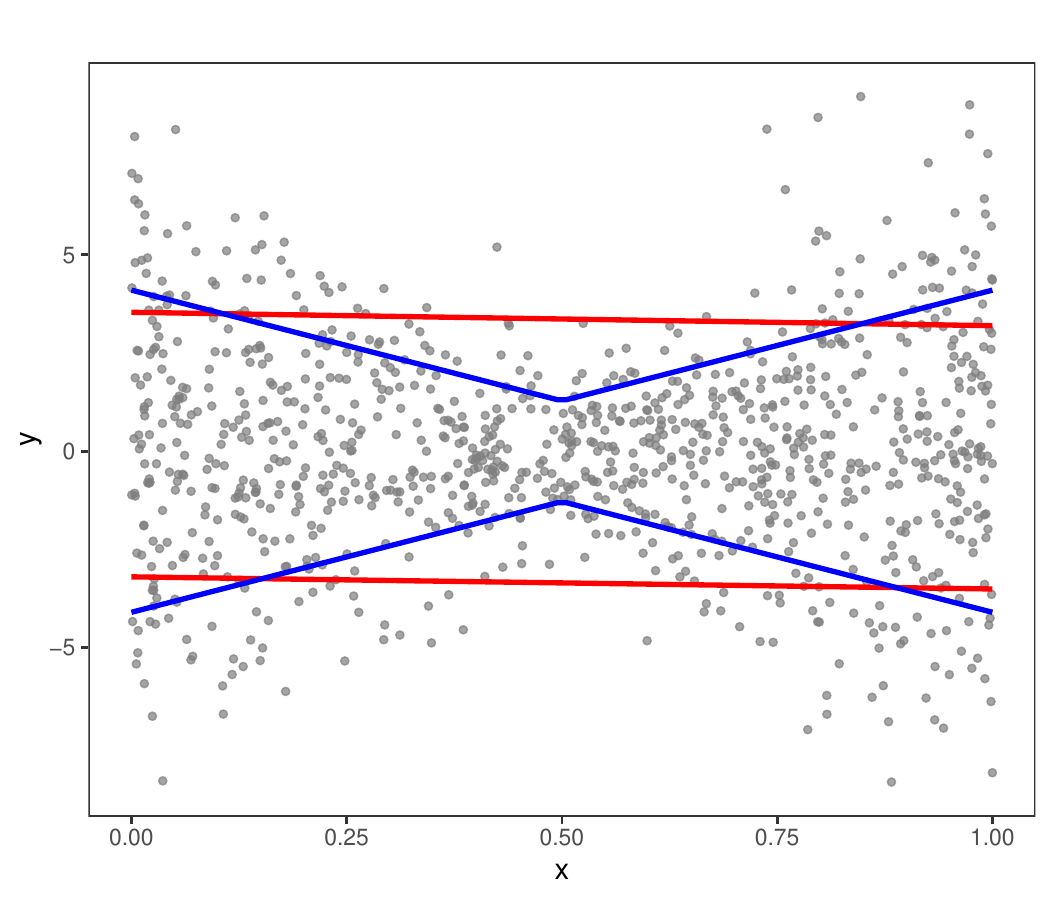}
    \hspace{0.5em}
    \includegraphics[width=0.45\linewidth]{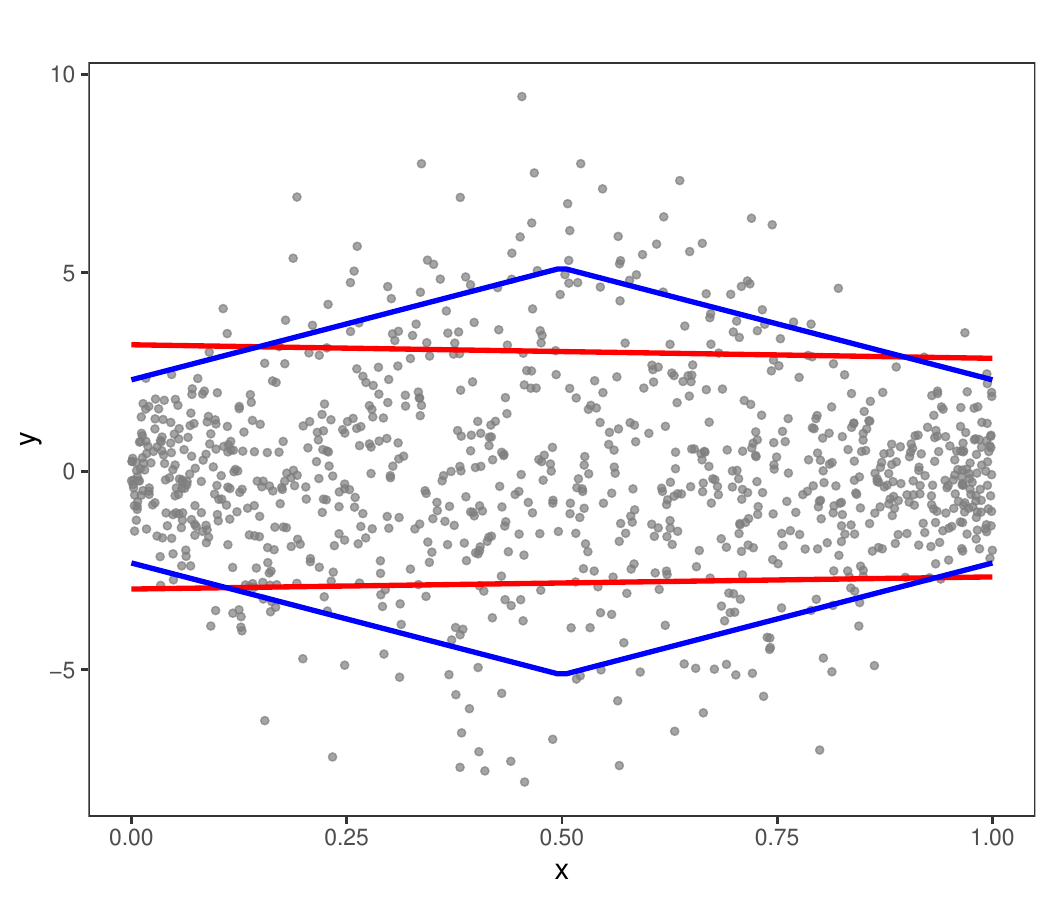}
    \caption{Comparison of the true quantile functions $\qlo$ and $\qhi$ (blue) with the estimated quantile functions $\hatqlo$ and $\hatqhi$ (red) for case (a) (left) and case (b) (right), at $\alpha = 0.2$, based on the dataset $\done$ generated with $n_1 = 5000$ for group $k = 5$.}
    \label{fig:quantile}
\end{figure}

We first compare the four marginal methods listed in Table~\ref{table:methods} across scenarios (S1)--(S5).  
Figures~\ref{fig:simul-marg-supp-2} and \ref{fig:simul-length-supp-2} show their empirical coverage and average interval length, respectively.  
Results are averaged over 100 independently generated datasets.  
Overall, the two DA-WCP methods attain the desired coverage across all scenarios and error cases, with both scores producing intervals of similar length.  
In contrast, in scenarios (S1)--(S3), GWCP exhibits overcoverage in (a) and undercoverage in (b) for both scores, as expected.

Next, we compare the four group-conditional methods listed in Table~\ref{table:methods} across scenarios (S1)--(S5).  
Figure~\ref{fig:simul-group-supp-2} presents their group-wise empirical coverage.  
As in the marginal case, in scenarios (S1)--(S3), GWCP overcovers in (a) and undercovers in (b) for both scores, while DA-WCP consistently achieves the desired coverage.  
These results indicate that, for the CQR score, GWCP may fail to achieve the desired coverage when the estimated quantile functions are inconsistent, which can occur under model misspecification, whereas DA-WCP remains robust to such misspecification.

\begin{figure}[!htbp]
    \centering
    \includegraphics[width=0.95\textwidth]{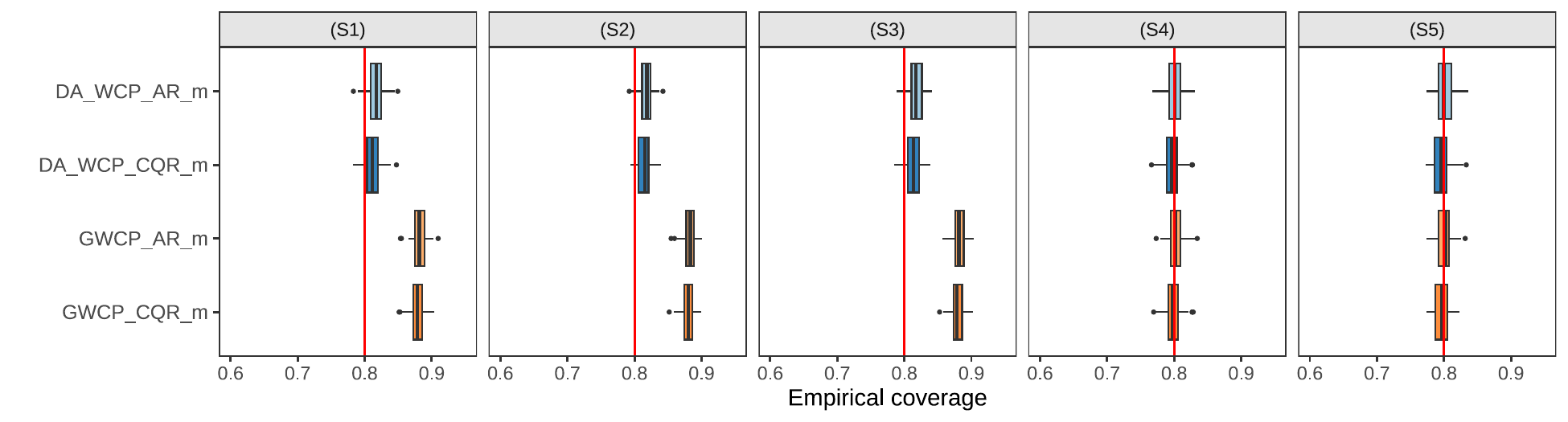}
    \vspace{0.5em} 
    \includegraphics[width=0.95\textwidth]{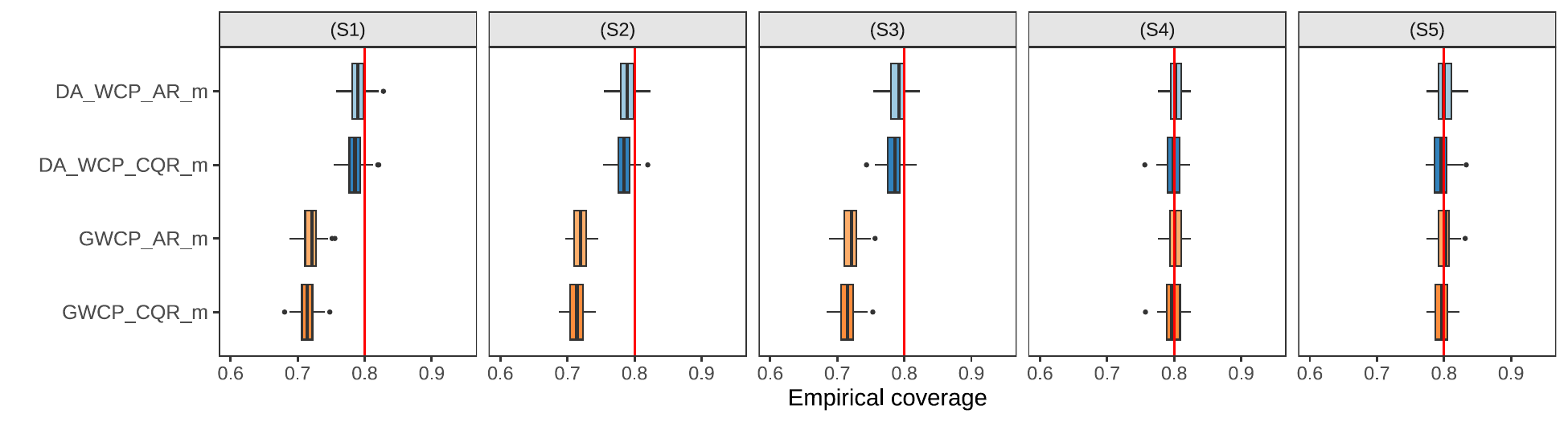}
    \caption{Empirical coverage of four marginal methods under the target level \(1 - \alpha = 0.8\), 
    evaluated across scenarios (S1)--(S5) using 100 independently generated datasets. 
    The upper and lower panels correspond to cases (a) and (b), respectively.}
    \label{fig:simul-marg-supp-2}
\end{figure}

\begin{figure}[!htbp]
    \centering
    \includegraphics[width=0.95\textwidth]{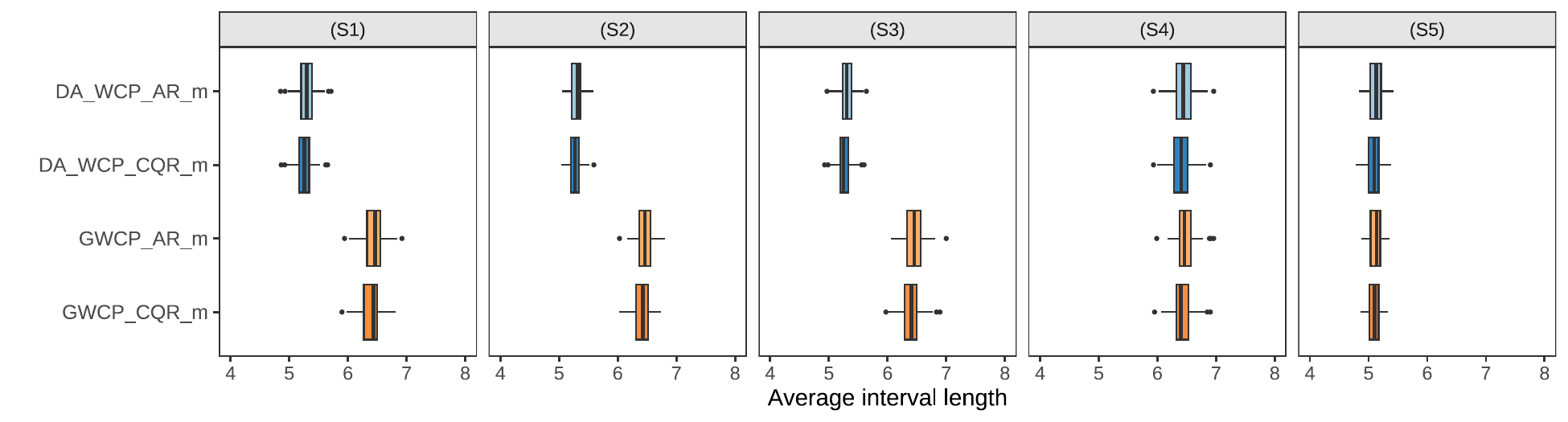}
    \vspace{0.5em} 
    \includegraphics[width=0.95\textwidth]{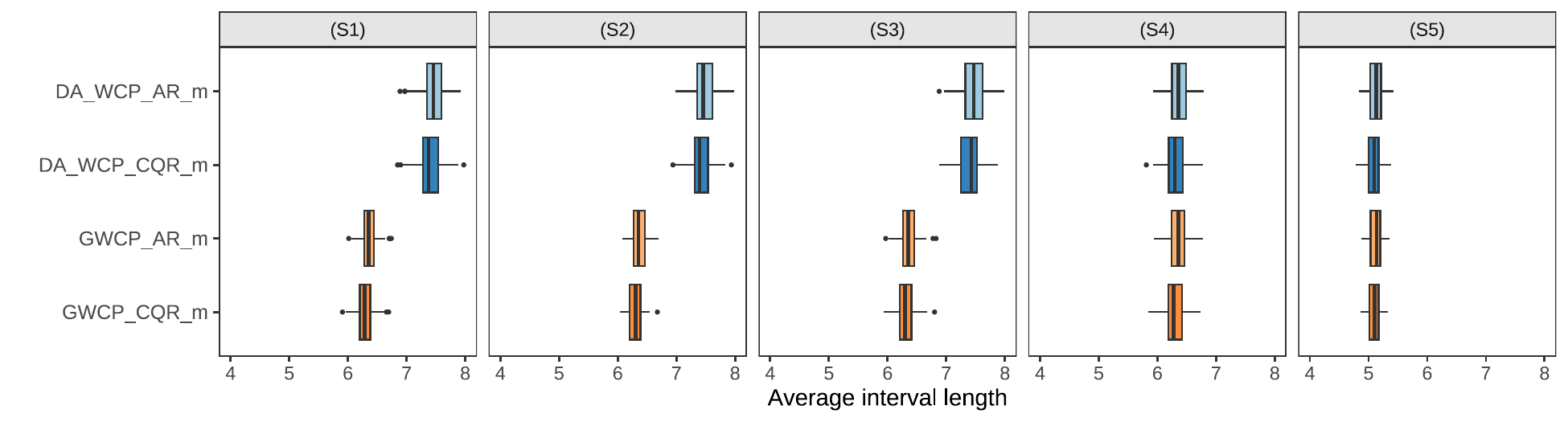}
    \caption{Average interval length of four marginal methods under the target level \(1 - \alpha = 0.8\), 
    evaluated across scenarios (S1)--(S5) using 100 independently generated datasets. 
    The upper and lower panels correspond to cases (a) and (b), respectively.}
    \label{fig:simul-length-supp-2}
\end{figure}

\begin{figure}[!htbp]
    \centering
    \includegraphics[width=\textwidth]{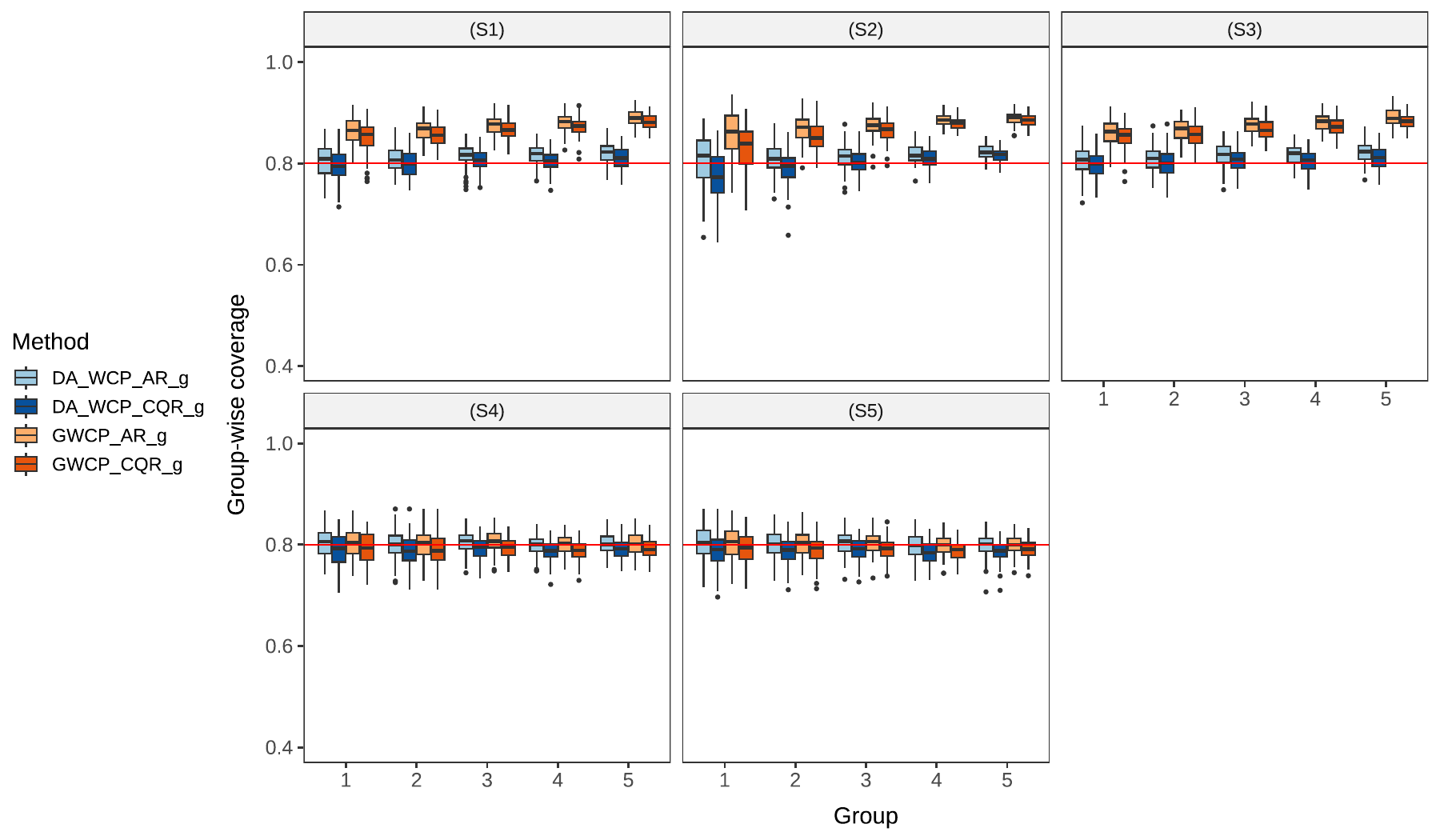}
    \vspace{0.5em} 
    \includegraphics[width=\textwidth]{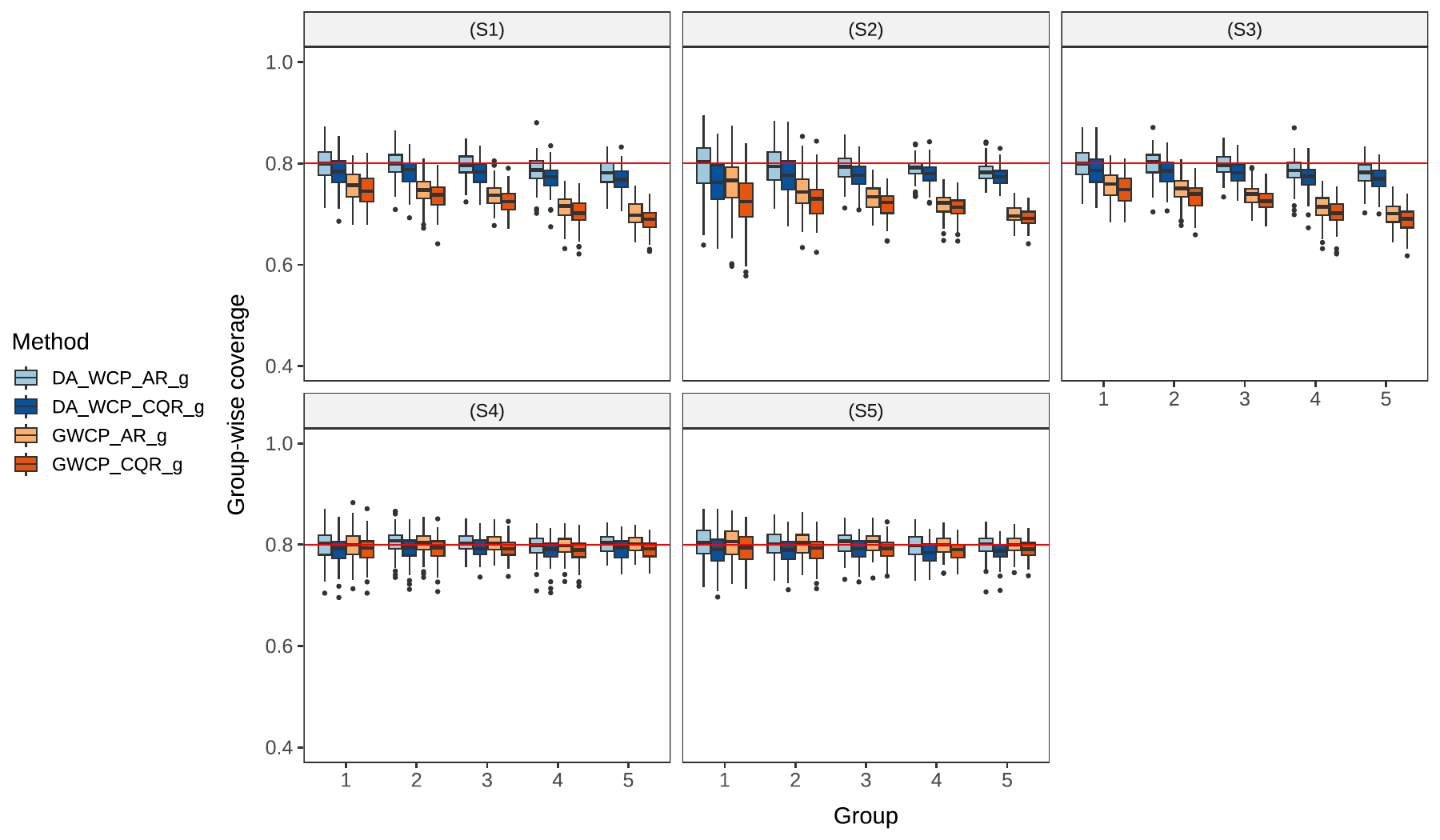}
    \caption{Group-wise empirical coverage of four group-conditional methods under the target level \(1-\alpha=0.8\), evaluated across scenarios (S1)--(S5) using 100 independently generated datasets.  
    The upper and lower panels correspond to cases (a) and (b), respectively.}
    \label{fig:simul-group-supp-2}
\end{figure}

\clearpage

\section{LDL-C Prediction via Group-Conditional Methods} \label{app:ldl-group} 

In Section~\ref{sec-ldl}, we used only the marginal methods for predicting LDL-C in the NHANES dataset. 
The group-conditional methods were omitted because their results were highly variable. 
In this section, we provide results obtained using the group-conditional methods.

For the continuous outcome (see Section~\ref{sec-ldl-conti}), Figure~\ref{fig:conti-group} presents the subgroup-wise empirical coverage of the four group-conditional methods listed in Table~\ref{table:methods}. For the categorical outcome (see Section~\ref{sec-ldl-categorical}), Figure~\ref{fig:categorical-group} presents the subgroup-wise empirical coverage of the six group-conditional methods for categorical outcomes. In both cases, empirical coverage is computed based on 100 random splits of the training and calibration datasets within each subgroup. All methods achieve empirical coverage roughly around the target level, but with considerable variability across subgroups. From the figures, DA-WCP methods appear to attain the desired coverage in more subgroups overall than the corresponding GWCP methods.

High variability in empirical coverage across subgroups is due to insufficient sample sizes within subgroups.
When subgroup sizes are small, (i) estimation of the subgroup-specific weight function, which involves density ratio estimation, and (ii) the approximation described in Appendix~\ref{app:eval-2} used to compute subgroup-wise empirical coverage, both become highly variable.
These sources of variability lead to unstable empirical coverage across subgroups.
However, when subgroup sizes are sufficiently large, the group-conditional methods yield empirical coverage as stable as that of the marginal methods.
In such cases, we can confirm that the proposed method generally outperforms GWCP, as demonstrated in the simulation study in Section~\ref{sec-simul} and Appendix~\ref{app:simul}.

\vspace{4mm}
\begin{figure}[!htbp]
    \centering
    \includegraphics[width=\linewidth]{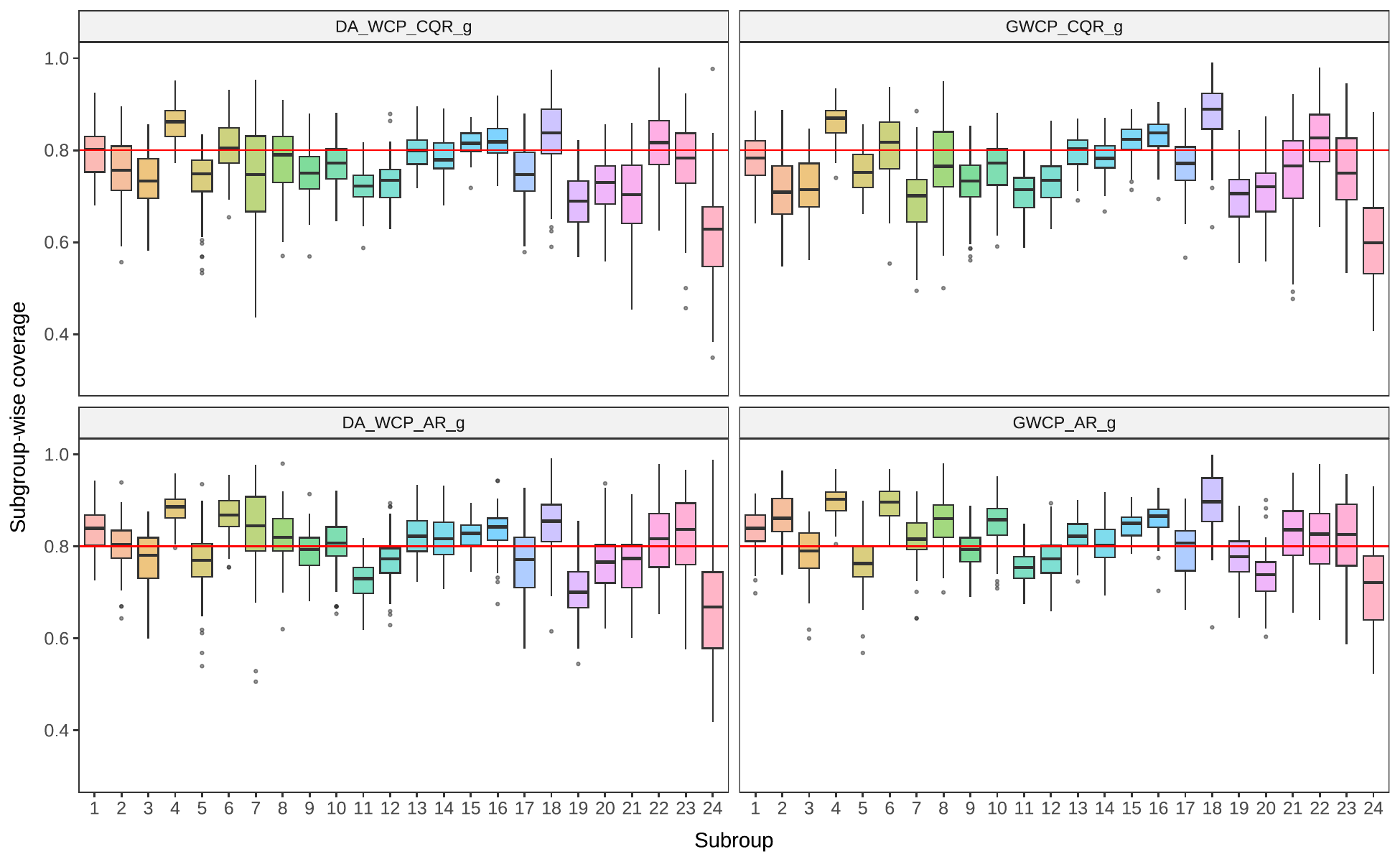}
    \caption{Subgroup-wise empirical coverage of the four group-conditional methods for continuous outcomes listed in Table~\ref{table:methods}, evaluated under the target coverage level $1 - \alpha = 0.8$ based on 100 random splits.}
    \label{fig:conti-group}
\end{figure}

\begin{figure}[!htbp]
    \centering
    \includegraphics[width=\linewidth]{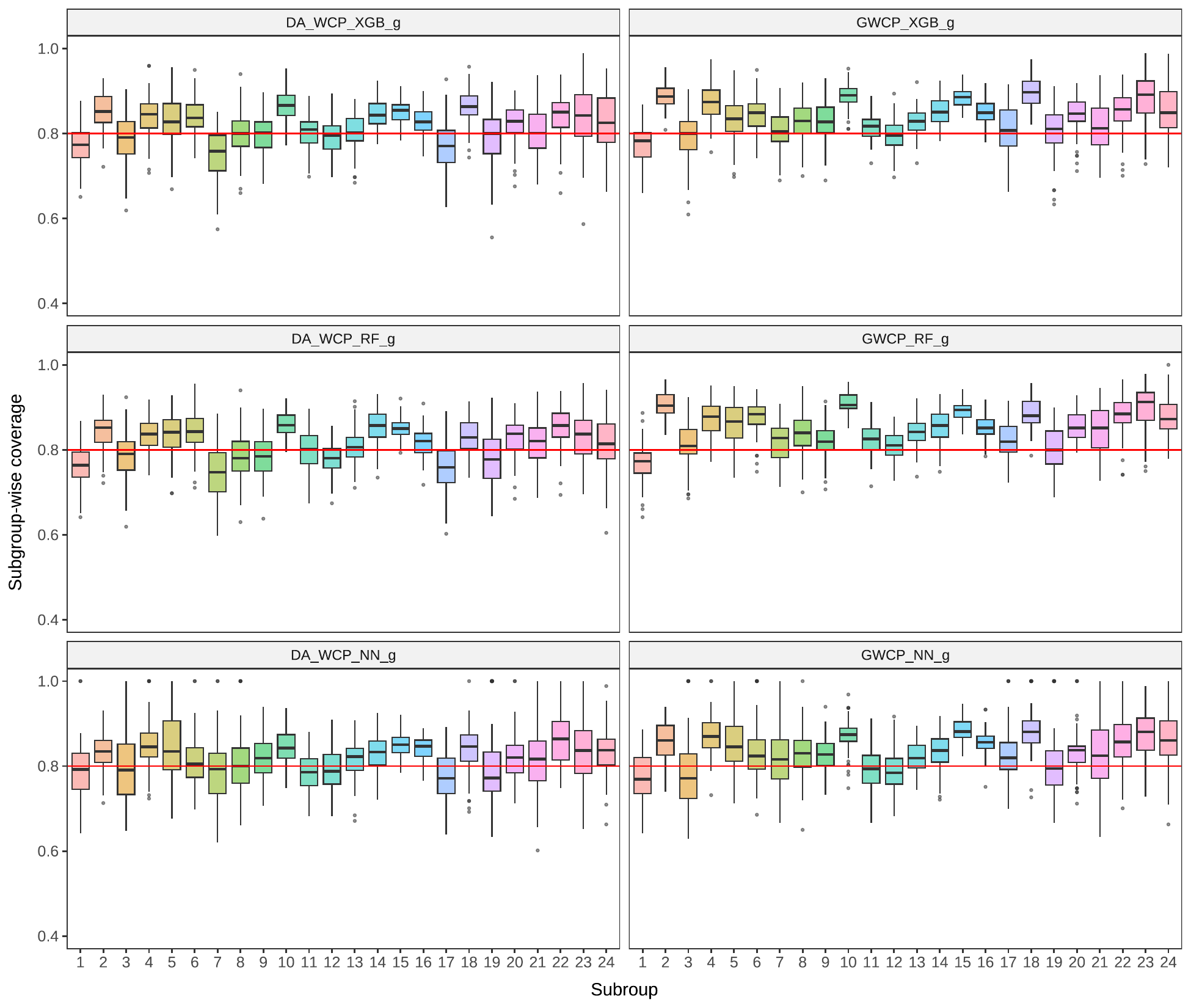}
    \caption{Subgroup-wise empirical coverage of the six group-conditional methods for categorical outcomes, evaluated under the target coverage level $1 - \alpha = 0.8$ based on 100 random splits.}
    \label{fig:categorical-group}
\end{figure}

\clearpage

\end{document}